\newtheorem{theorem}{Theorem}[section] 
\newtheorem{prop}[theorem]{Proposition}
\newtheorem{lemma}[theorem]{Lemma}
\newtheorem{cor}[theorem]{Corollary}
\newtheorem{defn}[theorem]{Definition}
\theoremstyle{definition}
\newtheorem*{remark}{Remark}
\def\C{{\mathbb C}}
\def\D{{\mathbb D}}
\def\eqref#1{(\ref{#1})}
\def\&{&\hspace{-20pt}}
\def\R{{\mathbb R}}
\def\d{{\rm d}}
\def\e{{\rm e}}
\def\i{{\rm i}}
\def\a{{\alpha}}
\def\ii{{\rm i}}
\def\res{\mathop{{\rm res}}}
\newcommand{\be}{\begin{eqnarray}}
\newcommand{\ee}{\end{eqnarray}}
\newcommand{\bes}{\begin{eqnarray*}}
\newcommand{\ees}{\end{eqnarray*}}
\newcommand{\ds}{\displaystyle}
\renewcommand{\c}[1]{\overline{#1}}
\newcommand{\order}[1]{\ensuremath{{\mathcal O}\left(#1\right)}}
\newcommand{\atopfrac}[2]{\genfrac{}{}{0pt}{}{#1}{#2}}
\definecolor{light-blue}{rgb}{0.8,0.85,1}
\definecolor{blue}{rgb}{0,0,1}
\definecolor{red}{rgb}{1,0,0}
\def\pmtwo#1#2#3#4{\left[\begin{array}{cc}\ds #1&\ds #2\\\ds #3&\ds #4\end{array}\right]}
\renewcommand{\theequation}{\arabic{section}.\arabic{equation}}
\title{Strong asymptotics of the orthogonal polynomials with respect to a measure supported on the plane}
\author{Ferenc Balogh, Marco Bertola, Seung Yeop Lee, Kenneth D. T-R McLaughlin}
\begin{document}
\maketitle
\begin{abstract}
We consider the orthogonal polynomials $\{P_{n}(z)\}$ with respect to the measure
\begin{equation*}
|z-a|^{2N c} \e^{-N |z|^2} \,\d A(z)
\end{equation*} over the whole complex plane.
We obtain the strong asymptotic of the orthogonal polynomials in the complex plane and the location of their zeros in a scaling limit where $n$ grows to infinity with $N$.   

The asymptotics are described in terms of three (probability) measures associated with the problem. 
The first measure is the limit of the counting measure of zeros of the polynomials, which is captured by the $g$-function much in the spirit of ordinary orthogonal polynomials on the real line.
The second measure is the equilibrium measure that minimizes a certain logarithmic potential energy, supported on a region $K$ of the complex plane. 
The third measure is the harmonic measure of $K^c$ with a pole at $\infty$.  This appears as the limit of the probability measure given (up to the normalization constant) by the squared modulus of the $n$-th orthogonal polynomial times the orthogonality measure, i.e. $|P_n(z)|^2 |z-a|^{2N c} \e^{-N |z|^2} \,\d A(z)$.

The compact region $K$ that is the support of the second measure undergoes a topological transition under the variation of the parameter $t=n/N$; in a double scaling limit near the critical point given by $t_c=a(a+2\sqrt c)$ we observe the Hastings-McLeod solution to Painlev\'e\ II in the asymptotics of the orthogonal polynomials.

\end{abstract}

\tableofcontents

\section{Introduction}\label{sec:intro}

In this paper, we study orthogonal polynomials with respect to a weight function that is supported on the whole complex plane. 
Let $P_n(z)=P_{n,N}(z)$ be the monic orthogonal polynomial such that 
\begin{equation}\label{2d_ortho}
\int_{\C}P_{n,N}(z)\,\c{P_{m,N}(z)} \,e^{-NQ(z)}\d A(z) = h_{n,N}\delta_{nm} \quad (n,m = 0,1, \dots),
\end{equation}
where $N$ is a positive real parameter, $Q:\C\to\R$ is called the {\em external potential}, $\d A(z)$ is the standard area measure on $\C$, and $h_n=h_{n,N}$ is the (positive) norming constant.

Our principal motivation is the connection between these orthogonal polynomials and so-called
``random normal matrices", see \cite{mehta_book} for a complete description of this
connection. One considers the space of $n\times n$ normal matrices (those that
commute with their adjoint) equipped with a probability measure of the form
\begin{equation}\label{eq:gibbs}
\frac{1}{Z_{N,n}} e^{ - N \sum_{j=1}^n Q(z_j) } \d M \ ,
\end{equation}
where the $z_j$'s are the $n$ complex eigenvalues of a normal matrix $M$ and
$\d M$ represents the measure induced on normal matrices by the Euclidean metric on
the space ${\C}^{n^2}$ of all $n \times n$ complex matrices. 
One of the
fundamental quests within random matrix theory is to understand the statistical behavior
of the eigenvalues in the limit when $n$ (and $N$) tend to infinity. 
Most natural statistical
quantities concerning eigenvalues can be expressed in terms of the associated orthogonal
polynomials.
In particular, the {\it averaged density of eigenvalues} is 
\begin{equation}\label{density}
\sigma_{n,N}(z):=\frac{1}{n} e^{- N Q(z)} \sum_{j=0}^{n-1}\frac{\left| P_{j,N}(z) \right|^{2}}{h_{j,N}}.
\end{equation}
Using this fact, once one has complete asymptotic control of the orthogonal polynomials, it is an interesting challenge to compute many different statistical properties of the eigenvalues in a concrete fashion as $n$ and $N$ tend to $\infty$.

There is only a handful of external potentials $Q$ for which these polynomials can be
explicitly computed, and the average density can be analyzed asymptotically, as $n$ and $N$ tend to $\infty$. 
Taking the simplest case, $Q(z) = |z|^{2}$, one obtains the $h_{j,N}$'s and the averaged density of eigenvalues $\sigma_{n,N}$ in terms of the Gamma function. For this choice of $Q$, the associated matrix model is called the Ginibre ensemble, and has been studied to great effect via the simple orthogonal polynomials $z^{n}$ (see for instance \cite{mehta_book,forrester2010log} and the appendix of \cite{wz_2d_dyson_gas}).
In fact, for any radially symmetric potential $Q(z) = Q(|z|)$, the orthogonal polynomials are the monomials $z^{n}$.

A harmonic deformation of the Gaussian case, i.e. $Q(z)=|z|^2+(\text{harmonic})$, is particularly interesting (because of its relation to a growth model called the Hele-Shaw flow \cite{wz_normal_matrix_growth}).
The simplest such deformation of the Gaussian normal matrix model uses the potential 
$
Q = |z|^{2} + A \left( z^{2} + \overline{z}^{2} \right).
$
The associated orthogonal polynomials are now Hermite polynomials \cite{difrancesco_laughlin}, for which the asymptotics are well known.   The case with {\em cubic} deformation was recently studied in \cite{BK2012}.

Concerning the limiting density of eigenvalues, there are other approaches \cite{elbau_felder,AHM}, which connect to a classical energy problem from potential theory.\footnote{In random Hermitian matrices, these are analogous to the saddle point approach from the physics community (see, for example \cite{BIPZ}), which is put on a solid mathematical footing originally by Johansson \cite{johansson_fluctuations}.}

Let us scale the real parameter $N$ with $n$ so that $\lim_{n\to\infty} n/N=t$ for some fixed positive number $t$.
The general fact \cite{elbau_felder,AHM} is that, under some basic assumptions on the potential $Q$, the (normalized) counting measure of eigenvalues, averaged over the probability measure \eqref{eq:gibbs}, converges as $n\to\infty$ (in the sense of measures) to the unique probability measure $\mu^*$ in the plane which minimizes the functional ${\cal L}(\mu)$:
\begin{equation}\label{functional}
{\cal L}(\mu)=t\int_\C Q(z)\,\d \mu(z) +t^2\iint_{\C\times\C}\log\frac{1}{|z-w|}\d \mu(z)\,\d\mu(w).
\end{equation}
This functional is the well-known Coulomb energy functional in $\mathbb{R}^{2}$, and has a well-developed classical variational theory \cite{saff_totik_book}. 
The measure $\mu^*$ is equivalently characterized in terms of the Euler-Lagrange variational conditions, 
which can be
summarized as follows: 
$\mu^*$ is said to satisfy the Euler-Lagrange conditions if there is a constant $\ell_\text{2D}$ so that 
\begin{equation}\label{ELeq}
Q(z)-2t\int_\C \log|z-w|\, \d \mu^*(w)+\ell_\text{2D}\geq 0,
\end{equation}
for all $z$ in $\C$ (up to a set of capacity zero), with equality on the support of the measure $\mu^*$ (again up to a set of capacity zero).   The left hand side of the above is denoted by ${\cal U}_\text{2D}$, the {\em effective potential of the Coulomb system}, and of which the smooth extension, ${\cal U}$, (defined at \eqref{calU}) will show up in Corollary \ref{cor-harm} in a conspicuous way.

It is a general fact \cite{saff_totik_book} that under very mild assumptions on the growth of $Q(z)$ the minimizing equilibrium measure $\mu^*$ is supported on a compact set. Moreover, for $Q(z)=|z|^2+(\text{harmonic function})$, it is also known that the {\em density} of the equilibrium measure is uniform over the support.

 In this paper we study one of the simplest cases,  given by
\begin{equation}\label{ourQ}
Q(z) := |z|^2 +2c\log{\frac{1}{|z-a|}}, \ \ c>0,\ a>0.
\end{equation}
The case of $a\in\C\setminus\{0\}$ can be reduced to the above by a rotation of the coordinate.  We will assume $a\neq 0$ (for $a=0$, one has $P_{n,N}(z)= z^{n}$).

This case has been studied in \cite{Balogh,wz_normal_matrix_growth} where the corresponding equilibrium measure is described.

In the following paragraphs we shall define the region $K\subset \C$ in terms of $a,c$ and $t$; these regions will be proved to be the support of the equilibrium measure in Proposition \ref{thm-postK} and Proposition \ref{thm-pre}.

\paragraph{Definition of $K$ and the related objects:}
 
The region $K$ that we are going to define will have area $\pi t$ and depend on the parameters $a,c,t$ of our problem.
As Figure \ref{fig-1} indicates, $K$ undergoes a topological transition at the critical time $t_c$ given by 
\begin{equation}
t_c:= a(a+2\sqrt c).
\end{equation}
The next ingredient is a smooth curve which will be the attractor set for the zeros of the orthogonal polynomials.  We will denote it by ${\cal B}$.  As we will define below, ${\cal B}$ also undergoes similar topological transition as the set $K$.

We separate our analysis into the following three cases:
\begin{itemize}
\item the {\bf pre-critical case}, $0<t<t_c$:  The set $K$ is defined to be the interior of the real analytic Jordan curve given by the image of the unit circle $\{v\,|\,|v|=1\}$ with respect to the rational map,
\begin{equation}\label{fv}
f(v)= \rho\, v -\frac{\kappa}{v-\alpha}-\frac{\kappa}{\alpha},
\end{equation}
whose (positive) parameters $\rho$, $\kappa$ and $\alpha$ are given in terms of $a$, $c$ and $t$ below. 
\begin{equation}\label{rhokappa}
\rho = \frac {t+ a^2 \alpha^2}{2 a\,\alpha} ,\qquad \kappa  = \frac {(1-\alpha^2)(t- a^2 \alpha^2)}{2 a\, \alpha}.
\end{equation}
The parameter $\alpha$ is given by the {\em unique} solution of $P(\alpha^2)=0$ where
\begin{equation}\label{000}
P(X):=X^3-\left(\frac{a^2+4c+2t}{2a^2}\right)X^2+\frac{t^2}{2a^4},
\end{equation}
such that $0<\alpha<1$ and $\kappa>0$.
The existence of $\alpha$ satisfying this definition is stated in Lemma \ref{lem-preA} and the comment right below the lemma.   The proof that $K$ is the support of the equilibrium measure is provided in Proposition \ref{thm-pre}.

We denote the {\em critical values} and {\em critical points} of $f$ by
\begin{equation}\label{betapre}
\beta:= f\left(v_\beta\right)=\alpha\rho-\frac{\kappa}{\alpha}+2i\sqrt{\kappa\rho} \text{ ~~ where ~~} v_\beta:=\alpha +i\sqrt{\frac{\kappa}{\rho}}.
\end{equation}
and their complex conjugates, $\overline\beta$ and $\overline{v_\beta}$. 

We define the smooth arc ${\cal B}$ by the following three conditions:
\begin{itemize}
\item ${\cal B}$ has the endpoints at $\beta$ and $\overline\beta$.
\item ${\cal B}$ intersects the negative real axis.
\item On the curve ${\cal B}$ the following equation is satisfied.
\begin{equation}\label{eq:qd}
\frac{a^2(z-b)^2(z-\beta)(z-\c\beta)}{(z-a)^2z^2}\,{\rm d} z^2<0 ,\quad b:=\frac{\alpha}{\rho},
\end{equation}
where ${\rm d}z$ is the standard differential.
\end{itemize}
The uniqueness and the existence of ${\cal B}$ satisfying these three conditions is shown in Lemma \ref{lem:calB} (and the remark below the lemma).  

We define $y(z)\,{\rm d}z$ by the square root of the quadratic differential in \eqref{eq:qd}, i.e.
\begin{equation}\label{eq:y}
	y(z):=\frac{a(z-b)\sqrt{(z-\beta)(z-\c\beta)}}{(z-a)z},\quad z\notin{\cal B},
\end{equation}
by choosing ${\cal B}$ as the branch cut, and taking the analytic branch that satisfies $\lim_{z\to\infty}y(z)=a$.

\item the {\bf post-critical case}, $t>t_c$:  We define $K$ to be the region between two circles, i.e.
\begin{equation}
	K=\overline{D(0,\sqrt{t+c})}\setminus D(a,\sqrt c)
\end{equation}
 where $D(A,B)$ stands for the disk of radius $B$ centered at $A$.
The proof that $K$ is the support of the equilibrium measure is provided in Proposition \ref{thm-postK}.

We define the two points $\beta<b$ on the real axis by 
\begin{equation}\label{bbetapost}
b=\frac{a^2+t+\sqrt{(t-a^2)^2-4a^2c}}{2a}\ ,
\qquad
\beta=\frac{a^2+t-\sqrt{(t-a^2)^2-4a^2c}}{2a}\ .
\end{equation}
We define the simple closed curve ${\cal B}$ by the following three conditions:
\begin{itemize}
\item ${\cal B}$ is a simple closed curve with $\beta\in{\cal B}$.
\item ${\cal B}$ has $0$ and $a$ in its interior.
\item On the curve ${\cal B}$ the following equation is satisfied.
\begin{equation}\label{eq:qd1}
a^2\frac{(z-b)^2(z-\beta)^2}{z^2(z-a)^2}{\rm d}z^2<0 
\end{equation}
\end{itemize}
The unique existence of the curve satisfying the three conditions is argued in the 2nd paragraph of Section \ref{sec-ypost}.

Using ${\cal B}$ we define the function $y$ by 
\begin{equation}\label{ypost}
y(z):=\pm a\frac{(z-b)(z-\beta)}{z(z-a)},\quad \begin{cases} +: z\in{\rm Ext}({\cal B});
\\-:  z\in{\rm Int}({\cal B}).
\end{cases}
\end{equation}
\item the {\bf critical case}, $t\sim t_c$: We consider a {\em scaling limit} to the criticality such that
\begin{equation}\label{eq:ttc}
t-t_c={\cal O}(N^{-2/3}),
\end{equation}
that is, we look at the scaling limit such that $\lim_{N\to\infty}N^{2/3}(t-t_c)$ converges.

We define $b$ and $\beta$ by \eqref{bbetapost} as in the post-critical case.
When $t>t_c$ these definitions give {\em complex} $b$ and $\beta$ such that $b=\overline\beta$; we choose $b$ and $\beta$ such that ${\rm Im}\,\beta>0>{\rm Im}\,b$.  Using these $b$ and $\beta$ the rest of the definitions, for ${\cal B}$ and the function $y$, are exactly same as in the post-critical case.  The uniqueness and the existence of ${\cal B}$ in this case is stated after \eqref{betabasymp}. 

\end{itemize}
\begin{figure}[ht]
\begin{center}
\includegraphics[width=0.9\textwidth]{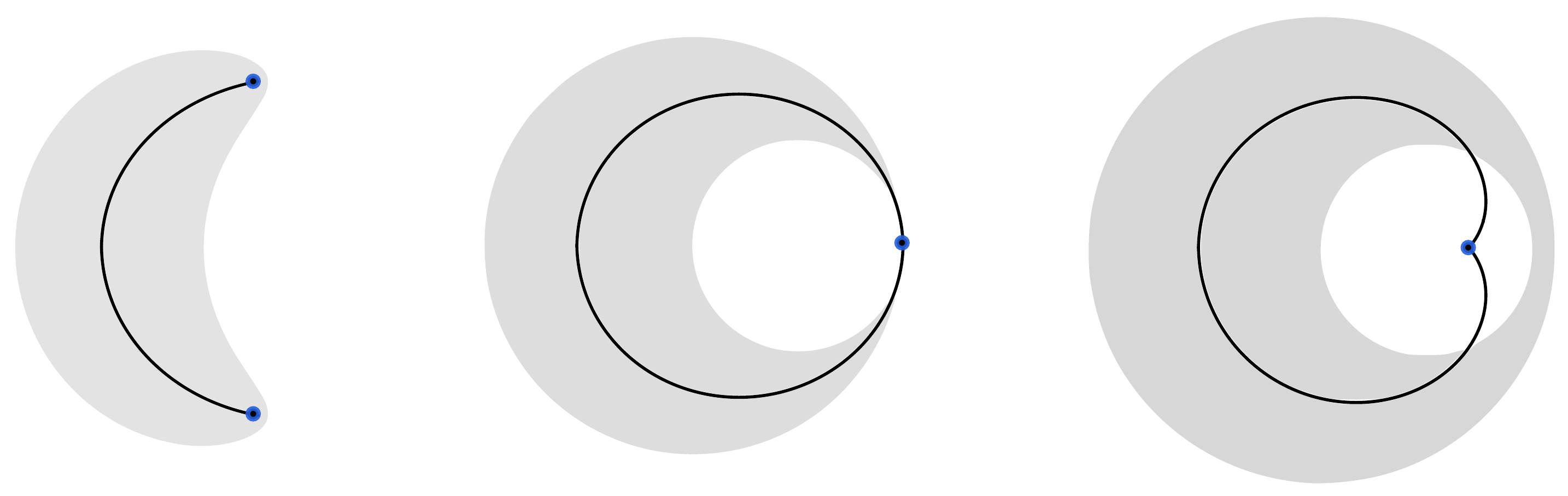}
\caption{\label{fig-1}$K$ (shaded regions), ${\cal B}$ (lines) and $\beta$ and/or $\c\beta$ (dots).  The pre-critical (left), critical (middle), post-critical (right) cases.  
}
\end{center}
\end{figure}

In the next two definitions we will introduce two functions.  The first function, $\phi$, will be the main object for the steepest descent analysis, namely we take the steepest descent with respect to $\Re\phi$.  The latter is also the effective potential of the limiting distribution of the zeros.  The second function, $g$-function, is the function whose real part is the logarithmic potential of the limiting distribution of the zeros.  Outside $K$, the latter coincides with the logarithmic potential of the uniform measure supported on $K$.

\begin{defn}\label{def-phi0} For all $t$, we define the {\em multi-valued} $\phi(z)$ by
\begin{equation}
\phi(z):=\int_\beta^z y(s)\,\d s
\end{equation}
where the integration contour lies in $\C\setminus{\cal B}$.
\end{defn}
Regarding the multi-valuedness, see the remark below Definition \ref{def-phi}.
\begin{defn}\label{def-g}
For all $t$, the $g$--function is defined by
\begin{equation}
\label{ydef}
2  g(z) := \frac{1}{t}V(z)-\frac{1}{t} \phi(z) + \ell .
\end{equation}
The constant $\ell\in\R$ is defined such that $\lim_{z\to\infty} [g(z)-\log z]= 0$.
\end{defn}

For the critical and the post-critical cases, the $g$--function turns out to be simple:
\begin{equation}
\label{gdough}
g(z) = \begin{cases}\displaystyle \ln z + \frac c t \ln \left(\frac {z}{z-a} \right), &  z\in {\rm Ext}({\cal B});
\\\displaystyle 
 \frac a t z + \ell, &z\in{\rm Int}({\cal B});
\end{cases}
\end{equation}
where the Robin's constant $\ell$ is chosen so that $\Re g(z)$ is continuous at $\beta \in {\cal B}$ and thus given by:
\begin{equation}\label{Robindough}
\ell=\frac{1}{t}\Re\big[(t+c)\log\beta-c\log(\beta-a)-\beta a\big].
\end{equation}
We should mention that the curve $\mathcal B$ is a part of the level set $\Re g(z)=0$ because of the conditions \eqref{eq:qd} and \eqref{eq:qd1}.  We also note that $\Re$ is needed when $\beta$ is complex for the critical case (i.e. in the scaling regime), see the next paragraph below \eqref{eq:ttc}.

\paragraph{Summary of results:} We consider the following version of the scaling limit, given by
\begin{equation}\label{eq:Nnrt}
N=(n-r)/t,\quad n \to \infty, \quad \text{for fixed constants, $t\in\R^+$ and $r\in {\mathbb Z}$}.
\end{equation}
Note that $N$ is {\em defined} by (1.23) in terms of $n, r, t$ and thus is not an integer.
We express our asymptotic results in terms of {\em geometric quantities related to $K$} (we believe that, in such form, the asymptotic expression holds for more general $Q$). 
Let us define ${\bf P}(K)$ to be the polynomial convex hull of $K$ (which in the present cases may be defined as the set whose complement is precisely the connected component of $K^c$ containing $\infty$).  Since $K$ is either simply connected ($t<t_c$), or multiply connected ($t\geq t_c$) as shown in Figure \ref{fig-1}, we have
\begin{equation}
{\bf P}(K)=\begin{cases} K , &\text{ pre-critical case}.
\\ \{ z:|z|<\sqrt{t+c}\}, &\text{ critical and post-critical cases}.
\end{cases}
\end{equation}
Below, we describe some quantities naturally associated with the geometry of $K$. 
\begin{itemize}
\item[--] Let $F$ be the conformal map $F:\widehat\C\setminus {\bf P}(K)\to \widehat\C\setminus\overline \D$ (where $\D$ is the unit disk centered at the origin and $\widehat \C=\C\cup\{\infty\}$) such that $F(z)=z/\rho + {\cal O}(1)$ for some $\rho>0$ as $z\to\infty$.  The constant $\rho$ is called the {\em logarithmic capacity of ${\bf P}(K)$}. 
\item[--] 
We extend the domain of $F$.  In the pre-critical case, we will show that $F$ extends analytically to $\C\setminus{\cal B}$ (See Lemma \ref{lem-F}). In the post-critical case, as well as the critical case, $F(z)=z/\sqrt{t+c}$ extends analytically to the whole ${\rm Ext}({\cal B})$, where $\mathcal B$ is a closed Jordan curve as defined below \eqref{bbetapost}.  
\item[--]  For $z\notin K$, $g(z)$ is exactly the {\em complex logarithmic potential of the measure uniformly supported on $K$} (see Lemma \ref{lem-g}). 
For the critical and the post-critical cases, $g(z)$ is given explicitly by \eqref{gdough}.  
\end{itemize}
To state our results, we define the regions $\Omega_\pm$ to be on each side of ${\cal B}$ as in Figure \ref{fig_31} and Figure \ref{default}. (These regions are only used to describe the region around ${\cal B}$ and, therefore, we do not need their exact extent away from ${\cal B}$.) 

We further define a few constants:
\begin{equation}\label{gamma1}
\gamma_1:=\sqrt{\frac{\beta(\beta-a)}{a(b-\beta)}},\qquad \gamma_c:=2 \frac{b_c^{1/3} c^{1/6}}{a^{1/3}}, \qquad b_c:=a+\sqrt c.
\end{equation}

\begin{theorem}[Pre-critical]\label{thm1}
For any fixed $t<t_c$, we have
\begin{equation}\label{Pnm}
P_{n,N}(z)=\left(1+{\cal O}\left(N^{-1}\right)\right)\sqrt{\rho F'(z)}\, (\rho F(z))^r \e^{ N t \,g(z)}, 
\end{equation}
uniformly over $z$ in compact subsets of $\widehat\C\setminus {\cal B}$.  For $z$ near ${\cal B}$ but away from $\beta$ and $\c\beta$, we have
\begin{equation}\label{onBpre}
P_{n,N}(z)={\rm e}^{tN g(z) } \sqrt{\rho F'(z)}
\left(  (\rho F(z))^r  \pm   {\rm e}^{  N \phi(z)  } \left(\frac{\alpha\,z}{F(z)}\right)^{r} \frac{\sqrt{\kappa\rho}}{\rho F(z)-\alpha\rho}+ \mathcal O(N^{-1})\right),
\end{equation}
uniformly over $z$ in compact subsets of $\overline{\Omega_\pm}\setminus\{\beta,\c\beta\}$ (respectively for $\pm$).
The real parameters, $\kappa$ and $\alpha$, are uniquely determined from the conformal map $F$ by
\begin{equation}
F(z)=\frac{z}{\rho}+\frac{\kappa}{\alpha\rho}+\frac{\kappa}{z}+{\cal O}\left(\frac{1}{z^2}\right),\quad z\to\infty.
\end{equation}
Lastly, in order to describe $P_{n,N}$ near $\c\beta$, define $z(\zeta):=\overline\beta+ (C_{\c\beta}N)^{-2/3}\zeta$ with
\vspace{-0.2cm}
\begin{equation}\label{Cbeta}
C_{\c\beta}:= \frac{a(\c\beta-b)\sqrt{\c\beta-\beta}}{2\c\beta(\c\beta-a)}.
\end{equation}
where $\beta$ and $b$ are defined at \eqref{betapre} and \eqref{eq:qd}.   We have
\begin{equation}\label{PAiry}
P_{n,N}\left(z(\zeta)\right)=\left(\frac{\sqrt{\kappa\rho}}{4\ii(z(\zeta)-\c\beta)}\right)^{1/4} (\c v_\beta\rho)^r \sqrt{\pi}\, \zeta^{1/4} \left({\rm Ai}(\zeta)+{\cal O}\left(N^{-1/3}\right)\right) \,\e^{\frac 23\zeta^{3/2}}\e^{Ntg(z(\zeta))},
\end{equation}
uniformly over $\zeta$ in compact subsets of $\C$.
The correct branch of the $4$th-root is determined by knowing that the factor comes from the leading behavior of $\sqrt{\rho F'(z)}$ when $z\to\c\beta$.
The case for $z\to\beta$ is obtained by the replacement: $\beta\leftrightarrow\c\beta$. 
\end{theorem}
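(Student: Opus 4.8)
\noindent\emph{Proof plan.}
The proof is a Deift--Zhou steepest-descent argument organized around the $g$-function (Definition~\ref{def-g}), the function $\phi$ (Definition~\ref{def-phi0}), and the conformal map $F$. One begins by representing $P_{n,N}$ through a $\bar\partial$-Riemann--Hilbert problem encoding the planar orthogonality \eqref{2d_ortho}: the unknown $Y$ is a $2\times2$ matrix with $P_{n,N}$ in the $(1,1)$ entry, $Y=(I+\mathcal O(z^{-1}))z^{n\sigma_3}$ at infinity, and $\bar\partial Y=YW$ with $W$ a nilpotent rank-one matrix built from the weight $|z-a|^{2Nc}\e^{-N|z|^2}$ (setting this up so that the branch of the non-integer power and the behaviour at infinity are consistent with the orthogonality is already one of the technical points). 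The first transformation, $Y\mapsto T$, conjugates by $\e^{-Ntg(z)\sigma_3}$ and by the constant $\e^{Nt\ell\sigma_3/2}$. Using the relation $2tg=V-\phi+t\ell$ from Definition~\ref{def-g} and the identification of $\Re g$ with the logarithmic potential of the uniform measure on $K$ (Lemma~\ref{lem-g}), the $\bar\partial$-data of $T$ becomes, up to unimodular factors, $\e^{N\Re\phi}$ times a nilpotent rank-one matrix. The Euler--Lagrange inequality \eqref{ELeq} gives $\Re\phi\le 0$ off the droplet, while the condition \eqref{eq:qd} defining $\mathcal B$, namely $y^2\,\d z^2<0$ on $\mathcal B$, identifies $\mathcal B$ with the critical trajectory $\Re\phi=0$ and forces $\Re\phi<0$ on the two sides $\Omega_\pm$ near $\mathcal B$. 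Hence, by a Laplace-type estimate, the $\bar\partial$-mass concentrates exponentially fast onto $\mathcal B$, and to all orders in $N^{-1}$ the problem reduces to a contour Riemann--Hilbert problem for $T$ with an oscillatory jump across $\mathcal B$ whose off-diagonal entry is $\e^{N\phi}$ times an algebraic factor.

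\medskip
From here the analysis is the classical one. One opens lenses $\Omega_\pm$ along $\mathcal B$, factoring the oscillatory jump so that the two lens boundaries carry jumps $I+\mathcal O(\e^{-\delta N})$ away from $\beta,\c\beta$ by the sign of $\Re\phi$. One then divides by a global parametrix $R^{(\infty)}$ solving the model problem with jump only on $\mathcal B$: since $\rho F$ maps $\widehat\C\setminus\mathcal B$ conformally onto the complement of an arc of the unit circle (Lemma~\ref{lem-F}), $R^{(\infty)}$ is written explicitly from $(\rho F)^{\pm r}$ together with the Szeg\H o-type factor $(\rho F')^{\pm1/2}$, whose square-root branching at the tips $\beta,\c\beta$ produces the $(z-\c\beta)^{-1/4}$-type prefactor in \eqref{PAiry}. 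Near $\beta$ and near $\c\beta$ one inserts Airy parametrices: these are simple branch points of $y^2\,\d z^2$, so $\phi$ vanishes there to order $3/2$; the local coordinate $\zeta$ in \eqref{PAiry} is the conformal map for which $\tfrac23\zeta^{3/2}$ equals $N\phi$ up to an additive constant, which fixes $C_{\c\beta}$ in \eqref{Cbeta} through the leading coefficient of $y$ at $\c\beta$, and this parametrix matches $R^{(\infty)}$ on the boundary of a fixed small disk to relative order $N^{-1}$ (equivalently $N^{-1/3}$ inside the disk in the $\zeta$-scale).

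\medskip
The final unknown $S$ --- $T$ after lens opening, divided by the global or the Airy parametrix --- solves a small-norm Riemann--Hilbert problem: jumps $I+\mathcal O(N^{-1})$ on the circles around $\beta,\c\beta$, $I+\mathcal O(\e^{-\delta N})$ elsewhere, and an exponentially small residual $\bar\partial$-contribution; standard theory then yields $S=I+\mathcal O(N^{-1})$ uniformly. Unraveling the transformations, with $P_{n,N}$ the $(1,1)$ entry of $Y$ and using $Nt=n-r$ so that $z^n=\e^{Ntg(z)}(\rho F(z))^r(1+\mathcal O(z^{-1}))$ at infinity, one reads off \eqref{Pnm} on compacts of $\widehat\C\setminus\mathcal B$, \eqref{onBpre} in $\overline{\Omega_\pm}$ (where the now-subdominant lens contribution reinstates the $\e^{N\phi}$ term), and \eqref{PAiry} in the Airy disks after substituting the explicit Airy parametrix, the branch of the $4$th root being pinned by continuity with $\sqrt{\rho F'}$. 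The main obstacle is the first stage: showing that the genuinely two-dimensional $\bar\partial$-problem collapses, with exponentially small error, onto the contour problem on $\mathcal B$ --- that is, formulating the $\bar\partial$-problem correctly and controlling the resulting $\bar\partial$-integral by Laplace's method, which requires the global inequality $\Re\phi\le 0$ together with the explicit geometry of $K$ and $\mathcal B$ coming from \eqref{fv}--\eqref{eq:qd}. Once that reduction is in place, the remaining steepest-descent steps are routine, and the sharper $\mathcal O(N^{-1})$ error away from the endpoints is the usual consequence of the order of the first correction in the parametrix matching.
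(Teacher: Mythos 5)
Your plan departs from the paper at the crucial first step, and that departure is a genuine gap. You propose to encode the planar orthogonality directly as a $\bar\partial$-Riemann--Hilbert problem and then argue that, after the $g$-function conjugation, the $\bar\partial$-data is controlled by $\e^{N\Re\phi}$ and collapses, by a Laplace-type estimate, onto the arc $\mathcal B$. This is not what happens. The $\bar\partial$-data of such a problem is proportional to the planar weight $|z-a|^{2Nc}\e^{-N|z|^2}$, so after conjugating by $\e^{-Ntg\sigma_3}$ the relevant magnitude is $\e^{-NQ+2Nt\Re g}=\e^{-N\,{\cal U}_\text{2D}}$ (up to an $N$-independent constant). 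The Euler--Lagrange relation \eqref{ELeq} says ${\cal U}_\text{2D}\equiv 0$ on the entire two-dimensional droplet $K$ and ${\cal U}_\text{2D}>0$ only off $K$. So the $\bar\partial$-mass remains $\mathcal O(1)$ over all of $K$ --- a set of positive area --- and does not concentrate on the one-dimensional cut $\mathcal B$, which lies strictly inside $K$. There is no Laplace collapse from a 2D region to a 1D contour, and $\Re\phi={\cal U}_\text{OP}$ is not the exponent that controls the $\bar\partial$-data; that role is played by ${\cal U}_\text{2D}$, a different function.

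What the paper actually uses is an exact, purely algebraic reduction, with no asymptotics at this stage: Lemma~\ref{lemma:moments} converts each planar bimoment $\iint z^j\,\overline{(z-a)}^{\,k}|z-a|^{2Nc}\e^{-N|z|^2}\d A(z)$ into a contour integral via a Gamma-function identity, so that $P_{n,N}$ is, exactly, the orthogonal polynomial for the non-Hermitian weight $w_{n,N}(z)\,\d z$ on a loop $\Gamma$ enclosing $0$ and $a$. This yields the genuine $2\times2$ contour RHP \eqref{YRHP} to which Deift--Zhou applies; the $g$-function is then built from the semiclassical potential $V$ attached to $w_{n,N}$ (not directly from the planar $Q$), and $\mathcal B$ arises as the support of the associated one-dimensional measure. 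The later stages of your plan --- lens opening in $\Omega_\pm$, the outer parametrix built from $(\rho F)^{\pm r}$ and $\sqrt{\rho F'}$ (the paper obtains it via the deck-transformation/spinor construction in Proposition~\ref{solPsiRHP}), Airy parametrices at $\beta,\overline\beta$ with $C_{\overline\beta}$ fixed by the leading coefficient of $y$, and the small-norm estimate giving the $\mathcal O(N^{-1})$ error --- do align with Section~\ref{sec-pre}. The gap is precisely at the ``reduction from 2D to 1D'' step, which you yourself flag as the main obstacle but resolve by an argument that fails; the paper resolves it by the exact identity of Lemma~\ref{lemma:moments}, a special feature of this weight.
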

In \eqref{PAiry} we write the asymptotic formula such that one can easily see that it matches the formula \eqref{Pnm} as $|\zeta(z)|\to\infty$.

\begin{theorem}[Post-critical]\label{thm2}  For $t>t_c$, we have
\begin{equation}\label{Pnmpost}
P_{n,N}(z)=  \begin{cases}\displaystyle
\left(1+{\cal O}\left(N^{-3/2}\right)\right)z^{n}\left(\frac {z}{z-a}\right)^{Nc}, &z\in{\rm Ext}({\cal B});
\vspace{0.1cm}\\\displaystyle
\left(\frac{\gamma_1\beta^r}{\sqrt{2\pi N}}+{\cal O}\left(N^{-3/2}\right)\right) \frac{\e^{Naz +  tN \ell}}{\beta-z},
&z\in{\rm Int}({\cal B});
\end{cases} 
\end{equation}
uniformly over $z$ in compact sets of $\widehat\C\setminus{\cal B}$.  The constant $\ell$ is defined in \eqref{Robindough}.  For $z$ near ${\cal B}$ but away from $\beta$, we have
\begin{equation}\label{onBpost}
P_{n,N}(z)=\begin{cases}\displaystyle
z^{n}\left(\frac {z}{z-a}\right)^{Nc}\left(1-\frac{\gamma_1\beta^r}{\sqrt{2\pi N}}\frac{{\rm e}^{N\phi(z)}}{z^r (z-\beta)}+{\cal O}\left(N^{-3/2}\right)\right)  ,\quad &z\in \overline{\Omega_-}\setminus\{\beta\};
\vspace{0.1cm}\\ \displaystyle
\e^{Naz +  tN \ell}\left(z^r{\rm e}^{N\phi(z)}-\frac{\gamma_1\beta^r}{\sqrt{2\pi N}}\frac{1}{\beta-z}+{\cal O}\left(N^{-3/2}\right)\right), & z\in \overline{\Omega_+}\setminus\{\beta\};
\end{cases}
\end{equation}
uniformly over $z$ in compact subsets of $\overline{\Omega_\pm}\setminus\{\beta\}$.

To describe $P_{n,N}$ near $\beta$, we define $z(\zeta):=\beta+\gamma_1 N^{-1/2}\zeta$ with $\gamma_1$ defined at \eqref{gamma1}.  We have
\begin{equation}
P_{n,N}(z(\zeta))= \e^{Naz +  tN \ell}\beta^r \left( {\cal F}(\zeta) +{\cal O}\left(N^{-1/2}\right)\right), 
\end{equation}
uniformly over $\zeta$ in compact subsets of $\C$, where ${\cal F}$ is the entire function given by
\begin{equation}\label{calF}
{\cal F}(\zeta):= \displaystyle\frac{1}{2i\pi} \int_{-i\infty}^{i\infty} \frac{{\rm e}^{s^2/2}}{(s-\zeta)}\, \d s +\begin{cases} 0, &\Re\zeta<0,\\ \e^{\zeta^2/2}, &\Re\zeta>0.\end{cases}
\end{equation}
(This function is plotted in Figure \ref{WeberFig}.) 
\end{theorem}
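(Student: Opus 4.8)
\textbf{Proof proposal for Theorem~\ref{thm2}.}
I would carry out a Deift--Zhou steepest-descent analysis of the Riemann--Hilbert problem characterizing the planar orthogonal polynomial $P_{n,N}$, specialized to the post-critical geometry $t>t_c$. The key structural input is that, for the weight $|z-a|^{2Nc}\e^{-N|z|^2}$, the planar Cauchy transform of the weight admits a closed form in terms of an incomplete Gamma function; this lets one replace the $\bar\partial$-characterization of $P_{n,N}$ by an honest $2\times2$ problem with jumps on a single contour, and it also explains a priori why a parabolic-cylinder (Weber) parametrix, rather than an Airy one, is needed near $\beta$: that point is exactly where the incomplete Gamma function passes between its two large-parameter regimes, which is mirrored in the phase $\phi$ of Definition~\ref{def-phi0} having a \emph{double} zero at $\beta$ (coming from the simple zero of $y$ at $\beta$, see \eqref{ypost}).

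\textbf{Normalization and outer parametrix.}
First I would remove the growth at infinity with the explicit $g$-function \eqref{gdough} (the usual $\e^{\pm Nt g}$ conjugation), after which the transformed matrix is $I+\order{1/z}$ at $\infty$, has jumps on $\mathcal B$ together with a conveniently routed branch cut of $(z-a)^{Nc}$, and its jump on $\mathcal B$ is oscillatory in $\e^{\pm N\phi}$ (here I use that $\mathcal B\subset\{\Re g=0\}$, which is the content of \eqref{eq:qd1}). Opening a lens $\overline{\Omega_+}\cup\overline{\Omega_-}$ about $\mathcal B$ turns the two lens-boundary jumps into $I+(\text{exponentially small})$ away from $\beta$, using $\Re\phi<0$ on the exterior side $\Omega_-$. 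The leftover model problem --- a piecewise-constant jump on $\mathcal B$ (the limiting, off-$\beta$ value of the incomplete-Gamma weight) plus the branch cut --- is then solved explicitly using the conformal map $F(z)=z/\sqrt{t+c}$ and the elementary factors $z^n$, $(z/(z-a))^{Nc}$, $\e^{Naz+tN\ell}$ with $\ell$ given by \eqref{Robindough}. This produces the leading terms of \eqref{Pnmpost} in ${\rm Ext}(\mathcal B)$ and ${\rm Int}(\mathcal B)$ respectively; the outer parametrix is $\order{1}$ but singular at $\beta$, so a local analysis there is required.

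\textbf{Weber parametrix at $\beta$.}
Near $\beta$ one has $\phi(z)=\pm\tfrac1{2\gamma_1^2}(z-\beta)^2+\order{(z-\beta)^3}$ with $\gamma_1$ as in \eqref{gamma1}, so in the scaled variable $z=\beta+\gamma_1 N^{-1/2}\zeta$ the phase satisfies $N\phi(z)=\pm\tfrac12\zeta^2+o(1)$; this is the hallmark of a Weber parametrix. I would construct the local solution from the entire function $\mathcal F$ of \eqref{calF}, which is (up to normalization) the Cauchy transform of $\e^{s^2/2}$ along the imaginary axis plus the half-plane term $\e^{\zeta^2/2}$, using the matching asymptotics $\mathcal F(\zeta)\sim-1/(\sqrt{2\pi}\,\zeta)$ as $\zeta\to\infty$ (with the extra $\e^{\zeta^2/2}$ in the appropriate half-plane). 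Matching $\mathcal F$ to the outer parametrix on a fixed circle around $\beta$ pins down every constant, in particular the combination $\gamma_1\beta^r/\sqrt{2\pi N}$, and shows that the local and outer parametrices agree on that circle up to a factor $I+\order{N^{-1/2}}$.

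\textbf{Error analysis, unravelling, and the main obstacle.}
The ratio $R$ of the exact solution to the combined outer/local parametrix then solves a small-norm Riemann--Hilbert problem, with jumps $I+\order{N^{-1/2}}$ on the circle around $\beta$ and $I+(\text{exponentially small})$ elsewhere; hence $R=I+N^{-1/2}R_1+\order{N^{-1}}$ with $R_1$ a rational function having a simple pole at $\beta$. Unravelling all transformations and reading off the $(1,1)$-entry yields the stated formulas: away from $\mathcal B$ the $N^{-1/2}$-correction appears multiplied by $\e^{N\phi}$, which is exponentially small there, so it is absorbed into the $\order{N^{-3/2}}$ term in ${\rm Ext}(\mathcal B)$ and shows up explicitly near $\mathcal B$ in \eqref{onBpost}; inside $\mathcal B$ the $N^{-1/2}$-term is itself the leading behavior (the $1/(\beta-z)$ term in \eqref{Pnmpost}); and the $z(\zeta)$-scaling formula near $\beta$ is nothing but the local parametrix written in the original variable, with $\mathcal F$ the Weber function. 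Carrying the small-norm expansion one further order, together with the exponential suppression of the $N^{-1/2}$-term off $\mathcal B$, upgrades the exterior error to $\order{N^{-3/2}}$. The step I expect to be genuinely hard is the Weber parametrix: setting up the correct model Riemann--Hilbert problem, fixing the branch choices, and extracting the exact normalizing constants ($\gamma_1$, $\gamma_1\beta^r/\sqrt{2\pi N}$, and the identification of the entire function $\mathcal F$) so that it matches the outer parametrix to the needed orders, and doing so uniformly enough that the same parametrix interpolates with the Painlev\'e~II parametrix in the critical (double-scaling) regime $t\to t_c$. A secondary point needing care is the global topology of the level set $\{\Re g=0\}$ --- that $\mathcal B$, as characterized below \eqref{bbetapost} (see also Section~\ref{sec-ypost}), really does separate the plane with the signs of $\Re\phi$ used above --- and verifying that the error is $\order{N^{-3/2}}$ rather than merely $\order{N^{-1}}$.
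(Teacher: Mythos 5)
Your route matches the paper's in broad strokes: the reduction to a contour Riemann--Hilbert problem (the incomplete-Gamma mechanism in Lemma~\ref{lemma:moments}), the $g$-function normalization via \eqref{gdough}, lens opening about $\mathcal B$, the elementary outer parametrix $\Phi$ of \eqref{Phisol}, a parabolic-cylinder (Weber) local parametrix at $\beta$ (where $\phi$ has a double zero because $y$ vanishes simply), and unravelling the transformations to read off $P_{n,N}$. You also correctly identify the scaling $z=\beta+\gamma_1 N^{-1/2}\zeta$ and the role of $\mathcal F$.

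The organizational difference is in the error analysis. You propose the standard route: set up the ratio $R$ as a small-norm problem with $I+\order{N^{-1/2}}$ jumps on $\partial\mathbb D_\beta$, expand $R = I + N^{-1/2}R_1+\dotsb$, and recover corrections by unravelling. The paper instead modifies the outer parametrix by a left prefactor $\Xi(z)$ (a ``partial Schlesinger transform'', see \eqref{H0}) chosen so that the jump of the error matrix ${\cal E}$ on $\partial\mathbb D_\beta$ is already $I+\order{N^{-3/2}}$; the would-be $R_1$ correction is thereby absorbed into $A^\infty$ at the construction stage. The two viewpoints are equivalent in principle, but the paper's version is cleaner for extracting the interior asymptotics, whose leading term is itself of order $N^{-1/2}$ and therefore demands a parametrix accurate beyond that order; the paper makes exactly this point immediately after \eqref{postA}.

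There is a genuine gap in your proposal at the step from $\order{N^{-1}}$ to $\order{N^{-3/2}}$. You write $R=I+N^{-1/2}R_1+\order{N^{-1}}$ and assert that ``carrying the expansion one further order'', together with exponential suppression of the $N^{-1/2}$-term off $\mathcal B$, upgrades the error to $\order{N^{-3/2}}$. That is not a justification: exponential suppression of the \emph{explicit} correction term says nothing about the size of the \emph{unresolved remainder}, and a generic small-norm expansion with an $\order{N^{-1/2}}$ jump leaves a $\order{N^{-1}}$ floor. The decisive cancellation, recorded in \eqref{F}, is that the expansion of ${\bf F}(\zeta)$ \emph{skips} the $\zeta^{-2}$ term because $\int_{-i\infty}^{i\infty} s\,\e^{s^2/2}\,\d s=0$ by parity; combined with the nilpotency of the $\zeta^{-1}$ coefficient (which kills the cross term when one conjugates by $\Xi^{-1}$, cf.\ \eqref{whyH0}), this is what makes the jump of ${\cal E}$ on $\partial\mathbb D_\beta$ of size $\order{N^{-3/2}}$ rather than $\order{N^{-1}}$. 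Without observing this vanishing, the $\order{N^{-3/2}}$ error bound in \eqref{Pnmpost}--\eqref{onBpost} is not established by your argument.
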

\begin{theorem}[Critical]\label{thm-crit} In the critical case, i.e. $t-t_c={\cal O}(N^{-2/3})$, we have
\begin{equation}\label{Pnmcrit}
P_{n,N}(z)=
\begin{cases}
\displaystyle \left(1 - \frac {\gamma_c u(s) }{2N^{1/3}(z-b_c)} +{\cal O}\left(N^{-2/3}\right)\right)z^{n}\left(\frac {z}{z-a}\right)^{Nc}, 
 &z\in {\rm Ext}({\cal B});
\vspace{0.1cm}\\\displaystyle
\left(\frac{\gamma_c \,q(s) (b_c)^r}{2N^{1/3}}\e^{\frac{2}{3}|s|\sqrt s}+{\cal O}\left(N^{-2/3}\right)\right) \frac{\e^{Naz +  tN \ell}}{\beta-z} , &z\in {\rm Int}({\cal B});
\end{cases}
\end{equation}
uniformly over $z$ in compact subsets of $\widehat\C\setminus{\cal B}$. 
Here 
\begin{equation}
 	s=c^{1/6}a^{-1/3}(b_c)^{-2/3}N^{2/3}(t-t_c)~\text{ and }~ u(s):=q'(s)^2-s\,q(s)^2-q(s)^4
 \end{equation} 
with $q(s)$ being the Hastings-McLeod solution of Painlev\'e II equation.

For $z$ near ${\cal B}$ but away from $\beta$, we have
\begin{equation}\label{onBcrit}
P_{n,N}(z) =\begin{cases}\displaystyle
z^{n}\left(\frac {z}{z-a}\right)^{Nc}\left(1 - \frac {\gamma_{c}  u(s)}{2 N^{1/3}(z-b_c)} + \frac { \gamma_{c}  q(s)\,(b_c)^{r}{\rm e}^{\frac{2}{3}|s|\sqrt s}}{2N^{1/3}z^r (z-b_c)} e^{N\phi(z)} +{\cal O}\left(N^{-2/3}\right)\right) ,
\vspace{0.1cm}\\\displaystyle
\e^{Naz +  tN \ell}\left(\left(1- \frac {\gamma_{c}  u(s)}{2 N^{1/3}(z-b_c)} \right) z^r\e^{N\phi(z)}+ \frac { \gamma_{c}  q(s)\,(b_c)^{r}{\rm e}^{\frac{2}{3}|s|\sqrt s }}{2N^{1/3} (z-b_c)} +{\cal O}\left(N^{-2/3}\right)\right),
\end{cases}\!\!\!\!
\end{equation}
uniformly over $z$ in compact subsets, respectively, of $\overline{\Omega_-}\setminus\{\beta\}$ and of $\overline{\Omega_+}\setminus\{\beta\}$.
\end{theorem}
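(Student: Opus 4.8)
\textbf{Outline of proof of Theorem~\ref{thm-crit}.}
The plan is to run the Deift--Zhou nonlinear steepest-descent analysis of the \(\bar\partial\)-Riemann--Hilbert problem whose \((1,1)\)-entry is \(P_{n,N}\), reusing the entire global part of the post-critical analysis (Theorem~\ref{thm2}) and inserting one new local parametrix, of Painlev\'e~II type, at the coalescence point \(b_c=a+\sqrt c\). The starting observation is that \(t_c\) is exactly the value at which the discriminant \((t-a^2)^2-4a^2c\) in \eqref{bbetapost} vanishes: at \(t=t_c\) the two simple zeros \(b,\beta\) of the function \(y\) of \eqref{ypost} collide, \(b=\beta=b_c\), while in the scaling window \eqref{eq:ttc} they sit within \({\cal O}(N^{-1/3})\) of \(b_c\), with squared separation \((b-\beta)^2\) proportional to \(t-t_c={\cal O}(N^{-2/3})\). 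Since by construction the curve \({\cal B}\), the function \(y\), the \(g\)-function \eqref{gdough}--\eqref{Robindough} and the conformal map \(F(z)=z/\sqrt{t+c}\) are given by the \emph{same} formulas as in the post-critical case (now with \(b,\beta\) the possibly-complex roots of \eqref{bbetapost}), the transformations \(Y\mapsto T\mapsto S\) -- normalization at infinity by \(g\), opening of a lens along \({\cal B}\), and the auxiliary lens controlling the non-analytic data -- are literally those of the proof of Theorem~\ref{thm2}, as are the outer parametrix built from \(g\) and \(F\), the estimates \(J_S=I+{\cal O}(\e^{-cN})\) on the lens boundaries and on \({\cal B}\setminus D(b_c,\delta)\), and the treatment of the \(\bar\partial\)-component (exponentially small off \(K\) by \eqref{ELeq}, and absorbed into the error by the \(L^p\) small-norm theory for \(\bar\partial\)-problems). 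The only new content is the local parametrix inside a fixed disk \(D(b_c,\delta)\), which now has to accommodate \emph{both} turning points at once.

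The heart of the proof is this local construction. Near \(b_c\) one has \(y(z)=\dfrac{a}{b_c(b_c-a)}(z-b)(z-\beta)\big(1+{\cal O}(z-b_c)\big)\), so \(\phi(z)=\int_\beta^z y\) is, to leading order, a cubic in \(z-b_c\) perturbed by a linear term with coefficient proportional to \((b-\beta)^2\propto t-t_c\). I would introduce a conformal coordinate \(\zeta=\zeta(z)\) on \(D(b_c,\delta)\), of the form \(\zeta=\gamma_c^{-1}N^{1/3}(z-b_c)+\cdots\), characterized by the requirement that \(N\phi(z)\), together with the exponents picked up when opening the lens, be carried \emph{exactly} onto the phase at infinity of the Painlev\'e~II model function \(\Psi_{\rm PII}\) -- a fixed cubic in \(\zeta\) plus a term linear in \(\zeta\). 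Matching the cubic term forces \(\gamma_c=2b_c^{1/3}c^{1/6}a^{-1/3}\) with \(b_c=a+\sqrt c\), as in \eqref{gamma1}; matching the linear term forces the scaled time \(s=c^{1/6}a^{-1/3}b_c^{-2/3}N^{2/3}(t-t_c)\), and the exponent \(2/3\) in \eqref{eq:ttc} is precisely what keeps \(s\) bounded. The local parametrix is then \(\Psi_{\rm PII}(\zeta(z))\) conjugated by the outer parametrix, where \(\Psi_{\rm PII}\) is the Painlev\'e~II model Riemann--Hilbert solution attached to the \emph{Hastings--McLeod} function \(q(s)\); this choice is forced because the parametrix must be pole-free for all \(s\) in compact subsets of \({\mathbb R}\), which is exactly the defining property of the Hastings--McLeod solution. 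Its large-\(\zeta\) expansion, \(\Psi_{\rm PII}(\zeta)=\big(I+\Psi_1(s)\,\zeta^{-1}+{\cal O}(\zeta^{-2})\big)(\cdots)\) with \((\Psi_1)_{11}=-(\Psi_1)_{22}\propto u(s)=q'(s)^2-s\,q(s)^2-q(s)^4\) and \((\Psi_1)_{12},(\Psi_1)_{21}\propto q(s)\), produces on \(\partial D(b_c,\delta)\) a mismatch \(J_R-I\) of order \(N^{-1/3}\) whose leading term is \(\gamma_c\big(2N^{1/3}(z-b_c)\big)^{-1}\Psi_1(s)\) conjugated by the outer parametrix.

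With every jump on the total contour now \(I+{\cal O}(N^{-1/3})\) and the \(\bar\partial\)-data controlled as above, the small-norm argument for \(\bar\partial\)-Riemann--Hilbert problems gives a unique solution \(R=I+R_1+{\cal O}(N^{-2/3})\) with \(R_1(z)=\tfrac{1}{2\pi\i}\oint_{\partial D(b_c,\delta)}(J_R-I)(w)(w-z)^{-1}\,\d w\); evaluating the residue at \(b_c\) turns the \((z-b_c)^{-1}\) pole of \(J_R-I\) into the explicit \(1/(z-b_c)\) corrections of \eqref{Pnmcrit} and \eqref{onBcrit}, the diagonal part carrying \(u(s)\) and the off-diagonal part carrying \(q(s)\), while the factors \((b_c)^r\), \(\e^{\frac23|s|\sqrt s}\) and \(\e^{Naz+tN\ell}\) reappear upon undoing the \(F\)-, lens- and \(g\)-normalizations exactly as in the post-critical unravelling. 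Reading off \(P_{n,N}=Y_{11}\) in \({\rm Ext}({\cal B})\), in \({\rm Int}({\cal B})\) and in \(\overline{\Omega_\pm}\setminus\{\beta\}\) then gives \eqref{Pnmcrit} and \eqref{onBcrit}.

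I expect essentially all of the difficulty to be concentrated in the Painlev\'e~II parametrix. One must prove that \(\zeta(z)\) exists and is univalent on a disk of \(N\)-independent radius \(\delta\) and normalizes \(N\phi\) plus the lens exponents \emph{exactly}, not merely to leading order, to the Painlev\'e~II phase with the \emph{correct} \(s\) -- this is the computation that pins down the constants \(\gamma_c\), \(b_c\) and the power \(N^{2/3}\) in the statement; one must verify that \(\Psi_{\rm PII}\) with Hastings--McLeod data is bounded and invertible uniformly for \(s\) in the compact set prescribed by \(\lim N^{2/3}(t-t_c)\), so that the matching on \(\partial D(b_c,\delta)\) is genuinely \({\cal O}(N^{-1/3})\) and does not deteriorate as \(s\) varies; and one must check that the \(\bar\partial\)-data restricted to \(D(b_c,\delta)\), where it is \emph{not} exponentially small, is compatible with the Painlev\'e parametrix rather than spoiling the \({\cal O}(N^{-1/3})\) bound. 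This last point -- which has no analogue in the classical theory of orthogonal polynomials on the line -- is the genuinely new obstacle, and is precisely what the \(\bar\partial\)-Riemann--Hilbert formulation set up earlier is designed to accommodate.
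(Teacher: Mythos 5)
The central framing of your proposal is off: the paper does not set up a \(\bar\partial\)-Riemann--Hilbert problem at any point. The whole purpose of Section~\ref{sec-3} (Lemma~\ref{lemma:moments}) is to rewrite the planar orthogonality \eqref{2d_ortho} \emph{exactly} as a non-Hermitian contour orthogonality \eqref{contour_ortho} with the analytic weight \(w_{n,N}\), so that \(P_{n,N}\) is characterized by an ordinary \(2\times 2\) Riemann--Hilbert problem \eqref{YRHP} with a jump on a contour and no \(\bar\partial\)-data whatsoever. Your repeated appeals to ``the auxiliary lens controlling the non-analytic data,'' ``the treatment of the \(\bar\partial\)-component,'' and ``\(L^p\) small-norm theory for \(\bar\partial\)-problems'' describe steps that do not occur in the paper's proof of Theorem~\ref{thm2} and are not needed here. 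More seriously, you single out as ``the genuinely new obstacle'' the issue that the \(\bar\partial\)-data near \(b_c\) ``is \emph{not} exponentially small,'' and you leave that issue unresolved; in the paper's formulation that obstacle simply does not exist, so a proof built around confronting it is not a proof of the stated theorem, and as a stand-alone argument it leaves a real gap.

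The local part of your sketch is, by contrast, essentially correct and consistent with Section~\ref{sec-crit}: the cubic normal form for \(\phi\) near \(b_c\) (the paper's conformal map \(\mathcal W\) and \(\xi=N^{1/3}\mathcal W\)), the identification of \(\gamma_c\) and of \(s=c^{1/6}a^{-1/3}b_c^{-2/3}N^{2/3}(t-t_c)\) by matching cubic and linear phase terms, the selection of the Hastings--McLeod data for \(\Psi_{\rm PII}\) by pole-freeness, and the appearance of \(u(s)\) and \(q(s)\) from the \(\zeta^{-1}\) coefficient are all right. Where your route and the paper's diverge is in how the \({\cal O}(N^{-1/3})\) mismatch is promoted to an \({\cal O}(N^{-2/3})\) error: you compute the first correction \(R_1\) from the residue of \(J_R-I\) at \(b_c\) in the small-norm expansion, while the paper preconditions the local parametrix with a holomorphic factor \(H_c\) (a partial Schlesinger transform) so that the jump mismatch on \(\partial\mathbb D_c\) is \(I+{\cal O}(N^{-2/3})\) from the start. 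These are equivalent ways of extracting the same first-order term and lead to the same \((z-b_c)^{-1}\) corrections in \eqref{Pnmcrit}--\eqref{onBcrit}; your version is a legitimate alternative. But the proof as a whole needs to be rebuilt on the contour-RHP of Section~\ref{sec-3}, not on a \(\bar\partial\)-RHP.
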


\begin{remark}
We note that the leading asymptotic behavior of $P_{n}$ in \eqref{Pnm} is conveniently expressed in terms of the geometric quantities: $\rho$, $F$ and $g$.  The asymptotic formula in ${\rm Ext}({\cal B})$ for the post-critical and critical cases shown in \eqref{Pnmpost} and \eqref{Pnmcrit} can also be expressed in terms of the geometric quantities if one uses that in the post-critical case the conformal map of the exterior of $\mathbf P(K)$ satisfies the simple identity: $\rho F(z)=z$, and \eqref{gdough}.  Then the leading terms take precisely the same form as the leading term in \eqref{Pnm}, i.e. $z^n \left(z/(z-a)\right)^{Nc}=(\rho F(z))^r\sqrt{\rho F'(z)} \,\e^{N t g(z)}$. It has been conjectured \cite{PRL02} that the asymptotic behavior in $K^c$ should always take the latter form.\end{remark} 


\begin{remark} One can confirm that the strong asymptotics in the critical case match those in the post-critical case in the limit $s\to\infty$, using 
\begin{equation}
q(s)\,{\rm e}^{i\frac {16}3\zeta_\beta^3 }\sim \frac{1}{2\sqrt\pi s^{1/4}} \text{ as } s\to\infty, \quad\text{ and }~~ \gamma_1\sim \frac{\gamma_{c}N^{1/6} }{2\sqrt2 \,s^{1/4}},
\end{equation}
where the former is obtained from the asymptotic behavior of the Hastings-McLeod solution \eqref{HMsol}, and the latter is by writing $\gamma$ \eqref{gamma1} using \eqref{betabasymp}.
An argument for the appearance of the Painlev\'e II equation in the critical case was explained in \cite{bubble}.
\end{remark}

\begin{remark}
In the above theorems, though not immediately noticeable, the leading asymptotics \eqref{onBpre}\eqref{onBpost}\eqref{onBcrit} near ${\cal B}$ are analytic on ${\cal B}\setminus\{\beta\cup\c\beta\}$ and they are exactly the sum of the two contributions from both sides of ${\cal B}$ by analytic continuations of the asymptotic formula on each side.   We also note that the two leading terms for $z\in\Omega_\pm$ are simply related by $[tg(z)]_\pm=[tg(z)+\phi(z)]_\mp$.
\end{remark}

Equipped with the above asymptotic results concerning the orthogonal polynomials, it is straightforward but tedious
to prove the following proposition.
In this paper we do not give a complete proof of this proposition.
Rather, we provide (at the end of Section \ref{sec-post}) a sketch of the important steps involved in the proof.
(See \cite{Kuijlaars-McLaughlin-04} for similar argument regarding the location of zeros.)
\begin{prop}\label{prop-zero}
The support of the counting measure of the zeros of the polynomials $P_{n,N}$ outside of arbitrary small disk centered at the turning points $z =\beta, \overline \beta$ (pre-critical)  and $z=\beta$ (critical/post-critical), tends uniformly to the branchcut $\mathcal B$ for all three cases.
Moreover, in the post-critical case, the zeros are within $\mathcal O(N^{-1})$ away from the curve defined implicitly by \eqref{zeroas} that tends to $\mathcal B$ at the rate of $\mathcal O(\ln N/N)$ (see Figures \ref{fig10} and \ref{fig11}).  The normalized counting measure of the zeros of $P_{n,N}$ converges to the probability measure (supported on ${\cal B}$) whose logarithmic potential is given by $\Re g$.
\end{prop}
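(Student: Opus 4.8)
We outline the argument; the details, routine but lengthy, are parallel to those in \cite{Kuijlaars-McLaughlin-04}. The idea is to read the zeros of $P_{n,N}$ off the strong asymptotics of Theorems~\ref{thm1}--\ref{thm-crit}: Hurwitz's theorem excludes zeros away from $\mathcal B$, a Rouch\'e count on a grid of $\asymp N$ small contours locates and enumerates the zeros near $\mathcal B$, and convergence of the normalized counting measure follows from convergence of its logarithmic potential. For the first step fix a compact $E\subset\widehat\C\setminus\mathcal B$ disjoint from the turning points. On $E$ the formulas \eqref{Pnm}, \eqref{Pnmpost} and \eqref{Pnmcrit} present $P_{n,N}$ as $(1+o(1))$ times an explicit analytic function that is zero-free on $E$, since the factors $\sqrt{\rho F'(z)}\,(\rho F(z))^r$, $\e^{Nt g(z)}$, $\e^{Naz+tN\ell}/(\overline\beta-z)$ and $z^{n}(z/(z-a))^{Nc}$ occurring there are all zero-free (for $F$ this is part of Lemma~\ref{lem-F}, and the exponentials never vanish). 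By Hurwitz's theorem $P_{n,N}$ has no zeros on $E$ for $N$ large; since the same formulas give $P_{n,N}(z)\sim z^{n}$ as $|z|\to\infty$, no zeros escape to infinity. Letting $E$ exhaust $\widehat\C$ with $\mathcal B$ and fixed small disks about the turning points removed, we conclude that, outside those disks, every zero of $P_{n,N}$ lies within an arbitrarily small neighbourhood of $\mathcal B$ once $N$ is large.

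Near $\mathcal B$, but away from the turning points, we use the two-term formulas \eqref{onBpre}, \eqref{onBpost} and \eqref{onBcrit}: after dividing out the nonvanishing prefactor, $P_{n,N}=T_N+\mathcal O(N^{-\theta})$, where $T_N(z)=A(z)\mp B(z)\,\e^{N\phi(z)}$ with $A,B$ slowly varying (independent of $N$, up to a possible $\log N$ scale in $B$), and $\theta=1$ (pre-critical), $\theta=3/2$ (post-critical), with the analogous two-scale form at order $N^{-1/3}$ in the critical case. Since $\d\phi=y\,\d z$ is purely imaginary on $\mathcal B$ and $\phi(\beta)=0$ we have $\Re\phi\equiv0$ on $\mathcal B$ (in the critical case this is built into the definition of $\mathcal B$ via \eqref{eq:qd1}); and since $\phi'=y$ does not vanish on $\mathcal B$ away from the turning points --- its other zero, at $z=b$, lying off $\mathcal B$ --- the set $\{\Re\phi=0\}$ coincides with $\mathcal B$ in a tubular neighbourhood of $\mathcal B$ minus turning-point disks, with $\Re\phi$ a submersion transverse to $\mathcal B$ there. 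A zero of $T_N$ requires $\e^{N\phi(z)}=\pm A(z)/B(z)$; taking $\log|\cdot|$ gives $N\Re\phi(z)=\mathcal O(1)$ (pre-critical) resp.\ $N\Re\phi(z)=\mathcal O(\log N)$ (post-critical, owing to the $\sqrt N$ inside $B$), hence $\mathrm{dist}(z,\mathcal B)=\mathcal O(N^{-1})$ resp.\ $\mathcal O(\log N/N)$; and, using $\phi'=y\ne0$, these zeros of $T_N$ are simple and spaced $\asymp N^{-1}$ apart along $\mathcal B$. Cover the relevant part of $\mathcal B$ by $\asymp N$ disks $D_j$ of radius $\asymp N^{-1}$, each enclosing exactly one zero of $T_N$ and such that $|T_N|$ dominates the $\mathcal O(N^{-\theta})$ error on $\partial D_j$, uniformly in $j$; Rouch\'e's theorem then produces exactly one zero of $P_{n,N}$ in each $D_j$, within $\mathcal O(N^{-1})$ of the corresponding zero of $T_N$. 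In the post-critical case the equation $T_N(z)=0$ is precisely \eqref{zeroas}, so the zeros of $P_{n,N}$ lie within $\mathcal O(N^{-1})$ of the curve \eqref{zeroas}, which by the transversality above sits at distance $\mathcal O(\log N/N)$ from $\mathcal B$.

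For the counting measure set $\nu_n:=\tfrac1n\sum_{P_{n,N}(\zeta)=0}\delta_\zeta$, whose logarithmic potential equals $\tfrac1n\log|P_{n,N}(z)|$ ($P_{n,N}$ monic). By the previous two steps the $\nu_n$ are all supported in a fixed compact set, so $(\nu_n)$ is tight. On compact subsets of $\C\setminus\mathcal B$, dividing \eqref{Pnm}/\eqref{Pnmpost}/\eqref{Pnmcrit} by $n=Nt+r$ and using $|\e^{Nt g}|=\e^{Nt\Re g}$ gives $\tfrac1n\log|P_{n,N}(z)|=\tfrac{Nt}{n}\Re g(z)+\mathcal O(n^{-1})\to\Re g(z)$, locally uniformly. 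Hence every weak-$*$ limit $\nu_*$ of a subsequence of $(\nu_n)$ is a probability measure, carried by $\mathcal B$ by the previous steps, whose logarithmic potential equals $\Re g$ off $\mathcal B$. Now $\Re g$ is continuous across $\mathcal B$, harmonic on $\widehat\C\setminus\mathcal B$, and satisfies $\Re g(z)-\log|z|\to0$ at infinity (Definition~\ref{def-g}; off $K$ it is the logarithmic potential of the uniform probability measure on $K$ by Lemma~\ref{lem-g}), so the measure with logarithmic potential $\Re g$ is unique --- it equals $\tfrac1{2\pi}\Delta(\Re g)$, supported on $\mathcal B$ because $\mathcal B$ is a level line of $\Re g$ by \eqref{eq:qd}/\eqref{eq:qd1} and \eqref{gdough} --- and coincides with $\nu_*$ by the uniqueness of a finite measure with prescribed logarithmic potential, together with the lower-envelope theorem of potential theory (cf.\ \cite{saff_totik_book}). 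Therefore $\nu_n$ converges weakly to this measure $\nu$. Assembling: the inclusion $\mathrm{supp}\,\nu_n\subset\{\mathrm{dist}(\cdot,\mathcal B)<\varepsilon\}\cup(\text{turning-point disks})$ comes from the first two steps, the matching lower bound --- that $\mathrm{supp}\,\nu=\mathcal B$ is approached by $\mathrm{supp}\,\nu_n$ --- from $\nu_n\to\nu$, whence the support of the zeros converges uniformly to $\mathcal B$; the $\mathcal O(\ln N/N)$ rate in the post-critical case is the last part of the second step, and the statement about $\nu_n$ is what we have just proved.

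The genuinely delicate point is the second step: converting the pointwise asymptotics into a count of the zeros that lie on (within $\mathcal O(N^{-1})$ of) $\mathcal B$. The error terms in Theorems~\ref{thm1}--\ref{thm-crit} are uniform only on compacta avoiding the turning points, so their implied constants degenerate --- and the cell size $\asymp 1/(N|y|)$ grows --- as $z\to\beta$, where $y$ vanishes; one must therefore interpolate between the grid of $\asymp N$ Rouch\'e cells in the bulk of $\mathcal B$ and the local Airy (pre-critical), parabolic-cylinder/Weber (post-critical, cf.\ \eqref{calF}) and Painlev\'e~II (critical) parametrices near the turning points --- whose zeros are governed by the zeros of $\mathrm{Ai}$, of $\mathcal F$, and of the relevant Painlev\'e transcendent, and need not lie exactly on $\mathcal B$, which is why those disks are excised --- and check that together they account for exactly the $n$ zeros with none lost at the interface. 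This bookkeeping, together with the uniform lower bounds for $|P_{n,N}|$ along the Rouch\'e contours, is the ``tedious'' heart of the proof.
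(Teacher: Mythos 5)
Your proposal follows the same route as the paper's sketch: identify the two competing terms near $\mathcal B$, deduce the $\mathcal O(\ln N/N)$ localization from the $1/\sqrt N$ prefactor, apply Rouch\'e (the paper's ``some form of the argument principle'') for the $\mathcal O(1/N)$ refinement, and obtain counting-measure convergence from $n^{-1}\log|P_{n,N}|\to\Re g$ together with potential-theoretic uniqueness as in \cite{saff_totik_book} and \cite{Saff91}. Your write-up correctly fills in the bookkeeping and singles out the delicate matching at the turning points, which the paper only mentions in passing.
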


It is possible to deduce from the proof of Theorem \ref{thm-crit} an asymptotic description for $P_{n,N}$ which is valid near $\beta$ for the critical case. The formulae, however, are cumbersome in that they involve the solution of the Painlev\'e II Riemann-Hilbert problem itself. 
Rather than highlighting this with a separate theorem, we provide instead the following proposition, establishing a uniform bound on $P_{n,N}$ in the entire plane.
\begin{prop}\label{prop-O1} We have the following uniform bounds:
\begin{equation}\label{uniform}
P_{n,N}(z)=\e^{N t g(z)}\times 
\begin{cases}{\cal O}(N^{-1/2}),&\text{over compact subsets of ${\rm Int}({\cal B})$ for post-critical};
\\{\cal O}(N^{-1/3}),&\text{over compact subsets of ${\rm Int}({\cal B})$ for critical};
\\{\cal O}(N^{1/6}),&\text{on a neighborhood of $\{\beta,\c\beta\}$ for pre-critical};
\\ {\cal O}(1) ,&\text{otherwise}.
\end{cases} 
\end{equation}
\end{prop}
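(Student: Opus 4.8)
The plan is to read off the bounds from the Riemann--Hilbert (steepest descent) analysis that proves Theorems~\ref{thm1}--\ref{thm-crit}, supplemented by the boundedness of the local parametrices on \emph{fixed} disks around the turning points; the estimates are understood uniformly on compact subsets of the indicated regions of $\C$ (near $\infty$ the natural envelope is $\e^{Nt g}$ times $(\rho F)^r\sim(z/\rho)^r$, which is why the statement is restricted to the finite plane). Recall that $P_{n,N}=Y_{11}$ for the standard matrix RHP attached to \eqref{2d_ortho}, and that the transformations $Y\to T\to S\to R$ (i) conjugate by $\e^{Nt g\sigma_3}$, a single-valued operation because $Nt=n-r\in\mathbb Z$ so $\e^{Nt g(z)}$ is single-valued off the branchcut even though $g$ itself is not, (ii) open lenses around $\mathcal B$, and (iii) factor out the outer parametrix $\Psi^{(\infty)}$ (entries built from $\sqrt{\rho F'}$ and $(\rho F)^{\pm r}$) and the local parametrices --- Airy near $\beta,\c\beta$ in the pre-critical case, parabolic cylinder near $\beta$ in the post-critical case, Painlev\'e~II near $\beta$ in the critical case --- leaving $R=I+o(1)$ uniformly at the rates of the three theorems. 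Unravelling, $P_{n,N}(z)=\e^{Nt g(z)}\bigl[(\text{bounded})\cdot(\text{parametrix entry})(z)\bigr]$, so it remains to bound the parametrix contribution in three regimes.

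First, on compact subsets of $\C\setminus\mathcal B$ disjoint from the turning-point disks, \eqref{Pnm}, \eqref{Pnmpost} and \eqref{Pnmcrit} themselves assert $P_{n,N}=\e^{Nt g}\,\mathcal O(1)$ (the factors $\sqrt{\rho F'}\,(\rho F)^r$, resp. $z^n(z/(z-a))^{Nc}\e^{-Nt g}=z^r$, being bounded on such a set). The one exception is $\mathrm{Int}(\mathcal B)$ in the post-critical and critical cases, where the entry of $\Psi^{(\infty)}$ feeding $Y_{11}$ vanishes identically and the leading behaviour of $Y_{11}$ there is produced entirely through the matching with the local parametrix at $\beta$, carrying the gauge constant $\gamma_1/\sqrt{2\pi N}$ (post-critical) resp. $\gamma_c/(2N^{1/3})$ (critical); on a compact subset of $\mathrm{Int}(\mathcal B)$ one has $|\beta-z|\ge\varepsilon$, $\Re\phi\le-\varepsilon$, and, in the critical case, a bounded Painlev\'e parameter $s$ (so $q(s),u(s),\e^{\frac23|s|\sqrt s}=\mathcal O(1)$), whence \eqref{Pnmpost}--\eqref{Pnmcrit} give $\mathcal O(N^{-1/2})$ resp. $\mathcal O(N^{-1/3})$ directly. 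Second, near $\mathcal B$ but away from the turning points, the two-term expansions \eqref{onBpre}, \eqref{onBpost}, \eqref{onBcrit} apply; by $[tg]_\pm=[tg+\phi]_\mp$ the side carrying the $\e^{N\phi}$-term has $\Re\phi\le0$, so $|\e^{N\phi}|\le1$, and the rational factors $1/(z-\beta)$, $1/(F(z)-\alpha\rho)$ are $\mathcal O(1)$ once $|z-\beta|$ is bounded below, so each term is $\mathcal O(1)\,\e^{Nt\Re g}$ and the bound is $\mathcal O(1)$.

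Third, inside the fixed turning-point disks I would use the local parametrix on the whole disk, not merely for bounded values of the rescaled variable. In the pre-critical case near $\c\beta$ (symmetrically near $\beta$), with $\zeta=(C_{\c\beta}N)^{2/3}(z-\c\beta)$, the prefactor $\bigl(\sqrt{\kappa\rho}/(4\ii(z-\c\beta))\bigr)^{1/4}$ of \eqref{PAiry} --- the leading form of $\sqrt{\rho F'(z)}$, of size $\mathcal O(N^{1/6})$ when $|z-\c\beta|\sim N^{-2/3}$ --- combines with the $\zeta^{1/4}$ into the \emph{$\zeta$-independent} quantity $\bigl(\sqrt{\kappa\rho}\,(C_{\c\beta}N)^{2/3}/(4\ii)\bigr)^{1/4}=\mathcal O(N^{1/6})$, while $\mathrm{Ai}(\zeta)\,\e^{\frac23\zeta^{3/2}}$ is uniformly bounded on all of $\C$ (from $\mathrm{Ai}(\zeta)\sim\tfrac1{2\sqrt\pi}\zeta^{-1/4}\e^{-\frac23\zeta^{3/2}}$ for $|\arg\zeta|\le\pi-\delta$, and from $|\mathrm{Ai}(\zeta)|\le C|\zeta|^{-1/4}$ with $|\e^{\frac23\zeta^{3/2}}|=\mathcal O(1)$ on the wedge about $\R_{<0}$); this gives $P_{n,N}=\e^{Nt g}\,\mathcal O(N^{1/6})$ on a neighbourhood of $\{\beta,\c\beta\}$. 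In the post-critical and critical cases near $\beta$, the local parametrix, written in the conformal coordinate $\lambda$ with $\lambda^2/2=N\phi$ and divided by $\e^{Nt g}$, equals (up to the gauge constants above) the entire function $\mathcal F$ of \eqref{calF} resp. the Painlev\'e~II parametrix, and the matching with $\Psi^{(\infty)}$ is precisely the assertion that $\e^{-\lambda^2/2}\mathcal F(\lambda)=\tfrac1{\sqrt{2\pi}}\int_{-\infty}^{\lambda}\e^{-w^2/2}\,\d w$ --- equal to $-\e^{-\lambda^2/2}/(\sqrt{2\pi}\lambda)+\mathcal O(\lambda^{-3})$ in one family of sectors and tending to a constant in the other --- stays bounded, with the analogous uniform bound for the Hastings--McLeod parametrix; hence the normalized parametrix is uniformly bounded on the fixed disk and $P_{n,N}=\e^{Nt g}\,\mathcal O(1)$ there.

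The one step that is not pure bookkeeping --- and the expected main obstacle --- is this last one: making the turning-point estimates uniform over a fixed disk (rather than over bounded $\zeta$, $\lambda$, $s$) requires controlling the large-argument behaviour of the Airy, parabolic-cylinder and Painlev\'e~II parametrices in every Stokes sector and checking that, once $\e^{Nt g}$ is removed, what remains is bounded (by $\mathcal O(N^{1/6})$ near $\beta,\c\beta$ in the pre-critical case, by $\mathcal O(1)$ otherwise). This is routine for Airy and parabolic cylinder; for Painlev\'e~II it comes down to invoking the known uniform bounds on the Hastings--McLeod Riemann--Hilbert solution. Everything else follows by combining $R=I+o(1)$ with the explicit forms of the parametrices.
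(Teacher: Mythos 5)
Your proposal is correct and follows essentially the same route the paper intends: deduce the bounds from the steepest-descent transformations $Y\to A$ and the explicit outer and local parametrices constructed in Sections 4--6, using that the $\zeta^{\sigma_3/4}$ factor of the Airy parametrix produces the $\mathcal O(N^{1/6})$ envelope near $\beta,\c\beta$ and that the normalized parabolic-cylinder ($\mathcal F$) and Painlev\'e~II ($\Pi$) parametrices stay bounded. The paper itself leaves this as a sketch (see the end of Section~\ref{sec-crit}, which only invokes boundedness of $\Pi$), so your write-up supplies the omitted details with the same underlying reasoning.
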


The following theorem, that we prove in Section \ref{sec-norm}, gives the asymptotics of $h_n$.
\begin{theorem}\label{propnorm} 
 For all $t\neq t_c$, we have
 \begin{equation}\label{hnuniform}
 h_n = \left(1+ \mathcal O\left(N^{-1}\right)\right) 2\pi \sqrt{\frac{\pi}{2N}}\,\rho^{2r+1} e^{N\ell_\text{2D}},
 \end{equation}
 where $\ell_\text{2D}$ is the real number given by $\ell_\text{2D}=2 t \Re g(z)-Q(z)$ for $z\in\partial {\bf P}(K)$.  More explicitly, we have (see Lemma \ref{lem-ell2d})
\begin{equation}\label{eq:ell2D}
	\ell_\text{2D}= \begin{cases}\displaystyle 
	2(c+t)\log\rho-2c\log\alpha +\left(\alpha ^2-1\right) \left(a^2-\frac{2 a \rho }{\alpha }\right)-\frac{\rho ^2}{\alpha ^2} 
	&\text{ for } t<t_c;
	\\(c+t)\log(c+t)-(c+t) &\text{ for } t\geq t_c.
	\end{cases}
\end{equation}
See \eqref{fv} and the paragraph that follows for the definitions of $f$, $\rho$, $\kappa$ and $\alpha$.  In fact, $\ell_{2D}$ is continuous at $t=t_c$ and \eqref{hnuniform} holds with the error bound replaced by ${\cal O}(N^{-1/3})$.
\end{theorem}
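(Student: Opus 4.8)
The plan is to evaluate $h_n$ directly from its integral representation, using the strong asymptotics of Theorems~\ref{thm1}--\ref{thm-crit} together with the global bound of Proposition~\ref{prop-O1}, and to reduce the resulting two‑dimensional integral to a one‑dimensional Laplace (steepest‑descent) integral transverse to $\partial\mathbf{P}(K)$. Concretely, one writes
\[
h_n=\int_{\mathbb C}|P_{n,N}(z)|^2\,\e^{-NQ(z)}\,\d A(z),
\]
and observes that on a fixed neighbourhood of $\partial\mathbf{P}(K)$ --- which lies at positive distance from $\mathcal B$ and from the turning points as long as $t\neq t_c$ --- the leading term of \eqref{Pnm} (pre‑critical), respectively of the $\mathrm{Ext}(\mathcal B)$ branch of \eqref{Pnmpost}/\eqref{Pnmcrit} (post/critical), gives
\[
|P_{n,N}(z)|^2\,\e^{-NQ(z)}=\bigl|\sqrt{\rho F'(z)}\,(\rho F(z))^r\bigr|^2\,\e^{N(\ell_\text{2D}-\mathcal U(z))}\,\bigl(1+\mathcal O(N^{-1})\bigr),
\]
where $\ell_\text{2D}=2t\Re g(z)-Q(z)$ on $\partial\mathbf{P}(K)$ (this is constant there: the $\log|z-a|$ terms cancel) and $\mathcal U$ is the smooth extension \eqref{calU} of the effective potential $\mathcal U_\text{2D}$, for which (by the equilibrium theory \cite{saff_totik_book}) $\mathcal U\ge0$ with zero set exactly $\partial K$, $\nabla\mathcal U=0$ on $\partial K$, and $\Delta\mathcal U=\Delta Q=4$ near $\partial\mathbf{P}(K)$ (recall $a\notin\partial\mathbf{P}(K)$). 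Hence, writing $\tau$ for the signed distance to $\partial\mathbf{P}(K)$, one has $\mathcal U(z)=2\tau^2+\mathcal O(\tau^3)$ uniformly near $\partial\mathbf{P}(K)$, so the mass of the integrand concentrates on an $N^{-1/2}$‑collar of $\partial\mathbf{P}(K)$.

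Second, one shows that everything outside the collar $\{\,|\tau|\le N^{-1/2}\log N\,\}$ is negligible. On compacts bounded away from $\partial\mathbf{P}(K)$, Proposition~\ref{prop-O1} gives $|P_{n,N}|^2\e^{-NQ}=\mathcal O(N^{1/3})\,\e^{N(2t\Re g(z)-Q(z))}$ and $2t\Re g(z)-Q(z)=\ell_\text{2D}-\mathcal U(z)\le\ell_\text{2D}-\delta$ for some $\delta>0$, since $\mathcal U>0$ off $\partial K$ (the Euler--Lagrange inequality \eqref{ELeq} in $K^c$; the explicit formula \eqref{gdough}, by which $2t\Re g-Q=2(t+c)\log|z|-|z|^2<\ell_\text{2D}$ on $K\cap\mathrm{Ext}(\mathcal B)$ and a similar computation in $\mathrm{Int}(\mathcal B)$, in the post/critical cases; the positivity of $\mathcal U$ on $\mathrm{Int}(K)$ in the pre‑critical case). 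Near the inner boundary $\partial D(a,\sqrt c)$ of $K$ (post/critical), where $\mathcal U$ also vanishes, the pertinent formula is the $\mathrm{Int}(\mathcal B)$ branch of \eqref{Pnmpost}/\eqref{Pnmcrit}, which carries an extra factor $\mathcal O(N^{-1/2})$ (resp.\ $\mathcal O(N^{-1/3})$), so its transverse Laplace contribution is a factor $\mathcal O(N^{-1})$ (resp.\ $\mathcal O(N^{-2/3})$) below the main term. The turning‑point neighbourhoods have area $\mathcal O(N^{-4/3})$ (pre) or $\mathcal O(N^{-1})$ (post), and since $\beta,\overline\beta\in\mathrm{Int}(K)$ where $\mathcal U>0$ they contribute only $\mathcal O(\e^{N(\ell_\text{2D}-\delta)})$; the far field is doubly exponentially small. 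Thus $h_n$ equals $(1+\mathcal O(N^{-1}))$ times the collar integral of the leading term.

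Third, one carries out the Laplace computation on the collar. There the prefactor $|\sqrt{\rho F'(z)}(\rho F(z))^r|^2$ is, to leading order, its boundary value $\rho^{2r+1}|F'(z_0)|$ at the foot $z_0\in\partial\mathbf{P}(K)$ of $z$ (using $|F|\equiv1$ on $\partial\mathbf{P}(K)$), $\e^{N(\ell_\text{2D}-\mathcal U(z))}=\e^{N\ell_\text{2D}}\e^{-2N\tau^2}(1+o(1))$, and $\d A=(1+\mathcal O(\tau))\,\d\tau\,\d s(z_0)$ with $s$ arclength on $\partial\mathbf{P}(K)$; by the standard Laplace expansion (whose $N^{-1/2}$‑order correction vanishes by parity in $\tau$, leaving relative error $\mathcal O(N^{-1})$),
\[
h_n=\bigl(1+\mathcal O(N^{-1})\bigr)\,\rho^{2r+1}\,\e^{N\ell_\text{2D}}\left(\int_{\partial\mathbf{P}(K)}|F'(z_0)|\,\d s(z_0)\right)\int_{-\infty}^{\infty}\e^{-2N\tau^2}\,\d\tau.
\]
Because $F$ maps $\partial\mathbf{P}(K)$ onto the unit circle, $\int_{\partial\mathbf{P}(K)}|F'(z_0)|\,\d s(z_0)=2\pi$, and $\int_{\mathbb R}\e^{-2N\tau^2}\d\tau=\sqrt{\pi/(2N)}$, which is exactly \eqref{hnuniform}. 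In the post‑critical case one has the simplifications $\rho F(z)=z$, $\rho=\sqrt{t+c}$, $\partial\mathbf{P}(K)=\{|z|=\sqrt{t+c}\}$, and the explicit value $\ell_\text{2D}=(c+t)\log(c+t)-(c+t)$ comes from $\ell_\text{2D}=2t\Re g-Q$ and \eqref{gdough} at $|z|=\sqrt{t+c}$; the pre‑critical expression in \eqref{eq:ell2D} is the analogous evaluation using \eqref{fv}--\eqref{rhokappa}, i.e.\ Lemma~\ref{lem-ell2d}.

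Finally, the critical case. Here $\sqrt{t+c}\to b_c=a+\sqrt c$ as $t\to t_c$, so the point $z=b_c$ of $\partial\mathbf{P}(K)$ sits inside the Painlev\'e~II local parametrix (at $t=t_c$ the outer and inner boundary circles of $K$ are internally tangent at $b_c$), and the leading correction in \eqref{Pnmcrit} is only $\mathcal O(N^{-1/3})$. Outside a ball $B_\delta(b_c)$ with $\delta$ a small negative power of $N$, the argument above applies verbatim, and the factor $1-\gamma_c u(s)/(2N^{1/3}(z-b_c))$ integrated against $|F'|\,\d s$ along $\partial\mathbf{P}(K)$ equals $1+\mathcal O(N^{-1/3})$ (a convergent principal value, the odd singular part at $b_c$ cancelling by the reflection symmetry of $\partial\mathbf{P}(K)$ in $\mathbb R$); the contribution of $B_\delta(b_c)$, estimated through the Painlev\'e~II parametrix on a set of area $\mathcal O(\delta^2)$, is again $\mathcal O(N^{-1/3})$ relative to the main term. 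This gives \eqref{hnuniform} with error $\mathcal O(N^{-1/3})$. Continuity of $\ell_\text{2D}$ at $t_c$ follows by letting $t\to t_c^-$ in the pre‑critical formula in \eqref{eq:ell2D}, where $\alpha\to1$ (a root of $P$ in \eqref{000} at $t=t_c$) and $\rho\to b_c$, so the expression converges to $(c+t_c)\log(c+t_c)-(c+t_c)$; this is part of Lemma~\ref{lem-ell2d}. I expect the main obstacles to be (i) the uniform lower bound on $2t\Re g-Q$ away from $\partial\mathbf{P}(K)$ and the bookkeeping showing every region other than the $N^{-1/2}$‑collar is of strictly smaller order --- in particular the comparison near $\partial D(a,\sqrt c)$, where $\mathcal U$ vanishes as well but $P_{n,N}$ is a power of $N$ smaller --- and (ii) the critical case, where bounding the $B_\delta(b_c)$ contribution at order $N^{-1/3}$ relies on the boundedness and matching properties of the Painlev\'e~II parametrix rather than on a closed form; by contrast the Laplace step itself is routine once the transverse Hessian ($\Delta Q=4$, so $\mathcal U\sim2\tau^2$) and the identity $\int_{\partial\mathbf{P}(K)}|F'|\,\d s=2\pi$ are in place.
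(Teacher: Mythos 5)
Your proposal is a genuinely different route from the paper's. The paper never touches the two--dimensional area integral for $h_n$; instead it first proves (Appendix E) the algebraic identity of Proposition~\ref{norming},
$h_n=-\tfrac{\Gamma(Nc+n+1)}{2i N^{Nc+n+1}}\,\widetilde h_n/P_{n+1,N}(0)$,
then reads off $\widetilde h_n=-2\pi i\lim_{z\to\infty}z^{n+1}[Y(z)]_{12}$ and $P_{n+1,N}(0)$ directly from the subleading terms of $A^\infty$ in the Riemann--Hilbert parametrix, and applies Stirling's formula --- no steepest descent in the plane at all. Your approach, a 2D Laplace integral localized on an $N^{-1/2}$--collar of $\partial\mathbf P(K)$, is correct in its essentials: the key identities ${\cal U}=Q-2t\Re g+\ell_{\text{2D}}$ and ${\cal U}(z_0+\epsilon{\bf n})=2\epsilon^2+{\cal O}(\epsilon^3)$ are in the paper (\eqref{Phi2D}, \eqref{Uboundary}); $\int_{\partial\mathbf P(K)}|F'|\,\d s=2\pi$ and $\int_{\mathbb R}\e^{-2N\tau^2}\d\tau=\sqrt{\pi/(2N)}$ deliver the constant; and the parity cancellations in the Laplace expansion give the $\mathcal O(N^{-1})$ relative error. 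Two cautions. First, you assert that ${\cal U}$ "also vanishes" on the inner boundary $\partial D(a,\sqrt c)$ in the post-critical case; Lemma~\ref{globalUpost} says the opposite --- for $t>t_c$ the global minima of the relevant ${\cal U}$ (built from $g$ with the cut on $\mathcal B$, not from the Schwarz function of $K$) are \emph{exactly} $\{|z|=\sqrt{t+c}\}$, so the inner circle is already exponentially suppressed and you do not need the extra $\mathcal O(N^{-1/2})$ prefactor from the $\mathrm{Int}(\mathcal B)$ branch; ${\cal U}$ vanishes on both circles only at $t=t_c$ (Lemma~\ref{globalUcrit}), where the prefactor is needed. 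Second, in the critical case the radius $\delta$ of the excluded ball about $b_c$ should be the fixed radius of the Painlev\'e~II disk $\D_c$ (not a shrinking power of $N$), because \eqref{Pnmcrit} is only valid outside a fixed neighborhood of $b_c$; inside, the collar integral still contributes a non-negligible fraction of $h_n$ and must be matched via the parametrix, and the $N^{-1/3}$-scale core contributes $\mathcal O(N^{-5/6})\e^{N\ell_{\text{2D}}}$, i.e.\ $\mathcal O(N^{-1/3})$ relative --- your conclusion is right but this bookkeeping needs to be made explicit. Overall, the paper's algebraic reduction is shorter and avoids uniformity issues in the plane, while your direct Laplace argument is more transparent about \emph{why} the answer is what it is (it literally exhibits the harmonic measure concentrating on $\partial\mathbf P(K)$) and transfers to other potentials $Q$ with less case analysis.
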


\begin{figure}[ht]\begin{center}
\includegraphics[width=0.9\textwidth]{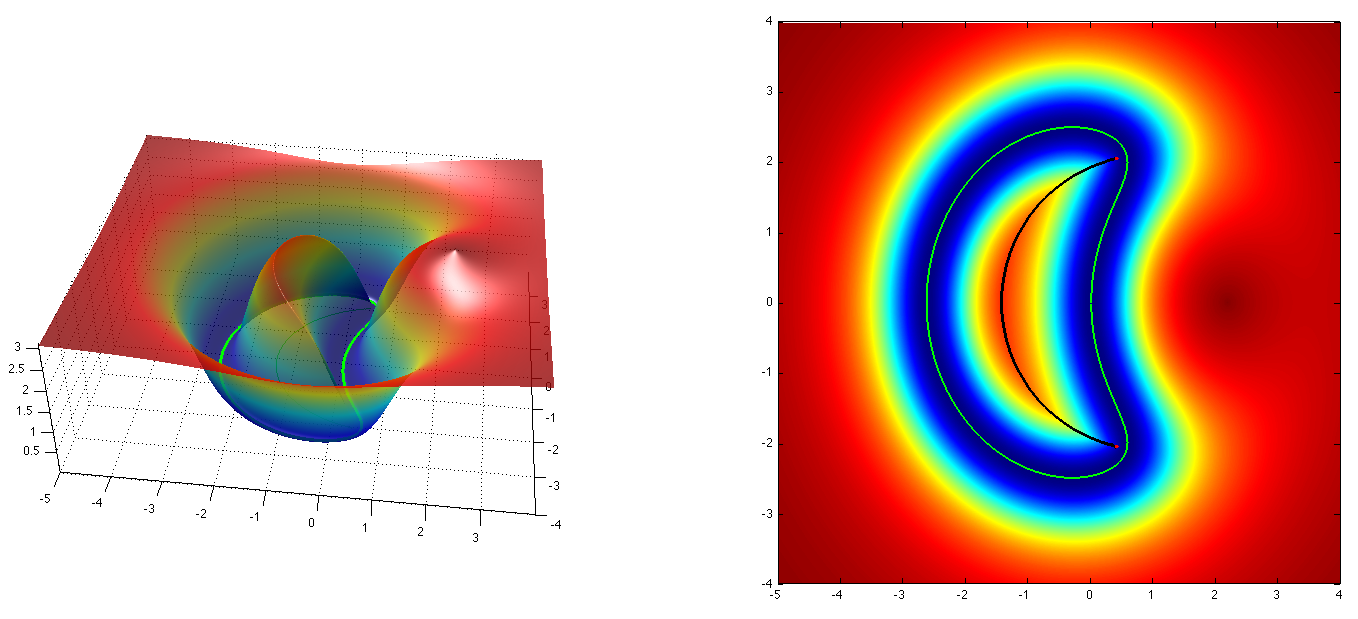}
\caption{The graph of $2\arctan({\cal U}(z))$ for the pre-critical case; the right side is a view from the top. The parameters are $a = 3.7619$, $c=6.9168$, $t=4.0557$ ($\rho = 2.1;  \alpha = 0.4;  \kappa = 1/2$). 
Indicated are the boundary of $K$ (thin green line) and the branch cut ${\cal B}$.
\label{fig-U}
}\end{center}
\end{figure}

\begin{figure}[ht]\begin{center}
\includegraphics[width=0.5\textwidth]{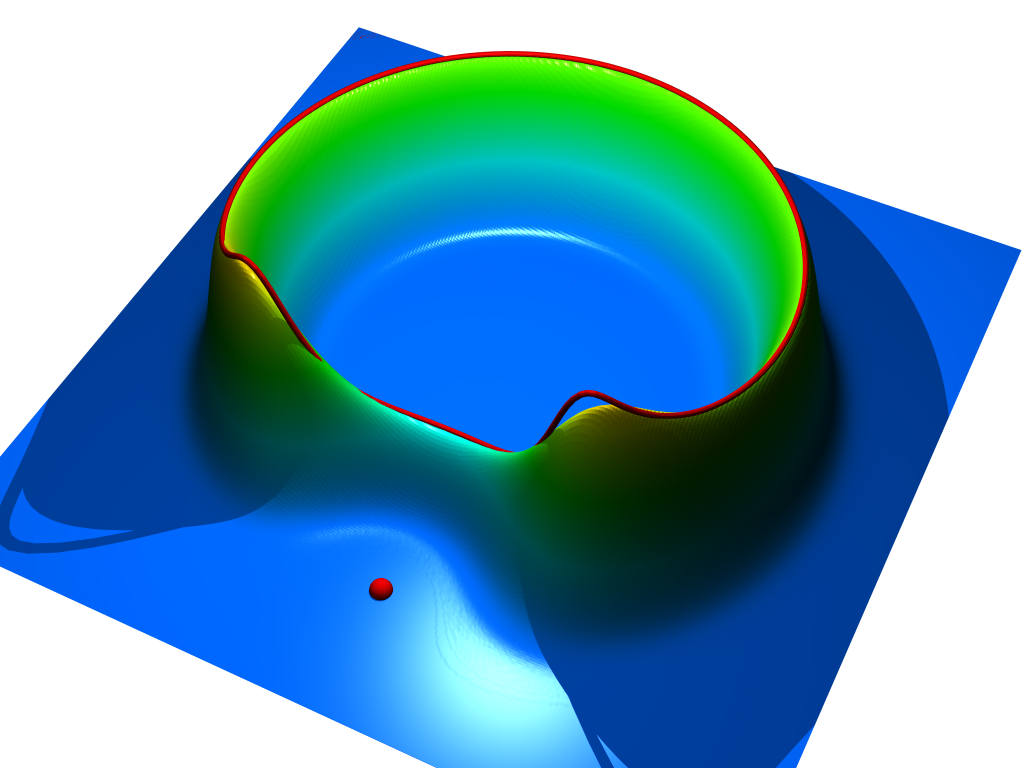}
\caption{Plot of $2\pi\sqrt{\pi/(2N)} \,\rho_n(z)$ for $c=1/6,\ a=1,\ N=30,\ n=25,\ t=0.8\c 3,\ t_c \approx 1.81649,\ N t_c = 54\sim 55$.  The red dot is the location of $z=a$; The red line shows the height $|F'(z)|$ for $z\in{\partial K}$. \label{fig-harm} }
\end{center}
\end{figure}

\paragraph{Harmonic measure from $P_{n,N}$:} 

The harmonic measure $\omega$ of a domain $\Omega\subset{\widehat\C}$ with a pole at $\infty$ (assume $\infty\in \Omega$)  is the probability measure supported on $\partial\Omega$ that satisfies the identity 
$
\widehat f(\infty)=\oint_{\partial\Omega} f(z)\,\d\omega(z)
$ 
for any continuous function $f$ on $\partial \Omega$ where $\widehat f:\overline\Omega\to \C$ is the unique bounded harmonic extension of $f$. 

If $\partial\Omega$ consists of piecewise smooth Jordan curves, then we have \cite{book-harmonic-measure}
\begin{equation}
\d \omega(z) 
=\frac{1}{2\pi}|F'(z)|\,\d \ell(z)
\end{equation}
where $F:\Omega\to \overline{\D}^c$ is a conformal map that preserves $\infty$, and $\d\ell$ is the length measure on $\partial\Omega$.

We define the following probability density function:
\begin{equation}\label{rhon}
\rho_n(z):=\frac{1}{h_{n}}|P_{n}(z)|^2 \e^{-N Q(z)}.
\end{equation}
This is but a single term in the average density \eqref{density} of the eigenvalues, representing ``the density of an additional eigenvalue on top of $n$ eigenvalues".  In the scaling limit, an additional eigenvalue gives an infinitesimal increase to $t$ by an amount of 1/N (recall that $t \sim n/N$).  
By considering the perturbation of $t$ in the functional ${\cal L}(\mu)$ (or, equivalently, by taking the $t$-derivative of \eqref{ELeq}), one can deduce that the incremental growth of the measure, $\d(t \mu^*)/\d t$, can be expressed in terms of the harmonic measure of $\widehat\C\setminus {\bf P}(K)$ with a pole at $\infty$.    This suggests that the measure $\rho_n\d A$ converges to the harmonic measure.  This fact was conjectured in the physics literature \cite{PRL02} and proved in Theorem 7.7.3 of \cite{AHM}.

Below we obtain global uniform asymptotics for $\rho_{n}$, which not only shows that $\rho_n\d A$ converges to the harmonic measure, but also shows {\em in a very precise way} how $\rho_n$ converges pointwise.    We mention that the leading term in \eqref{rhon} was reported in \cite{PRL02}.
\begin{cor}\label{cor-harm}
We have $\rho_n(z)= {\cal O}(1) \,\e^{-N {\cal U}(z)}$ uniformly in $z\in\widehat\C$ and, furthermore,
\begin{equation}\label{rhonresult}
\rho_n(z)= 
\left(1+{\cal O}\left(N^{-1/3}\right)\right)\frac{|F(z)|^{2r}}{2\pi}\sqrt{\frac{2N}{\pi}}|F'(z)|\,\e^{-N {\cal U}(z)} 
\end{equation}
uniformly over $z$ in compact subsets of $\widehat\C\setminus{\cal B}$  for the pre-critical case and of ${\rm Ext}({\cal B})$ for the critical and post-critical cases.   Over compact subsets in ${\rm Int}({\cal B})$, we have $\rho_n(z)={\cal O}(N^{-1})\,\e^{-N {\cal U}(z)}$ for the post-critical case, and $\rho_n(z)={\cal O}(N^{-2/3})\,\e^{-N {\cal U}(z)}$ for the critical case.   For the pre-critical case, over a neighborhood of ${\cal B}$, we have $\rho_n(z)={\cal O}(N^{1/3})\,\e^{-N {\cal U}(z)}$. \end{cor}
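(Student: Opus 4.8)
The plan is to derive \eqref{rhonresult} and its companions by dividing the strong asymptotics of $P_{n,N}$ (Theorems~\ref{thm1}, \ref{thm2}, \ref{thm-crit} and Proposition~\ref{prop-O1}) by the asymptotics of $h_n$ (Theorem~\ref{propnorm}), multiplying by $\e^{-NQ(z)}$, and recognizing the resulting exponential rate as $-N{\cal U}(z)$. The one substantive ingredient, beyond bookkeeping, is the identity
\begin{equation*}
2t\,\Re g(z)-Q(z)-\ell_\text{2D}=-{\cal U}(z),
\end{equation*}
valid for the globally (piecewise analytically) continued $g$-function of Definition~\ref{def-g}: outside $\mathbf{P}(K)$ this is the definition of ${\cal U}_\text{2D}$ in \eqref{ELeq} together with Lemma~\ref{lem-g} (which identifies $\Re g$ there as the logarithmic potential of $\mu^{*}$), and ${\cal U}$ (defined at \eqref{calU}) is precisely the smooth extension of ${\cal U}_\text{2D}$ obtained by continuing $\Re g$ inward across $\partial\mathbf{P}(K)$ — analytically in the pre-critical case (Lemma~\ref{lem-F}), and via the explicit $g$ of \eqref{gdough} in the critical and post-critical cases; the normalization $\ell_\text{2D}=2t\Re g(z)-Q(z)$ on $\partial\mathbf{P}(K)$ from Theorem~\ref{propnorm} makes ${\cal U}$ vanish exactly on $\partial\mathbf{P}(K)$.

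\emph{Exterior region.} On compact subsets of $\widehat\C\setminus{\cal B}$ (pre-critical), respectively of ${\rm Ext}({\cal B})$ (critical and post-critical), I insert the leading term of $P_{n,N}$. Pre-critically this is $(1+{\cal O}(N^{-1}))(\rho F(z))^{r}\sqrt{\rho F'(z)}\,\e^{Ntg(z)}$ by \eqref{Pnm}; in the other two cases the exterior formulae \eqref{Pnmpost} and \eqref{Pnmcrit} take the same shape once one uses $\rho F(z)=z$ and \eqref{gdough} (as explained in the first Remark after Theorem~\ref{thm-crit}), with errors ${\cal O}(N^{-3/2})$ and ${\cal O}(N^{-1/3})$ respectively. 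Squaring gives $|P_{n,N}(z)|^{2}=(1+o(1))\,\rho^{2r+1}|F(z)|^{2r}|F'(z)|\,\e^{2Nt\Re g(z)}$, so the factor $\rho^{2r+1}$ cancels against the one in \eqref{hnuniform}, the constant $2\pi\sqrt{\pi/(2N)}$ inverts to $\frac1{2\pi}\sqrt{2N/\pi}$, and $\e^{-N\ell_\text{2D}}\,\e^{-NQ(z)}\,\e^{2Nt\Re g(z)}=\e^{-N{\cal U}(z)}$ by the identity above. This is exactly \eqref{rhonresult}, with the uniform error ${\cal O}(N^{-1/3})$ (the worst of the three cases).

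\emph{Interior of ${\cal B}$, neighbourhood of ${\cal B}$, turning points, global bound.} Inside ${\cal B}$ (critical and post-critical) I insert instead the interior branch of \eqref{Pnmpost}, \eqref{Pnmcrit}: post-critically $|P_{n,N}(z)|^{2}={\cal O}(N^{-1})\,\e^{2Nt\Re g(z)}$ because $az+t\ell=tg(z)$ there by \eqref{gdough}, and the same cancellation produces $\rho_n(z)={\cal O}(N^{-1})\,\e^{-N{\cal U}(z)}$, with ${\cal U}$ now computed from the interior branch of $g$; that this branch glues continuously to the exterior one along ${\cal B}$ is exactly the content of the defining conditions \eqref{eq:qd}, \eqref{eq:qd1} for ${\cal B}$ (they force $\Re\phi\equiv0$, hence $\Re g$ continuous, on ${\cal B}$), so ${\cal U}$ is unambiguously the smooth extension everywhere. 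The critical case is identical with interior amplitude of order $N^{-1/3}$, yielding ${\cal O}(N^{-2/3})$. For $z$ near ${\cal B}$ but away from the turning points, \eqref{onBpre}, \eqref{onBpost}, \eqref{onBcrit} show $|P_{n,N}(z)|^{2}$ is $(1+o(1))$ times its exterior value (on ${\cal B}$, $\Re\phi=0$, so the second term is of strictly lower order and the cross term negligible). Near $\{\beta,\c\beta\}$ (pre-critical) and $\{\beta\}$ (critical/post-critical) — and, for the crude bound, everywhere — I invoke Proposition~\ref{prop-O1}, which gives $P_{n,N}(z)=\e^{Ntg(z)}\,{\cal O}(N^{1/6})$ near $\{\beta,\c\beta\}$ pre-critically and $\e^{Ntg(z)}\,{\cal O}(1)$ otherwise; squaring and dividing by $h_n$ yields the stated $\rho_n={\cal O}(N^{1/3})\,\e^{-N{\cal U}}$ on a neighbourhood of ${\cal B}$ pre-critically and the uniform bound $\rho_n={\cal O}(1)\,\e^{-N{\cal U}}$ on $\widehat\C$.

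\emph{Main obstacle.} The real work is not any individual estimate but the verification that the exponential rate is $-N{\cal U}$ for one and the same function ${\cal U}$ throughout — that the exterior branch $Q-2t\Re g+\ell_\text{2D}$ and the interior branch agree along ${\cal B}$ and together reproduce the smooth extension of ${\cal U}_\text{2D}$ — and this is exactly where the quadratic-differential characterization of ${\cal B}$ is used; everything else is the cancellation of the algebraic prefactors $\rho,F,F'$, which is forced by the $\rho^{2r+1}$ appearing in \eqref{hnuniform}.
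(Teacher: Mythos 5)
Your proposal is correct and follows essentially the same route as the paper's proof: substitute the strong asymptotics of $P_{n,N}$ and of $h_n$ (Theorem~\ref{propnorm}) into $\rho_n$, then invoke the identity ${\cal U}=Q-2t\Re g+\ell_\text{2D}$, which the paper obtains from Lemma~\ref{lem-g} and \eqref{Phi2D}, together with Proposition~\ref{prop-O1} for the crude bounds. The only difference is that you work through the interior, near-${\cal B}$, and turning-point cases explicitly and flag the gluing of ${\cal U}$ across ${\cal B}$, whereas the paper performs only the exterior substitution and remarks that the remaining regions are handled similarly.
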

\begin{proof}
The proof follows from Proposition \ref{prop-O1}, the asymptotic formula \eqref{Pnm} and Theorem \ref{propnorm}.  Let us only check it for $z$ in compact subset of $\C\setminus{\cal B}$ and of ${\rm Ext}({\cal B})$ (the other regions can be similarly considered).  Using the remark below Theorem \ref{thm-crit} the leading term \eqref{Pnm} applies to all values of $t$.  Substituting the asymptotics of $P_n$ \eqref{Pnm} and $h_n$ \eqref{hnuniform} into $\rho_n$ \eqref{rhon}
we obtain
\begin{equation}
\rho_n(z)=\left(1+{\cal O}\left(N^{-1/3}\right)\right)\frac{|\rho F'(z)||\rho F(z)|^{2r} \e^{2N t \Re g(z)} }{ 2\pi \sqrt{\pi/(2N)}\,\rho^{2r+1} e^{N\ell_\text{2D}}} \e^{-N Q(z)}.
\end{equation}
The error bound comes from the worst case: \eqref{Pnmcrit} in the critical case.
The result follows from the identity ${\cal U}(z)=Q(z)-2 t \Re g(z)+ \ell_\text{2D}$ which can be seen from \eqref{Phi2D} and Lemma \ref{lem-g}.
\end{proof}

Corollary \ref{cor-harm} says that, for the pre- and the post-critical cases, the probability density function $\rho_n$ is exponentially suppressed away from $\partial{\bf P}(K)$ where ${\cal U}$ is minimized (Lemma \ref{globalUpre}, Lemma \ref{globalUpost}).   For the critical case, ${\cal U}$ is minimized on $\partial K$ by Lemma \ref{globalUcrit} so $\rho_n$ may be peaked on $\partial K\cap {\rm Int}({\cal B})$ also.    However, $\rho_n
 =\mathcal O(N^{-2/3} ){\rm e}^{-N \mathcal U(z)}$ in ${\rm Int}({\cal B})$, and this means only ${\cal O}(N^{-2/3})$ amount of mass from the probability measure $\rho_n\,\d A$ can be within ${\rm Int}({\cal B})$.   For the pre-critical case, no mass can be accumulated at $\beta$ or $\c\beta$ in the scaling limit because ${\cal O}(N^{1/3})\,\e^{-N {\cal U}(z)}$ is exponentially suppressed there.
To see how we obtain the approximation of the harmonic measure, let $p$ be a point on $\partial{\bf P}(K)$ and ${\bf n}$ be a unimodular complex number whose direction in the complex plane is perpendicular to $\partial{\bf P}(K)$.   Then 
we see from \eqref{Uboundary}, the fact that ${\cal U}$ is the smooth extension of ${\cal U}_\text{2D}$ to the interior, together with \eqref{rhonresult} that 
\be
\rho_n(p+ x\,{\bf n})\sim  \frac {|F'(p + x{\bf n})|}{2\pi} |F(p + x{\bf n})|^{2r} \sqrt{\frac {2N}\pi}   \e^{-2Nx^2}
\ee 
where we see the normal distribution approximation of the Dirac delta function of the boundary ($x=0$) in the trailing term.
  This means that the density $\rho_n(p+ x{\bf n}) $ is localized around $\partial{\bf P}(K)$ with the typical width $\sim\sqrt{1/N}$.  
Since $\frac{1}{2\pi}|F'(p)|\,\d \ell(p)$  (with $\d\ell$ being the length measure on $\partial{\bf P}(K)$) gives the harmonic measure and $|F(p)|=1$, we conclude that $\rho_n\d A$ converges to the harmonic measure and moreover $\rho_n$ converges pointwise to the density of the harmonic measure (with respect to the area measure).

\paragraph{Strategy of the proofs:}
In Section \ref{sec-2}, we begin by defining the region $K$ in terms of explicit formul\ae.     We introduce the Schwarz function $S(z)$ in terms of which we define $y(z)$ that facilitates the construction of the $g$-functions. 

Section \ref{sec-3} contains the crucial step in our analysis, showing that $P_{n,N}$ is an orthogonal polynomial with respect to an analytic weight supported on a contour in the complex plane.   This leads us to a Riemann-Hilbert formulation of $P_{n,N}$ and the asymptotic analysis is performed in the subsequent sections via the steepest descent analysis  \cite{DKMVZ} on the corresponding Riemann-Hilbert problem. 

The proofs of Theorem \ref{thm1}, Theorem \ref{thm2}, and Theorem \ref{thm-crit} are respectively in Section \ref{sec-pre}, Section \ref{sec-post}, and Section \ref{sec-crit}.

\paragraph{Acknowledgements} M. B.  acknowledges a support by Natural Sciences and Engineering Research Council of Canada.  
S. Y. Lee acknowledges the support by Sherman Fairchild Foundation.
K. D. T-R McLaughlin is supported in part by the National Science Foundation under grants
DMS-0200749, and DMS-0800979.  The authors wish to thank the faculty and staff of the Centre de Recherches Math\'ematiques for their support.  The authors also thank AIM workshop ``Vector equilibrium problems and their applications to random matrix models" and useful discussion in that research-conducing environment.

\section{2D equilibrium measure problem}\label{sec-2}

We introduce a compact region $K$ as the support of the equilibrium measure $\mu^*$ that minimizes ${\cal L}(\mu)$ defined in \eqref{functional}.  
From \eqref{ELeq}, it is well known that under certain regularity assumptions on the potential $Q$, the measure $\mu^*$ is known explicitly: $\d\mu^*=\frac{1}{\pi t}\c\partial\partial Q \cdot {\bf 1}_K$. (We denote $\partial=\frac{1}{2}(\partial_x-\ii \partial_y)$ and $\overline\partial=\frac{1}{2}(\partial_x+\ii \partial_y)$; except when $\partial$ is followed by a set; then it means ``boundary".)
The central problem then becomes the determination of the support set $K$.

For our specific potential $Q$ given in \eqref{ourQ} (which is of the type $Q(z)=|z|^2-(\text{harmonic})$)
$\frac{1}{\pi t}{\bf 1}_K \d A$ is the {\em unique} equilibrium measure, if the following holds.
\begin{equation}\label{equil}
\begin{split}
&\text{{\bf (\!A\!)} There exists $\ell_\text{2D}\in\R$ such that ${\cal U}_\text{2D} =0$ on $K$;}\qquad
\\
&\text{{\bf (\!B\!)} For the $\ell_\text{2D}$ chosen above, ${\cal U}_\text{2D} > 0$ on $K^c$};
\end{split}
\end{equation}
where
\begin{equation}\label{Phi2D}
{\cal U}_\text{2D}(z):=Q(z)-\frac{2}{\pi} \int_K\log|z-w|\d A(w)+\ell_\text{2D}.
\end{equation}

The following definition is specific\footnote{The standard notion of the Schwarz functions can be found in \cite{davis_book}.} to $Q$ given by \eqref{ourQ}.
\begin{defn}\label{def-S}
Given a compact set $K\subset\C$ of area $\pi t$, if there exists a function $S:K^c\to\widehat\C$ such that the following three conditions are satisfied 
\begin{itemize}
\item[i)] $S(z)$ has continuous boundary value on $\partial K$ and satisfies $S(z)=\overline z$ on $\partial K$,
\item[ii)] $S(z)-c/(z-a)$ is analytic in $K^c$,
\item[iii)] $S(z)= (c+t)/z+{\cal O}(z^{-2})$ for $z\to\infty$,
\end{itemize}  
then we say that $S$ is the {\bf Schwarz function of $K$} or {\bf $K$ has the Schwarz function} $S$.
\end{defn}
If $K$ has the Schwarz function $S$, the following identities make a useful connection between the Schwarz function and ${\cal U}_\text{2D}$.
\begin{equation}\label{SA1}
\frac{1}{\pi}\int_K\frac{\d A(w)}{z-w}=\begin{cases}\displaystyle
\c z+\frac{1}{2 i\pi} \oint_{\partial K} \frac{\overline w\,\d w}{z-w}=\overline z+\frac{1}{2\pi\ii}\oint_{\partial K} \frac{S(w)\,\d w}{z-w}=\overline z-\frac{c}{z-a},\quad &z\in K,\vspace{0.2cm}
\\ \displaystyle
\frac{1}{2 i\pi} \oint_{\partial K} \frac{\overline w\,\d w}{z-w}=\frac{1}{2\pi\ii}\oint_{\partial K} \frac{S(w)\,\d w}{z-w}=S(z)-\frac{c}{z-a},\quad &z\notin K.
\end{cases}
\end{equation}
Here the contour $\partial K$ is positively oriented with respect to $K$. 
This identity gives
\begin{equation}\label{PhiS}
\partial {\cal U}_\text{2D}(z)= \begin{cases}\displaystyle
0,\quad &z\in K,
\\ \displaystyle
\overline z-S(z),\quad &z\notin K.
\end{cases}
\end{equation}
From the equation \eqref{PhiS} we obtain, for each connected component of $K^c$,
\begin{equation}\label{PhifromS}
{\cal U}_\text{2D}(z)=|z|^2-|z_0|^2 -2\Re\left[\int_{z_0}^z S(w)\,\d w\right],  \quad z\in\text{(a component of $K^c$)},
\end{equation}
where $z_0\in\partial K$ is any point on the boundary of the same component.

\begin{lemma}  Suppose $K$ is a compact set such that each connected component of $K^c$ is bounded by a finite union of closed real analytic curves (without any singularity) and $K$ has a Schwarz function as in Definition \ref{def-S}.   Then $K$ is the strict local minima of ${\cal U}_\text{2D}$ in the sense that, for $\epsilon>0$ small enough  (i.e. near the boundary $\partial K$),
\begin{equation}\label{Uboundary}
{\cal U}_\text{2D}(z_0+ \epsilon (n_x+in_y)) = 2\epsilon^2 +{\cal O}(\epsilon^3),\quad \forall z_0\in\partial K, 
\end{equation}
where $(n_x,n_y)$ is the unit vector that is outward normal to $\partial K$ at $z_0$.
\end{lemma}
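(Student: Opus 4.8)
The plan is to reduce the assertion to a second-order Taylor expansion of ${\cal U}_\text{2D}$ in the outward normal direction, using the representation \eqref{PhifromS} of ${\cal U}_\text{2D}$ on the components of $K^c$ together with the boundary identity $S(z)=\overline z$ on $\partial K$. Fix $z_0\in\partial K$ and write $\mathbf n:=n_x+in_y$ for the outward unit normal there. For all small $\epsilon>0$ the point $z_\epsilon:=z_0+\epsilon\mathbf n$ lies in the connected component of $K^c$ whose closure contains $z_0$, so \eqref{PhifromS} gives
\[
{\cal U}_\text{2D}(z_\epsilon)=|z_\epsilon|^2-|z_0|^2-2\Re\!\left[\int_{z_0}^{z_\epsilon}S(w)\,\d w\right],
\]
(in particular ${\cal U}_\text{2D}(z_0)=0$, the required normalization being already built into \eqref{PhifromS}). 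The first difference is exactly $|z_\epsilon|^2-|z_0|^2=2\epsilon\,\Re(\overline{z_0}\mathbf n)+\epsilon^2$. For the integral I would first note that, since $\partial K$ is locally a non-singular real-analytic arc near $z_0$ and the only singularity of $S$ sits at $a\in\mathrm{int}(K)$, the Schwarz reflection principle (applied to $S$ minus the Schwarz function of that arc, which vanishes on $\partial K$) extends $S$ to a genuine analytic function on a full neighborhood of $z_0$. Hence, integrating along the segment $[z_0,z_\epsilon]$,
\[
\int_{z_0}^{z_\epsilon}S(w)\,\d w=S(z_0)\,\epsilon\mathbf n+\tfrac12 S'(z_0)\,\epsilon^2\mathbf n^2+{\cal O}(\epsilon^3),\qquad S(z_0)=\overline{z_0}.
\]

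Combining the two expansions, the linear terms $2\epsilon\,\Re(\overline{z_0}\mathbf n)$ cancel and one is left with
\[
{\cal U}_\text{2D}(z_0+\epsilon\mathbf n)=\epsilon^2\bigl(1-\Re(S'(z_0)\mathbf n^2)\bigr)+{\cal O}(\epsilon^3),
\]
so everything hinges on the identity $S'(z_0)\mathbf n^2=-1$. To prove it, parametrize $\partial K$ near $z_0$ by arc length, $z=z(s)$ with $z(0)=z_0$ and $|z'(s)|\equiv1$, and set $\mathbf t:=z'(0)$ (unit tangent); the outward normal satisfies $\mathbf n=\pm i\mathbf t$, hence $\mathbf n^2=-\mathbf t^2$. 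Differentiating the boundary relation $S(z(s))=\overline{z(s)}$ at $s=0$ gives $S'(z_0)\mathbf t=\overline{\mathbf t}$, i.e. $S'(z_0)=\overline{\mathbf t}/\mathbf t=\overline{\mathbf t}^{\,2}$ because $|\mathbf t|=1$. Therefore $S'(z_0)\mathbf n^2=-\overline{\mathbf t}^{\,2}\mathbf t^2=-|\mathbf t|^4=-1$, which is real, and substituting back yields ${\cal U}_\text{2D}(z_0+\epsilon\mathbf n)=2\epsilon^2+{\cal O}(\epsilon^3)$.

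I expect the only genuine subtlety to be the analytic continuation of $S$ across $\partial K$: one must use the real-analyticity and nonsingularity of $\partial K$ to justify it, and observe that the $c/(z-a)$ pole is harmless since $a\notin\partial K$. Everything else is the short Taylor computation above, whose conceptual core is the elementary boundary identity $S'=\overline{\mathbf t}^{\,2}$ --- it is precisely this that forces the quadratic coefficient to equal $+2$ rather than being sign-indefinite, and hence makes $\partial K$ a strict local minimum of ${\cal U}_\text{2D}$. Finally, uniformity in $z_0\in\partial K$ follows from compactness: the ${\cal O}(\epsilon^3)$ remainder is dominated by $\sup|S''|$ over a fixed thin tubular neighborhood of $\partial K$, which is finite, and the geometric data ($\mathbf t$, $\mathbf n$) depend continuously on $z_0$.
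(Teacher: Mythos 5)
Your proof is correct, and it takes a genuinely different route from the paper's.

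The paper works with the smooth extension ${\cal U}$ of ${\cal U}_\text{2D}$ (obtained, as you also do, by continuing $S$ across the real-analytic boundary) and argues via the full Hessian: on $\partial K$ the gradient vanishes, the trace of the Hessian is $4$ because $\Delta{\cal U}=4\partial\bar\partial{\cal U}=4$, and the determinant is $4\bigl(1-|S'(z_0)|^2\bigr)=0$ because $|S'|=1$ on $\partial K$; hence the eigenvalues are $0$ and $4$, the null eigenvector is tangent to $\partial K$, and the quadratic growth along the normal is $\tfrac12\cdot4\cdot\epsilon^2=2\epsilon^2$. You instead bypass the eigenvalue bookkeeping and do a direct second-order Taylor expansion of \eqref{PhifromS} along the normal, where the whole issue is localized in the identity $S'(z_0)\mathbf n^2=-1$, which you extract by differentiating $S(z(s))=\overline{z(s)}$ along an arclength parametrization to get $S'(z_0)=\overline{\mathbf t}^{\,2}$.

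The two arguments hinge on the same ingredient ($S'=\overline z$-boundary relation and its derivative) but have complementary advantages. The paper's Hessian argument simultaneously identifies the flat (tangential) direction --- a fact the paper invokes but does not fully justify in that one-line assertion, whereas in your expansion the tangential flatness is automatic since ${\cal U}_\text{2D}\equiv0$ along $\partial K$. Your computation makes the source of the coefficient $2$ completely explicit and gives a slightly sharper identity ($S'(z_0)=\overline{\mathbf t}^{\,2}$, not merely $|S'|=1$); the paper's version is slicker and packages the information in a way that is reused implicitly elsewhere. Both are complete; your remark about $|z_0|^2$ in \eqref{PhifromS} building in the normalization ${\cal U}_\text{2D}(z_0)=0$ is consistent with the paper's implicit choice of $\ell_\text{2D}$ (Lemma \ref{lem-S2}, condition {\bf (A)}).
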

\begin{proof} We use the fact that the Schwarz function $S(z)$ can be analytically continued in a neighborhood of a real analytic curve $\partial K$.   This then gives a smooth extension, say ${\cal U}$ (later at \eqref{calU} we define this quantity as the maximal extension), of ${\cal U}_\text{2D}$ in a neighborhood of $\partial K$ by using \eqref{PhifromS}. We have ${\cal U}={\cal U}_\text{2D}$ on $\overline{K^c}$. 

On $\partial K$, $\partial{\cal U}(z)=\c\partial{\cal U}(z)=0$.    Because $\Delta{\cal U}=4$, the Hessian has trace $4$.   Because $|S'(z)|=1$ for $z\in\partial K$ (since $S(z)=\overline z$ there), the determinant of the Hessian of ${\cal U}$, which is given by
\begin{equation}\label{hess}
4\left((\partial\c\partial \,{\cal U}(z))^2-(\partial^2 {\cal U}(z))(\c\partial^2 {\cal U}(z))\right)=4\left(1-|S'(z)|^2\right),
\end{equation} vanishes.
So the Hessian of ${\cal U}$ has eigenvalues 0 and $4$.  Since the flat direction is tangent to $\partial K$ the direction normal to $\partial K$ gives the steepest ascent of ${\cal U}$ as described in the lemma. 
\end{proof}
\begin{lemma}\label{lem-S2}  Suppose $K$ is a compact set of area $\pi t$ such that each connected component of $K^c$ is bounded by a finite union of closed real analytic curves and $K$ has a Schwarz function as in Definition \ref{def-S}.  Then using this set $K$ in \eqref{Phi2D}, {\bf (\!A\!)} in \eqref{equil} is satisfied.
If, furthermore, ${\cal U}_\text{2D}$ has no more than one critical point in $K^c$ then {\bf (\!B\!)} is also satisfied.
\end{lemma}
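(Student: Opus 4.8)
The plan is to verify conditions {\bf (\!A\!)} and {\bf (\!B\!)} from \eqref{equil} directly, using the Schwarz function and the integral representation \eqref{PhifromS}. First, I would establish {\bf (\!A\!)}: I need to produce a constant $\ell_\text{2D}$ for which ${\cal U}_\text{2D}\equiv 0$ on $K$. By \eqref{PhiS}, $\partial{\cal U}_\text{2D}(z)=0$ throughout the interior of $K$; since ${\cal U}_\text{2D}$ is real-valued, this forces $\c\partial{\cal U}_\text{2D}=\overline{\partial{\cal U}_\text{2D}}=0$ as well, so ${\cal U}_\text{2D}$ is constant on each connected component of the interior of $K$. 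By continuity of ${\cal U}_\text{2D}$ up to $\partial K$ (the logarithmic potential of a bounded region is continuous across real-analytic boundary arcs) and the fact that each component of the interior has boundary touching $\partial K$, the value of that constant is the same on all of $K$ provided $K$ is connected; more generally one checks, using the smooth extension ${\cal U}$ from the previous lemma together with \eqref{Uboundary} (which shows ${\cal U}_\text{2D}$ has a degenerate minimum of value equal to its boundary value along $\partial K$), that the common boundary value is a single number. Choosing $\ell_\text{2D}$ to cancel that number — equivalently, $\ell_\text{2D}=\tfrac{2}{\pi}\int_K\log|z_0-w|\,\d A(w)-|z_0|^2$ for a (any) $z_0\in\partial K$ via \eqref{PhifromS} — gives ${\cal U}_\text{2D}\equiv 0$ on $K$, which is precisely {\bf (\!A\!)}.

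Next, for {\bf (\!B\!)}, I would argue on each connected component $\Omega$ of $K^c$ separately using \eqref{PhifromS}: on $\Omega$, ${\cal U}_\text{2D}(z)=|z|^2-|z_0|^2-2\Re\int_{z_0}^z S(w)\,\d w$ with $z_0\in\partial\Omega$. This is a real-analytic function on $\Omega$ which, by {\bf (\!A\!)} and continuity, vanishes on $\partial\Omega$. Its critical points in $\Omega$ are where $\partial{\cal U}_\text{2D}=\overline z-S(z)=0$, i.e. where $S(z)=\overline z$; since $\Delta{\cal U}_\text{2D}=4>0$, ${\cal U}_\text{2D}$ is subharmonic, so it can have no interior local maximum — every critical point is a saddle or a (local) minimum. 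On the unbounded component, ${\cal U}_\text{2D}(z)\to+\infty$ as $z\to\infty$ by the growth condition iii) on $S$ (the potential grows like $|z|^2$, cf. the growth of $Q$), and on any bounded component ${\cal U}_\text{2D}$ attains its extrema on the closure. If ${\cal U}_\text{2D}$ had a point in $\overline\Omega$ where it is $\le 0$ other than on $\partial\Omega$, then by the minimum principle for the superharmonic function $-{\cal U}_\text{2D}$, or by tracking the minimum, ${\cal U}_\text{2D}$ would attain a negative minimum at an interior critical point of $\Omega$; this is the step where the hypothesis "${\cal U}_\text{2D}$ has no more than one critical point in $K^c$'' enters. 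I would show that a single interior critical point cannot be a local minimum with negative value: if it were the unique critical point and a strict local minimum, then ${\cal U}_\text{2D}$ restricted to $\overline\Omega$ would have to have its global maximum on $\partial\Omega$ (value $0$) and another interior critical point by a Morse-theoretic / mountain-pass count — contradicting uniqueness. More concretely, on the unbounded component: a unique critical point that is a local min forces (since ${\cal U}_\text{2D}\to+\infty$ at $\infty$ and ${\cal U}_\text{2D}=0$ on $\partial\Omega$) the existence of a second critical point (a saddle separating the basin of that minimum from infinity), again a contradiction; hence there is no critical point at all in the unbounded component, ${\cal U}_\text{2D}$ has no interior extremum, so by the minimum principle ${\cal U}_\text{2D}>0$ throughout the unbounded component. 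The bounded components, when present, are handled the same way — a unique interior critical point that is a minimum below $0$ would, together with the boundary value $0$, force a second critical point.

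I expect the main obstacle to be the topological/Morse-theoretic bookkeeping in {\bf (\!B\!)}: making precise the claim that "a domain on whose boundary a subharmonic function vanishes, and which has at most one interior critical point, cannot have that function go negative inside.'' The clean way is: ${\cal U}_\text{2D}$ is subharmonic on $\overline\Omega$, hence its maximum is on $\partial\Omega$ (value $0$), so ${\cal U}_\text{2D}\le 0$ on $\overline\Omega$; if ${\cal U}_\text{2D}\not\equiv 0$ then it is strictly negative somewhere, attains a global min at some interior point $z_*$ (using ${\cal U}_\text{2D}\to+\infty$ at $\infty$ for the unbounded piece, and compactness for bounded pieces), and $z_*$ is then the unique critical point. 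Deform the sublevel set $\{{\cal U}_\text{2D}\le c\}$ as $c$ increases from ${\cal U}_\text{2D}(z_*)$ to $0$: it starts as a small topological disk around $z_*$ and must exhaust a neighborhood of $\partial\Omega$ minus $\partial\Omega$; since there is no other critical value, the sublevel sets change only by isotopy and never attach a handle, yet the region $\Omega$ is (for $t\ge t_c$) an annulus or has a hole — a disk cannot grow into an annulus without passing a critical point. For the simply connected case one argues instead via the behavior at $\infty$. This gives the contradiction, completing {\bf (\!B\!)}. Once both are in hand, Lemma \ref{lem-S2} follows, since {\bf (\!A\!)} and {\bf (\!B\!)} are exactly the hypotheses under which $\frac{1}{\pi t}{\bf 1}_K\,\d A$ is the equilibrium measure.
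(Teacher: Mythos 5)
Your strategy for part {\bf (B)}---rule out an interior local minimum of ${\cal U}_\text{2D}$ in $K^c$ by showing that such a minimum, combined with the hypothesis of at most one critical point, leads to a topological contradiction---is essentially the same idea the paper uses, where it is packaged as a direct appeal to the mountain-pass theorem. However, the ``clean way'' you propose contains a genuine error. You claim that on each component $\Omega$ of $K^c$, subharmonicity of ${\cal U}_\text{2D}$ together with the boundary value $0$ gives ${\cal U}_\text{2D}\le 0$ on $\overline\Omega$ (equivalently, via the minimum principle for the superharmonic $-{\cal U}_\text{2D}$). This is false; it is in fact the exact negation of the conclusion {\bf (B)} you are trying to establish. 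The maximum principle simply does not apply, because ${\cal U}_\text{2D}$ is unbounded above on every component of $K^c$: on the unbounded component ${\cal U}_\text{2D}(z)\to+\infty$ as $z\to\infty$, and on a bounded component containing $a$ (as happens in the post-critical case, where $D(a,\sqrt c)$ is such a component) the term $-2c\log|z-a|$ in $Q$ forces ${\cal U}_\text{2D}(z)\to+\infty$ as $z\to a$. Your own parenthetical ``using ${\cal U}_\text{2D}\to+\infty$ at $\infty$'' a few lines later already contradicts the ${\cal U}_\text{2D}\le 0$ claim, so some step has to break.

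The Morse-theoretic sublevel-set sketch that follows is the right flavor, but as you yourself note it only handles the annular topology; it hand-waves the simply-connected case, and it also does not account for the pole of ${\cal U}_\text{2D}$ at $z=a$, which sits inside $K^c$ and alters the structure of the sublevel sets there. The paper avoids all of this with a single appeal to the mountain-pass theorem: if $p\in K^c$ were a local minimum, min-maxing ${\cal U}_\text{2D}$ over paths from $p$ to $K$ (where ${\cal U}_\text{2D}\equiv 0$) produces a second critical point in $K^c$, contradicting the hypothesis of at most one. Once local minima are excluded, {\bf (B)} follows as you indicate---on each $\Omega$ the infimum of ${\cal U}_\text{2D}|_{\overline\Omega}$ is attained (thanks precisely to the $+\infty$ behaviour at $\infty$ and at $a$), and interior attainment would be a local minimum. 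Your argument for {\bf (A)}, that $\partial{\cal U}_\text{2D}=0$ on $K$ forces ${\cal U}_\text{2D}$ to be locally constant on $\mathrm{int}\,K$ and the constant patches across $\partial K$, is fine and matches the paper's brief Stokes-theorem computation.
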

\begin{proof}
One shows {\bf (\!A\!)} of \eqref{equil}  by direct calculation using Stokes' theorem and $\partial{\cal U}_\text{2D}=\c\partial{\cal U}_\text{2D}=0$ in $K$ by \eqref{PhiS}.

Suppose there is a local minimum of ${\cal U}_\text{2D}$ at $K^c$.   Then the Mountain pass theorem  says that there must be a critical point on a path connecting $K$ and the local minimum.   To briefly explain, one crosses a critical point $z_\text{crit}$ (i.e. a mountain pass) by following a path for which the maximal elevation is minimal, i.e. ${\cal U}_\text{2D}(z_\text{crit})=\inf_{(\text{paths})}[\max_{z\in(\text{path})} {\cal U}_\text{2D}(z)]$.)  This requires at least two critical points in $K^c$.  Therefore, if there is no more than one critical point in $K^c$ then there is no local minimum in $K^c$, and the lemma follows.
\end{proof}

\paragraph{Plan of the proofs:}
In the rest of this section, we define the region $K$ explicitly.   Then we show that i) there exists the Schwarz function of $K$ and ii) ${\cal U}_\text{2D}$ has no more than one critical point in $K^c$.  By Lemma \ref{lem-S2} this is enough to conclude that $K$ is the global minima of ${\cal U}_\text{2D}$. 

\begin{prop}\label{thm-postK}
For $t\geq t_c$, the global minimum of ${\cal U}_\text{2D}$ is exactly at
the doubly connected region $K=\overline{D(0,\sqrt{t+c})}\setminus D(a,\sqrt c)$ where $D(a,r)$ stands for the disk of radius $r$ centered at $a$. 
\end{prop}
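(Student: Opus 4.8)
The plan is to verify, for the candidate set $K=\overline{D(0,\sqrt{t+c})}\setminus D(a,\sqrt c)$, every hypothesis of Lemma \ref{lem-S2}; once this is done, the discussion around \eqref{equil} immediately identifies $\frac1{\pi t}{\bf 1}_K\,\d A$ as the unique equilibrium measure, equivalently identifies $K$ as the set on which ${\cal U}_\text{2D}$ attains its global minimum value $0$. First I would record the elementary geometry: the inclusion $\overline{D(a,\sqrt c)}\subset D(0,\sqrt{t+c})$ is equivalent to $a+\sqrt c\le\sqrt{t+c}$, i.e.\ to $t\ge a(a+2\sqrt c)=t_c$, so for $t>t_c$ the set $K$ is a closed doubly connected region and $K^c$ has exactly two components, the unbounded $E:=\{\,|z|>\sqrt{t+c}\,\}$ and the bounded $I:=D(a,\sqrt c)$, each bounded by a single circle, hence by a closed real analytic curve without singularities. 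Its area is $\pi(t+c)-\pi c=\pi t$, as required by Definition \ref{def-S}.

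Next I would exhibit the Schwarz function explicitly. Since $K^c$ is disconnected, a function analytic on $K^c$ may be prescribed componentwise, and I take
\[
S(z):=\frac{t+c}{z}\ \ (z\in E),\qquad\qquad S(z):=a+\frac{c}{z-a}\ \ (z\in I).
\]
On $E$ the function $S$ is analytic (its pole at $0$ lies inside $K$), $S(z)-c/(z-a)$ is analytic on $E$ (its pole at $a$ lies inside $K$), $S(z)=(t+c)/z+{\cal O}(z^{-2})$ as $z\to\infty$, and on $|z|=\sqrt{t+c}$ one has $\overline z=(t+c)/z=S(z)$. On $I$ the function $S$ is analytic, $S(z)-c/(z-a)\equiv a$ is analytic, and on $|z-a|=\sqrt c$, using $a\in\R$, $S(z)=a+\overline{z-a}=\overline z$. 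Hence conditions i)--iii) of Definition \ref{def-S} all hold and $K$ has the Schwarz function $S$.

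Finally I would count the critical points of ${\cal U}_\text{2D}$ in $K^c$. By \eqref{PhiS} one has $\partial{\cal U}_\text{2D}(z)=\overline z-S(z)$ there, and since ${\cal U}_\text{2D}$ is real-valued its critical points in $K^c$ are precisely the zeros of $\overline z-S(z)$. On $E$, $\overline z=(t+c)/z$ forces $|z|^2=t+c$; on $I$, $\overline z=a+c/(z-a)$ forces $|z-a|^2=(z-a)\,\overline{z-a}=c$. In both cases the equation is satisfied only on $\partial K$, so ${\cal U}_\text{2D}$ has no critical point at all in the open set $K^c$, a fortiori no more than one. Lemma \ref{lem-S2} then yields \textbf{(A)} and \textbf{(B)} of \eqref{equil}, which is exactly the assertion of the proposition. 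I expect essentially no obstacle in this argument; it is a matter of assembling the explicit formulas and the general lemmas.

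The one genuinely delicate point is the borderline value $t=t_c$, which is included in the statement: there the two circles become internally tangent at $p=a+\sqrt c$, the region $K$ is no longer doubly connected in the naive sense, and near $p$ it pinches to zero width, so the smooth normal-direction expansion \eqref{Uboundary} used in the lemma preceding Lemma \ref{lem-S2} degenerates at that single point. I would handle $t=t_c$ either by a continuity argument, letting $t\downarrow t_c$, or by the direct observations that the construction of $S$ above and the critical-point count are unaffected at $t=t_c$, that each component of $K^c$ is still bounded by a smooth circle so the lemmas apply away from $p$, and that $\{p\}$ has zero capacity and is therefore irrelevant to the Euler--Lagrange characterization \eqref{ELeq}.
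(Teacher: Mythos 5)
Your proof is correct and follows essentially the same route as the paper: exhibit the Schwarz function $S$ piecewise on the two components of $K^c$, observe that $\overline z = S(z)$ forces $|z|=\sqrt{t+c}$ or $|z-a|=\sqrt c$ so there are no critical points in the open set $K^c$, and invoke Lemma \ref{lem-S2}. Your separate discussion of the borderline $t=t_c$ is a careful addition that the paper omits; note, though, that the hypotheses of Lemma \ref{lem-S2} (each component of $K^c$ bounded by a closed real-analytic curve) are still literally satisfied there since $E$ and $I$ are each bounded by a single circle, so the direct argument already applies at $t=t_c$ even though the two circles touch at $a+\sqrt c$.
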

\begin{proof}
Our proof relies on the Schwarz function of $K$:
\begin{equation}\label{Spost}
S(z)=\begin{cases} (t+c)/z ,\quad |z|>\sqrt{t+c},
\\ a+c/(z-a),\quad |z-a|<\sqrt c.
\end{cases}
\end{equation}
To show that there is no critical point of ${\cal U}_\text{2D}$ at $K^c$ note that any point that satisfies $S(z)=\overline z$ is on $\partial K$.   The proof is complete by Lemma \ref{lem-S2}.
\end{proof}

For the pre-critical case, the region $K$ is defined by the following lemma.
\begin{lemma}\label{lem-preA} 
For $0<t<t_c$,
there exists a unique simply connected compact region $K\subset\C$ and a conformal mapping $f:\C\setminus\c{\mathbb D}\to\C\setminus K$ given by
\begin{equation}
f(v)= \rho\, v -\frac{\kappa}{v-\alpha}-\frac{\kappa}{\alpha},\quad \rho\in\R_+,\quad \kappa\in \R_+,\quad \alpha\in(0,1),
\end{equation}
with its inverse conformal map denoted by $F:\C\setminus K\to \C\setminus \c{\mathbb D}$,  
such that the function $S(z)$ given by $S(z)=f(1/F(z))$, is the Schwarz function of $K$, see Definition \ref{def-S}.  The three parameters satisfy
\begin{equation}\label{conformal-eq}
1-\alpha^2\geq \frac{\kappa}{\rho}.
\end{equation}
\end{lemma}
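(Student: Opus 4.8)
The plan is to construct the conformal map $f$ explicitly as a three-parameter family of rational maps of the prescribed form, and then to \emph{solve} for the three parameters $\rho,\kappa,\alpha$ by imposing that $S(z):=f(1/F(z))$ satisfy the three defining conditions of a Schwarz function in Definition~\ref{def-S}. First I would observe that any $f$ of the stated form is automatically a candidate: $f$ is rational with a simple pole at $v=\alpha\in\D$ (so $f$ is analytic on $\C\setminus\overline\D$) and $f(v)=\rho v+{\cal O}(1)$ at infinity. The Schwarz function attached to such an $f$ is $S(z)=f(1/F(z))$, i.e. $S=f\circ\iota\circ F$ where $\iota(v)=1/v$; since $F$ maps $K^c$ conformally to $\overline\D^c$ and $\iota$ maps $\overline\D^c$ to $\overline\D\setminus\{0\}$, the map $1/F$ sends $K^c$ into $\D$, and $f$ applied to it must reproduce the analytic structure demanded by conditions (ii) and (iii). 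On $\partial K$ one has $|F(z)|=1$, hence $1/F(z)=\overline{F(z)}$, and the identity $f(v)=\overline{f(1/\bar v)}$ for $|v|=1$ (valid because the coefficients $\rho,\kappa,\alpha$ are real) forces $S(z)=\overline{f(F(z))}=\overline z$ on $\partial K$; so condition (i) holds for \emph{free} once the coefficients are real. The real content is therefore to match the singularity structure at $z=a$ and the expansion at $z=\infty$.

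The key computational step is to translate conditions (ii) and (iii) into equations for $\rho,\kappa,\alpha$. For (iii): as $z\to\infty$, $F(z)=z/\rho+{\cal O}(1)$, so $1/F(z)\to 0$ and $f(1/F(z))=\rho/F(z)-\kappa/(1/F(z)-\alpha)-\kappa/\alpha$; expanding in powers of $1/z$ and using the known expansion $F(z)=z/\rho+c_0+c_1/z+\cdots$ (whose coefficients are themselves determined by $f$ via Lagrange inversion), one finds $S(z)=(c+t)/z+{\cal O}(z^{-2})$ is equivalent to the area constraint $\mathrm{Area}(K)=\pi t$ together with the residue being exactly $c+t$; the area of $K$ is computable directly from $f$ by $\frac{1}{2i}\oint_{|v|=1} \overline{f(v)}f'(v)\,dv$, which is an elementary rational-function contour integral giving $\mathrm{Area}(K)=\pi(\rho^2 - \kappa^2(\text{something in }\alpha))$. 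For (ii): $S(z)-c/(z-a)$ must be analytic in $K^c$; the only possible singularity of $S=f(1/F)$ in $K^c$ is at the point $z=a$ where $1/F(z)=\alpha$ (equivalently $F(a)=1/\alpha$, which pins down where $a$ sits), and the pole of $f$ at $v=\alpha$ has residue $-\kappa$, so near $z=a$, $S(z)\sim -\kappa/(1/F(z)-\alpha)\sim -\kappa/((1/F)'(a)(z-a)) = (\text{const})/(z-a)$; matching this constant to $c$ gives one more equation, and the requirement $F(a)=1/\alpha$ gives another. Carrying these out and eliminating yields precisely the formulas \eqref{rhokappa} for $\rho,\kappa$ in terms of $\alpha$, and the cubic $P(\alpha^2)=0$ of \eqref{000} for $\alpha$. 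This reduces existence/uniqueness of $K$ to existence/uniqueness of a root $\alpha\in(0,1)$ of the cubic with $\kappa>0$ — which is exactly the content of Lemma~\ref{lem-preA}, so I would \emph{invoke} that lemma here rather than reprove it.

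With the parameters in hand, I would verify that $f$ as defined is genuinely univalent on $\C\setminus\overline\D$ (so that it really is a conformal map onto a simply connected $K^c$, and $K$ is a Jordan domain with real-analytic boundary). This is where inequality \eqref{conformal-eq} enters: $f'(v)=\rho+\kappa/(v-\alpha)^2$, and one needs $f'$ to be nonvanishing on $\overline\D^c$ and $f$ to have no self-intersections on $|v|=1$; the condition $f'\neq0$ on $|v|\geq1$ boils down, after minimizing $|v-\alpha|^2$ over $|v|=1$ (which gives $(1-\alpha)^2$) and over $|v|>1$, to requiring $\rho(1-\alpha^2)\geq\kappa$ — this is exactly $1-\alpha^2\geq\kappa/\rho$. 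I would then note (possibly deferring the fully rigorous global-univalence argument, since for rational maps of degree two this is standard and the paper seems to treat it lightly) that local univalence on the circle plus a degree/winding-number count gives global univalence of $f$ on $\overline\D^c$. Finally, uniqueness of $K$ follows from uniqueness of the solution $(\rho,\kappa,\alpha)$: any simply connected $K$ with the stated Schwarz function gives rise, via the Riemann map, to an $f$ of this form, hence to a solution of the same system of equations, hence to the same $\alpha$ (by Lemma~\ref{lem-preA}) and the same $\rho,\kappa$.

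The main obstacle I anticipate is not the algebra — the equations for $\rho,\kappa,\alpha$, though tedious, are a routine contour-integral-plus-residue computation — but rather the \emph{global univalence} of $f$ on $\C\setminus\overline{\D}$, i.e. verifying that the candidate $f$ actually produces an embedded Jordan curve rather than a curve with self-intersections. Condition \eqref{conformal-eq} guarantees local injectivity ($f'\neq0$) but global injectivity for a degree-two rational map requires an extra argument (e.g. an argument-principle count of preimages, or a monotonicity/convexity property of the image curve). I expect the proof to establish \eqref{conformal-eq} as a \emph{consequence} of the parameter formulas \eqref{rhokappa} and the cubic \eqref{000} restricting $\alpha$ to the correct branch — i.e. the inequality is automatically satisfied by the relevant root — and then to lean on this, together with a standard but slightly delicate winding-number argument, to conclude univalence; this last point is the one I would want to be most careful about.
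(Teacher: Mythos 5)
Your outline matches the paper's strategy in broad strokes — translate conditions (ii)--(iii) of Definition \ref{def-S} into the system \eqref{valS}, eliminate $\rho,\kappa$ to obtain the cubic \eqref{000} for $\alpha^2$, and then verify univalence — but there are two genuine gaps.

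First, the circularity. You write that the existence and uniqueness of a root $\alpha\in(0,1)$ with $\kappa>0$ ``is exactly the content of Lemma~\ref{lem-preA}, so I would invoke that lemma here rather than reprove it.'' But Lemma~\ref{lem-preA} \emph{is the statement you are proving}; the root analysis is the central computational content of its proof and cannot be outsourced to itself. The paper's proof actually carries it out: the discriminant $\Delta=(a^2+4c+2t)^3-27t^2a^2$ of \eqref{000} is shown to be positive (via $\partial_c\Delta\geq0$ and $\Delta|_{c=0}\geq0$), so $P(X)$ has three real roots, exactly two of them positive; for $a^2\leq t<t_c$ one checks $P(0)>0$, $P(1)<0$ to find a unique root in $(0,1)$; for $t<a^2$ one instead checks $P(t/a^2)<0$ to isolate the unique root in $(0,t/a^2)$, which is precisely the branch on which the formula \eqref{rhokappa} makes $\kappa>0$. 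None of this is routine enough to be left implicit.

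Second, you correctly flag global univalence of $f$ on $\C\setminus\overline\D$ as the delicate point, note that $f'\neq0$ only gives local injectivity, and say you ``would want to be most careful about'' the winding-number argument — but you don't actually supply one. The paper sidesteps the issue entirely via the deck transformation $\deck(v)=\alpha-\tfrac{\kappa}{\rho}\tfrac{1}{v-\alpha}$, characterized by $f(v)=f(\deck(v))$: since $f$ is generically 2-to-1, injectivity of $f$ on $\overline\D^c$ is \emph{equivalent} (not just implied at the local level) to $\deck(\overline\D^c)\cap\overline\D^c=\emptyset$. As $\deck$ is a real-coefficient M\"obius map, the image of $\partial\D$ is the circle with diameter endpoints $\deck(\pm1)$, and the containment reduces to $-1\leq\deck(\pm1)\leq1$, which gives exactly \eqref{conformal-eq} — simultaneously delivering both local and global injectivity, with no degree count needed. (Your stated derivation of the local condition — ``minimizing $|v-\alpha|^2$ over $|v|=1$ gives $(1-\alpha)^2$'' — is also not quite right: the critical points of $f$ sit at $v=\alpha\pm i\sqrt{\kappa/\rho}$, with modulus-squared $\alpha^2+\kappa/\rho$, and \eqref{conformal-eq} is precisely the condition that these lie in $\overline\D$.) Finally, the paper verifies at the end that \eqref{conformal-eq} actually holds for the selected root by substituting \eqref{rhokappa}, giving $\kappa/\rho=(1-\alpha^2)\tfrac{t-a^2\alpha^2}{t+a^2\alpha^2}<1-\alpha^2$; this verification step should appear explicitly in any complete proof, since it closes the loop on the assumption that $f$ really is a conformal map.
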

The proof of this lemma is given in Appendix \ref{sec-proof-lem-preA}.   
We note that the unique $K$ in the above lemma is exactly the one defined at \eqref{fv} through the equations \eqref{000} and \eqref{rhokappa}; the latter two equations are derived in the proof.

Note that the lemma asserts the existence of functions $\rho(t), \alpha(t), \kappa(t)$; it is a direct consequence of the proof that these are analytic in $t$ for $0<t<t_c$.

In what follows we will need the deck transformation, defined via
\begin{equation}\label{deck}
{\rm deck}(v):=\alpha-\frac{\kappa}{\rho}\frac{1}{v-\alpha}.
\end{equation}
This function is characterized as the unique nontrivial function satisfying $f(v) = f({\rm deck}(v))$. 

\begin{prop}\label{thm-pre}
For the pre-critical case, $K$ defined by the conformal mapping in Lemma \ref{lem-preA} satisfies {\bf (\!A\!)} and {\bf (\!B\!)} of \eqref{equil}, i.e. the global minima of ${\cal U}_\text{2D}$ are exactly at $K$.
\end{prop}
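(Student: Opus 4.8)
The plan is to verify the two conditions {\bf (\!A\!)} and {\bf (\!B\!)} of \eqref{equil} by invoking Lemma \ref{lem-S2}, exactly as was done for the post-critical case in Proposition \ref{thm-postK}. The hypotheses of Lemma \ref{lem-S2} are met: by Lemma \ref{lem-preA}, $K$ is a compact set of area $\pi t$ whose complement is bounded by the single real analytic Jordan curve $f(\partial\D)$ (the fact that $f$ is a conformal map of $\C\setminus\c\D$ onto $\C\setminus K$, with no critical points on $|v|=1$ because of \eqref{conformal-eq}, guarantees the boundary is a non-singular real analytic curve), and $K$ possesses the Schwarz function $S(z)=f(1/F(z))$. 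Therefore {\bf (\!A\!)} holds automatically, and to obtain {\bf (\!B\!)} it suffices to prove that ${\cal U}_\text{2D}$ has at most one critical point in $K^c$.

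First I would translate the critical-point condition. By \eqref{PhiS}, a critical point of ${\cal U}_\text{2D}$ in $K^c$ is a point where $\overline z = S(z)$, i.e. where $\c z = f(1/F(z))$. Writing $v=1/F(z)$, so that $z=f(1/v)$ ranges over $K^c$ as $v$ ranges over $\D\setminus\{0\}$ (the interior of the unit disk minus the origin, since $F(\infty)=\infty$ forces $1/F(\infty)=0$), the equation becomes $\overline{f(1/v)} = f(v)$. Using that the coefficients of $f$ are real, $\overline{f(1/v)} = f(1/\c v)$ for the relevant range, so the equation is $f(1/\c v) = f(v)$. By the characterization of the deck transformation stated below \eqref{deck}, this holds if and only if $1/\c v = v$ (the trivial branch, giving $|v|=1$, hence $z\in\partial K$, which is excluded) or $1/\c v = {\rm deck}(v)$, i.e. $\c v^{-1} = \alpha - \frac{\kappa}{\rho}\frac{1}{v-\alpha}$. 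Clearing denominators in this last relation yields a polynomial equation in $v$ and $\c v$; substituting $v=x+iy$ and separating real and imaginary parts converts it into two real algebraic curves in the $(x,y)$-plane, and the claim reduces to showing these curves meet in at most one point inside the unit disk (with $y\neq 0$; points on the real axis correspond to the real segment of $\partial K$ or can be checked directly).

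The main obstacle is precisely this last count of intersections. My approach would be to exploit the $z\mapsto\c z$ symmetry of the whole configuration: ${\cal U}_\text{2D}$ is invariant under complex conjugation, so critical points come in conjugate pairs, and it is enough to rule out any critical point in the open upper half of $K^c$ and then separately verify there is none on the real axis outside $K$. On the real axis outside $K$, one can analyze the one-variable function $x\mapsto {\cal U}_\text{2D}(x)$ directly: its derivative is $2x - 2S(x)$ with $S$ real and, from \eqref{Spost}-type formulas here $S(z)-c/(z-a)$ analytic with the normalization $S(z)\sim (c+t)/z$, one shows monotonicity on each real interval of $K^c$ using that $S'(x)>0$ there (a consequence of $S$ being a composition of the univalent maps $1/F$ and $f$, whose derivative one controls via \eqref{conformal-eq}). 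For the open upper half-plane portion, the cleanest route is the argument principle: the number of solutions of $1/\c v = {\rm deck}(v)$ with $v$ in the upper half of $\D$ equals a winding number of $v \mapsto {\rm deck}(v) - 1/\c v$ (equivalently, after the substitution, a winding number of an explicit rational function) around the boundary of that region, and one evaluates this by tracking the argument along the two boundary pieces — the upper unit semicircle, where $1/\c v = v$ and the expression reduces to ${\rm deck}(v)-v$, whose zeros are the critical points $v_\beta,\overline{v_\beta}$ of $f$ from \eqref{betapre}, and the real diameter, handled by the monotonicity just discussed. Carrying out this winding-number bookkeeping, using the explicit form \eqref{deck} of ${\rm deck}$ and the inequality \eqref{conformal-eq}, should give the count of at most one, and then Lemma \ref{lem-S2} finishes the proof. \QED
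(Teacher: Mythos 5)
Your setup matches the paper through the reduction to ${\rm deck}(v)=1/\overline{v}$, and the invocation of Lemma \ref{lem-S2} is the right framing; where the proposal breaks down is in the count. First, the winding-number step is not justified: the map $v\mapsto {\rm deck}(v)-1/\overline{v}$ is not holomorphic, so the winding of its boundary image around $0$ only gives the \emph{sum of local degrees} at the zeros, which can be $\pm 1$; a winding number of $1$ is consistent with, say, three interior zeros of degrees $+1,+1,-1$, and you supply no orientation argument to exclude this. Second, your real-axis claim $S'(x)>0$ on $K^c\cap\R$ is false: on $(a,\infty)$ near $a$ we have $S(x)\sim c/(x-a)$, hence $S'(x)\to -\infty$, so the asserted monotonicity of $x\mapsto x-S(x)$ does not hold as stated.

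The paper's Lemma \ref{lem-double} closes the argument with a short step you do not take. From $\overline{z}=S(z)=f(1/v)$ and ${\rm deck}(v)=1/\overline{v}$, using that $f$ has real coefficients, one obtains $S_\text{back}(z)=f(1/{\rm deck}(v))=f(\overline v)=\overline{f(v)}=\overline z=S(z)$, hence $y(z)=S_\text{back}(z)-S(z)=0$. By the explicit formula \eqref{y}, $y$ vanishes exactly at $b,\beta,\overline\beta$. But $v_\beta$ and $\overline{v_\beta}$ are the fixed points of ${\rm deck}$, so at $z=\beta$ or $z=\overline\beta$ the equation ${\rm deck}(v)=1/\overline{v}$ would force $|v|=1$, contradicting $v_\beta\in\D$; thus $z=b$ is the only possibility, and its location on $(a,\infty)$ follows from ${\cal U}_\text{2D}\to\infty$ at both $a^+$ and $+\infty$. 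This observation --- that the critical-point condition forces $y(z)=0$, whose zeros are explicitly known --- is the key idea missing from your proposal, and it removes the need for any algebraic-curve or degree-theoretic bookkeeping.
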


The existence of the Schwarz function follows from Lemma \ref{lem-preA}.  We will complete the proof of Proposition \ref{thm-pre} in Section \ref{sec-proof-thm-pre}, by showing that there is only one critical point of ${\cal U}_\text{2D}$.

\subsection{Definitions of \texorpdfstring{$y(z)$}{yz} and \texorpdfstring{$F(z)$}{Fz}; Proof of Proposition \ref{thm-pre} (pre-critical)}\label{sec-proof-thm-pre}

There are two critical points of $f:\C\to \C$ at
$\alpha\pm  i\sqrt{\kappa/\rho}$.
By \eqref{conformal-eq}, both of them are inside $\D$.  Let us denote by $v_{\beta}$ the critical point in the upper half plane, and let us denote by $\beta$ the corresponding branch point (critical value) in the upper half plane:
\begin{equation}
\beta:= f\left(\alpha +i\sqrt{\frac{\kappa}{\rho}}\right)=\alpha\rho-\frac{\kappa}{\alpha}+2i\sqrt{\kappa\rho} \text{ ~~ and ~~} v_\beta:=\alpha +i\sqrt{\frac{\kappa}{\rho}}.
\end{equation}
The complex conjugates, $\overline\beta$ and $\overline{v_\beta}$, give the other critical value and the branch point.

\begin{lemma}\label{lem-beta} The branch points $\beta$ and ${\c\beta}$ are in the interior of $K$.
\end{lemma}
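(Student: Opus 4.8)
The plan is to reduce the claim to a statement about the critical points $v_\beta = \alpha + i\sqrt{\kappa/\rho}$ and $\overline{v_\beta}$ of the rational map $f$, and then to use the geometry of the conformal map $f:\C\setminus\overline{\D}\to\C\setminus K$ together with the extra inequality \eqref{conformal-eq}, $1-\alpha^2\ge\kappa/\rho$. First I would note that $f$ is a rational function of degree two, holomorphic on $\C\setminus\{\alpha\}$, and that its two finite critical points are exactly $\alpha\pm i\sqrt{\kappa/\rho}$; the branch point $\beta$ is by definition $f(v_\beta)$. Since $f$ restricts to a conformal bijection $\C\setminus\overline{\D}\to\C\setminus K$, the critical point $v_\beta$ cannot lie in $\C\setminus\overline{\D}$ (a conformal map has nonvanishing derivative), so $v_\beta\in\overline{\D}$. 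In fact \eqref{conformal-eq} gives $|v_\beta|^2 = \alpha^2 + \kappa/\rho \le \alpha^2 + (1-\alpha^2) = 1$, so $v_\beta\in\overline{\D}$, and one checks the inequality is strict unless $\kappa/\rho = 1-\alpha^2$; in the generic (strict) case $v_\beta$ lies in the open disk $\D$.

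The heart of the argument is then: the image under $f$ of (a suitable set containing) the open disk $\D$ is the interior of $K$. To make this precise I would use the deck transformation \eqref{deck}, ${\rm deck}(v) = \alpha - \tfrac{\kappa}{\rho}\tfrac{1}{v-\alpha}$, which is the Möbius involution characterized by $f\circ{\rm deck} = f$; its fixed points are precisely the two critical points $v_\beta,\overline{v_\beta}$ of $f$. The map ${\rm deck}$ exchanges the unit circle with itself (since $f$ maps $|v|=1$ to $\partial K$ two-to-one onto... actually one-to-one, but the two-sheeted structure of $f$ pairs points via ${\rm deck}$), hence maps $\C\setminus\overline{\D}$ to $\D$ minus the point $\alpha$ (or a region of that shape), so that $f$ maps $\D\setminus\{\text{pole of }{\rm deck}, \alpha\}$ onto $\C\setminus K$ as well — meaning $f$ is two-to-one onto $\C\setminus K$, with the two preimages of any point of $\C\setminus K$ related by ${\rm deck}$. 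Consequently a point $w$ lies in the interior of $K$ if and only if $f^{-1}(w)\cap(\C\setminus\overline{\D}) = \emptyset$, i.e. both preimages of $w$ under $f$ lie in $\overline{\D}$. For $w=\beta$, the preimage is the single (double) point $v_\beta$, which we have shown lies in $\D\subset\overline{\D}$ (strictly, when $\kappa/\rho<1-\alpha^2$); hence $\beta\in{\rm Int}(K)$. Conjugating gives $\overline\beta\in{\rm Int}(K)$.

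The main obstacle I anticipate is the degenerate boundary case $\kappa/\rho = 1-\alpha^2$, where $v_\beta$ lands on the unit circle and the above "strict interior" conclusion is borderline; one must check (from the explicit formulas \eqref{rhokappa}, \eqref{000} and the proof of Lemma \ref{lem-preA}) that equality in \eqref{conformal-eq} does not actually occur for $0<t<t_c$, or else handle it by a limiting/continuity argument using the analyticity of $\rho(t),\alpha(t),\kappa(t)$ in $t$. A secondary technical point is to justify carefully that ${\rm deck}$ maps $\C\setminus\overline{\D}$ onto the bounded component of $\widehat\C\setminus{\rm deck}(\partial\D)$ containing $\D$; since ${\rm deck}$ is a Möbius map it suffices to track where it sends, say, $v=\infty$ (namely to $v=\alpha$) and to verify ${\rm deck}$ preserves (or reverses) orientation on $\partial\D$ appropriately, which is a short computation with the explicit Möbius coefficients. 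Everything else is bookkeeping with the degree-two rational map $f$.
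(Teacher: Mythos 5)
Your proof takes essentially the same approach as the paper: $v_\beta$ is the unique preimage of $\beta$ under the degree-two rational map $f$, it lies strictly inside $\D$, and since $f$ maps $\D^c$ onto $\overline{K^c}$ the critical value $\beta$ cannot lie in $\overline{K^c}$, hence $\beta\in{\rm Int}(K)$ (the deck-transformation bookkeeping you include is just an expanded version of this surjectivity argument). The strictness concern you flag is in fact resolved in the proof of Lemma~\ref{lem-preA}: equation~\eqref{inside} gives $\kappa/\rho<1-\alpha^2$ for all $0<t<t_c$, so $v_\beta$ never lands on $\partial\D$.
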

\begin{proof}  Since $v_\beta\in \D$ is a critical point of the 2-to-1 branched covering map, $v_\beta$ is the only point that maps to $\beta$ under $f:\C\to\C$.  If $\beta$ is not in the interior of $K$, then by the surjectivity of $f:\D^c\to \c{K^c}$ one of the preimages of $\beta$ must be outside the unit disk, which is a contradiction.  
\end{proof}
Because $(v-\alpha)(f(v)-\beta)$ is a quadratic polynomial in $v$ with double roots at $v_\beta$ we have
$(v-\alpha)(f(v)-\beta)=\rho(v-v_\beta)^2$ and, similarly, $(v-\alpha)(f(v)-\c\beta)=\rho(v-\c{v_\beta})^2$.
We also have $(v-\alpha)(v-{\rm deck}(v))=(v-v_\beta)(v-\c{v_\beta})$ which, combined with the above, gives
\begin{equation}\label{diff}
(v-{\rm deck}(v))^2=\rho^{-2}(f(v)-\beta)(f(v)-\c\beta).
\end{equation}
We also obtain by a direct calculation that
\begin{equation}\label{sum}
v+{\rm deck}(v)=\frac{1}{\rho}f(v)+\frac{1}{\alpha}\left(\alpha^2+\frac{\kappa}{\rho}\right)=\frac{1}{\rho}(f(v)+|\beta|),
\end{equation}
where we use 
\begin{equation}
|\beta|=\alpha\rho+\frac{\kappa}{\alpha}. 
\end{equation}
that is obtained by taking $v=0$ in \eqref{diff}. 
From \eqref{diff} and \eqref{sum}, for any $v\notin\{v_\beta,\c{v_\beta}\}$, there is exactly one choice of sign such that
\begin{equation}\label{vbyfv}
\frac{1}{2\rho}\left[f(v)+ |\beta|\pm \sqrt{(f(v)-\beta)(f(v)-\c\beta)}\right]=v
\end{equation}
is satisfied; The other sign gives ${\rm deck}(v)$.
Now let us select our branch cut to be {\em inside} $K$ and to intersect $\R$ only at the negative real axis.   This is possible because $K\cap\R_-\neq\emptyset$ which is simply confirmed by $f(-1)=-\rho-\kappa/(\alpha(1+\alpha))<0$.  Having determined the branch cut, we select the branch of the square root by $[(z-\beta)(z-\c\beta)]^{1/2}\approx z+{\cal O}(1)$ as $z\to\infty$.    Then \eqref{vbyfv} gives the following lemma.
\begin{lemma}\label{lem-F}
Take any simple smooth curve ${\cal B}$ in $K$ that starts at $\beta$ and ends at $\c\beta$ while intersecting $\R$ only once at the negative real axis.
The map $F$ (the inverse of $f$; see Lemma \ref{lem-preA}) has the analytic continuation over $\C$ with the branch cut at ${\cal B}$ via the formula
\begin{equation}\label{Fz}
F(z)= \frac{1}{2\rho}\left[z+ |\beta|+ \sqrt{(z-\beta)(z-\c\beta)}\right],\quad {\rm deck}\circ F(z)= \frac{1}{2\rho}\left[z+ |\beta|- \sqrt{(z-\beta)(z-\c\beta)}\right].
\end{equation}
\end{lemma}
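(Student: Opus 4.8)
The plan is to transplant the branch‑selection identity \eqref{vbyfv} from the $v$‑plane to the $z$‑plane via $F$. Fix ${\cal B}\subset K$ as in the statement. Since $\beta,\overline\beta\in{\rm int}(K)$ by Lemma \ref{lem-beta}, the function $R(z):=\sqrt{(z-\beta)(z-\overline\beta)}$, defined with branch cut along ${\cal B}$ and normalized by $R(z)=z+{\cal O}(1)$ as $z\to\infty$, is single‑valued and analytic on $\C\setminus{\cal B}$; set $F_\pm(z):=\frac{1}{2\rho}\big[z+|\beta|\pm R(z)\big]$, both analytic on $\C\setminus{\cal B}$. For $z\in\C\setminus K$ one has $f(F(z))=z$, so substituting $v=F(z)$ into \eqref{vbyfv} yields that, for each such $z$, $F(z)$ equals one of $F_+(z),F_-(z)$, with ${\rm deck}(F(z))$ equal to the other.

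It remains to show the sign is uniformly $+$. As $z\to\infty$, $F(z)=z/\rho+{\cal O}(1)\to\infty$, whereas $F_-(z)\to\frac{|\beta|+\Re\beta}{2\rho}=\alpha$ (a finite value, consistent with ${\rm deck}(\infty)=\alpha$) and $F_+(z)=z/\rho+{\cal O}(1)$; hence $F\equiv F_+$ and ${\rm deck}\circ F\equiv F_-$ on a neighbourhood of $\infty$. To propagate this across all of $\C\setminus K$, observe that $F_+(z)=F_-(z)$ only where $R(z)=0$, i.e. at $z=\beta,\overline\beta$, which lie in ${\rm int}(K)$ by Lemma \ref{lem-beta} and so are absent from $\C\setminus K$. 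Thus on the connected open set $\C\setminus K$ the two analytic functions $F_+,F_-$ are everywhere distinct, and the selection $z\mapsto(\text{the sign})$ — which is well defined and continuous because $F,F_+,F_-$ are continuous — is locally constant, hence constant; combined with its value near $\infty$ this gives $F=F_+$ and ${\rm deck}\circ F=F_-$ throughout $\C\setminus K$.

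Finally, since ${\cal B}\subset K$, we have $\C\setminus K\subset\C\setminus{\cal B}$ as open sets, and $F_+,F_-$ are analytic on the larger domain $\C\setminus{\cal B}$; by uniqueness of analytic continuation, $F_+$ is the analytic continuation of $F$, and $F_-$ that of ${\rm deck}\circ F$, from $\C\setminus K$ to $\C\setminus{\cal B}$ — which is precisely \eqref{Fz}. The one place where care is needed — and the only genuine obstacle — is the global sign step: a priori nothing forbids the sign in \eqref{vbyfv} from depending on $z$, and the argument hinges exactly on Lemma \ref{lem-beta} to keep the ramification points $\beta,\overline\beta$ out of $\C\setminus K$ so that no flip is possible. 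As a sanity check, expanding $F_+$ at $\infty$ reproduces $F(z)=z/\rho+\kappa/(\alpha\rho)+\kappa/z+{\cal O}(z^{-2})$, matching the normalization of $F$ from Lemma \ref{lem-preA}.
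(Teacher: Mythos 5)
Your proof is correct and follows the same route as the paper. The paper's proof is a one‑liner — "it is enough to determine the sign in front of $\sqrt{\cdot}$ by considering $F(z)\approx z/\rho$ as $z\to\infty$" — which tacitly relies on exactly the two points you make explicit: first, that identity \eqref{vbyfv} with $v=F(z)$ pins $F$ to one of the two branches $F_\pm$ pointwise on $\C\setminus K$; and second, that the sign selection cannot flip because $F_+$ and $F_-$ agree only at $\beta,\overline\beta\in\mathrm{int}(K)$ (Lemma \ref{lem-beta}), so on the connected set $\C\setminus K$ the choice is locally constant, hence determined by the value at $\infty$. You also make the final analytic‑continuation step from $\C\setminus K$ to $\C\setminus\mathcal B$ explicit. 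These are the right details to supply, and your sanity check of the expansion at $\infty$ is consistent with $|\beta|\pm\Re\beta=2\alpha\rho$ and $2\kappa/\alpha$.
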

\begin{proof} It is enough to determine the sign in front of $\sqrt{...}$ by considering $F(z)\approx z/\rho$ as $z\to\infty$.  
\end{proof}
Using this $F$, the analytic continuation of the Schwarz function is given by $S(z)=f(1/F(z))$, of which the explicit form is announced in the next lemma.

\begin{lemma}[Definition of $y(z)$]\label{lem-sy}
$S(z)$ has analytic continuation to the entire complex $z$ plane with the branch cut at ${\cal B}$ (that is defined in Lemma \ref{lem-F}) via the formula
\begin{equation}\label{y}
S(z)=\frac{a}{2}+\frac{c}{2}\frac{1}{z-a}+\frac{c+t}{2z}-\frac{1}{2} y(z),\quad y(z):=a\frac{(z-b)\sqrt{(z-\beta)(z-\c\beta)}}{(z-a)z},
\end{equation}
where $y(z)$ satisfies three conditions respectively near $a,0$ and $\infty$:
\begin{equation}\label{ysing}
y(z)\sim \frac{-c}{z-a},\quad y(z)\sim \frac{c+t}{z},\quad y(z)\sim a-\frac{t}{z}+{\cal O}(1/z^2).
\end{equation}
\end{lemma}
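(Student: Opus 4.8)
The plan is to compute the analytic continuation $S(z) = f(1/F(z))$ directly from the explicit formulas in Lemma \ref{lem-F} and then verify the three local conditions in \eqref{ysing} by matching singular behaviors. First I would observe that from \eqref{vbyfv}, setting $v = F(z)$ and $v = \deck\circ F(z)$, we have the two roots $F(z)$ and $\deck(F(z))$ of the quadratic $\rho v^2 - (z+|\beta|)v + (\text{const})$; equivalently, $1/F(z)$ is one root and $1/\deck(F(z))$ the other of a related quadratic. The key algebraic identity to exploit is that $S(z) = f(1/F(z))$ is, by construction, a solution of the same kind of relation that $\overline z$ satisfies on $\partial K$. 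Concretely, I would use that $f(v) + f(\deck(v))$ and $f(v) f(\deck(v))$ are rational symmetric functions, so $S(z)$ is determined as $\frac12\left[(\text{rational in }z) \pm \sqrt{(z-\beta)(z-\c\beta)}\,(\text{rational in }z)\right]$. The structure in \eqref{y} — a rational part $\frac a2 + \frac{c}{2(z-a)} + \frac{c+t}{2z}$ plus $-\frac12 y(z)$ with $y$ having a single square-root branch cut on $\mathcal B$ — is then forced, and only the precise rational prefactor $a(z-b)/((z-a)z)$ inside $y$ and the value of $b$ remain to be pinned down.

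The cleanest route to pin these down is via the three conditions in Definition \ref{def-S}: (ii) $S(z) - c/(z-a)$ is analytic in $K^c$, and (iii) $S(z) = (c+t)/z + \mathcal O(z^{-2})$ at $\infty$. Rewriting \eqref{y}, the claimed $S$ must have residue behavior $y(z) \sim -c/(z-a)$ near $a$ (to cancel the simple pole so that $S - c/(z-a)$ stays analytic across $z=a$, which lies outside $K$ in the pre-critical picture — or, more precisely, so that the pole of $S$ at $a$ has exactly the prescribed strength), $y(z) \sim (c+t)/z$ near $0$ (so that the $c+t$ over $2z$ term is doubled correctly and $S$ acquires the right pole at $0$... wait — rather, so the combination produces $S(z) \to (c+t)/z$ only at $\infty$; near $0$ the $1/z$ pieces must combine consistently), and $y(z) \sim a - t/z + \mathcal O(z^{-2})$ at $\infty$ (so that $\frac{c+t}{2z} - \frac12 y(z) = \frac{c+t}{2z} - \frac a2 + \frac{t}{2z} + \cdots$, and adding $\frac a2$ gives $S(z) = \frac{c+t}{z} + \mathcal O(z^{-2})$, matching (iii)). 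I would verify each of these three asymptotic statements about $y(z) = a(z-b)\sqrt{(z-\beta)(z-\c\beta)}/((z-a)z)$ by a short Laurent/Taylor expansion, and this computation simultaneously forces the overall constant $a$ and the value $b = \alpha/\rho$: the $\infty$ expansion gives the leading coefficient $a$ and, at next order, an equation determining $b$ in terms of $\beta,\c\beta$; the residue at $a$ gives another relation; consistency then yields $b = \alpha/\rho$ after substituting \eqref{rhokappa} and \eqref{betapre}. One also checks $y$ has no pole at $0$ beyond the stated one, i.e. the numerator does not vanish there inconveniently — indeed $\sqrt{(z-\beta)(z-\c\beta)}|_{z=0} = |\beta|$ by the branch choice, giving the stated $y(z) \sim (c+t)/z$ once the constant is right.

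The main obstacle I anticipate is the bookkeeping needed to show that $f(1/F(z))$ actually collapses to the simple closed form in \eqref{y} — that is, demonstrating that all the apparent extra rational poles coming from the $1/(v-\alpha)$ and $1/\alpha$ terms in $f$ and from $1/F(z)$ cancel, leaving only poles at $z = a$, $z = 0$ (and the controlled behavior at $\infty$). This is where the specific relations \eqref{rhokappa} among $\rho, \kappa, \alpha$ and the cubic \eqref{000} must be invoked; a priori $f(1/F(z))$ could have a pole wherever $F(z) = \alpha$, and one must show this locus corresponds exactly to $z = a$ (geometrically, $f(1/\alpha) = \infty$... rather $v = \alpha$ is the pole of $f$, and $1/F(z) = \alpha$ means $F(z) = 1/\alpha$, which should be the point mapping to $z = a$). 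So the heart of the argument is identifying $f(1/\alpha^{-1})$-type special points with $a$ and $0$, then reading off the Laurent coefficients; the rest is routine verification that the three conditions of Definition \ref{def-S} hold, which by Lemma \ref{lem-preA} we already know must be the case — so really this lemma is just making the abstract Schwarz function from Lemma \ref{lem-preA} explicit, and the conditions \eqref{ysing} are consequences, not things needing independent proof.
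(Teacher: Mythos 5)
Your approach is essentially the paper's: use the deck-transformation symmetry to split $S(z)$ into a rational part $\tfrac12\bigl(S(z)+S_\text{back}(z)\bigr)$ and a part $\tfrac12\bigl(S(z)-S_\text{back}(z)\bigr)$ proportional to $\sqrt{(z-\beta)(z-\c\beta)}$, then determine each by locating the poles at $z=a$ and $z=0$ (via $F(z)=1/\alpha$ and $\deck\circ F(z)=0$, respectively) and reading off residues and the behavior at $\infty$. One slip worth flagging: $f(v)+f(\deck(v))$ and $f(v)f(\deck(v))$ are trivially $2z$ and $z^2$, since $f(v)=f(\deck(v))=z$ by definition of $\deck$, so they carry no information; the quantities that must be rational and $\sqrt{(z-\beta)(z-\c\beta)}$-times-rational are $f(1/v)\pm f(1/\deck(v))=S(z)\pm S_\text{back}(z)$ with $v=F(z)$, which is what the rest of your argument correctly uses, so the conclusion about the structure of $S$ is unaffected.
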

We also note that the middle condition in \eqref{ysing} and the last condition in \eqref{valS} gives
\begin{equation}\label{bpre}
b=|\beta|/(t+c)=\alpha/\rho.
\end{equation}
It is useful to define a function $S_\text{back}$ (the analytic continuation of $S$ through the cut ${\cal B}$) by
\begin{equation}\label{Sback}
S_\text{back}(z):=f(1/{\rm deck}(F(z)))=S(z)+y(z).
\end{equation}

The following Lemma, combined with Lemma \ref{lem-S2}, proves Proposition \ref{thm-pre}.
\begin{lemma}\label{lem-double} The only point that satisfies $S(z)=\c z$ and $F(z)\notin \partial\D$ is at $z= b\in (a,\infty)$.  
\end{lemma}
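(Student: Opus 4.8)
The plan is to reduce the equation $S(z)=\bar z$ to an algebraic equation in $z$ and verify that the only solution with $F(z)\notin\partial\mathbb D$ is the claimed real point $z=b$. The key structural input is that, on the exterior of $K$, the Schwarz function has the form $S(z)=f(1/F(z))$ (Lemma \ref{lem-preA}), and that the same formula, with $F$ analytically continued across $\mathcal B$ as in Lemma \ref{lem-F}, extends $S$ to $\mathbb C\setminus\mathcal B$ with the explicit expression \eqref{y}. So the statement is really about zeros of $S(z)-\bar z$ in $\mathbb C\setminus\mathcal B$, with the points of $\partial\mathbb D$ (i.e.\ $z\in\partial K$) excluded by hypothesis. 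On $\partial K$ one has $S(z)=\bar z$ identically, so what we must rule out is any \emph{other} coincidence $S(z)=\bar z$ off the boundary.

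First I would work on the side $F(z)\in\mathbb D$, which via the deck transformation corresponds to the analytic continuation $S_{\mathrm{back}}(z)=f(1/\mathrm{deck}(F(z)))=S(z)+y(z)$ of \eqref{Sback}. The equation $S(z)=\bar z$ together with its conjugate $\overline{S(z)}=z$ should be combined with the defining functional relations of $f$ and $\mathrm{deck}$: since $f(v)=f(\mathrm{deck}(v))$, if $v=1/F(z)$ lies in $\mathbb D^c$ then $S(z)=f(v)$, while writing $w=\overline{1/F(z)}$ and imposing $\overline{S(z)}=z$ gives $\overline{f(v)}=z$, i.e.\ $z$ itself is a value of the (conjugated) map. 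The strategy is to exploit that $f$ restricted to $\partial\mathbb D$ parametrizes $\partial K$ and that $f(1/\bar v)=\overline{f(v)}$ on $|v|=1$ — the standard ``$f$ and its Schwarz conjugate agree on the unit circle" identity — to show that $S(z)=\bar z$ with $z\notin\partial K$ forces either $1/F(z)=\overline{F(z)}$ (an identity that, for $F(z)\notin\partial\mathbb D$, can only be consistent along the real axis) or forces $z$ to be a \emph{real} point where the two branches of the square root in \eqref{Fz} conspire; tracking which real interval that point lies in, using $b=\alpha/\rho\in(a,\infty)$ from \eqref{bpre} and the explicit formula for $y(z)$, pins it down to $z=b$.

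Concretely, on the real axis the cleanest route is: for $z\in\mathbb R$ one has $\overline{S(z)}=\overline{f(1/F(z))}=f(1/\overline{F(z)})$ because $f$ has real coefficients; so $S(z)=\bar z=z$ combined with $z$ real reduces to $f(1/F(z))=z$ and hence $1/F(z)$ is a preimage of $z$ under $f$. But for $z\in K^c\cap\mathbb R$ the preimages of $z$ under $f:\mathbb D^c\to\overline{K^c}$ and $f:\mathbb D\to(\text{inside})$ are $F(z)$ and $\mathrm{deck}(F(z))$; since $1/F(z)$ must equal one of these, and $|F(z)|>1$ forces $|1/F(z)|<1$ forces $1/F(z)=\mathrm{deck}(F(z))$ — a single equation in one variable whose only admissible root turns out to be the one with $z=b$. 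A short computation from \eqref{Fz} and \eqref{deck}, or equivalently from the vanishing of $y$ (note $y(b)=0$ from \eqref{y}), identifies $z=b$, and one checks $b\in(a,\infty)$ is indeed where $y$'s zero falls. The remaining part — that \emph{no} non-real $z$ with $F(z)\notin\partial\mathbb D$ solves $S(z)=\bar z$ — is the main obstacle; I would handle it by the same pairing of $S(z)=\bar z$ with $\overline{S(z)}=z$: applying $f^{-1}$ appropriately yields that $1/F(z)$ and $1/\overline{F(z)}$ are related by the deck transformation, i.e.\ $\mathrm{deck}(1/F(z))=1/\overline{F(z)}$, and since $\mathrm{deck}$ is an involution with $\mathrm{deck}(\partial\mathbb D)=\partial\mathbb D$ this symmetry forces $1/F(z)$ onto $\partial\mathbb D$ unless it is a fixed point configuration on $\mathbb R$, contradicting $F(z)\notin\partial\mathbb D$ off the real axis.

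The step I expect to be genuinely delicate is exactly this last one — ruling out complex solutions — because it requires using the global geometry of $f$ (that it is a proper degree-two branched cover with both critical points inside $\mathbb D$, and that the pairing by $\mathrm{deck}$ on $\mathbb D^c$ has no fixed points there) rather than a local computation; everything else is algebra driven by \eqref{Fz}, \eqref{deck}, \eqref{bpre} and the real-coefficient symmetry of $f$.
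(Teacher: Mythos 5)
Your opening moves --- passing to $v=F(z)$ and exploiting the real-coefficient symmetry $\overline{f(w)}=f(\overline w)$ --- are correct, and this is also how the paper's proof begins. But the step for nonreal $z$ contains two concrete errors, and the finishing mechanism the paper uses is different from what you propose. (i) The relation you write is wrong: from $S(z)=\overline z$, i.e. $f(1/F(z))=\overline{f(F(z))}=f(\overline{F(z)})$, one gets for $|F(z)|\neq 1$ that $1/F(z)$ and $\overline{F(z)}$ are deck-paired, equivalently ${\rm deck}(F(z))=1/\overline{F(z)}$; your ${\rm deck}(1/F(z))=1/\overline{F(z)}$ would instead force $\overline{F(z)}=1/\overline{F(z)}$, i.e. $|F(z)|=1$, which is excluded by hypothesis. (ii) Even with the corrected relation, your closing inference --- that the relation pushes $1/F(z)$ onto $\partial\D$ or $z$ onto $\R$ ``because ${\rm deck}$ is an involution with ${\rm deck}(\partial\D)=\partial\D$'' --- rests on a false premise: ${\rm deck}$ does \emph{not} preserve $\partial\D$; it preserves the circle $|v-\alpha|=\sqrt{\kappa/\rho}$ and, by \eqref{conformal-eq}, maps $\partial\D$ strictly into $\D$. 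So the solution locus of ${\rm deck}(v)=1/\overline v$ is not a priori confined to $\partial\D\cup\R$. The ingredient you are missing is to feed the corrected relation back through $f$: $S(z)=f(1/F(z))=\overline{f(F(z))}=f(\overline{F(z)})=f(1/{\rm deck}(F(z)))=S_\text{back}(z)$, hence $y(z)=S_\text{back}(z)-S(z)=0$. This confines $z$ to the three zeros $b,\beta,\overline\beta$ of $y$ from \eqref{y}; the branch points are then excluded because $F(\beta)=v_\beta$ is a fixed point of ${\rm deck}$, so ${\rm deck}(v_\beta)=1/\overline{v_\beta}$ would force $|v_\beta|=1$, contradicting $|v_\beta|^2=\alpha^2+\kappa/\rho<1$ from \eqref{inside}.

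There is also a secondary gap: locating $b$ in $(a,\infty)$ is not immediate from $b=\alpha/\rho$, and ``one checks'' is not an argument. The paper reasons globally: $K\cap\R$ is a single interval lying to the left of $a$, and since ${\cal U}_\text{2D}\to+\infty$ both as $z\to a$ and as $z\to+\infty$ on $\R$, there is a local minimum --- hence a critical point of ${\cal U}_\text{2D}$, i.e.\ a solution of $S(z)=\overline z$ --- somewhere in $(a,\infty)$; combined with the uniqueness established in the first part, this forces $b\in(a,\infty)$.
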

\begin{proof}   
If $z$ satisfies $z=\c{S(z)}$ then $v=F(z)$ satisfies $f(v)=\c{f(1/v)}=f(1/\c v)$. This implies ${\rm deck}(v)=1/\c v$ if $v\notin \partial\D$, and we have 
\begin{equation}
S(z)=f(1/v)=\c{f(v)}=f(\overline v)=f(1/(1/\c v))=f(1/{\rm deck}(v))=S_\text{back}(z).
\end{equation}
Therefore, $z=b$ because $v_\beta$ is fixed under the deck transformation.  

To show that $b\in(a,\infty)$, one notes that $K\cap\R$ is connected (otherwise $K$ is not simply-connected) and lies to the left of $a$ (which can be checked from the conformal map).   
Because ${\cal U}_\text{2D}$ grows to $\infty$ at $a\in\R$ and $+\infty$, ${\cal U}_\text{2D}:\R\to \R$ must have at least one local minimum at $(a,\infty)$, which must be a critical point of ${\cal U}_\text{2D}$, where $S(z)=\c z$ holds.
\end{proof}
\noindent The last lemma tells that ${\cal U}_{2D}$ has at most one critical point in $K^c$, which proves Proposition \ref{thm-pre}.

\subsection{Branch cut of \texorpdfstring{$y(z)$}{yz1}: pre-critical case}

Having defined $S(z)$, $y(z)$, $F(z)$, as the functions on the {\em whole complex plane} with the branch cut on ${\cal B}$, we can define the following objects 
\begin{align}\label{Uop}
&{\cal U}_\text{OP}(z):=\Re\left[\int_\beta^z  y(w)\,\d w\right],\\
&{\cal U}(z):= |z|^2 -|z_0|^2- 2 \Re\left[\int_{z_0}^z  S(w)\,\d w\right], \quad \text{ for a point } z_0 \text{ s.t. }  F(z_0)\in\partial\D,\label{calU}
\end{align}
where the integration contour is in $\C\setminus{\cal B}$.  From \eqref{PhifromS} we note that ${\cal U}$ is the smooth extension of ${\cal U}_\text{2D}$ from $\C\setminus K$.

In the previous subsection, we have not specified the exact location of the branch cut ${\cal B}$ {\em except} that it is inside $K$ and intersects the negative real axis.  In this section we determine the precise location of the branch cut in such a way that will be useful to the subsequent Riemann-Hilbert analysis of the orthogonal polynomial.
\begin{defn}\label{def:level}
The {\bf level line(s) from a point}, say $p$, is defined to be the {\em connected} component of the set $\{z:\Re[\int_{p}^{z}y(w)\d w]=0\}$ which contains $p$, and, since the integral defines a continuous function up to sign,  the level line is given independently of the branch cut.  
\end{defn}
We first prove that the level line from $\beta$ does not intersect the level line from $b$.  

\begin{lemma}\label{lem-UOP2d}
We have
\begin{equation}
2{\cal U}_\text{OP}(z)={\cal U}(z)+{\cal U}(S_\text{back}(z))-|S_\text{back}(z)-\overline z|^2.
\end{equation}
\end{lemma}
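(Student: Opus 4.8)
The plan is to derive the stated identity by expressing everything in terms of the Schwarz function $S$ and its analytic continuation $S_\text{back}=S+y$, and then tracking the contributions via the formula \eqref{PhifromS} for ${\cal U}$ and the definition \eqref{Uop} of ${\cal U}_\text{OP}$. The starting point is that $2S(z) = a + \tfrac{c}{z-a} + \tfrac{c+t}{z} - y(z)$ from Lemma \ref{lem-sy}, so that $S_\text{back}(z) = S(z)+y(z) = a + \tfrac{c}{z-a} + \tfrac{c+t}{z} - S(z)$; equivalently $S(z) + S_\text{back}(z)$ equals the ``elementary'' part $a + \tfrac{c}{z-a} + \tfrac{c+t}{z}$, which will be convenient because it is exactly $\partial_z$ of something explicit and $z$-independent of the cut. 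The key observation is that $y = S_\text{back} - S$, hence $\int_\beta^z y\,\d w = \int_\beta^z (S_\text{back}-S)\,\d w$, and I want to recognize each of these integrals as the kind appearing in \eqref{calU}.

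**First I would** fix a basepoint $z_0 \in \partial K$ (so $F(z_0)\in\partial\D$) lying on the same component of $K^c$ as the point at which we evaluate, and use \eqref{PhifromS}/\eqref{calU}: ${\cal U}(z) = |z|^2 - |z_0|^2 - 2\Re\int_{z_0}^z S(w)\,\d w$. For the term ${\cal U}(S_\text{back}(z))$, note that $S_\text{back}(z) = \overline z$ precisely on $\partial K$ (since there $S_\text{back}$ agrees with $S$ from the other sheet and $S=\bar z$ on the boundary — this needs the relation $v_\beta$ fixed by deck / the identity $S_\text{back}(z)=f(1/\deck(F(z)))$ from \eqref{Sback}), and more generally $S_\text{back}$ maps a neighborhood of $\partial K$ to a neighborhood of $\partial K$. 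The cleanest route is: evaluate ${\cal U}$ at the point $w=S_\text{back}(z)$ using the same formula, ${\cal U}(S_\text{back}(z)) = |S_\text{back}(z)|^2 - |z_0|^2 - 2\Re\int_{z_0}^{S_\text{back}(z)} S(w)\,\d w$; then change variables in that last integral back to the $z$-plane, i.e. write $w = S_\text{back}(\zeta)$ so $\d w = S_\text{back}'(\zeta)\,\d\zeta$, and use the Schwarz-reflection-type identity $S(S_\text{back}(\zeta)) \cdot S_\text{back}'(\zeta) = \overline{\zeta}\cdot(\text{something})$ — more precisely, on $\partial K$ one has $S(S_\text{back}(\zeta)) = \overline{S_\text{back}(\zeta)}$... this is the step that requires care. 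The honest way is to use that $S_\text{back}$ is an involution-like conjugate of $S$ induced by the deck transformation, so $S \circ S_\text{back} = \overline{(\cdot)}$ does NOT hold pointwise off $\partial K$; instead I should use $\overline{S_\text{back}(z)} = S(\bar z$-reflection$)$ only formally and rather argue via derivatives and a single boundary normalization.

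**The cleaner approach** I would actually carry out: differentiate both sides of the claimed identity in $z$ (using $\partial$ and $\bar\partial$) and check they agree, then fix the single additive real constant by evaluating on $\partial K$ where ${\cal U}={\cal U}_\text{OP}=0$ and $S_\text{back}(z)=\bar z$. On $\partial K$ the right side is $0 + 0 - |\bar z - \bar z|^2 = 0$, matching the left, so the constant is fixed. For the derivative check: $\partial_z\big(2{\cal U}_\text{OP}(z)\big) = y(z) = S_\text{back}(z) - S(z)$ (taking $\partial$ of \eqref{Uop} and using that ${\cal U}_\text{OP} = \Re\Phi$ with $\Phi$ holomorphic, $\Phi' = y$, so $\partial_z{\cal U}_\text{OP} = \tfrac12 y$). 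On the right side: $\partial_z {\cal U}(z) = \bar z - S(z)$ by \eqref{PhiS}; $\partial_z {\cal U}(S_\text{back}(z)) = \big(\overline{S_\text{back}(z)} - S(S_\text{back}(z))\big)\cdot S_\text{back}'(z)$ by the chain rule plus \eqref{PhiS} (valid since $S_\text{back}(z)\in K^c$ near the boundary); and $\partial_z\big(-|S_\text{back}(z)-\bar z|^2\big) = -\big(S_\text{back}'(z)\big)\overline{(S_\text{back}(z)-\bar z)} - (S_\text{back}(z)-\bar z)\cdot 0$, wait — I must be careful: $|S_\text{back}(z)-\bar z|^2 = (S_\text{back}(z)-\bar z)\overline{(S_\text{back}(z)-\bar z)} = (S_\text{back}(z)-\bar z)(\overline{S_\text{back}(z)}-z)$, so $\partial_z$ of it is $S_\text{back}'(z)(\overline{S_\text{back}(z)}-z) + (S_\text{back}(z)-\bar z)(-1)$. **The main obstacle**, and the technical heart of the proof, is therefore the identity $S(S_\text{back}(z)) \cdot S_\text{back}'(z) = \text{(an explicit expression)}$ that makes the two $\partial_z$ computations cancel; I expect this follows from $f\big(1/\deck(v)\big) = f(1/v)$ composed appropriately, i.e. from $S_\text{back}\circ f = f\circ(1/\deck) = S\circ(\ldots)$, unwound using $\deck\circ\deck = \mathrm{id}$ and $f\circ\deck = f$. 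Concretely, setting $v = F(z)$, $S_\text{back}(z)=f(1/\deck(v))$, and I must show $S$ evaluated there equals $f(1/v) = S(z)$ composed with... — the crux is recognizing $S_\text{back}$ as the analytic continuation that swaps the two preimage-sheets, so $S\circ S_\text{back} = S$ in the sense of sheets is false but $S_\text{back}\circ S_\text{back}$ relates back to $S$; I would verify this rational-function identity directly from \eqref{fv}, \eqref{deck}, \eqref{Fz} since everything is explicit, and that bookkeeping — while routine — is exactly where sign and branch errors lurk. Once that identity is in hand, the $\partial_z$ sides match termwise, an identical $\bar\partial_z$ computation (or taking complex conjugates, using that all three quantities are real-valued) closes the argument, and the boundary normalization pins down the constant.
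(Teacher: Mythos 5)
Your plan — match $\partial_z$ (and $\bar\partial_z$) on both sides, then pin down the single additive constant by a boundary evaluation — is a genuinely different route from what the paper does, which is a direct change of variables in the uniformizing coordinate $v=F(z)$ (exploiting $f\circ{\rm deck}=f$, the substitution $u\mapsto 1/u$, and an integration by parts) so that the three terms of the identity drop out all at once. In principle the differential approach is workable, but as written it has two real gaps, the second of which is fatal.

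First, the ``crux'' algebraic identity you need but do not establish is $S\bigl(S_\text{back}(z)\bigr)=z$: with $\partial \mathcal U(w)=\bar w - S(w)$, this is precisely what makes the chain-rule term $\bigl(\overline{S_\text{back}(z)}-S(S_\text{back}(z))\bigr)S_\text{back}'(z)$ cancel against $\partial_z\bigl(-|S_\text{back}(z)-\bar z|^2\bigr)$ and leave $y(z)=S_\text{back}(z)-S(z)$ on both sides. In the uniformization it reads $f\bigl(1/(1/{\rm deck}(v))\bigr)=f({\rm deck}(v))=f(v)=z$, which is correct only when $1/{\rm deck}(F(z))$ actually lies on the sheet $\Sigma$ that $F$ inverts into; one would have to verify this on the domain where the lemma is applied, since $S\circ S_\text{back}$ picks up different branches elsewhere. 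You flag this as ``routine bookkeeping'' where ``sign and branch errors lurk,'' but it is the entire content of the derivative check and cannot be waved away.

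Second, and more seriously, the normalization step is wrong. On $\partial K$ one has $S(z)=\bar z$, but \emph{not} $S_\text{back}(z)=\bar z$: by \eqref{Sback}, $S_\text{back}(z)=S(z)+y(z)=\bar z + y(z)$, and $y$ does not vanish on $\partial K$ (its only zeros are $b,\beta,\bar\beta$, all off $\partial K$). Equivalently, ${\rm deck}$ does not preserve the unit circle, so $1/{\rm deck}(v)\notin\partial\D$ for $|v|=1$. Consequently $\mathcal U_\text{OP}$ is \emph{not} zero on $\partial K$ — only $\mathcal U$ is — and the constant cannot be fixed there. A correct normalization point would be $z=b$ (where $S_\text{back}(b)=S(b)=\bar b=b$ and $y(b)=0$, so the right-hand side reduces to $2\mathcal U(b)$), but establishing $\Re\phi(b)=\mathcal U(b)$ independently is exactly the anti-holomorphic-involution computation that the paper's proof of this lemma (and of Lemma~\ref{lem-steepest}) is built from, so one ends up reproducing that argument anyway. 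As it stands, your proof does not close.
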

The proof uses the anti-holomorphic involution structure, and can be found in Appendix \ref{app-lem-UOP2d}.
We obtain the following corollary that tells that the level lines from $\beta$ and $b$ do not intersect.
\begin{cor}\label{lem-betab}
We have
$\Re[\int_\beta^b y(z)\,\d z]={\cal U}_\text{2D}(b)>0$. The contour of integration is in $\C\setminus{\cal B}$.
\end{cor}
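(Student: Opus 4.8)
The plan is to combine Lemma \ref{lem-UOP2d} with the already–established facts about the critical points of ${\cal U}_\text{2D}$. By Definition \ref{def-phi0} and \eqref{Uop}, the quantity $\Re[\int_\beta^b y(z)\,\d z]$ is precisely ${\cal U}_\text{OP}(b)$, so I will evaluate ${\cal U}_\text{OP}$ at the special point $z=b$. Applying Lemma \ref{lem-UOP2d} at $z=b$ gives
\begin{equation*}
2{\cal U}_\text{OP}(b)={\cal U}(b)+{\cal U}(S_\text{back}(b))-|S_\text{back}(b)-\overline b|^2.
\end{equation*}
The point of choosing $z=b$ is that $b$ is the unique fixed point of the deck transformation on the relevant sheet (Lemma \ref{lem-double}): since $v_\beta$ is fixed by $\deck$ and $b=f(v_\beta)$ in the sense that $F(b)$ is the fixed point, the two analytic continuations of the Schwarz function agree there, i.e. $S_\text{back}(b)=S(b)+y(b)=S(b)$ because $y(b)=0$ by \eqref{y} (the factor $(z-b)$ vanishes). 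Hence $S_\text{back}(b)=S(b)$, and moreover, since $b$ is real and lies in $K^c$ on the real axis, $b$ is a critical point of ${\cal U}_\text{2D}$ (this is exactly the content of the end of the proof of Lemma \ref{lem-double}, where one shows ${\cal U}_\text{2D}:\R\to\R$ has a local minimum in $(a,\infty)$ at which $S(z)=\overline z$). Therefore $S_\text{back}(b)=S(b)=\overline b$, the last term $|S_\text{back}(b)-\overline b|^2$ vanishes, and ${\cal U}(S_\text{back}(b))={\cal U}(\overline b)={\cal U}(b)$ since $b$ is real.

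Putting these together, the identity collapses to $2{\cal U}_\text{OP}(b)=2{\cal U}(b)$, that is, ${\cal U}_\text{OP}(b)={\cal U}(b)={\cal U}_\text{2D}(b)$, which is the asserted equality $\Re[\int_\beta^b y(z)\,\d z]={\cal U}_\text{2D}(b)$. It remains to check strict positivity, ${\cal U}_\text{2D}(b)>0$. This follows from Proposition \ref{thm-pre}: condition {\bf (\!B\!)} of \eqref{equil} states ${\cal U}_\text{2D}>0$ on $K^c$, and $b\in(a,\infty)\subset K^c$ since $K\cap\R$ lies strictly to the left of $a$ as noted in the proof of Lemma \ref{lem-double}. (One must make sure this corollary is not being used circularly inside the proof of Proposition \ref{thm-pre} itself; if so, one instead argues directly that $b$ is a strict local minimum of ${\cal U}_\text{2D}$ along $\R$ with ${\cal U}_\text{2D}\to+\infty$ at $a^+$ and at $+\infty$, and that ${\cal U}_\text{2D}(b)$ cannot be $\le 0$ since ${\cal U}_\text{2D}=0$ exactly on $\partial K$ by {\bf (\!A\!)} and $b\notin\partial K$, while the smooth extension ${\cal U}$ near $\partial K$ is nonnegative by \eqref{Uboundary}.)

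The main obstacle, and the only genuinely non-routine point, is verifying carefully that $b$ really is a critical point of ${\cal U}_\text{2D}$ in $K^c$, i.e. that $S(b)=\overline b$ — equivalently that the real local minimum of ${\cal U}_\text{2D}|_\R$ on $(a,\infty)$ occurs exactly at the image of the $\deck$-fixed point and not elsewhere. This is precisely what Lemma \ref{lem-double} already provides (there is at most one such point, and it is $b$), so the argument reduces to correctly invoking that lemma together with the vanishing $y(b)=0$; the rest is bookkeeping with the involution identity of Lemma \ref{lem-UOP2d}. A secondary care point is ensuring that the integration contour from $\beta$ to $b$ can indeed be taken in $\C\setminus{\cal B}$ so that ${\cal U}_\text{OP}(b)$ is well defined with the claimed value, but this is immediate since ${\cal B}$ only meets $\R$ on the negative axis while the natural contour from $\beta$ to $b>a>0$ stays in the right half-plane away from ${\cal B}$.
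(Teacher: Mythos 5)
Your proof is correct and follows essentially the same route as the paper: apply Lemma \ref{lem-UOP2d} at $z=b$, note $y(b)=0$ so $S_\text{back}(b)=S(b)$, use $S(b)=\overline b=b$ from Lemma \ref{lem-double} to make the last term vanish and the middle term equal ${\cal U}(b)={\cal U}_\text{2D}(b)$ (since $b\notin K$), and invoke Proposition \ref{thm-pre} for strict positivity; the circularity worry you flag does not arise, because Proposition \ref{thm-pre} is established via Lemmas \ref{lem-S2} and \ref{lem-double} before this corollary is stated. One small inaccuracy in your narrative: $f(v_\beta)=\beta$, not $b$, so $b$ is not the image of the deck-fixed point $v_\beta$; rather, $F(b)$ is the unique interior fixed point of the anti-holomorphic involution ${\rm deck}(v)=1/\overline v$, which is the actual content of Lemma \ref{lem-double} — but the facts you genuinely use ($y(b)=0$ from the factor $(z-b)$ in \eqref{y}, and $S(b)=\overline b$) are correct, so the argument is unaffected.
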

\begin{proof}  By Lemma \ref{lem-double}, we immediately obtain the first equality.  The last inequality is from Proposition \ref{thm-pre}.  Note that, by Lemma \ref{lem-double}, $b$ is not in $K$ so ${\cal U}(b)={\cal U}_\text{2D}(b)$.  
\end{proof}

\begin{figure}[htbp]
\begin{center}
\includegraphics[width=0.9\textwidth]{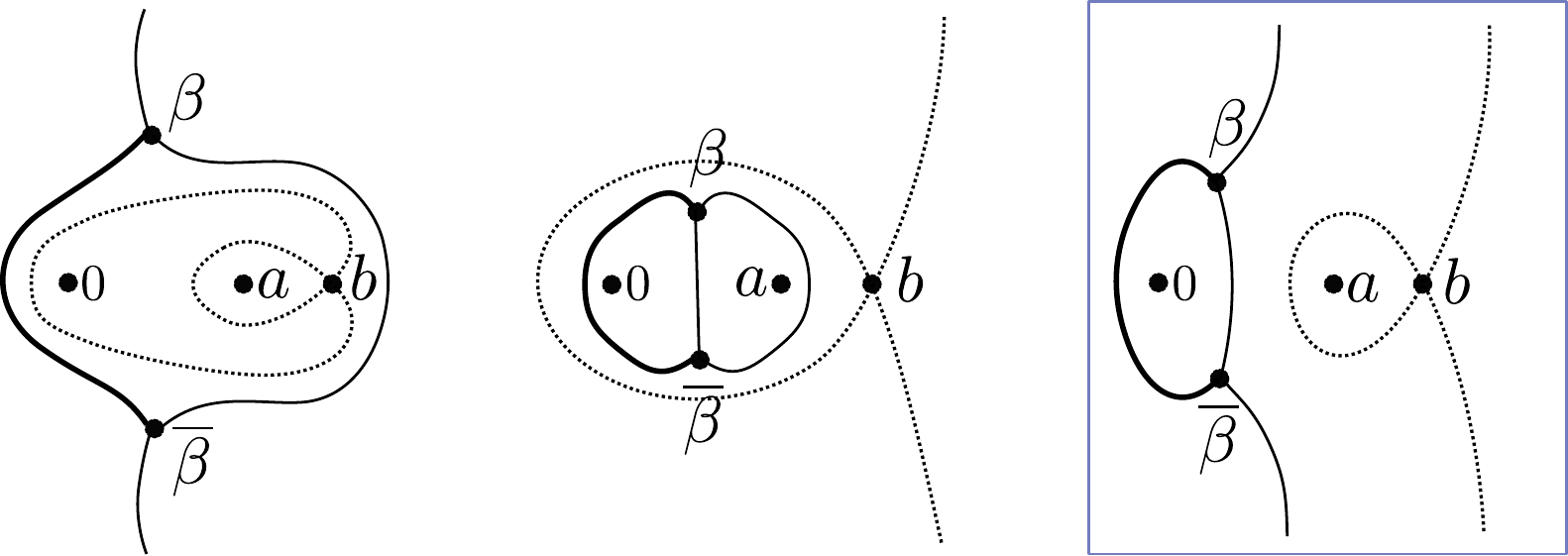}
\caption{Possible topological configurations of level lines from $\beta$ and $b$. The last (boxed) configuration satisfies the inequality in Lemma \ref{lem-betab}.  The thick lines are the (potential) branch cuts.}\label{fig_three} 
\end{center}
\end{figure}

In Figure \ref{fig_three} we show all the three possibilities that the level lines from $\beta$ and $b$ can place themselves in topologically distinguished ways.  We explain this below.  

The rules for the level lines from $b$ are as follows.
\begin{equation}\label{ruleb}
\begin{cases}
\text{The lines are symmetric with respect to the real axis;}
\\\text{A closed loop of level line must include pole(s) at $0$ and/or $a$;} 
\\\text{Four level lines emanate from $b$;}
\\\text{Out of four, at most two level lines can go to infinity;}
\end{cases}\qquad\qquad\qquad
\end{equation}
Following these rules, we first draw all the possible level lines from $b$ (the dotted lines).  
If none of the level from $b$ escapes to $\infty$ then the level lines from $b$ follows the dotted lines in the left figure;  If one pair of the level lines escape to $\infty$ then there are two possibilities, the middle and the right figures.

With the dotted lines being drawn, the level lines from $\beta$ are now uniquely determined.  The rules for them are exactly same except the last two items must change from ``four" to ``three".  

\begin{figure}[ht]
\begin{center}
\includegraphics[width=0.35\textwidth]{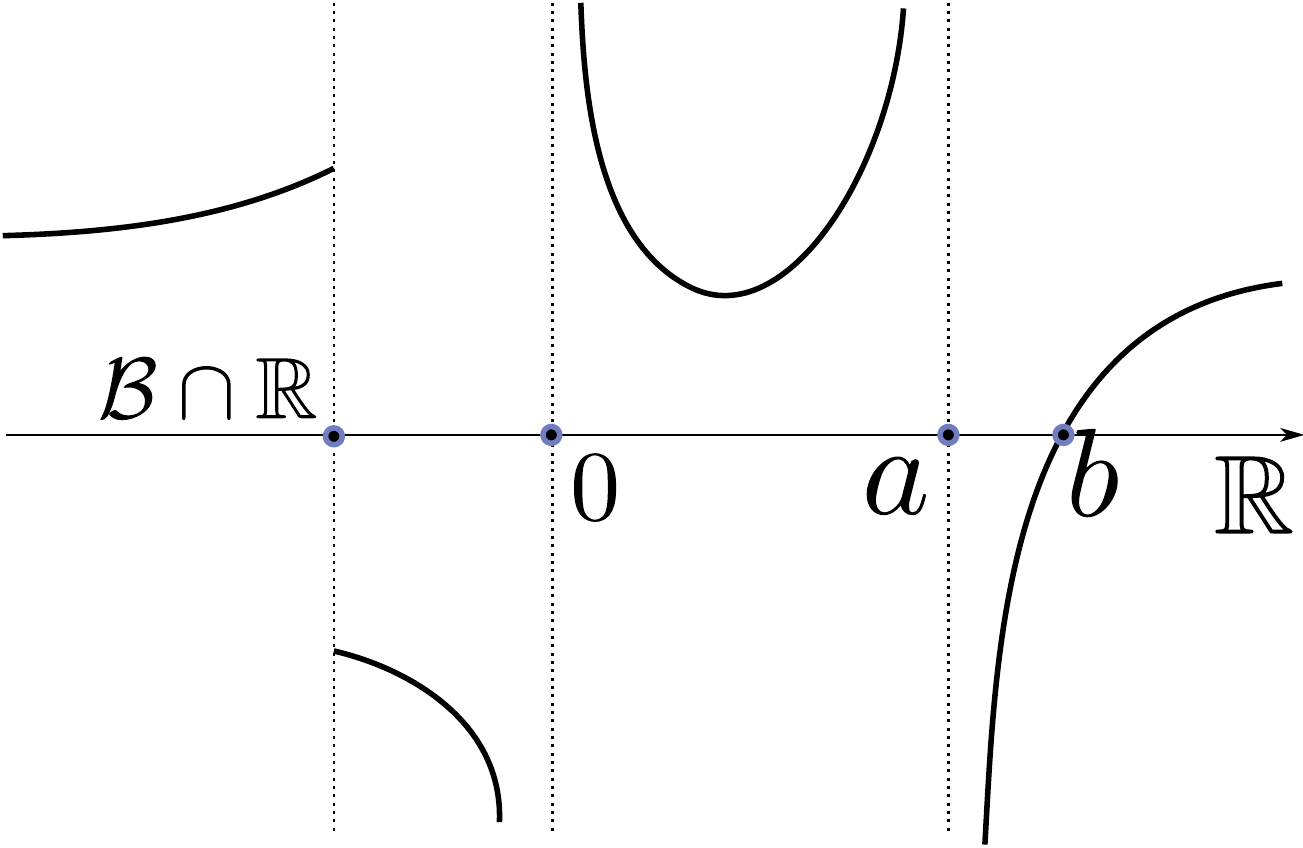}
\caption{$y(x)$ on $\R$: schematic view.}\label{fig-yR} 
\end{center}
\end{figure}
Out of the three possibilities, we prove that only the last one satisfies the inequality in Lemma \ref{lem-betab}.  Note that $y(+\infty)=a>0$.  Therefore $y(x)>0$ if $x>b$ and $y(x)<0$ if $a<x<b$, because there is no branch cut intersecting on $[0,+\infty)$. (See Figure \ref{fig-yR} for the behavior of $y(x)$ on $\R$.) Therefore, when we restrict $x\in(a,+\infty)$, $\int_b^x y(s)\,\d s$ reaches the minimum at $b$.  This means that, for the left and the middle cases in Figure \ref{fig_three}, the dotted lines are ``lower" than the regular lines and $\Re[\int_\beta^b y(z)\,\d z]<0$, which contradicts Lemma \ref{lem-betab}.  Therefore we have proven the following lemma.

\begin{lemma}\label{lem:calB} For all $0 < t < t_{c}$, the level lines from $b$ as well as those from $\beta$ and $\overline{\beta}$ are as shown in the last (boxed) configuration of Figure \ref{fig_three}. More precisely, the level lines from $b$ consist of three analytic arcs, one which encircles the point $a$ and two which diverge to $\infty$.  Two components of the level lines from $\beta$ are analytic arcs which terminates in $\overline{\beta}$, and together form a closed curve encircling $0$. 
\end{lemma}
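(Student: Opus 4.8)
\textbf{Proof proposal for Lemma \ref{lem:calB}.}

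The plan is to combine the topological classification of level lines of the quadratic differential $y(z)^2\,\d z^2$ (Figure \ref{fig_three}) with the two quantitative facts already established: the sign pattern of $y$ on the real axis (Figure \ref{fig-yR}) and the strict inequality $\Re\bigl[\int_\beta^b y(s)\,\d s\bigr]={\cal U}_{\rm 2D}(b)>0$ from Corollary \ref{lem-betab}. First I would set up the enumeration precisely: the zeros of the quadratic differential in $\C$ are $\beta,\c\beta$ (simple) and $b$ (double), and the poles are at $0,a$ (double) and $\infty$ (a pole of order four, since $y\sim a$ at infinity forces $y^2\,\d z^2\sim a^2\,\d z^2$, a double pole at $\infty$ of the differential — more precisely the trajectory structure near $\infty$ is that of $\d z^2<0$, so there are two horizontal directions into $\infty$). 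From a simple zero emanate three critical trajectories, from a double zero four; from each double pole the trajectory structure is that of $\d z^2/z^2<0$, i.e. circles around the pole, so any trajectory that loops must enclose $0$ and/or $a$ (recurrence is excluded because the differential is rational, hence the level lines of $\Re\int y$ are the horizontal trajectories and there are no recurrent ones). I would record these as the bullet rules \eqref{ruleb} and note that ``at most two level lines to $\infty$'' reflects that $\infty$ is a second-order pole of the differential with a two-prong structure. The real axis is itself invariant under the anti-holomorphic involution $z\mapsto\c z$ (since $\beta,\c\beta$ are conjugate, $b$ real, $0,a$ real), so the whole level-line picture is symmetric about $\R$.

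Next I would carry out the topological case analysis for the level lines from $b$. Because of the $\R$-symmetry and the four-prong structure, the four arcs from $b$ split into two conjugate pairs; each pair either both diverges to $\infty$ or both forms (together with a piece of another arc, or by itself) a bounded loop that must enclose $0$ and/or $a$. Since four arcs cannot all diverge (only two level lines can reach $\infty$), the possibilities are exactly: (i) no arc reaches $\infty$ — then all four form loops around $\{0,a\}$; (ii) one conjugate pair reaches $\infty$ and the remaining two form a loop; and within (ii) the loop either surrounds both $0$ and $a$ or just $a$ (it cannot surround just $0$ by the sign analysis below, and in fact Figure \ref{fig_three} records the two surviving sub-cases). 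Once the $b$-picture is fixed, the $\beta$-picture (three arcs from $\beta$, three from $\c\beta$, conjugate to each other) is forced: its arcs cannot cross the $b$-arcs (two distinct horizontal trajectories of the same quadratic differential are disjoint or identical), so the $\beta$-arcs live in the complementary regions, and counting prongs plus the ``no recurrence / loops enclose poles'' rules pins them down uniquely in each of the three configurations of Figure \ref{fig_three}.

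The decisive step — and the main obstacle — is to rule out configurations (i) and the first sub-case of (ii) using Corollary \ref{lem-betab}. Here is the mechanism I would make rigorous. On $(a,+\infty)$ we have $y(x)>0$ for $x>b$ and $y(x)<0$ for $a<x<b$ (from $y(+\infty)=a>0$, the real pole structure at $a$, and the absence of branch cut on $[0,\infty)$), so along the real axis $x\mapsto\int_b^x y$ has a strict minimum at $x=b$; in particular $\Re\int_\beta^b y$ can be compared with the value of $\Re\int y$ along a contour that, in cases (i) and the bad sub-case of (ii), is ``pushed below'' the relevant real segment by the geometry of the dotted level lines. Concretely: in those configurations the level line from $b$ on the real side separates $\beta$ from the region where $\Re\int_b^{\,\cdot}\,y>0$, forcing $\Re\int_\beta^b y<0$, contradicting ${\cal U}_{\rm 2D}(b)>0$. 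Making this ``pushed below'' argument precise is where care is needed: I would argue it by choosing the integration contour from $\beta$ to $b$ to follow the level line from $\beta$ (on which $\Re\int_\beta^{\,\cdot}\,y=0$) as far as possible and then a short transversal segment to $b$, and checking the sign of $\Re\int y$ across that last segment using the local trajectory structure at $b$ together with the monotonicity on $\R$; in the left and middle pictures this sign comes out negative, in the boxed picture positive. Having excluded the first two configurations, only the boxed one remains, which is exactly the assertion of the lemma: three arcs from $b$ (one loop around $a$, two diverging to $\infty$), and two arcs from $\beta$ ending at $\c\beta$ forming a closed curve around $0$. I would close by noting analyticity of the arcs is automatic since they are trajectories of a rational quadratic differential away from its zeros and poles, and $\beta,\c\beta,b$ are the only zeros encountered.
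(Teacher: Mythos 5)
Your overall strategy is the same as the paper's: enumerate the topologically admissible configurations of level lines from $b$ (the rules at \eqref{ruleb}), note from the sign of $y$ on $(a,\infty)$ that $x\mapsto\Re\int_b^x y$ is positive on $(a,b)\cup(b,\infty)$ and vanishes at $b$, and then use Corollary \ref{lem-betab} to single out the boxed configuration. The preliminary discussion of the zeros, poles and prong structure of the quadratic differential, the remark on non-recurrence, and the observation that analyticity of the arcs is automatic are correct elaborations of what the paper leaves implicit.

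In the decisive ruling-out step, however, there is a sign/orientation inconsistency as written. You claim that in the rejected configurations the level line from $b$ ``separates $\beta$ from the region where $\Re\int_b^\cdot y>0$'', which would place $\beta$ where $\Re\int_b^\beta y\le 0$, and you then conclude $\Re\int_\beta^b y<0$; but $\Re\int_\beta^b y=-\Re\int_b^\beta y$, so that premise gives $\Re\int_\beta^b y\ge 0$, the opposite of what you need. Corollary \ref{lem-betab} demands $\Re\int_b^\beta y<0$, so the admissible configuration is precisely the one in which the $b$-level lines \emph{do} separate $\beta$ (and $\overline\beta$) from the component of $\{\Re\int_b^\cdot y>0\}$ containing $(a,b)\cup(b,\infty)$; in the rejected left and middle pictures of Figure \ref{fig_three}, $\beta$ ends up on the same side as the real ray, giving $\Re\int_b^\beta y>0$ and hence $\Re\int_\beta^b y<0$, which is the desired contradiction. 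With the direction of the separation claim reversed, your mechanism coincides with the paper's terse statement that ``the dotted lines are `lower' than the regular lines.''
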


We note that this proves the unique existence of ${\cal B}$ as we mention near \eqref{eq:qd}.  Note that the the equation \eqref{eq:qd} means that ${\cal B}$ is a {\em level line} according to Definition \ref{def:level}.

We want to choose the branch cut of $y(z)$ from one of the level lines from $\beta$.  
For this to be consistent with the previous definition of ${\cal B}$ (that appeared in Lemma \ref{lem-F}), we only need that the chosen level line be inside $K$.    Below we prove exactly this.    

The logic requires to use lots of hats ($~\widehat{}~$) temporarily.
We define $\widehat{\cal B}$ by the {\em unique} level line from $\beta$ to $\c\beta$ that intersects the negative real axis.  
Then we define the hatted version of the functions: $\widehat S(z)$, $\widehat y(z)$, $\widehat F(z)$, by the formula \eqref{Fz} and \eqref{y} {\em with the branch cut at $\widehat{\cal B}$}.   
After showing that $\widehat{\cal B}$ is inside $K$, we can replace all the hatted objects by the un-hatted ones.

\begin{defn}\label{def-B}
We define $\widehat{\cal B}$ by the {\em unique} level line from $\beta$ to $\c\beta$ that intersects the negative real axis.  We place the branch cut of $\widehat S(z)$ and $\widehat y(z)$ exactly at $\widehat{\cal B}$ using the formula \eqref{y}.  We also use \eqref{Uop} and \eqref{calU} accordingly to define $\widehat{\cal U}_\text{OP}$ and $\widehat{\cal U}$.\footnote{The existence of a point $z_0$ follows from $\{\beta,\c\beta\}\subset K$: starting from ${\cal B}$ that is completely inside $K$ (so that all $z\in\partial K$ satisfies $S(z)=\c z$) one cannot make $\widehat{\cal B}$ cross the whole of $\partial K$ by any continuous deformation ${\cal B}\to\widehat{\cal B}$.}
\end{defn}
\begin{lemma}\label{+measure} For any $p\in\widehat{\cal B}\setminus\{\beta,\c\beta\}$ let us define the normal derivative $\partial_{\bf n}:=n_x\partial_x+n_y \partial_y$ at $p$ along either of the two unit vectors ${\bf n}:=(n_x,n_y)$ that is perpendicular to $\widehat{\cal B}$ at $p$.  Choose $+$ and $-$ sides of the cut such that ${\bf n}$ points toward the $+$ side.  Then we have $\partial_{\bf n}\widehat{\cal U}_\text{OP}(p)\big|_+=-\partial_{\bf n}\widehat{\cal U}_\text{OP}(p)\big|_-=-|\widehat y(p)|$.
\end{lemma}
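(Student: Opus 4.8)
The plan is to compute the normal derivative of $\widehat{\cal U}_\text{OP}$ directly from its definition $\widehat{\cal U}_\text{OP}(z)=\Re[\int_\beta^z \widehat y(w)\,\d w]$, exploiting the fact that $\widehat{\cal B}$ is by construction a level line, i.e. $\widehat{\cal U}_\text{OP}\equiv 0$ on $\widehat{\cal B}$. First I would fix $p\in\widehat{\cal B}\setminus\{\beta,\c\beta\}$ and work in a small neighborhood where $\widehat{\cal B}$ is a smooth analytic arc (this is guaranteed since $\widehat y$ is analytic and nonzero away from $\{\beta,\c\beta,a,0,b\}$, and $p$ avoids all of these — $a,0$ are poles not on the level line, and one should note $b\notin\widehat{\cal B}$ by Corollary \ref{lem-betab}). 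On each side $\Omega_\pm$ of the cut, $\widehat y$ has an analytic continuation, call them $\widehat y_\pm$, and by \eqref{Sback}-type reasoning the two boundary values differ by a sign: $\widehat y_+(p)=-\widehat y_-(p)$ on $\widehat{\cal B}$, since crossing the branch cut of $\sqrt{(z-\beta)(z-\c\beta)}$ flips the sign of the radical while the rational prefactor $a(z-b)/((z-a)z)$ is single-valued. Hence it suffices to compute $\partial_{\bf n}\widehat{\cal U}_\text{OP}(p)|_+$ and observe the $-$ side is its negative.

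Next, using the Cauchy–Riemann equations: since $\int_\beta^z \widehat y(w)\,\d w$ is holomorphic on $\overline{\Omega_+}$ near $p$ with derivative $\widehat y(z)$, its real part $\widehat{\cal U}_\text{OP}$ has gradient $\nabla \widehat{\cal U}_\text{OP}=(\Re \widehat y,-\Im \widehat y)$ at $p$ (reading off $\partial_x\widehat{\cal U}_\text{OP}=\Re\widehat y$, $\partial_y\widehat{\cal U}_\text{OP}=-\Im\widehat y$). Because $\widehat{\cal B}$ is a level curve of $\widehat{\cal U}_\text{OP}$, the gradient is normal to $\widehat{\cal B}$ at $p$, so it is parallel to $\pm{\bf n}$, and its magnitude is $|\nabla\widehat{\cal U}_\text{OP}|=|\widehat y(p)|$. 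Therefore $\partial_{\bf n}\widehat{\cal U}_\text{OP}(p)|_+ = \langle \nabla\widehat{\cal U}_\text{OP}, {\bf n}\rangle = \pm|\widehat y(p)|$. To pin the sign to $-$, I would track orientations carefully: the tangent direction to $\widehat{\cal B}$ at $p$ is the direction along which $\int\widehat y\,\d w$ moves vertically (purely imaginary increment), i.e. the direction $\tau$ with $\widehat y(p)\,\tau\in i\R$; then ${\bf n}$ is obtained by rotating $\tau$ by $\pm\pi/2$, and the convention ``${\bf n}$ points toward the $+$ side'' together with the chosen branch of the square root fixes which rotation. A clean way to see the sign is to note that $\widehat y_+$ and $\widehat y_-$ being opposite forces $\partial_{\bf n}\widehat{\cal U}_\text{OP}|_+ = -\partial_{\bf n}\widehat{\cal U}_\text{OP}|_-$ automatically, and then a single continuity/consistency check (e.g. comparing with the known sign of $y$ on the real axis from Figure \ref{fig-yR}, or with the behavior established in Corollary \ref{lem-betab}) determines the overall sign as $-|\widehat y(p)|$ on the $+$ side.

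The main obstacle I anticipate is the bookkeeping of signs and orientations: precisely matching ``the $+$ side of the cut,'' ``the outward direction of ${\bf n}$,'' and ``the branch of $\sqrt{(z-\beta)(z-\c\beta)}$ normalized at $\infty$'' so that the stated identity comes out with the correct sign rather than its negative. Everything else is a routine Cauchy–Riemann computation. A secondary minor point is justifying that $\widehat y(p)\neq 0$ and that $\widehat{\cal B}$ is genuinely smooth (not merely continuous) at $p$, so that the normal direction is well-defined; this follows because the only zero of $\widehat y$ is at $z=b$, which does not lie on $\widehat{\cal B}$ by Corollary \ref{lem-betab}, and analyticity of $\widehat y$ gives analyticity of the level line away from its zeros and poles. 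I would therefore structure the proof as: (1) local analyticity and smoothness of $\widehat{\cal B}$ at $p$; (2) the jump relation $\widehat y_+=-\widehat y_-$ across $\widehat{\cal B}$; (3) the Cauchy–Riemann computation of $\nabla\widehat{\cal U}_\text{OP}$ and the observation that it is normal to the level line with magnitude $|\widehat y(p)|$; (4) the orientation check fixing the sign.
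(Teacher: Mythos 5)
Your proposal is correct and takes essentially the same route as the paper's proof: you compute $\partial_{\bf n}\widehat{\cal U}_\text{OP}|_+=\Re\big(n\,\widehat y(p)\big|_+\big)$ (your Cauchy--Riemann formula $\nabla\widehat{\cal U}_\text{OP}=(\Re\widehat y,-\Im\widehat y)$ dotted with ${\bf n}$ is exactly the Wirtinger-derivative identity the paper uses), note this is $\pm|\widehat y(p)|$ because $\widehat{\cal B}$ is a level line, and fix the sign at $\widehat{\cal B}\cap\R$ via Figure~\ref{fig-yR}. The paper leaves the jump relation $\widehat y_+=-\widehat y_-$ and the smoothness of $\widehat{\cal B}$ implicit where you spell them out, but there is no substantive difference in the argument.
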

\begin{proof} The absolute value of $\partial_{\bf n}\widehat{\cal U}_\text{OP}(p)\big|_+$ is $|\widehat y(p)|$ because $\partial_{\bf n}\widehat{\cal U}_\text{OP}(p)\big|_+$ is purely real.  Defining $n=n_x+i n_y$ we have (the other $-$ side works similarly) 
\begin{equation}
\partial_{\bf n}\widehat{\cal U}_\text{OP}(p)\big|_+=(n\partial+\c n \c\partial)\widehat{\cal U}_\text{OP}(p)\big|_+=\Re\left( n\cdot \widehat y(p)\big|_+\right).
\end{equation}
By the definition of $\widehat{\cal B}$, $n\cdot \widehat y(p)$ is purely real and either of $\pm|\widehat y(p)|$.  The correct ($-$) sign can be determined at the point $\widehat{\cal B}\cap\R$ from Figure \ref{fig-yR}.
\end{proof}
\begin{lemma}\label{lem-nolocal}
There is no local minima of $\widehat{\cal U}(z)$ on $\widehat{\cal B}$.
\end{lemma}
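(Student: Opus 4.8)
The plan is to argue by contradiction using the relation established in Lemma \ref{lem-UOP2d}, together with the normal-derivative computation of Lemma \ref{+measure}. Suppose $p_0\in\widehat{\cal B}\setminus\{\beta,\c\beta\}$ is a local minimum of $\widehat{\cal U}$ restricted to $\widehat{\cal B}$. First I would recall that $\widehat{\cal B}$ is a level line of $\widehat{\cal U}_\text{OP}$ (i.e. $\Re\int_\beta^z\widehat y\,\d w\equiv 0$ on it by construction), so $\widehat{\cal U}_\text{OP}$ vanishes identically along $\widehat{\cal B}$; hence its tangential derivative on $\widehat{\cal B}$ is zero, and by Lemma \ref{+measure} its two normal derivatives at $p_0$ are $\mp|\widehat y(p_0)|\ne 0$ (the branch point is excluded so $\widehat y(p_0)\ne 0$). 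Thus $\widehat{\cal U}_\text{OP}$ strictly decreases as one crosses $\widehat{\cal B}$ from the $+$ side to the $-$ side through $p_0$.

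Next I would feed this into Lemma \ref{lem-UOP2d}, which reads $2\widehat{\cal U}_\text{OP}(z)=\widehat{\cal U}(z)+\widehat{\cal U}(S_\text{back}(z))-|S_\text{back}(z)-\overline z|^2$. On $\widehat{\cal B}$ itself the left side is $0$, and $S_\text{back}$ agrees with the boundary value of $\widehat S$ from the opposite side; one checks (using $\widehat S(z)=\overline z$ would hold only on $\partial K$, not here) that the key structural point is that $z\mapsto S_\text{back}(z)$ maps a neighborhood of $p_0$ in $\widehat{\cal B}$ to a neighborhood of the ``reflected'' point, and that crossing $\widehat{\cal B}$ interchanges the roles of $\widehat{\cal U}(z)$ and $\widehat{\cal U}(S_\text{back}(z))$. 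Concretely: moving off $\widehat{\cal B}$ to first order in the normal direction changes $2\widehat{\cal U}_\text{OP}$ by $\mp 2|\widehat y(p_0)|\,\epsilon$, while the quadratic term $|S_\text{back}(z)-\overline z|^2$ is $O(\epsilon^2)$; therefore to first order the increment of $\widehat{\cal U}(z)+\widehat{\cal U}(S_\text{back}(z))$ is $\mp 2|\widehat y(p_0)|\epsilon$, which is nonzero of a definite sign. If $p_0$ were a local min of $\widehat{\cal U}\!\restriction_{\widehat{\cal B}}$, then both $\widehat{\cal U}(z)$ and $\widehat{\cal U}(S_\text{back}(z))$ restricted along $\widehat{\cal B}$ have a (weak) local min at $p_0$ as well (since $S_\text{back}$ preserves $\widehat{\cal B}$ near $p_0$), but then combining the tangential near-stationarity with the nonzero first-order normal increment forces $\widehat{\cal U}$ to be strictly smaller on one side of $\widehat{\cal B}$ near $p_0$, contradicting that $p_0$ is a minimum when one also uses that $\widehat{\cal U}\ge\widehat{\cal U}(p_0)$ should hold on a full two-dimensional neighborhood relative to $\widehat{\cal B}$ — more precisely, I would show that the function $z\mapsto\widehat{\cal U}(z)$ cannot have a local minimum on the curve because its value strictly drops on the $-$ side.

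The cleanest formulation, which I expect to be the actual argument, is: along $\widehat{\cal B}$ near $p_0$ write $\widehat{\cal U}(z)=\widehat{\cal U}(p_0)+(\text{tangential quadratic})+o$; then $\widehat{\cal U}(S_\text{back}(z))$ has the same expansion since $S_\text{back}$ fixes $\widehat{\cal B}$ setwise and $\widehat{\cal U}$ is the smooth extension; subtracting gives $2\widehat{\cal U}_\text{OP}(z)=2\widehat{\cal U}(z)-2\widehat{\cal U}(p_0)+(\text{tangential quadratic})-|S_\text{back}(z)-\overline z|^2$ to the needed order — but $\widehat{\cal U}_\text{OP}\equiv 0$ on $\widehat{\cal B}$, so this pins down the tangential Hessian of $\widehat{\cal U}$ on $\widehat{\cal B}$ in terms of $|S_\text{back}(z)-\overline z|^2\ge 0$, forcing it to be $\le 0$; combined with the nondegeneracy that follows from $|\widehat y(p_0)|\ne 0$ one gets that the tangential second derivative is strictly negative unless $S_\text{back}(z)=\overline z$ identically near $p_0$, which by Lemma \ref{lem-double} (only $z=b\notin\widehat{\cal B}$) is impossible. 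Hence $p_0$ is a strict local \emph{max} along $\widehat{\cal B}$, not a min — contradiction.

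The main obstacle I anticipate is bookkeeping the branch choices and orientations so that the sign in ``$\widehat{\cal U}$ strictly drops on the $-$ side'' comes out consistently; in particular one must track which side of $\widehat{\cal B}$ is which under the $S_\text{back}$ reflection, and verify that $S_\text{back}$ really does preserve $\widehat{\cal B}$ near $p_0$ (this uses that $\widehat{\cal B}$ is a level line, not merely an arc). Once the sign is fixed at the one explicit crossing point $\widehat{\cal B}\cap\R$ via Figure \ref{fig-yR}, it propagates by continuity along $\widehat{\cal B}\setminus\{\beta,\c\beta\}$, which is connected, so no further case analysis is needed.
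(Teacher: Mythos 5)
Your core idea — that the jump of the normal derivative of $\widehat{\cal U}_\text{OP}$ across $\widehat{\cal B}$ is nonzero (Lemma \ref{+measure}) and that this must force $\widehat{\cal U}$ to drop in one of the two normal directions — is the right one, and is what the paper does. But your route from $\widehat{\cal U}_\text{OP}$ to $\widehat{\cal U}$ is unnecessarily indirect and has a gap: you invoke Lemma \ref{lem-UOP2d} and then assert that ``$S_\text{back}$ fixes $\widehat{\cal B}$ setwise near $p_0$'' and that ``crossing $\widehat{\cal B}$ interchanges $\widehat{\cal U}(z)$ and $\widehat{\cal U}(S_\text{back}(z))$.'' Neither claim is established, and neither is needed. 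The clean connection is much simpler: from \eqref{y} (or its hatted version), $\widehat S(z)=(\text{rational function})-\tfrac12\widehat y(z)$, so $\widehat{\cal U}-\widehat{\cal U}_\text{OP}$ is real-analytic across $\widehat{\cal B}$, hence
\begin{equation*}
\big[\partial_{\bf n}\widehat{\cal U}(p)\big]_+-\big[\partial_{\bf n}\widehat{\cal U}(p)\big]_-
=\big[\partial_{\bf n}\widehat{\cal U}_\text{OP}(p)\big]_+-\big[\partial_{\bf n}\widehat{\cal U}_\text{OP}(p)\big]_-
=-2|\widehat y(p)|<0,
\end{equation*}
which immediately forces at least one of $\big[\partial_{\bf n}\widehat{\cal U}\big]_+$ and $-\big[\partial_{\bf n}\widehat{\cal U}\big]_-$ to be negative, i.e.\ $\widehat{\cal U}$ decreases off $\widehat{\cal B}$ on one side. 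Your sentence ``$\widehat{\cal U}_\text{OP}$ strictly decreases as one crosses $\widehat{\cal B}$ from the $+$ side to the $-$ side'' also has the geometry wrong: by Lemma \ref{+measure} the one-sided normal derivatives are $-|\widehat y(p)|$ and $+|\widehat y(p)|$, so $\widehat{\cal U}_\text{OP}$ peaks at $0$ on $\widehat{\cal B}$ and decreases on \emph{both} sides — it is the kink in the derivative, not a monotone crossing, that drives the argument. Likewise the discussion of the tangential Hessian of $\widehat{\cal U}\!\restriction_{\widehat{\cal B}}$ is beside the point: the lemma concerns two-dimensional local minima, not minima along the curve.

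The more serious gap is that you exclude the endpoints from the start, ``$p_0\in\widehat{\cal B}\setminus\{\beta,\c\beta\}$,'' and never return to them. The lemma asserts there is no local minimum anywhere on $\widehat{\cal B}$, including at $\beta$ and $\c\beta$, and the normal-derivative argument degenerates there because $\widehat y(\beta)=0$. The paper handles this separately by expanding $\widehat{\cal U}$ near $\beta$ as a real-analytic part plus the term $\Re\big(c_0(z-\beta)^{3/2}\big)$ with $c_0\neq0$, which changes sign in every neighborhood of $\beta$ and therefore rules out a local minimum whether or not the linear coefficients $c_1,c_2$ vanish. You would need an argument of this kind to make the proof complete.
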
 
\begin{proof} Taking the same notations as in Lemma \ref{+measure} we have, for $p\in\widehat{\cal B}\setminus\{\beta,\c\beta\}$, 
\begin{equation}
\big[\partial_{\bf n}\widehat{\cal U}(p)\big]_+-\big[\partial_{\bf n}\widehat{\cal U}(p)\big]_-
=\big[\partial_{\bf n}\widehat{\cal U}_\text{OP}(p)\big]_+-\big[\partial_{\bf n}\widehat{\cal U}_\text{OP}(p)\big]_-=-2|\widehat y(p)|.
\end{equation}
This means that at least one of $\big[\partial_{\bf n}\widehat{\cal U}(z)\big]_+$ and $-\big[\partial_{\bf n}\widehat{\cal U}(z)\big]_-$ is negative.  Therefore,  $\widehat{\cal U}(z)$ decreases in one of the directions perpendicular to $\widehat{\cal B}$.

If $p=\beta$ we consider the expansion around $\beta$:
\begin{equation}
\begin{split}
\widehat{\cal U}(z)&=\text{(real analytic in ${\mathbb R}^2$)}+ \widehat{\cal U}_\text{OP}(z)
\\ &=\widehat{\cal U}(\beta)+c_1(x-\Re\beta)+c_2(y-\Im\beta)+\Re(c_0(z-\beta)^{3/2})+{\cal O}(|z-\beta|^2)
\end{split}
\end{equation}
where $c_1,c_2\in {\mathbb R}$ and $c_0\in {\mathbb C}$ is determined to match $\widehat y(z) = \frac{3}{2} c_0\sqrt{z-\beta}+{\cal O}((z-\beta)^{3/2})$ and $c_0$ is nonzero.  If either of $c_1$ or $c_2$ are non-vanishing, then there exists a point $z$ near $\beta$ such that $\widehat{\cal U}(z)<\widehat{\cal U}(\beta)$.   If both $c_1$ and $c_2$ vanish, there still exists such a point that lowers the value of $\widehat{\cal U}$ by the non-vanishing term $\Re(c_0(z-\beta)^{3/2})$.      This proves Lemma \ref{lem-nolocal}.
\end{proof}

\begin{lemma}\label{globalUpre} For the pre-critical case, we have $\widehat{\cal B}\subset K$, and, by taking ${\cal B}=\widehat{\cal B}$, ${\cal U}(z)$ has the global minima exactly at $\partial K$. 
\end{lemma}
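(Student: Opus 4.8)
\textbf{Proof proposal for Lemma \ref{globalUpre}.}

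The plan is to combine the previous lemmas to pin down $\widehat{\cal B}$ geometrically, and then transfer the minimization statement for ${\cal U}$ to $\partial K$. First I would argue that $\widehat{\cal B}\subset K$. Recall from Lemma \ref{+measure} that across $\widehat{\cal B}$ the function $\widehat{\cal U}_\text{OP}$ — and hence $\widehat{\cal U}$, which differs from $\widehat{\cal U}_\text{OP}$ only by a function that is real-analytic in ${\mathbb R}^2$ — has a jump in its normal derivative equal to $-2|\widehat y(p)|$; so $\widehat{\cal U}$ strictly decreases in at least one of the two normal directions at every interior point $p$ of $\widehat{\cal B}$, and Lemma \ref{lem-nolocal} upgrades this to: $\widehat{\cal U}$ has no local minimum anywhere on $\widehat{\cal B}$ (including the endpoints $\beta,\c\beta$). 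Now suppose $\widehat{\cal B}$ were not contained in $K$. Since $\{\beta,\c\beta\}\subset K$ by Lemma \ref{lem-beta}, the curve $\widehat{\cal B}$ would have to exit and re-enter $K$, so it would meet $\partial K$; but on $\partial K$ we have $\widehat{\cal U}={\cal U}_\text{2D}$ and, by the first Lemma of the section (equation \eqref{Uboundary}), ${\cal U}_\text{2D}$ achieves a strict local minimum along the normal direction to $\partial K$ at each boundary point — this would force a local minimum of $\widehat{\cal U}$ at a point of $\widehat{\cal B}$ (the normal to $\widehat{\cal B}$ need not agree with the normal to $\partial K$, but one still gets that $\widehat{\cal U}$ cannot strictly decrease in \emph{some} normal direction at a tangential crossing, contradicting the jump computation). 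The cleanest route is: the footnote to Definition \ref{def-B} already rules out any continuous deformation of ${\cal B}$ across the whole of $\partial K$, so $\widehat{\cal B}$ lies inside $\partial K$'s polynomial hull; then the jump-of-normal-derivative argument shows it cannot touch $\partial K$ at all, hence $\widehat{\cal B}\subset K^{\circ}$.

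Once $\widehat{\cal B}\subset K$, the hatted objects coincide with the unhatted ones: $\widehat{\cal B}$ satisfies the two defining properties used in Lemma \ref{lem-F} (inside $K$, intersecting $\R_{-}$ exactly once), so $\widehat S=S$, $\widehat y=y$, $\widehat F=F$, $\widehat{\cal U}={\cal U}$, and we may write ${\cal B}:=\widehat{\cal B}$ from now on. It remains to prove that ${\cal U}$ attains its global minimum exactly on $\partial K$. On $\overline{K^c}$ we have ${\cal U}={\cal U}_\text{2D}$, which by Proposition \ref{thm-pre} (items {\bf (A)} and {\bf (B)} of \eqref{equil}) is $0$ on $\partial K$ and strictly positive on $K^c$; so the claim is equivalent to showing ${\cal U}(z)\geq 0$ on all of $K$, with equality only on $\partial K$. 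Here I would use that ${\cal U}$ restricted to $\overline K$ is a smooth function whose only possible interior critical points are where $S(z)=\overline z$ with $F(z)\notin\partial\D$; but Lemma \ref{lem-double} says the unique such point is $z=b\in(a,\infty)$, which lies in $K^c$, not in $K^{\circ}$. Therefore ${\cal U}$ has no critical point in the open region $K\setminus{\cal B}$ (the set where ${\cal U}$ is smooth), and on ${\cal B}$ itself Lemma \ref{lem-nolocal} gives no local minima either. Consequently the minimum of ${\cal U}$ over the compact set $\overline K$ is attained on $\partial K$, where its value is $0$; and by strict positivity of ${\cal U}_\text{2D}$ off $\partial K$ together with the boundary expansion \eqref{Uboundary} (which shows ${\cal U}>0$ on a one-sided neighborhood of $\partial K$ inside $K$), the minimum is attained \emph{only} on $\partial K$. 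Combining with the $K^c$ side, ${\cal U}$ has its global minimum exactly on $\partial K$.

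\textbf{Main obstacle.} The delicate point is the step that forbids $\widehat{\cal B}$ from touching $\partial K$: one must reconcile the jump in the normal derivative of $\widehat{\cal U}_\text{OP}$ across $\widehat{\cal B}$ (Lemma \ref{+measure}) with the fact that the normal to $\widehat{\cal B}$ at a hypothetical contact point need not be aligned with the normal to $\partial K$, where \eqref{Uboundary} controls the second-order behavior of ${\cal U}$. I would handle this by a local analysis at a putative contact point $p\in\widehat{\cal B}\cap\partial K$: on the $K^c$ side ${\cal U}={\cal U}_\text{2D}\geq 0$ with a zero of order two transverse to $\partial K$, so ${\cal U}\geq 0$ there; then the jump formula $[\partial_{\bf n}\widehat{\cal U}]_+-[\partial_{\bf n}\widehat{\cal U}]_-=-2|y(p)|<0$ forces ${\cal U}$ to become negative just on the $K$-side of $\widehat{\cal B}$ near $p$, contradicting that the $K^c$-side value is $\geq 0$ and that $p\in\partial K$ makes a full neighborhood of $p$ split between $K$ and $K^c$ — a more careful bookkeeping of which side of $\widehat{\cal B}$ lies in $K^c$ closes the argument. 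Alternatively, and perhaps more robustly, one invokes the topological obstruction in the footnote to Definition \ref{def-B} to get $\widehat{\cal B}\subset\mathbf P(K)$ for free, and then only needs to exclude $\widehat{\cal B}$ meeting $\partial K$, which follows because any such meeting point would be a local minimum of ${\cal U}$ on $\widehat{\cal B}$, contradicting Lemma \ref{lem-nolocal}.
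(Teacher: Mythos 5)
Your second half (once $\widehat{\cal B}\subset K$ is granted) is fine and is essentially the same computation the paper carries out at the very end, but the first half — showing $\widehat{\cal B}$ cannot touch $\partial K$ — has a real gap, and it is precisely the gap you flag as the ``main obstacle.'' You try to derive a contradiction by arguing that a contact point $p\in\widehat{\cal B}\cap\partial K$ would be a \emph{local} minimum of $\widehat{\cal U}$, violating Lemma \ref{lem-nolocal}. But nothing forces $p$ to be a local minimum: the quadratic expansion \eqref{Uboundary} controls ${\cal U}$ (the extension built with a cut strictly inside $K$), not $\widehat{\cal U}$, and at $p$ the two functions only agree on one side of $\widehat{\cal B}$. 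On the other side $\widehat{\cal U}$ is a different branch, the normal derivative has the $-2|\widehat y(p)|$ jump, and the sign bookkeeping you sketch does not close; it is exactly as plausible that $\widehat{\cal U}$ dips below zero on the far side of $\widehat{\cal B}$ without any contradiction to Lemma \ref{lem-nolocal}. You also overread the footnote to Definition \ref{def-B}: it only asserts that some point $z_0\in\partial K$ with $\widehat F(z_0)\in\partial\D$ survives the deformation, not that $\widehat{\cal B}\subset\mathbf P(K)$.

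The paper resolves this entirely differently and avoids any local analysis at the contact point. It argues globally: if $p\in\widehat{\cal B}\cap\partial K$, then $\widehat{\cal U}(p)={\cal U}(p)=0$; by Lemma \ref{lem-nolocal} $p$ is not the global minimum of $\widehat{\cal U}$, so the global minimum is at some $z\notin\widehat{\cal B}$ with $\widehat{\cal U}(z)<\widehat{\cal U}(p)$. At such a smooth minimum $\partial\widehat{\cal U}(z)=\overline z-\widehat S(z)=0$, and after excluding $|\widehat F(z)|=1$ (where $\widehat{\cal U}$ is constant and equal to $\widehat{\cal U}(p)$), Lemma \ref{lem-double} forces $z=b$. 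But $\widehat{\cal U}(b)={\cal U}(b)$, and Corollary \ref{lem-betab} gives ${\cal U}(b)-{\cal U}(p)={\cal U}_{2D}(b)>0$, contradicting $\widehat{\cal U}(b)<\widehat{\cal U}(p)$. The crucial input you never use is Corollary \ref{lem-betab}; this is where the sign that your local argument cannot nail down is actually obtained, as a global integral inequality $\Re\int_\beta^b y\,\d z>0$. If you want to repair your proof along the paper's lines, replace the local-minimum-at-$p$ claim with the global-minimum-must-be-$b$ argument and invoke Corollary \ref{lem-betab}.
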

\begin{proof}
If $\widehat{\cal B}$ intersects $\partial K$ at $p$ then any neighborhood of $p$ contains a region where ${\cal U}\equiv\widehat{\cal U}$, and $\widehat{\cal U}(p)={\cal U}(p)$ by continuity ($\widehat{\cal U}$ is continuous by definition).  By the previous lemma, $p$ cannot be the global minimum of $\widehat{\cal U}$ and there must be a point $z\notin \widehat{\cal B}$ where $\widehat{\cal U}$ reaches the global minimum.   This point satisfies $|\widehat F(z)|\neq 1$ because, for $|\widehat F(z)|=1$, one can show that $\widehat{\cal U}(z)=\widehat{\cal U}(p)$.   Then $z$ must satisfy $\partial \widehat{\cal U}(z)=\c z-\widehat S(z)=0$ and, by Lemma \ref{lem-double}, $z$ must be $b$.
Both ${\cal B}$ and $\widehat{\cal B}$ do not intersect $[0,+\infty)$ and we have $\widehat{\cal U}(b)={\cal U}(b)$, and $\widehat{\cal U}(b)-\widehat{\cal U}(p)={\cal U}(b)-{\cal U}(p)>0$ by Corollary \ref{lem-betab}.   Therefore $\widehat{\cal B}$ does not intersect $\partial K$, and $\partial K$ is the global minima of $\widehat{\cal U}$.
\end{proof}

\subsection{Branch cut \texorpdfstring{${\cal B}$}{calB}: The post-critical and the critical case}\label{sec-ypost}
\paragraph{Post-critical case:} 
We define $y$ by
\begin{equation}\label{ypost1}
y(z):=\pm \left(a+\frac{c}{z-a}-\frac{t+c}{z}\right),\quad \begin{cases} +: z\in{\rm Ext}({\cal B}),
\\-:  z\in{\rm Int}({\cal B}),
\end{cases}
\end{equation}
for some closed Jordan curve ${\cal B}$ that we determine below. In fact, this formula is consistent with the one \eqref{y} for the pre-critical case, if one uses the formulae in \eqref{Spost} respectively for ${\rm Ext}({\cal B})$ and ${\rm Int}({\cal B})$.
Note that the previous definition of $y$ in \eqref{ypost} with the definitions of $b$ and $\beta$ at \eqref{bbetapost} is equivalent to the above definition of $y$.
\begin{figure}[ht]
\begin{center}
\includegraphics[width=0.2\textwidth]{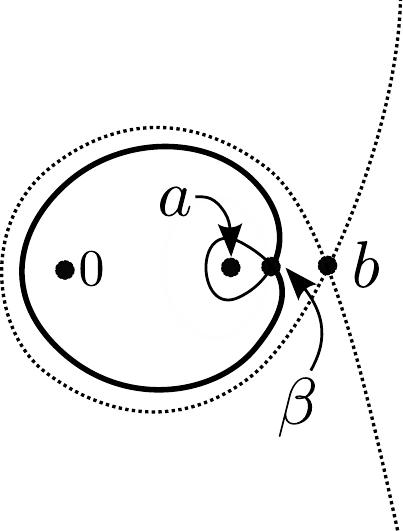}
\caption{Topological configuration of level lines from $\beta$ and $b$ for the post-critical case.}\label{fig-last} 
\end{center}
\end{figure}

One can check that $a$ is less than $\beta$.  With such configuration, there is topologically only one way to draw the level lines from $\beta$, which is illustrated in Figure \ref{fig-last}.  The proof is similar but simpler than that for the pre-critical case. Consider all the (three) possible ways of drawing level lines from $b$ using the rules in \eqref{ruleb}; there exists only one case where the level lines from $\beta$ can be drawn consistently with the level lines from $b$.    

\paragraph{Critical case:}
For $t-t_c={\cal O}(N^{-2/3})$ we choose $y(z)$ to be given by the equation \eqref{ypost} except that we extend the definition for $t<t_c$.  As a result, the two critical points $y(z)$, $b$ and $\beta$, are exactly given by \eqref{bbetapost} (taking the principal branch of the roots).  Their asymptotic behavior is given by
\begin{equation}\label{betabasymp}
b=b_c+\frac{c^{1/4}}{\sqrt a}\sqrt{t-t_c}+{\cal O}(t-t_c), \quad \beta=b_c-\frac{c^{1/4}}{\sqrt a}\sqrt{t-t_c}+{\cal O}(t-t_c),
\end{equation}
with $b_c:=a+\sqrt c$.
For $t>t_c$ the level lines from $\beta$ are exactly as in the post-critical case.
For $t<t_c$, one can verify, similarly as in the post-critical case, that the level lines from $\beta$ are arranged as shown at the left side in Figure \ref{default}.  

\begin{defn}[${\cal B}$, $y(z)$, ${\cal U}$, ${\cal U}_\text{OP}$]\label{def-post}
For the critical and the post-critical case, we define ${\cal B}$ by the unique level lines of $\beta$ that is a Jordan curve enclosing both $0$ and $a$.  We define ${\rm Ext}({\cal B})$ and ${\rm Int}({\cal B})$ respectively to be the exterior (including $\infty$) and the interior of the Jordan curve ${\cal B}$.   We define $y(z)$ by \eqref{ypost} and then $S(z)$ is accordingly defined over the extended domain $\C\setminus{\cal B}$ by 
\begin{equation}
S(z):=\frac{a}{2}+\frac{c}{2}\frac{1}{z-a}+\frac{c+t}{2z}-\frac{1}{2} y(z).
\end{equation}
Using the above defined $y(z)$ and $S(z)$, we define ${\cal U}_\text{OP}$ and ${\cal U}$ by \eqref{Uop} and by \eqref{calU} (where we can choose explicitly $z_0=\sqrt{t+c}$).
\end{defn}
We note that, for the post-critical case ${\cal B}$ is not inside $K$, and $S(z)$ does {\em not} match $S(z)$ at \eqref{Spost} in some part of $\{z:|z-a|<\sqrt c\}$. 

\begin{lemma}\label{globalUpost} For the post-critical case, ${\cal B}$ is contained in ${\bf P}(K)=\{z:|z|\leq\sqrt{t+c}\}$. Also ${\cal U}(z)$ has the global minima exactly at $\partial {\bf P}(K)=\{z:|z|=\sqrt{t+c}\}$. 
\end{lemma}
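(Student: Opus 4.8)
The plan is to reproduce, in the doubly-connected geometry, the argument used for the pre-critical case (Lemmas \ref{+measure}--\ref{globalUpre}), the essential new feature being that ${\cal B}$ is now a closed curve which, unlike in the pre-critical case, is \emph{not} contained in $K$ (it runs through the hole $D(a,\sqrt c)$).

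First I would record explicit formulae for ${\cal U}$ on each side of ${\cal B}$. On ${\rm Ext}({\cal B})$ one has $S(z)=(t+c)/z$, so \eqref{calU} with $z_0=\sqrt{t+c}$ gives ${\cal U}(z)=|z|^2-(t+c)-2(t+c)\log(|z|/\sqrt{t+c})=:g(|z|)$, a radial function with $g(r)>0$ for $r\neq\sqrt{t+c}$ and $g(\sqrt{t+c})=0$; moreover $\partial{\cal U}(z)=\c z-(t+c)/z$ vanishes only on the outer circle $\{|z|=\sqrt{t+c}\}$. On ${\rm Int}({\cal B})$ one has $S(z)=a+c/(z-a)$, so ${\cal U}(z)=|z-a|^2-2c\log|z-a|+C=:C+h_0(|z-a|)$ for some constant $C$, where $h_0(r)=r^2-2c\log r$ has a strict minimum at $r=\sqrt c$ and blows up as $r\to0^+$; here $\partial{\cal U}(z)=\c z-a-c/(z-a)$ vanishes only on the inner circle $\{|z-a|=\sqrt c\}$. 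Since ${\cal B}$ is a level line of $y$, we have ${\cal U}_\text{OP}\equiv0$ on ${\cal B}$, so (the two expressions differing, up to an additive constant, by $\pm2{\cal U}_\text{OP}$) they agree on ${\cal B}$; evaluating at $\beta\in{\cal B}$ pins down $C=g(\beta)-h_0(\beta-a)$, which I would simplify using $b\beta=t+c$ and $a(b+\beta)=a^2+t$ from \eqref{bbetapost}. In particular ${\cal U}$ is continuous on $\C$, smooth off ${\cal B}$, and tends to $+\infty$ at $z=a$ and at $z=\infty$, so it attains a global minimum.

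Second, I would show ${\cal U}$ has no local minimum on ${\cal B}$, exactly as in Lemmas \ref{+measure} and \ref{lem-nolocal}: writing ${\cal U}=H+{\cal U}_\text{OP}$ near ${\cal B}$ with $H$ real-analytic there and using that along the level line ${\cal B}$ the quantity $y\cdot{\bf n}$ (${\bf n}$ a unit normal) is real of modulus $|y(p)|$ with a sign read off the sign chart of $y$ on $\R$ at the real crossings of ${\cal B}$, one gets $[\partial_{\bf n}{\cal U}]_+-[\partial_{\bf n}{\cal U}]_-=-2|y(p)|$ for $p\in{\cal B}\setminus\{\beta\}$, so ${\cal U}$ strictly decreases in one normal direction at each such $p$. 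The only change from the pre-critical argument is at $p=\beta$: now $\beta$ is a \emph{simple} zero of $y$, so $\int_\beta^z y\sim\tfrac12 y'(\beta)(z-\beta)^2$ and ${\cal U}_\text{OP}$ is real-analytic near $\beta$ with a non-degenerate saddle; hence ${\cal U}$ has at $\beta$ either a non-vanishing gradient or a saddle, never a local minimum. Consequently the global minimizer of ${\cal U}$ lies off ${\cal B}$, hence is a critical point of one of the two expressions above: either a point of $\{|z|=\sqrt{t+c}\}\cap{\rm Ext}({\cal B})$, where ${\cal U}=0$, or a point of $\{|z-a|=\sqrt c\}\cap{\rm Int}({\cal B})$, where ${\cal U}=C+c-c\log c$.

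It then remains (i) to rule out that ${\cal B}$ leaves $\overline{D(0,\sqrt{t+c})}$, so that the outer circle really meets ${\rm Ext}({\cal B})$ and the value $0$ is attained there, and (ii) to check $C+c-c\log c>0$; once both hold, $\min_\C{\cal U}=0$ attained exactly on $\{|z|=\sqrt{t+c}\}=\partial{\bf P}(K)$ and ${\cal B}\subset{\bf P}(K)$. For (i) I would argue by contradiction in the spirit of Lemma \ref{globalUpre}: a point $q\in{\cal B}$ with $|q|>\sqrt{t+c}$ lies in the unbounded component of $K^c$, on which ${\cal U}$ coincides with ${\cal U}_\text{2D}$, and combining Proposition \ref{thm-postK} (which gives ${\cal U}_\text{2D}\ge0$, with equality only on $K$) with the no-local-minimum property and the radial form of ${\cal U}$ on ${\rm Ext}({\cal B})$ forces the outer circle into ${\rm Ext}({\cal B})$, hence ${\cal B}\subset\overline{D(0,\sqrt{t+c})}$; then (ii) is a one-variable estimate using the relations of the first paragraph and $t\ge t_c$. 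The step I expect to be the main obstacle is (i): because ${\cal B}$ genuinely passes through the hole $D(a,\sqrt c)$ and is not contained in $K$, one cannot simply invoke ``${\cal U}={\cal U}_\text{2D}$ near ${\cal B}\cap\partial K$'' as in the pre-critical proof, and — unlike there — the critical set of ${\cal U}$ off ${\cal B}$ is not one point but two whole circles; keeping track of which arcs of these circles lie in ${\rm Ext}({\cal B})$ versus ${\rm Int}({\cal B})$, and converting ${\cal U}_\text{2D}\ge0$ together with the level-line topology of Figure \ref{fig-last} into the containment ${\cal B}\subset{\bf P}(K)$, is the delicate part, the remaining estimates being routine.
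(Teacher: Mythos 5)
Your strategy---radial formulae for ${\cal U}$ on each side of ${\cal B}$, no local minimum on ${\cal B}$, reduction of the minimizer to the two critical circles, then the containment ${\cal B}\subset {\bf P}(K)$---is the same as the paper's, and your first two paragraphs (formulae, constant $C$, no local minimum on ${\cal B}\setminus\{\beta\}$) are sound. But two steps do not hold up as written. First, the handling of $\beta$: the claim that ${\cal U}_\text{OP}$ is ``real-analytic near $\beta$ with a non-degenerate saddle'' is false. Since $y$ flips sign across ${\cal B}$ in the post-critical case, $\phi$ and hence ${\cal U}_\text{OP}=\Re\phi$ are $C^1$ but not $C^2$ at $\beta$: the one-sided values of $\partial^2{\cal U}_\text{OP}$ at $\beta$ are $\pm\tfrac12 y'(\beta)\neq 0$, so ${\cal U}_\text{OP}$ is not real-analytic there. (And even granting analyticity, ``harmonic saddle plus subharmonic $H$ $\Rightarrow$ not a local minimum'' is not automatic; the Hessian of ${\cal U}$ has trace $4$ and could have two positive eigenvalues.) What does work, and is what the paper does, is a value comparison ${\cal U}(\beta)>{\cal U}(\sqrt{t+c})=0$; alternatively, note that ${\cal U}$ is $C^1$ at $\beta$ with $\partial{\cal U}(\beta)=\overline\beta-(t+c)/\beta\neq0$ since $\beta<\sqrt{t+c}$ for $t>t_c$.

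The second, more serious, gap is step (i), which you correctly flag as the crux but do not close. Your contradiction-sketch does not work as described: if $q\in{\cal B}$ had $|q|>\sqrt{t+c}$, the identity ${\cal U}(q)={\cal U}_\text{2D}(q)$ is available only as a boundary value from the ${\rm Ext}({\cal B})$ side (under the contradiction hypothesis $\{|z|>\sqrt{t+c}\}$ is \emph{not} contained in ${\rm Ext}({\cal B})$), and ${\cal U}_\text{2D}(q)>0$ combined with radiality does not by itself ``force the outer circle into ${\rm Ext}({\cal B})$.'' The paper takes a different, and decisive, route: the inequality $\sqrt{(t-a^2)^2-4a^2c}\ge t-a^2-2a\sqrt c$ shows $\beta\in \overline{D(a,\sqrt c)}$ (strictly interior for $t>t_c$); since ${\cal B}$ passes through $\beta$ yet encircles $0\notin\overline{D(a,\sqrt c)}$, it must cross the inner circle, and the key topological input---that ${\cal B}\setminus\{\beta\}$ cannot meet both critical circles simultaneously, backed by the unique level-line configuration of Figure \ref{fig-last}---then forces ${\cal B}\subset D(0,\sqrt{t+c})$. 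Once this is in hand your (ii) comes for free: $C+c-c\log c$ is exactly ${\cal U}$ at a crossing point of ${\cal B}$ with the inner circle, which is positive by continuity with the nonnegative radial formula on ${\rm Ext}({\cal B})$.
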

\begin{proof}
Since the proof is similar to the pre-critical case, some details are omitted.  It can be shown that the global minima of ${\cal U}$ cannot occur on ${\cal B}\setminus\{\beta\}$ as in the proof of Lemma \ref{lem-nolocal}.   One can show $\beta\in{\cal B}$ is in $\{z:|z-a|\leq\sqrt{c}\}$ by using $\sqrt{(t-a^2)^2-4 a^2 c}\geq t-a^2-2a\sqrt c$ with the formula for $\beta$ at \eqref{bbetapost}.  
The point $\beta$ is ruled out from the global minimum by explicitly comparing $\cal U(\beta)$ with ${\cal U}(\sqrt{t+c})$. Therefore, the global minima can only occur in $\{z:|z|=\sqrt{t+c}\}\cup\{z:|z-a|=\sqrt c\}$ (those are the only values of $z$ where $\partial{\cal U}(z)=\overline z-S(z)=0$ can possibly occur), and ${\cal B}\setminus\{\beta\}$ should not intersect both circles simultaneously.   This means that ${\cal B}$ is in the interior of the larger circle and (whether ${\cal B}$ intersects the smaller circle or not) the global minima of ${\cal U}$ is on $\{z:|z|=\sqrt{t+c}\}$. 
\end{proof}

\begin{lemma}\label{globalUcrit} For $t=t_c$, ${\cal B}\setminus\{\beta\}$ is contained in $K=\{z:|z|<\sqrt{t+c}\text{ and } |z-a|> \sqrt c\}$. Also ${\cal U}(z)$ has the global minima exactly at $\partial K=\{z:|z|=\sqrt{t+c}\}\cup\{z:|z-a|=\sqrt c\}$. 
\end{lemma}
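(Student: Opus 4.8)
The plan is to follow the same strategy as in the proof of Lemma \ref{globalUpost}, specializing to $t=t_c$ where the two branch points $b$ and $\beta$ defined by \eqref{bbetapost} coalesce: at $t=t_c$ one has $(t-a^2)^2-4a^2c=0$, so $b=\beta=b_c=a+\sqrt c$. First I would record that this common value lies on the inner circle $\{z:|z-a|=\sqrt c\}$, since $b_c-a=\sqrt c$; in particular the pinch point sits exactly on $\partial K$. Next, as in Lemma \ref{globalUpost}, I would argue that the global minimum of ${\cal U}$ cannot occur on ${\cal B}\setminus\{\beta\}$: repeating the computation of Lemma \ref{lem-nolocal} with the jump relation $\big[\partial_{\bf n}{\cal U}\big]_+-\big[\partial_{\bf n}{\cal U}\big]_-=-2|y(p)|<0$ across ${\cal B}$ shows ${\cal U}$ strictly decreases in one normal direction at every interior point of the arc, so no such point is even a local minimum.

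The remaining candidates for global minima are the points where $\partial{\cal U}(z)=\overline z-S(z)=0$; using Definition \ref{def-post} of $S(z)$ together with the factorized form of $y(z)$ (which at $t=t_c$ has a double zero at $b_c$), these are exactly the points of the two circles $\{|z|=\sqrt{t+c}\}$ and $\{|z-a|=\sqrt c\}$, i.e.\ precisely $\partial K$. It then remains to (i) locate ${\cal B}$ relative to these circles and (ii) check that ${\cal U}$ actually attains its minimal value $0$ on all of $\partial K$ and is positive off it. For (i): since ${\cal B}\setminus\{\beta\}$ is a connected arc joining $\beta=b_c$ to itself around $0$ and $a$, and it cannot cross $\partial K$ (crossing would contradict the impossibility of an interior-arc minimum combined with ${\cal U}\equiv{\cal U}_\text{2D}$ on $K^c$, exactly as in Lemma \ref{globalUpre}), the arc ${\cal B}\setminus\{\beta\}$ must lie entirely in one component of $\C\setminus\partial K$; because it encircles both $0$ and $a$ and emanates from the pinch point $b_c$ on the inner circle into the region $|z-a|>\sqrt c$ (the sign analysis of $y$ on the reals, as in Figure \ref{fig-yR}, fixes which side), it lies in $K$. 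For (ii): on $\{|z|=\sqrt{t+c}\}$ one uses the explicit Schwarz function $S(z)=(t+c)/z$ valid in the exterior, so ${\cal U}={\cal U}_\text{2D}=0$ there by Lemma \ref{lem-S2} applied to the pre-critical family as $t\uparrow t_c$ (or directly by the $\ell_\text{2D}$ computation in \eqref{eq:ell2D}, which is continuous at $t_c$); on $\{|z-a|=\sqrt c\}$ one uses the other branch $S(z)=a+c/(z-a)$ and the same identity, noting that at $t=t_c$ the region $K=\overline{D(0,\sqrt{t+c})}\setminus D(a,\sqrt c)$ is still the honest support of $\mu^*$ by continuity from Proposition \ref{thm-pre}, so ${\cal U}_\text{2D}\equiv 0$ on $K$ and in particular on its boundary.

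The main obstacle I expect is the coalescence $b=\beta$ at $t=t_c$: at that value $y(z)$ has a double rather than simple zero at $b_c$, so the local expansion of ${\cal U}_\text{OP}$ near the endpoint $\beta$ is no longer the $(z-\beta)^{3/2}$ behavior used in Lemma \ref{lem-nolocal} but rather has a $(z-b_c)^2\log$- or $(z-b_c)^{5/2}$-type correction coming from $y(z)\sim \text{const}\cdot(z-b_c)^{?}$; one must check carefully that $\beta=b_c$ still cannot be a local minimum of ${\cal U}$, and that the arc ${\cal B}$ genuinely detaches into $K$ rather than staying tangent to the inner circle. The cleanest way around this is probably a continuity/limiting argument: take the pre-critical conclusions of Lemma \ref{globalUpre} and the post-critical conclusions of Lemma \ref{globalUpost}, both of which hold uniformly for $t$ in a punctured neighborhood of $t_c$ by the analyticity of $\rho(t),\alpha(t),\kappa(t)$ and of $b(t),\beta(t)$, and pass to the limit $t\to t_c$, using that ${\cal U}(z)$ depends continuously on $t$ and that $\partial K$ varies continuously, to transport the statement "global minima exactly on $\partial K$" to the critical value itself.
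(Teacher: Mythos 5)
Your proposal follows essentially the same route as the paper: rule out ${\cal B}\setminus\{\beta\}$ via the jump in the normal derivative (Lemma \ref{lem-nolocal}), reduce the candidates to where $S(z)=\overline z$ (the two circles), and then invoke continuity in $t$ to transport the conclusion of Lemma \ref{globalUpre} across $t\nearrow t_c$, finally deducing that ${\cal B}\setminus\{\beta\}$ sits strictly in $K$. One small remark: the ``main obstacle'' you flag regarding the coalescence $b=\beta=b_c$ is not an actual gap here, precisely because the lemma deliberately excludes $\beta$ from the arc — the interior-of-arc normal-derivative argument needs only $y(p)\neq 0$ on ${\cal B}\setminus\{\beta\}$ and is unaffected by the order of vanishing of $y$ at the endpoint, while $\beta=b_c$ itself lies on $\partial K$ and is handled by the continuity step as part of the set of minima. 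So your final ``cleanest way around'' paragraph is in fact the proof, not a workaround.
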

\begin{proof} 
By the same logic as in the pre-critical case, there is no global minima on ${\cal B}\setminus \{\beta\}$.
Then the only possible global minima is where $S(z) = \overline z$, which is the big circle and/or the small circle.
Both are indeed the global minima because they are the limit of $\partial K$ as $t \nearrow t_c$ where $\partial K$ is the global minima for all $t < t_c$ by Lemma \ref{globalUpre}.
As a consequence, the cut ${\cal B}\setminus \{\beta\}$ is strictly inside K.
 \end{proof} 

\section{Orthogonal polynomial on a contour}\label{sec-3}

As the first step toward the asymptotic analysis of $P_{n,N}$ we show that $P_{n,N}$ is an orthogonal polynomial with respect to a non-hermitian, $n$-dependent orthogonality measure on certain contours in the complex plane.   This observation hinges on the following lemma which relates the area integral to a contour integral.
\begin{lemma}
\label{lemma:moments}
The following identity holds
\begin{equation}\label{area=contour}
\iint_\C z^j \c {(z-a)}^k \left| z-a\right|^{2Nc}{\rm e}^{-N z \c z} \d A( z) = \frac {\Gamma(Nc + k + 1)}{2i \, N^{Nc+k+1}} \oint_\Gamma z^{j} w_{k+1,N}(z)\, \d z,
\end{equation}
where
\begin{equation}
w_{n,N}(z) := \frac{(z-a)^{Nc} e^{-Naz}}{z^{Nc+n}}\ , \qquad z \in \C\setminus [0,a]\ .
\end{equation}
The contour of integration $\Gamma$ is a closed loop enclosing both $0$ and $a$ counterclockwise.
\end{lemma}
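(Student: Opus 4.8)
The plan is to reduce the two-dimensional integral to a one-dimensional contour integral by exploiting the radial symmetry after a judicious change of variables. First I would substitute $w = z-a$ in a formal sense — but that does not immediately help because the Gaussian $e^{-Nz\bar z}$ is not radial in $w$. Instead, the cleaner route is to expand the factor $|z-a|^{2Nc}\bar{(z-a)}^k = (z-a)^{Nc}\overline{(z-a)}^{Nc+k}$ and integrate the anti-holomorphic part against the monomial $z^j$ using the reproducing property of the Gaussian weight. Concretely, writing $z = x+iy$ and using $\d A(z) = \d x\,\d y$, one knows from the Ginibre computation that $\int_\C z^p \bar z^q e^{-N z\bar z}\,\d A(z) = \delta_{pq}\,\pi\, p!\, N^{-p-1}$. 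So I would expand $\overline{(z-a)}^{Nc+k}$ — caution: $Nc$ need not be an integer, so this must be handled as a convergent series $\overline{(z-a)}^{Nc+k} = \sum_{m\ge 0}\binom{Nc+k}{m}(-a)^m \bar z^{\,Nc+k-m}$ is also problematic since $\bar z^{Nc+k-m}$ is not a monomial.

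The correct and standard device here is to integrate in polar coordinates around the point $a$ only partially, or better: rewrite the left side using the identity that the Gaussian integral against an antiholomorphic function equals a residue/Taylor coefficient. Let me instead set up the proof this way. Fix the holomorphic polynomial $z^j(z-a)^{Nc+k}$ — wait, $(z-a)^{Nc}$ with non-integer exponent is holomorphic on $\C\setminus(-\infty,a]$, say with the branch cut along $[0,a]$ extended appropriately, but actually we want it holomorphic off $[0,a]$, which matches the stated domain of $w_{n,N}$. The key computational lemma I would invoke is: for $F$ holomorphic in a neighborhood of $0$ (or more generally meromorphic with controlled growth),
\begin{equation*}
\int_\C F(z)\,\overline{(z-a)}^{Nc+k} e^{-Nz\bar z}\,\d A(z) = \frac{\Gamma(Nc+k+1)}{N^{Nc+k+1}}\cdot \frac{1}{2\pi i}\oint \frac{F(z)\,(z-a)^{Nc}e^{-Naz}}{z^{Nc+k+1}}\,\d z \cdot(\text{const}),
\end{equation*}
which after matching constants is exactly the claimed formula with $F(z)=z^j$. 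To prove this reproducing identity I would write $\overline{(z-a)} = \bar z - a$, expand $e^{-Naz}$ is not what appears — rather I would use $\bar z e^{-N z\bar z} = -\frac{1}{N}\partial_z e^{-Nz\bar z}$ and integrate by parts in $z$, which turns powers of $\bar z$ into $z$-derivatives of holomorphic data.

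Here is the cleanest path, which I would actually carry out: integrate by parts $Nc+k+1$ times — but non-integer powers again block this. So the honest approach: use polar coordinates $z = a + r e^{i\theta}$ is also blocked by the non-radial Gaussian. Therefore the real proof must expand $z^j$ in powers of $(z-a)$: write $z^j = \sum_{l} c_l (z-a)^l$ (a finite sum), so it suffices to prove the identity for $F(z) = (z-a)^l$. Then $\int_\C (z-a)^l \overline{(z-a)}^{Nc+k} e^{-Nz\bar z}\,\d A(z)$ — and now I complete the square or shift: $e^{-Nz\bar z} = e^{-N(z-a)(\bar z-a)} e^{-Na(z-a)}e^{-Na(\bar z -a)}e^{-Na^2}$. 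Setting $u = z-a$, the integral becomes $e^{-Na^2}\int_\C u^l \bar u^{Nc+k} e^{-Nu\bar u}e^{-Nau}e^{-Na\bar u}\,\d A(u)$. Now expand $e^{-Na\bar u} = \sum_{m\ge0}\frac{(-Na)^m}{m!}\bar u^m$ and use the Ginibre orthogonality $\int u^l \bar u^{p}e^{-Nu\bar u}\,\d A(u)$ paired against the remaining $e^{-Nau}$ expanded as $\sum_s \frac{(-Na)^s}{s!}u^s$: the orthogonality forces $l+s = Nc+k+m$, i.e. a single constraint, collapsing the double sum to a single series in $m$, which I would recognize as a Gamma-function times the Taylor coefficient of $(z-a)^{Nc}e^{-Naz}/z^{Nc+k+1}$ at the appropriate order, i.e. the residue on the right-hand side. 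The main obstacle is bookkeeping: tracking the non-integer exponent $Nc$ through the series manipulations and confirming absolute convergence so that the interchange of sum and integral is legitimate, and then identifying the resulting hypergeometric-type series with the stated contour integral via the residue theorem (deforming $\Gamma$ to a small loop around $0$, where the only singularity of $w_{k+1,N}$ inside lies, since $(z-a)^{Nc}e^{-Naz}$ is holomorphic there once the branch cut is placed on $[0,a]$ — note $a\in\Gamma$'s interior but the branch point at $a$ is integrable/the function is continuous up to it). Once the series is matched, the constant $\Gamma(Nc+k+1)/(2iN^{Nc+k+1})$ falls out, completing the proof.
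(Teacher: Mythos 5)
Your proposed route is genuinely different from the paper's, so let me note both what the paper does and where your sketch goes wrong. The paper avoids series expansions altogether: it constructs a $\bar\partial$-primitive
$\chi(z) = (z-a)^{Nc}\int_{a}^{\bar z}(s-a)^{Nc+k}e^{-Nzs}\,ds$
satisfying $\bar\partial\chi(z) = \overline{(z-a)}^k\,|z-a|^{2Nc}e^{-N|z|^2}$, converts the area integral into $\lim_{r\to\infty}\frac{1}{2i}\oint_{|z|=r}z^j\chi(z)\,dz$ by Stokes' theorem, and then estimates $\chi(z) = (z-a)^{Nc}\bigl[\Gamma(Nc+k+1)\,e^{-Naz}(Nz)^{-Nc-k-1} + {\cal O}(e^{-N|z|^2/2})\bigr]$ for large $|z|$, after which the first term gives exactly the claimed contour integral (independent of $r>a$) and the error vanishes as $r\to\infty$. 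This is considerably lighter than the bookkeeping you anticipate.

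Your sketch, as written, contains a step that would fail. After the shift $u=z-a$, the factor $|z-a|^{2Nc}\overline{(z-a)}^k$ contributes $u^{Nc}\bar u^{Nc+k}$, not $\bar u^{Nc+k}$ alone; you dropped the holomorphic $u^{Nc}$. The shifted integral should read $e^{-Na^2}\int_\C u^{\,l+Nc}\,\bar u^{\,Nc+k}\,e^{-Nu\bar u}e^{-Nau}e^{-Na\bar u}\,\d A(u)$, whose angular constraint is $l+s=k+m$ (the $Nc$'s cancel), not $l+s=Nc+k+m$. With your constraint, no integers $(l,s,k,m)$ satisfy it when $Nc\notin\mathbb Z$, so every term is annihilated and the left-hand side would evaluate to $0$. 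Once corrected, the single surviving sum is a confluent hypergeometric series (for $j=k=0$ it equals $\frac{\pi\,\Gamma(Nc+1)}{N^{Nc+1}}\,e^{-Na^2}\,{}_1F_1(Nc+1;1;Na^2)$), and closing the argument requires explicitly invoking Kummer's transformation ${}_1F_1(a;b;x)=e^x\,{}_1F_1(b-a;b;-x)$ to match the contour-integral side. Finally, your remark about "deforming $\Gamma$ to a small loop around $0$" is not correct: $(z-a)^{Nc}z^{-(Nc+k+1)}=(1-a/z)^{Nc}z^{-k-1}$ has a branch cut along the whole segment $[0,a]$, so $\Gamma$ cannot be shrunk to pass between $0$ and $a$; the right statement is that $\Gamma$ may be taken to be any circle $|z|=R>a$ and the contour integral equals $2\pi i$ times the coefficient of $z^{-1}$ in the expansion of the integrand valid for $|z|>a$.
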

\begin{proof}
The integrand in the left hand side of \eqref{area=contour} can be re-expressed using the function $\chi$, defined as follows:
\begin{equation}\label{Fid}
{\chi}(z):=(z-a)^{Nc} \int_{a}^{\c{z}}(s-a)^{Nc+k} e^{-Nzs}\d s, \quad z\in\C\setminus(-\infty,a),
\end{equation}
where the contour of integration is chosen so as not to intersect the half-line $(-\infty, a)$.
Clearly $\chi$ satisfies 
\begin{equation}
	\c\partial {\chi}(z) =( \c{z-a})^k |z-a|^{2Nc} {\rm e}^{-N |z|^2}.
\end{equation}
The degree to which $\chi$ is smooth can best be seen from the alternative representation, easily obtainable by tricks of calculus:
\begin{equation}
	\chi(z)=|z-a|^{2Nc}(\overline z-a)^{k+1}{\rm e}^{-N a z}\int_0^1 x^{Nc+k}e^{-Nz(\overline z-a)x}dx.
\end{equation}
If $Nc$ is a positive integer, then clearly $\chi$ is infinitely differentiable.  Otherwise $\chi$ has $\lfloor Nc\rfloor$ continuous partial derivatives.  In our case, $Nc$ tends to infinity and thus $\chi$ is at least twice differentiable (which will be enough to apply Stokes' theorem below).

The integral in ${\chi}(z)$ can be divided into two terms by 
\begin{equation}\label{gammaerror}
\begin{split}
\int_{a}^{\c{z}}(s-a)^{Nc+k} e^{-Nzs}\d s&= \int_{a}^{\c{z}\times\infty}(s-a)^{Nc+k} e^{-Nzs}\d s-\int_{\c z}^{\c{z}\times\infty}(s-a)^{Nc+k} e^{-Nzs}\d s
\\&=\frac{\Gamma(Nc + k + 1)e^{-Naz}}{(Nz)^{Nc+k+1}}+{\cal O}\left(e^{-N|z|^2/2}\right),
\end{split}
\end{equation}
where the branch cut of $z^{Nc+k+1}$ lies on $(-\infty,0]$. 

We derive the last equality below.
For fixed $N, c,k$, there exist $R>0$ such that the above integrand satisfies 
$
\left|(s-a)^{Nc+k} e^{-Nzs}\right|\leq \left|e^{N z s/2}e^{-Nzs}\right| 
$ for all $s\in [\c z,\c z \times \infty)$ and $|z|>R$.  Therefore the last term in the first line in \eqref{gammaerror} is bounded by
\begin{equation}
\left|\int_{\c z}^{\c{z}\times\infty}(s-a)^{Nc+k} e^{-Nzs}\d s\right|\leq \int_{R}^{\infty}e^{-Nzs/2}\d s =2\frac{\e^{-N|z|^2/2}}{N|z|}, \quad |z|>R.
\end{equation}
The first term after the first equality in \eqref{gammaerror} becomes, by the change of variables: $X=z(s-a)$,
\begin{equation}
\int_{a}^{\c{z}\times\infty}(s-a)^{Nc+k} e^{-Nzs}\d s=\frac{e^{-Nza}}{z^{Nc+k+1}}\int_0^{+\infty}X^{Nc+k} e^{-NX}\d X
=\frac{\Gamma(Nc + k + 1)e^{-Naz}}{(Nz)^{Nc+k+1}}.
\end{equation}
Since $\chi$ is twice differentiable, we can apply Stokes' theorem to get
\begin{align}
(\text{LHS of \eqref{area=contour}}) &=\lim_{r\to\infty}\iint_{|z|<r} z^j \c {(z-a)}^k \left| z-a\right|^{2Nc}{\rm e}^{-N z \c z} \d A( z)
\\ \label{stokes}&=\lim_{r\to\infty}\iint_\C z^j \,\c\partial {\chi}(z) \d A(z)=\lim_{r\to\infty}\frac{1}{2i}\oint_{|z|=r} z^j \, {\chi}(z) \d z
\\&=\lim_{r\to\infty}\frac{1}{2i}\oint_{|z|=r} z^j (z-a)^{Nc}\left(\frac{\Gamma(Nc + k + 1)e^{-Naz}}{(Nz)^{Nc+k+1}}+{\cal O}\left(e^{-N|z|^2/2}\right)\right) \d z.
\end{align}
The first term is independent of $r$ as long as the circle $|z|=r$ encloses $0$ and $a$; and the last term vanishes in the limit $r\to\infty$.  
\end{proof}

This lemma allows us to view $P_{n,N}$ as an ordinary orthogonal polynomial: indeed the system (\ref{2d_ortho}) is equivalent to
\begin{equation}\label{2d_ortho1}
\oint_{\Gamma}P_{n,N}(z)\frac{(z-a)^{Nc} e^{-Naz}}{z^{Nc+k+1}}dz =\begin{cases} 0\ , & k =0,1,\cdots, n-1,
\\\displaystyle 2i\frac{N^{Nc+n+1}}{\Gamma(Nc+n+1)} h_{n,N}\ , &k=n.
\end{cases}
\end{equation}
By relabeling $k\to n-1-j$, we get a standard system of non-hermitian orthogonality relations
\begin{equation}
\oint_{\Gamma}P_{n,N}(z)z^j\frac{(z-a)^{Nc} e^{-Naz}}{z^{Nc+n}}dz = 0\qquad  (j =0,1,\dots, n-1).
\label{contour_ortho}
\end{equation}

\begin{remark}
The proof of Lemma \ref{lemma:moments} can be extended to treat weights of the form 
\begin{equation*}
W(z,\c z) = 
{\rm e}^{-N |z|^2 +\Re \left(\alpha z^2 + \beta z\right)}\prod_{j=1}^K |z-a_j|^{Nc_j}
\end{equation*}
and allows to relate the bimoments $\int_\C z^j \c z^k W(z,\c z)\d^2 z$ with the moments of an iterated contour integral; this realizes the orthogonal polynomials as the biorthogonal polynomials appearing in \cite{bertobisemiclass} for a specific semiclassical bilinear moment functional. In particular one could use the formalism in loc. cit. to produce a corresponding Riemann--Hilbert problem (whose dimension depend on the integer $K$, however). We do not pursue this issue here.
\end{remark}

\subsection{The Riemann-Hilbert Problem}

It is well--known that the polynomial $P_{n,N}(z)$ can be characterized by a $2\times 2$ Riemann-Hilbert problem \cite{fokas_its_kitaev} which we recall in our setting.
Let us define 
\begin{equation}\label{nu}
\nu_k:= \oint_{\Gamma} z^k w_{n,N}(z)\d z.
\end{equation}
Introduce the polynomial  
\begin{equation}
Q_{n-1,N}(z):=\frac 1{\det\left[\nu_{i+j}\right]_{0\leq i,j\leq n-1}} \det \left[
\begin{array}{ccccc}
\nu_0 & \nu_1 & \dots & \nu_{n-1}\\
\nu_1 & \nu_2 & \dots & \nu_{n}\\
\vdots &&&\vdots \\
\nu_{n-2}&\dots&&\nu_{2n-3}\\
1 & z & \dots & z^{n-1}
\end{array}
\right]
\label{Qn}
\end{equation}
Note that $Q_{n-1,N}$ is not necessarily monic and its degree may be less than $n-1$: its existence relies uniquely on the nonvanishing of the determinant in the denominator, which will be shown below in Proposition \ref{nonzero}. It is also orthogonal to the powers $1, z,\dots z^{n-2}$. Define the matrix 
\begin{equation}
Y(z) = \pmtwo
{P_{n,N}(z)}{\frac{1}{2 \pi i} \int_{\Gamma} \frac{P_{n,N}(z')}{z' - z} w_{n,N}(z') dz'}{-{2 \pi i } Q_{n-1,N}(z)}{- \int_{\Gamma} \frac{Q_{n-1,N}(z')}{z' - z} w_{n,N}(z') dz'} \ .
\label{Ymatrix}
\end{equation}
Then $Y$ satisfies the standard \cite{fokas_its_kitaev} Riemann-Hilbert problem:
\begin{equation}
\label{YRHP}
\begin{cases}
Y(z)\mbox{ is holomorphic in } \C \setminus \Gamma;\\
Y_{+} (z)= Y_{-}(z) \ \pmtwo
{1}{w_{n,N}(z)}
{0}{1}\ ,\quad z\in\Gamma ;
\vspace{0.1cm}\\\displaystyle
Y(z) = \left(I +\order{\frac{1}{z}}\right)z^{n\sigma_3}\ , \quad z \to \infty;
\end{cases}
\end{equation}
{\em if and only if} $\det\left[\nu_{i+j}\right]_{0\leq i,j\leq n-1} \neq 0$: this  is seen by noticing that if $Q_{n,N}$ exists as given by (\ref{Qn}) then 
\begin{equation}
- \int_{\Gamma} \frac{Q_{n-1,N}(z')}{z' - z} w_{n,N}(z') dz' \sim z^{-n} (1 + \mathcal O(z^{-1}))
\end{equation}
We thus need to show the nonvanishing of the determinant in the following proposition. 
\begin{prop}
\label{nonzero}
The determinant $\det [\nu_{i+j}]_{0\leq i,j\leq n-1}$ does not vanish.
\end{prop}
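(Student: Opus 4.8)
The plan is to show that the Hankel-type determinant $D_n:=\det[\nu_{i+j}]_{0\le i,j\le n-1}$ is nonzero by identifying it — up to an explicit nonzero constant — with the squared norm product $\prod_{k=0}^{n-1} h_{k,N}$ attached to the contour orthogonality \eqref{contour_ortho}, and then arguing that each $h_{k,N}$ is nonzero because the original area-integral pairing \eqref{2d_ortho} is (complex) positive definite. Concretely, one first recalls the standard fact that for a bilinear moment functional $\mathcal M(z^i,z^j)=\nu_{i+j}$ the leading principal minors $D_1,\dots,D_n$ being nonzero is equivalent to the existence of the full family of monic (bi)orthogonal polynomials $P_{0,N},\dots,P_{n,N}$ satisfying \eqref{contour_ortho}, with $D_k/D_{k-1}$ equal to the $(k-1)$-th ``norm''. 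So it suffices to prove these norms do not vanish.

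The key step is to transport the nonvanishing back to the plane. By Lemma \ref{lemma:moments} (with $j=k$ there, i.e. the diagonal bimoments), for any polynomial $p$ of degree $\le n$ one has
\begin{equation}
\oint_\Gamma p(z)\,\c{p}(1/\,\cdot\,)\ \cdots
\end{equation}
— more precisely, the contour pairing $\langle p,q\rangle_\Gamma:=\oint_\Gamma p(z)q(z)w_{n,N}(z)\,\d z$ is, after the relabeling $k\mapsto n-1-j$ and up to the positive constant $\Gamma(Nc+n+1)/(2i N^{Nc+n+1})$, exactly the area pairing $\iint_\C p(z)\,\c{q(z)}\,|z-a|^{2Nc}\e^{-N|z|^2}\,\d A(z)$ evaluated on the appropriate monomials. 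The cleanest route is therefore: use Lemma \ref{lemma:moments} to write $D_n$ as a constant multiple of $\det\big[\iint_\C z^i\,\c{z^j}\,|z-a|^{2Nc}\e^{-N|z|^2}\d A(z)\big]_{0\le i,j\le n-1}$, which is the Gram determinant of the monomials $1,z,\dots,z^{n-1}$ with respect to a genuine positive-definite Hermitian inner product on $L^2(\C,|z-a|^{2Nc}\e^{-N|z|^2}\d A)$. Since $1,z,\dots,z^{n-1}$ are linearly independent in that Hilbert space (they are linearly independent as functions, and the weight is positive a.e.), the Gram determinant is strictly positive, hence $D_n\neq0$.

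The one point requiring a little care — and the main obstacle — is bookkeeping the index shift and the constants: Lemma \ref{lemma:moments} relates $\nu_{k+1}$-type contour integrals to bimoments $\iint z^j\c{(z-a)}^k(\cdots)$, and the determinant $D_n$ involves $\nu_{i+j}$ with the weight $w_{n,N}$ (which already carries the shift $z^{-Nc-n}$). I would make the substitution $k=n-1-j$ as in the passage from \eqref{2d_ortho1} to \eqref{contour_ortho} explicit at the level of the $n\times n$ matrix, checking that it amounts to reversing the order of the columns (a $\pm1$ determinant factor) and multiplying rows by the positive constants $\Gamma(Nc+n-i)/(2iN^{Nc+n-i})$; none of these operations can produce a zero. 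One should also note that replacing $\c{(z-a)}^k$ by $\c{z}^k$ in the Gram matrix only multiplies it by a unitary (upper-triangular with $1$'s on the diagonal) change of basis, so positivity of the Gram determinant is unaffected. Assembling these remarks gives $D_n = (\text{nonzero constant})\times(\text{positive Gram determinant})\neq 0$, which is the claim.
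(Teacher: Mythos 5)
Your proposal is correct and takes essentially the same approach as the paper: a column reversal together with Lemma~\ref{lemma:moments} converts $\det[\nu_{i+j}]$ into a nonzero constant times the bimoment determinant $\det\bigl[\iint_\C z^i\,\overline{(z-a)}^j\,|z-a|^{2Nc}\,{\rm e}^{-N|z|^2}\,\d A(z)\bigr]$, and then a triangular column change with unit diagonal reduces this to the strictly positive Gram determinant of $1,z,\dots,z^{n-1}$ in $L^2$ with the positive weight. Two small slips worth fixing in your write-up: the constants $\Gamma(Nc+k+1)/(2iN^{Nc+k+1})$ are nonzero but purely imaginary (not positive), and a triangular matrix with $1$'s on the diagonal is unimodular, not unitary.
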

The proof can be found in Appendix \ref{app-nonzero}.

\subsection{The \texorpdfstring{$g$}{g}-function and the 1st transform}

Given the Riemann--Hilbert characterization in eqs. (\ref{YRHP}), the asymptotic study can be addressed using the Deift--Zhou \cite{DKMVZ} steepest descent method by means of an appropriate $g$--function. 

In our context we cannot rely upon the minimization of an energy functional as in the case of ordinary orthogonal polynomials on the real line.   Therefore we will resort to constructing the $g$-function in terms of $y(z)$ that is defined at \eqref{y}\eqref{ypost} and in Definition \ref{def-post} for all three cases.

The weight function can be expressed in terms of a potential $V$ (of the semiclassical type \cite{marcellan}) by
\begin{equation}
w_{n,N}(z) = z^{-r}e^{-NV(z)},\quad V(z) := az - c \ln{(z - a)} + ( c + t) \ln{z}.
\end{equation}
We choose the cut of the logarithms at $[0,+\infty)$.

\begin{lemma}
[Definition of $\Gamma_{b\beta}$ and $\Gamma_{b\c\beta}$]\label{lem-steepest}
The steepest ``ascent" path (the maximal elevation) of ${\cal U}_\text{OP}$ from $\beta$ reaches $b$.  Equivalently, the steepest descent paths from $b$ reach $\beta$ and $\c\beta$; Let us denote the former (that reaches $\beta$) by $\Gamma_{b\beta}$ and the latter by $\Gamma_{b\c\beta}$.
\end{lemma}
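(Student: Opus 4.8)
The plan is to establish the statement as an analysis of the global structure of the level lines of $\mathcal U_\text{OP}$, reducing it to the topological classification already carried out in Lemma~\ref{lem:calB} (pre-critical) and in the paragraph after \eqref{ypost1} (post-critical and critical). First I would recall that $\mathcal U_\text{OP}(z)=\Re[\int_\beta^z y(w)\,\d w]$ is a well-defined harmonic function on $\C\setminus\mathcal B$ whose gradient vanishes precisely at the zeros of $y$, namely at $z=b$ (and at $z=\beta,\c\beta$, which are the branch points, where $y$ has a square-root zero). Thus the critical graph of the quadratic differential $y(z)^2\,\d z^2$ consists of the level lines emanating from $\beta$, $\c\beta$ and from $b$, and the statement to be proved is exactly that one of the three trajectories emanating from $b$ (a simple zero of the quadratic differential, so three trajectories leave it) lands on $\beta$ and another on $\c\beta$.

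The key steps, in order: (1) By Corollary~\ref{lem-betab} and its proof (pre-critical) — or by the explicit comparison of $\mathcal U$-values used in Lemma~\ref{globalUpost} and Lemma~\ref{globalUcrit} — the level lines from $b$ and from $\beta$ cannot coincide \emph{as a connected set through the interior}; rather, $\mathcal U_\text{OP}(b)=\mathcal U_\text{2D}(b)>0$ while $\mathcal U_\text{OP}(\beta)=0$, so $b$ and $\beta$ sit on genuinely different level sets of $\mathcal U_\text{OP}$. (2) Invoke the boxed configuration of Figure~\ref{fig_three} (Lemma~\ref{lem:calB}): the three analytic arcs issuing from $b$ are one loop around $a$ and two arcs diverging to $\infty$; the level lines from $\beta$ form, together with those from $\c\beta$, the closed curve $\mathcal B$ around $0$. (3) Now consider instead the \emph{steepest ascent} trajectory of $\mathcal U_\text{OP}$ from $\beta$: along such a trajectory $\Re\int y$ is strictly monotone, so it cannot close up, cannot return to $\beta$, and (since $y$ has no poles with the wrong sign on the relevant region and $\mathcal U_\text{OP}\to+\infty$ only at $z=a$ among the finite points, cf. Figure~\ref{fig-yR}) it must terminate at another zero of $y$, i.e. at $b$. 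The only alternative — escape to $\infty$ — is excluded because $y(z)\sim a$ at $\infty$ forces $\mathcal U_\text{OP}\sim a\,\Re z$, which is unbounded both above and below, so a steepest-ascent path to $\infty$ would have to leave every compact set while $\Re z\to+\infty$, and a count of the admissible directions at $\infty$ (matching the quadratic differential's behavior $a^2\,\d z^2$ at $\infty$) together with the already-placed $b$-trajectories leaves no room for it. (4) Dualizing: the steepest ascent from $\beta$ reaching $b$ is the same arc as a steepest \emph{descent} from $b$ reaching $\beta$; naming it $\Gamma_{b\beta}$, and its complex conjugate $\Gamma_{b\c\beta}=\overline{\Gamma_{b\beta}}$ (by the reflection symmetry of $y$ across $\R$, valid in the pre-critical and post-critical cases; in the critical scaling regime with $b=\overline\beta$ one uses instead the explicit asymptotics \eqref{betabasymp} and the post-critical topology), gives the two asserted descent paths. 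For the post-critical and critical cases the same argument applies verbatim with the simpler level-line picture of Figure~\ref{fig-last}.

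The main obstacle I expect is step~(3): ruling out that the steepest ascent trajectory from $\beta$ runs off to infinity or spirals without limit. For a general quadratic differential a trajectory can be recurrent, so one genuinely needs the global input — here supplied by the local triviality of the monodromy (the full classification of the finite critical graph already done for $\mathcal B$ and for the $b$-level lines in Figure~\ref{fig_three} and Figure~\ref{fig-last}) plus the sign analysis of $y$ on $\R$ from Figure~\ref{fig-yR} and the inequality $\mathcal U_\text{2D}(b)>0$ from Corollary~\ref{lem-betab} — to force the ascent path into the one remaining consistent topological slot. A secondary, more cosmetic point is phrasing the argument uniformly across the three regimes: the cleanest route is to prove it in the pre-critical case in full and then remark that the post-critical and critical cases are strictly easier because the level-line configuration of Figure~\ref{fig-last} already exhibits $\Gamma_{b\beta}$ and $\Gamma_{b\c\beta}$ as the two arcs joining $b$ to $\beta$ on either side of $\mathcal B$.
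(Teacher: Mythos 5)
Your proposal takes a genuinely different route from the paper's proof, and the difference is worth spelling out. The paper's proof (see the line just after Definition~\ref{def-phi} and the last part of Appendix~\ref{app-lem-UOP2d}) is a single clean computation: using the anti-holomorphic involution $v\leftrightarrow 1/\c v$ on the uniformizing plane and the fact that ${\rm deck}(F(b))=1/\overline{F(b)}$ (Lemma~\ref{lem-double}), one shows that $\int_\beta^b y(s)\,\d s$ is \emph{purely real} (not merely has positive real part), so $b$ sits on the $\Im\phi=0$ trajectory through $\beta$, which is exactly the steepest ascent path. Your proposal only invokes the weaker conclusion $\Re\int_\beta^b y>0$ from Corollary~\ref{lem-betab} and then tries to pin down the steepest ascent by a topological classification of the critical graph. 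This is a legitimate strategy in principle, and it is self-consistent with the level-line analysis of Lemma~\ref{lem:calB}, but it is considerably longer and, as you say yourself, hinges on step~(3).

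That step has a genuine gap which is not merely cosmetic. The boxed configuration in Figure~\ref{fig_three} (Lemma~\ref{lem:calB}) classifies the \emph{level lines} of ${\cal U}_\text{OP}$, i.e.\ the trajectories where $y^2\,\d z^2<0$, whereas the steepest ascent/descent paths of ${\cal U}_\text{OP}$ are the \emph{orthogonal} trajectories, where $y^2\,\d z^2>0$. Near $\infty$ the quadratic differential behaves like $a^2\,\d z^2$ (a pole of order four in the coordinate $w=1/z$), which has two admissible asymptotic directions for each family: the level lines approach $\pm i\infty$, while the steepest lines approach $\pm\infty$. So the ``already-placed $b$-trajectories'' you invoke to exhaust the slots at $\infty$ are level lines heading to $\pm i\infty$ and do not compete with the steepest ascent from $\beta$ (which, if it escaped, would head toward $+\infty$). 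The correct count must be carried out for the orthogonal family separately — note that $b$ is a \emph{double} zero of $y^2\,\d z^2$, so it emits four, not three, orthogonal trajectories, two of which already occupy both asymptotic directions along $\R$; that observation, plus the conjugation symmetry you mention, can in fact be pushed through to close the gap, but it is missing in your write-up. In short, your argument is recoverable but needs the orthogonal-trajectory count done carefully, whereas the paper avoids the entire topological discussion by the exact involution identity $\phi(b)=\mathcal U(b)\in\R$.
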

\begin{proof} In the last part of Appendix \ref{app-lem-UOP2d} we show that $\int_\beta^b y(s)\,\d s$ is purely real.    This means that $b$ is on the steepest ascent path from $\beta$.  (See Figure \ref{fig_31} for a numerical confirmation.)
\end{proof}

\begin{figure}[ht]
\begin{center}
\includegraphics[width=0.5\textwidth]{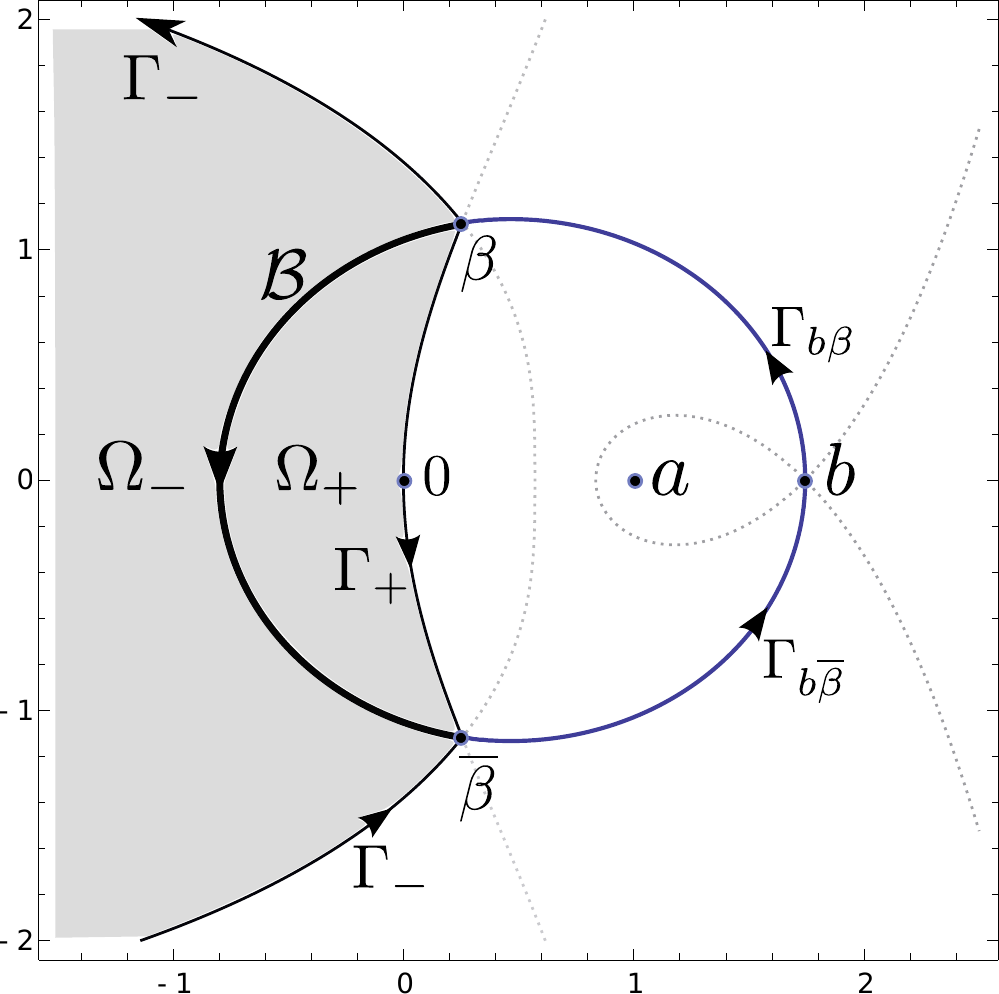}
\caption{${\cal B}$ (thick line); level lines (dotted lines); steepest descent/ascent paths (the other lines).  All the lines are with respect to the function ${\cal U}_\text{OP}$ and emanate from $\beta$ or $b$.    We close the contour $\Gamma_-$ at some region $\Re(z)<0$ (by deviating from the steepest descent paths) such that $\Omega_-$ is a bounded region. (All the lines are numerically plotted for $t=c=a=1$.) }\label{fig_31} 
\end{center}
\end{figure}

\begin{figure}[ht]
\begin{center}
\includegraphics[width=0.4\textwidth]{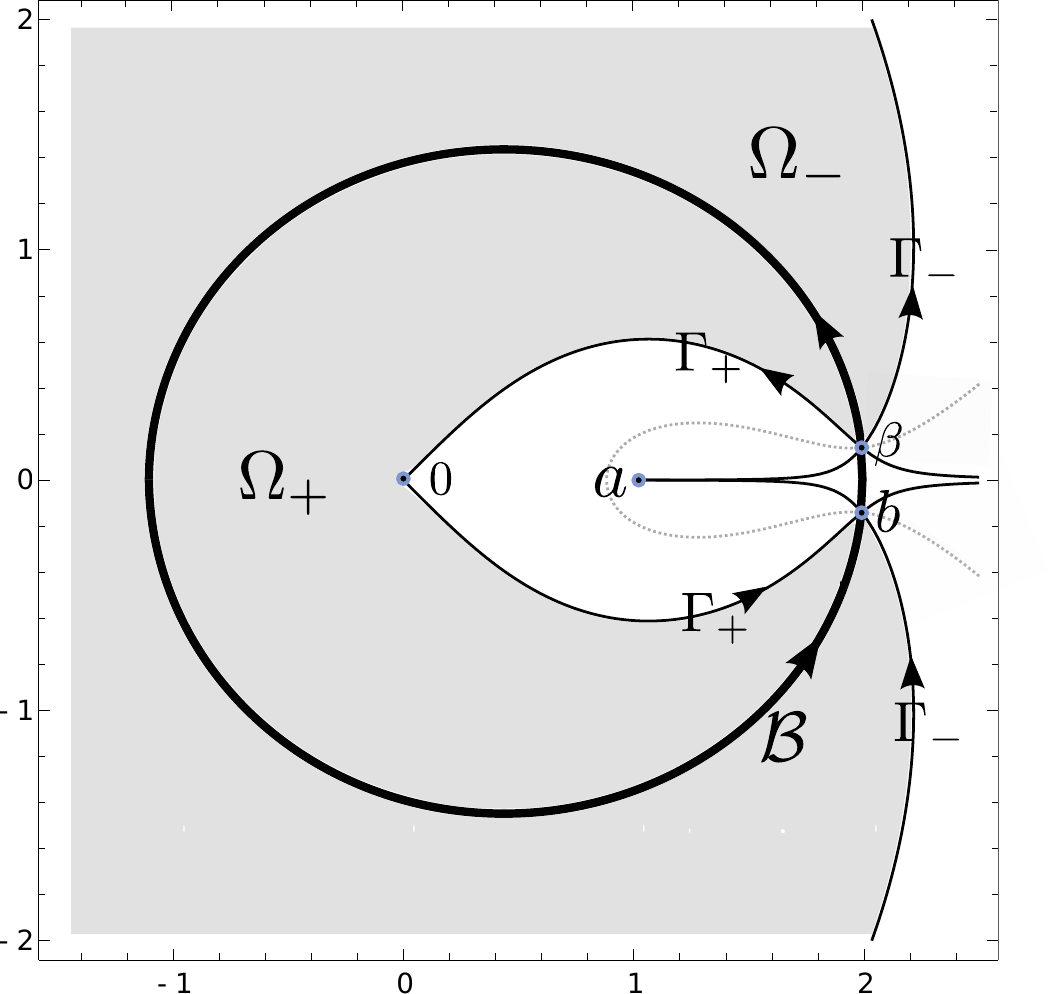}
\includegraphics[width=0.38\textwidth]{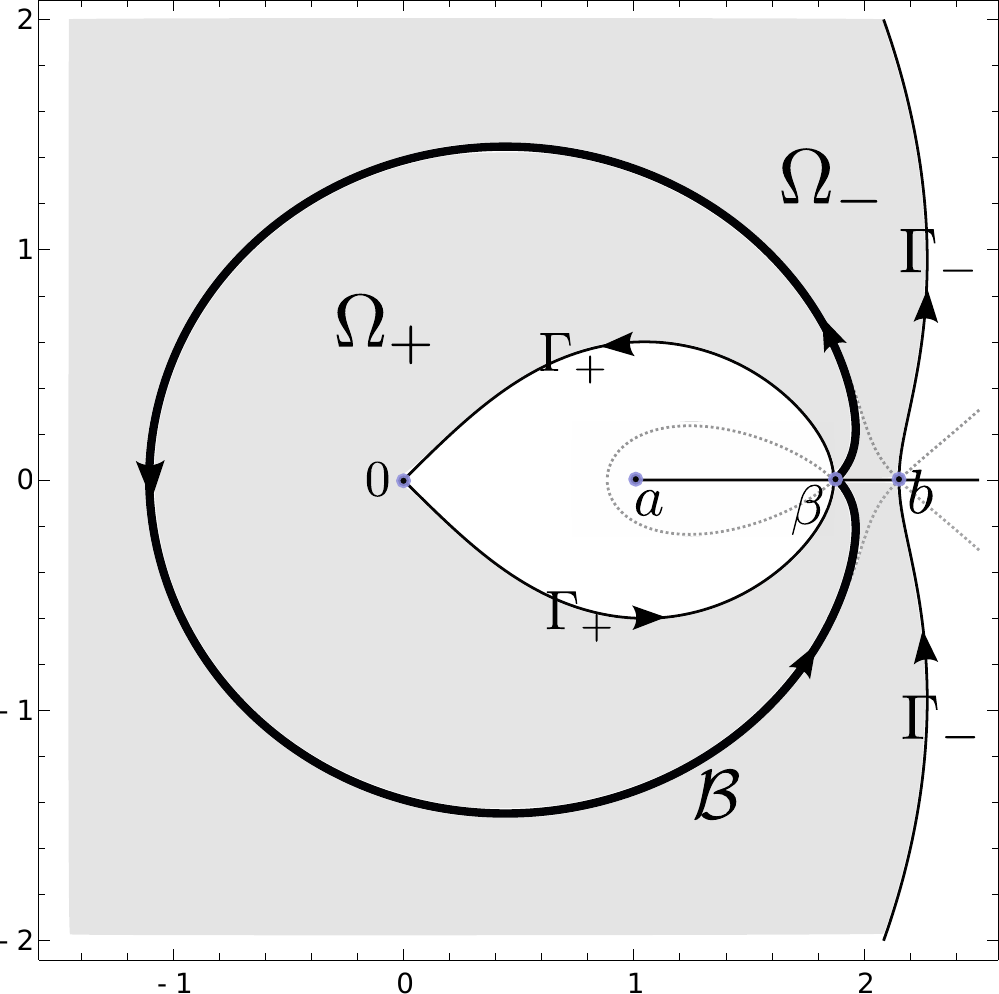}
\caption{The level curves and contours for the critical case with $t<t_c$ on the left; and $t>t_c$ (for both the critical and the post-critical case) on the right. The branch cuts ${\cal B}$ (thick lines) are chosen by one of the level lines from $\beta$; the other level lines from $\beta$ and $b$ are shown with dotted lines; the other lines are steepest descent/ascent paths. (All the lines are numerically plotted for $c=a=1$ and $t=t_c-0.02$ (left) and $t=t_c+0.02$ (right) where $t_c=3$.)}
\label{default}
\end{center}
\end{figure}

\begin{defn}\label{def-phi} For all $t$, we define $\phi(z)$ by
\begin{equation}
\phi(z):=\int_\beta^z y(s)\,\d s
\end{equation}
where the integration contour lies in $\C\setminus({\cal B}\cup \Gamma_{b\beta}\cup[b,+\infty))$ for the pre-critical case and in $\C\setminus({\cal B}\cup[\beta,+\infty))$ for the other cases.  We have ${\cal U}_\text{OP}(z)=\Re \phi(z)$ by \eqref{Uop}.
\end{defn}

\noindent {\bf Remark.}
By this, we replace the definition of $\phi$ given at Definition \ref{def-phi0}. Note that this definition is exactly the same as the Definition \ref{def-phi0} except the additional ``cuts'' to remove the multi-valuedness of $\phi$.  This difference is irrelevant when we state our main theorems in Section \ref{sec:intro}, because $\phi$ always appears in the exponentiated form, ${\rm e}^{N \phi(z)}$, which is (as the reader may verify) single-valued because the multivaluedness of $\phi$ is additive by integer multiples of $4i\pi t$ and $N = (n-r)/t$ \eqref{eq:Nnrt} with $n, r$ being integers.

Below, we recall that the $g$-function is defined in Definition \ref{def-g} through $\phi$ that is re-defined {\em above}.

\begin{lemma}\label{lem+}
For all the cases except for the critical case, we have
\begin{equation}
g(z)=\frac{1}{2\pi t}\int_{\cal B} \log(z-s)\,|y(s)|\,|\d s|
\end{equation}
where we take the branch cut of the logarithm in ${\cal B}\cup\Gamma_{b\beta}\cup[b,+\infty)$ for the pre-critical case and ${\cal B}\cup[\beta,+\infty)$ for the post-critical case.
\end{lemma}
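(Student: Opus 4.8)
The plan is to show that the right-hand side, which I denote
\[
G(z):=\frac{1}{2\pi t}\int_{\cal B}\log(z-s)\,|y(s)|\,|\d s|
\]
(with the branch cuts prescribed in the statement), coincides with $g(z)$, by comparing derivatives and then pinning down the additive constant. Two elementary facts about $y$ along $\mathcal B$ are needed. First, $y_+=-y_-$ on $\mathcal B$: in the pre-critical case $\mathcal B$ is the branch cut of the square root in \eqref{eq:y}, and in the post-critical case it is the interface between the two determinations in \eqref{ypost}. Second, by \eqref{eq:qd} (respectively \eqref{eq:qd1}), $y(s)\,\d s$ is purely imaginary along $\mathcal B$; orienting $\mathcal B$ compatibly, the positive measure $|y(s)|\,|\d s|$ equals $\tfrac1{2i}\bigl(y_+(s)-y_-(s)\bigr)\,\d s$. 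Differentiating and converting the jump integral into a contour integral,
\[
G'(z)=\frac{1}{2\pi t}\int_{\cal B}\frac{|y(s)|\,|\d s|}{z-s}
=\frac{1}{4\pi i t}\int_{\cal B}\frac{y_+(s)-y_-(s)}{z-s}\,\d s
=\frac{1}{4\pi i t}\oint_{\Sigma}\frac{y(s)}{z-s}\,\d s,
\]
where $\Sigma$ is a small loop around $\mathcal B$ with $z$ in its exterior. Pushing $\Sigma$ out to $|s|=R\to\infty$ and using \eqref{ysing}: the only singularities of $s\mapsto y(s)/(z-s)$ off $\mathcal B$ are the simple poles of $y$ at $s=0$ (residue $c+t$) and $s=a$ (residue $-c$), the pole of $1/(z-s)$ at $s=z$, and the behaviour $y(s)=a-t/s+\order{s^{-2}}$ at infinity; collecting the residues gives
\[
G'(z)=\frac{1}{2t}\Bigl(a-\frac{c}{z-a}+\frac{c+t}{z}-y(z)\Bigr)=\frac{1}{2t}\bigl(V'(z)-y(z)\bigr),
\]
with $V$ the potential introduced above the lemma. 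On the other hand, differentiating Definition \ref{def-g} and using $\phi'=y$ gives exactly $g'(z)=\tfrac1{2t}\bigl(V'(z)-y(z)\bigr)$. Hence $G'\equiv g'$ on the complement of the branch cut.

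It remains to fix the constant. In the pre-critical case the set $\mathcal B\cup\Gamma_{b\beta}\cup[b,+\infty)$ is a single Jordan arc running to $\infty$, so its complement is connected and $G-g$ is a genuine constant there; since $g(z)-\log z\to0$ (Definition \ref{def-g}) while $G(z)=M\log z+\order1$ with $M=\tfrac1{2\pi t}\int_{\cal B}|y|\,|\d s|$, boundedness of $G-g$ forces $M=1$ (this can also be seen directly by applying the same contour-and-residue trick to $\oint_{\Sigma}y\,\d s$), and then $G-g\to0$, i.e. $G\equiv g$. In the post-critical case $\mathcal B$ is a Jordan curve, $[\beta,+\infty)\subset\mathrm{Ext}(\mathcal B)$, and $\widehat\C\setminus(\mathcal B\cup[\beta,+\infty))$ has the two components $\mathrm{Ext}(\mathcal B)\setminus[\beta,+\infty)$ and $\mathrm{Int}(\mathcal B)$. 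On the exterior component the argument at infinity again gives $G\equiv g$. On $\mathrm{Int}(\mathcal B)$ both $g$ and $G$ are affine (by \eqref{gdough}, $g'\equiv a/t$ there), so $G-g$ is a constant $c_{\mathrm{int}}$, which I would identify with $0$ as follows: $y$ has a simple zero at $\beta$, so the density $|y(s)|\,|\d s|$ vanishes at $s=\beta$ and the jump of $G$ across $\mathcal B$ tends to $0$ as one approaches $\beta$; the jump of $g$ across $\mathcal B$ also tends to $0$ at $\beta$, which is precisely the continuity encoded in the choice of the Robin constant $\ell$ in \eqref{Robindough}; comparing the boundary values of $G$ and $g$ from the two sides of $\mathcal B$ near $\beta$ then yields $c_{\mathrm{int}}=0$, so $G\equiv g$ on $\mathrm{Int}(\mathcal B)$ as well.

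The step I expect to be the main obstacle is this last one. Because $\widehat\C\setminus(\mathcal B\cup[\beta,+\infty))$ is disconnected, the constant on $\mathrm{Int}(\mathcal B)$ cannot be read off from behaviour at infinity, and the endpoint matching at $\beta$ requires careful bookkeeping of the logarithmic branches of $\log(z-s)$ near the point where the two pieces $\mathcal B$ and $[\beta,+\infty)$ of the cut meet (the jumps of both $G$ and $g$ across $[\beta,+\infty)$ carry a $2\pi i$ there which must be shown to cancel). An alternative, more computational route is to evaluate $G$ directly on $\mathrm{Int}(\mathcal B)$ by residues. The orientation and branch-side conventions in the first step also have to be tracked so that $G'$ comes out equal to $+g'$ rather than $-g'$, but that is routine.
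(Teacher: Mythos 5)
Your approach is correct and genuinely differs from the paper's, though they are closely related. The paper computes the \emph{jump} of $g$ across $\cal B$ directly,
\begin{equation*}
\big[g(z)\big]^+_-=-\frac{1}{2t}\int_{z_-}^{z_+}y(s)\,\d s=\frac{i}{t}\int_z^{\c\beta}|y(s)|\,|\d s|,
\end{equation*}
and observes that this is exactly the jump of the proposed integral formula; together with the analyticity of $g'$ away from $\cal B$ (the $\eqref{ysing}$ cancellation you also use) this shows $G-g$ is analytic across $\cal B$ and hence entire, and the $\log z$ normalization at infinity then pins down $G\equiv g$ in one stroke. You instead differentiate $G$, convert the jump integral into a residue computation to show $G'\equiv g'$, and then fix constants component-by-component. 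The two are variants of the same idea (determine a sectionally analytic function by its jumps, singularities, and normalization), but the jump-matching route buys something concrete: because the jump of $G-g$ across $\cal B$ vanishes pointwise, the interior and exterior constants in the post-critical case are automatically identified, with no separate endpoint argument at $\beta$ needed.

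That last point is precisely the place where your proposal is soft. You correctly identify the post-critical interior constant $c_{\mathrm{int}}$ as the obstacle, and your endpoint-matching argument at $\beta$ is in the right spirit, but as written it is only a sketch. Two issues need care: first, the jump of $G$ across $\cal B$ near $\beta$ tends to $0$ only when $\beta$ is approached from the side of $\cal B$ where the cumulative mass $\int_\beta^p|y|\,|\d s|$ is small; from the other side the jump tends to $2\pi i$, so you must commit to one approach direction and track it consistently for both $G$ and $g$ (whose imaginary part across $\cal B$ likewise carries a $2\pi i$). Second, the Robin constant $\ell$ in $\eqref{Robindough}$ enforces continuity only of $\Re g$ at $\beta$, so matching complex boundary values of $G$ and $g$ requires a separate accounting of the imaginary parts, which you acknowledge but do not carry out. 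The cleaner way to close the gap is the one the paper takes implicitly: having shown $G'=g'$ off $\cal B$, also show that the jump of $G$ across any point of $\cal B$ equals that of $g$ (a one-line comparison of your residue identity with the paper's display), so that $G-g$ extends analytically across $\cal B$; then it is a bounded entire function vanishing at $\infty$, hence identically zero on \emph{both} components at once. With that substitution your argument becomes complete. The residue computation of $G'$, the verification $M=1$, and the pre-critical constant-matching at $\infty$ are all correct modulo the sign/orientation conventions you already flag as routine.
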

\begin{proof}
One can check that $g'(z)$ is analytic away from ${\cal B}$; the singularities of $y(z)$ as described at \eqref{ysing} exactly cancel those of $V'(z)$.  Let us only consider the pre-critical case as the post-critical case is similar.  To check the jump discontinuity of $g$ on $z\in{\cal B}$ we evaluate
\begin{equation}
\big[g(z)\big]^+_-=-\frac{1}{2t}\int_{z_-}^{z_+} y(s)\,\d s=-\frac{1}{2t}\left[\int_{z_-}^{\c\beta} y(s)_-\,\d s-\int_{z_+}^{\c\beta} y(s)_+\,\d s\right]=\frac{i}{t}\int_z^{\c\beta} |y(s)|\,|\d s|
\end{equation}
where $\pm$ sides are with respect to the orientation of ${\cal B}$ directed from ${\beta}$ to $\c\beta$ for the pre-critical case (and counter-clockwise for the post-critical case).  The integration contour is taken away from the cuts of both $V$ and $\phi$.  In the last equality, we use that $y(s)\,\d s = \pm i |y(s)|\,|\d s|$ for the infinitesimal line segment $\d s$ tangential to ${\cal B}$.  The sign can be determined at the intersection of ${\cal B}$ with the real axis using Figure \ref{fig-yR}; as in the proof of Lemma \ref{+measure}.  The above calculation gives exactly the jump of the proposed formula for the $g$--function.
\end{proof}

\begin{lemma}\label{lem-g} On the exterior of ${\bf P}(K)$, the $g$--function is the complex logarithmic potential from the probability measure that is uniformly supported on $K$, i.e.
\begin{equation}
\Re g(z)=\frac{1}{\pi t}\int_K \log|z-w|\,\d A(w),\quad z\notin {\bf P}(K).
\end{equation}  
Above, the exponential is applied to disregard the (unimportant and arbitrary) branch cut. 
\end{lemma}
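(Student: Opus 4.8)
The plan is to verify the claimed identity
\[
\Re g(z)=\frac{1}{\pi t}\int_K \log|z-w|\,\d A(w),\quad z\notin {\bf P}(K),
\]
by differentiating both sides and matching a single normalization constant. First I would recall from Lemma~\ref{lem+} (available since we are in the pre- or post-critical case away from the critical scaling) that
\[
g(z)=\frac{1}{2\pi t}\int_{\cal B}\log(z-s)\,|y(s)|\,|\d s|,
\]
so that $g'(z)=\frac{1}{2\pi t}\int_{\cal B}\frac{|y(s)|}{z-s}|\d s|$, which is analytic on ${\bf P}(K)^c$ (indeed on the larger domain in the statement of Lemma~\ref{lem+}). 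On the other side, for $z\notin{\bf P}(K)$ the function $z\mapsto\frac{1}{\pi t}\int_K\log|z-w|\,\d A(w)=\Re\!\left[\frac{1}{\pi t}\int_K\log(z-w)\,\d A(w)\right]$ has $\partial$-derivative $\frac{1}{\pi t}\int_K\frac{\d A(w)}{z-w}$, and by the second line of \eqref{SA1} this equals $S(z)-\frac{c}{z-a}$ for $z\notin K$. So it suffices to show $g'(z)=S(z)-\frac{c}{z-a}$ on ${\bf P}(K)^c$.

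The key step is therefore the identity $g'(z)=S(z)-c/(z-a)$. I would obtain it directly from the definitions: by Definition~\ref{def-g}, $2g(z)=\frac1t V(z)-\frac1t\phi(z)+\ell$, hence $2g'(z)=\frac1t\big(V'(z)-y(z)\big)$. Using $V(z)=az-c\ln(z-a)+(c+t)\ln z$ we have $V'(z)=a-\frac{c}{z-a}+\frac{c+t}{z}$, and from Lemma~\ref{lem-sy} (pre-critical) resp. the defining formula in Definition~\ref{def-post} (post-critical), $S(z)=\frac a2+\frac c2\frac{1}{z-a}+\frac{c+t}{2z}-\frac12 y(z)$. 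Combining,
\[
2g'(z)=\frac1t\left(a-\frac{c}{z-a}+\frac{c+t}{z}-y(z)\right)
       =\frac1t\left(2S(z)-\frac{2c}{z-a}\right)\cdot t \cdot\frac1t,
\]
or more cleanly: since $V'(z)-y(z)=a-\frac{c}{z-a}+\frac{c+t}{z}-y(z)=2\big(S(z)-\frac{c}{z-a}\big)\cdot$ (after substituting the expression for $y$ in terms of $S$), we get $g'(z)=\frac1t\big(S(z)-\frac{c}{z-a}\big)$. Wait --- comparing with the target $g'=S-c/(z-a)$, this shows the two candidate functions agree up to the factor $1/t$, reflecting that $\frac{1}{\pi t}{\bf 1}_K\,\d A$ is a \emph{probability} measure (total mass $\pi t\cdot\frac{1}{\pi t}=1$) while $S-c/(z-a)=\frac1\pi\int_K\frac{\d A(w)}{z-w}$ corresponds to the non-normalized measure $\frac1\pi{\bf 1}_K\,\d A$ of mass $t$. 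So in fact $g'(z)=\frac1t\big(S(z)-\frac c{z-a}\big)=\frac{1}{\pi t}\int_K\frac{\d A(w)}{z-w}=\partial\big[\frac1{\pi t}\int_K\log(z-w)\,\d A(w)\big]$, which is precisely what we want.

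Having matched derivatives, both sides are harmonic functions of $z$ on the connected domain ${\bf P}(K)^c$ whose gradients coincide, so they differ by an additive real constant; I would pin this constant down using the behavior as $z\to\infty$: Lemma~\ref{lem-g}'s right-hand side satisfies $\frac{1}{\pi t}\int_K\log|z-w|\,\d A(w)=\log|z|+o(1)$ (since $K$ has area $\pi t$), and $\Re g(z)=\Re[\log z+o(1)]=\log|z|+o(1)$ by the normalization $\lim_{z\to\infty}[g(z)-\log z]=0$ in Definition~\ref{def-g}. Hence the constant is zero and the two sides are equal. I do not anticipate a serious obstacle here; the only delicate point is bookkeeping the factor of $t$ correctly and making sure the domain of validity claimed by Lemma~\ref{lem+} (which in the pre-critical case extends past $\partial K$) does not create a sign or branch ambiguity when one restricts to ${\bf P}(K)^c$ --- but restricting can only lose information, so the identity on the larger domain (where it holds by the same derivative argument, noting that away from $K$ the second line of \eqref{SA1} applies) immediately gives it on ${\bf P}(K)^c$.
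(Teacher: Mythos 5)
Your proposal is correct and follows essentially the same route as the paper: show $g'(z)=\frac{1}{2t}(V'(z)-y(z))=\frac1t\big(S(z)-\frac{c}{z-a}\big)=\frac{1}{\pi t}\int_K\frac{\d A(w)}{z-w}$ and then fix the additive constant by matching the $\log|z|$ behavior at $\infty$, which is exactly what the paper does (it passes through $\eqref{calU}$ and $\eqref{Phi2D}$ rather than citing $\eqref{SA1}$ directly, but those are the same facts). Two small remarks: the opening appeal to Lemma~\ref{lem+} is a detour you never actually use (and would needlessly restrict to the non-critical case, whereas the derivative computation from Definition~\ref{def-g} and the formula for $S$ is uniform in $t$), and the phrase ``has $\partial$-derivative $\frac{1}{\pi t}\int_K\frac{\d A(w)}{z-w}$'' is off by a factor $\tfrac12$ when applied to the real part — but since the same factor appears on the $\Re g$ side it cancels, so the conclusion stands.
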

\begin{proof} It is enough to show that
\begin{equation}
g'(z)=\frac{1}{2t}\left(V'(z)- y(z)\right)=\frac{1}{t}\left(S(z)-\frac{c}{z-a}\right)=\frac{1}{t}\partial\left(Q(z)-{\cal U}(z)\right)=\frac{1}{\pi t} \int_K \frac{\d A(w)}{z-w},
\end{equation}
where each equality is from \eqref{ydef},\eqref{y},\eqref{calU}, and \eqref{Phi2D}, in the order of appearances.
\end{proof}

From (the proof of) Lemma \ref{lem+} the sign of $\Re\phi$ is {\em negative} on both sides of the cut $\cal B$.

We may choose the contour $\Gamma$ (of orthogonality) by any continuous deformation within $\C\setminus[0,a]$. We choose 
\begin{equation}\label{Gamma}
\Gamma:= \begin{cases}{\cal B}\cup \Gamma_{b\beta}\cup\Gamma_{b\c\beta} \quad &\text{pre-critical},
\\{\cal B} & \text{critical and post-critical}.\end{cases}
\end{equation}
We write some useful relations that follow from ${\cal U}_\text{OP}=\Re \phi$.
\begin{equation}\begin{split}
w_{n,N}(z) e^{ Nt \left(
g_{+}(z) + g_{-}(z) - \ell
\right)}  = \frac 1 {z^r}e^{-N(\phi_{+}(z) +\phi_{-}(z))/2}=\begin{cases}1/{z^r},\quad& z\in{\cal B};
\\ e^{-N{\cal U}_\text{OP}}/z^r, \quad &z\in \Gamma\setminus{\cal B}.
\end{cases}\end{split}
\end{equation}

\begin{defn}[$\Gamma_\pm$ and $\Omega_\pm$]\label{def-Omega}
We choose contours $\Gamma_+$ and $\Gamma_-$ following the steepest descent paths (as shown in Figure \ref{fig_31} and Figure \ref{default}) from $\beta$ (and $\c\beta$ for the pre-critical case) such that $\Re\phi(z)={\cal U}_\text{OP}(z)<0$ on those contours.   We choose $\Gamma_-$ to be the unbounded ones.  We then deform $\Gamma_-$, away from the steepest path, to make it a {\em single bounded curve} such that ${\cal U}_\text{OP}(z)<0$ on the deformed curve.   Then the domains $\Omega_\pm$ are defined by the open sets enclosed by ${\cal B}$ and $\Gamma_\pm$ respectively.   See Figure \ref{fig_31} for the pre-critical case and the right side in Figure \ref{default} for the post-critical case. 
\end{defn}

On $\Gamma$ \eqref{Gamma}, we have
\begin{equation}
Y_+(z)=Y_-(z)
\left[
\begin{array}{cc}
1 & w_{n,N}(z)\\
0 & 1
\end{array}
\right].
\end{equation}
Correspondingly we introduce the matrix $A(z)$ as follows:
\begin{equation}\label{Adef}
A(z):=e^{ -\frac{tN \ell}{2} \sigma_{3}} Y(z)
\left\{
\begin{array}{cc}
I  & z\in \C\setminus \Omega_+\cup \Omega_-\\
&\\
\left[
\begin{array}{cc}
1 & 0\\
-1 /w_{n,N} & 1
\end{array}
\right]  & z\in \Omega_+\\
&\\
\left[
\begin{array}{cc}
1 & 0\\
1/w_{n,N} & 1
\end{array}
\right]  & z\in \Omega_-\\
\end{array}
\right\} 
e^{ -tN \left( g(z) - \frac{ \ell}{2} \right) \sigma_{3}}\ .
\end{equation}
The RHP for $A$ is
\begin{equation}\label{RHPA}
\left\{
\begin{array}{cl}
A_+=A_-\pmtwo{1}{0}{z^re^{N\phi}}{1} & \mbox{on } \Gamma_\pm;\\
&\\
A_+=A_-\pmtwo{0}{ z^{-r} }{-z^r }{0} & \mbox{on } {\cal B};\\
&\\
A_+=A_-\pmtwo{1}{z^{-r} e^{-N{\cal U}_\text{OP}}}{0}{1} & \mbox{on } \Gamma\setminus {{\cal B}};
\\&
\\A(z)=\displaystyle\left(I+{\cal O}\left(\frac1z\right)\right)\pmtwo{z^r }{0}{0}{z^{-r}} & z\rightarrow\infty.
\end{array}
\right.
\end{equation}
The remaining jump of $g$ is $\pm 2i\pi$ and hence ${\rm e}^{-tN g(z)}$ does not have additional jumps besides the ones indicated above.
On $\Gamma_\pm$, ${\cal U}_\text{OP}=\Re \phi$ is negative, and on $\Gamma\setminus{\cal B}$ (pre-critical only) it is positive and uniformly so except near the endpoints $\beta$ and $\overline\beta$. 
At any finite distance away from ${\cal B}$, the jump matrices in (\ref{RHPA}) are exponentially close to the identity jump in any  $L^p$ norms and in $L^\infty$ norm.  Following the nonlinear steepest descent method, one first constructs a model RHP where only the jump exactly on ${\cal B}$ is retained. This is done below.

\section{The pre-critical case: \texorpdfstring{$t<t_c$}{ttc}}\label{sec-pre}

\paragraph{Outer parametrix:}
We define the following model RHP for $\Phi$ by
\begin{equation}\label{PsiRHP}
\begin{cases}
&\Phi_+=\Phi_-\pmtwo{0}{z^{-r}}{-z^{r}}{0}  \quad \mbox{on } {\cal B}\ ,
\\ &\displaystyle\Phi=\left(I+{\cal O}\left(\frac1z\right)\right)\pmtwo{z^r}{0}{0}{z^{-r}}\quad z\to\infty\ .
\end{cases}
\end{equation}
There are several ways of expressing the solution $\Phi$ and we provide the one that is most relevant to the geometry of $K$.

\begin{wrapfigure}{r}{0.3\textwidth}
\vspace{-0.cm}
\includegraphics[width=0.3\textwidth]{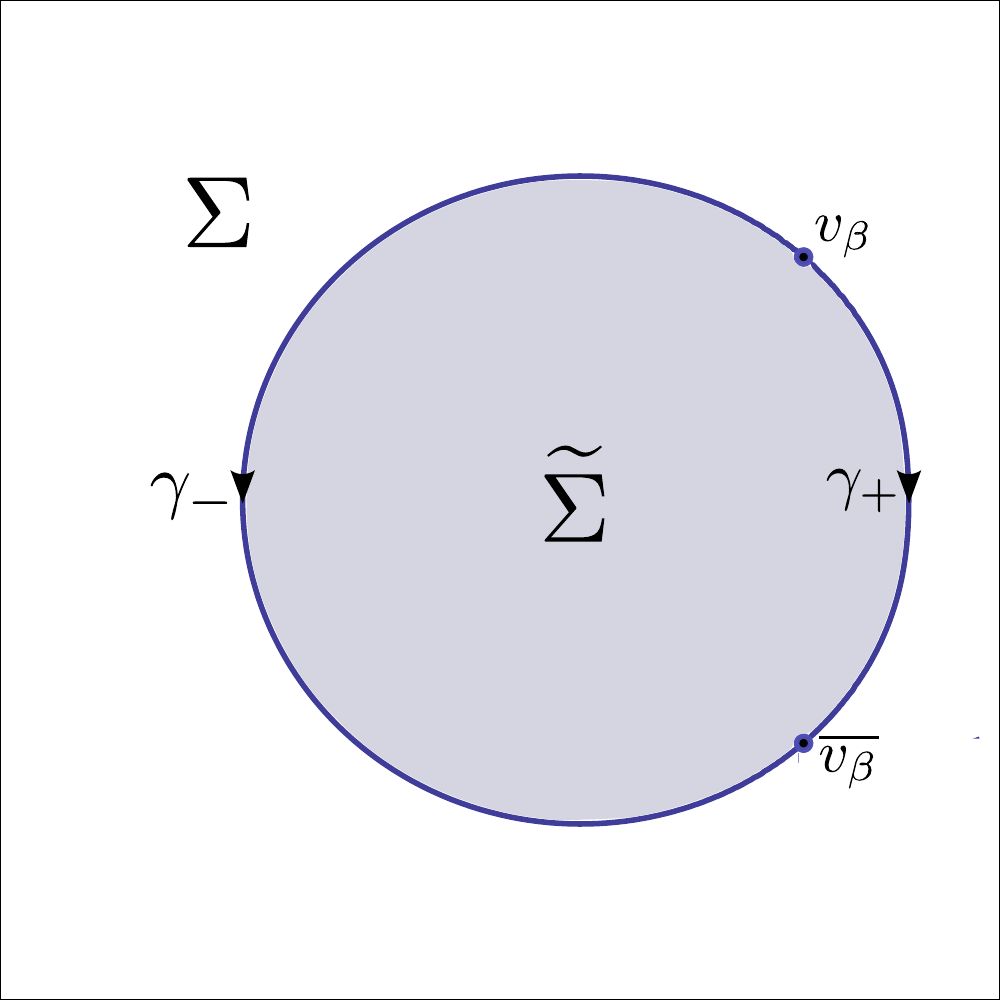}
\caption{Uniformizing plane; See the text.}\label{fig_map} 
\vspace{-1.6cm}
\end{wrapfigure}

We use $f:\widehat\C\to\widehat\C$ as the uniformization mapping of the domain with the branch cut $\C\setminus{\cal B}$. 
This map is 2-to-1 branched covering map and we have defined the inverse map $F$ at \eqref{Fz}. There are regions, $\Sigma=F(\C\setminus{\cal B})$ and $\widetilde\Sigma={\rm deck}\circ F(\C\setminus{\cal B})$, such that $\Sigma$ and $\widetilde\Sigma$ are each conformally equivalent to $\C\setminus{\cal B}$.
We define $\gamma_\pm =[F({\cal B})]_\pm$, the two preimages of ${\cal B}$ under $f$ when approached from $\pm$ sides.
See Figure \ref{fig_map}.

\begin{prop}
\label{Szegofun}  The scalar function given by
\begin{equation}\label{SzegoD}
D(z):=\sqrt{\frac{\rho}{\alpha}}\, F(z)
\end{equation}
satisfies the following properties:
\begin{enumerate}
\item $D(z)$ is analytic and nonzero in $\C\setminus \cal B$ ;
\item $D_{+}(z)D_{-}(z) = z$, $z\in \cal B$;
\item $D_\infty:=\lim_{z\to \infty} D(z)/z \in \R_+$. 
\end{enumerate}
\end{prop}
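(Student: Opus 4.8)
The plan is to verify the three properties of $D(z):=\sqrt{\rho/\alpha}\,F(z)$ directly from the description of $F$ given in Lemma \ref{lem-F} and Lemma \ref{lem-preA}. Property (1) is immediate: by Lemma \ref{lem-F}, $F$ is the analytic continuation of the inverse conformal map to all of $\C$ with branch cut on ${\cal B}$, so it is analytic there; and $F$ is nonzero off ${\cal B}$ because from the explicit formula $F(z)=\frac{1}{2\rho}[z+|\beta|+\sqrt{(z-\beta)(z-\c\beta)}]$ one sees $F(z)=0$ would force $z+|\beta|=-\sqrt{(z-\beta)(z-\c\beta)}$, hence $z^2+2|\beta|z+|\beta|^2=z^2-2\Re(\beta)z+|\beta|^2$, i.e. $z=-\Re\beta/\dots$; more cleanly, $F$ maps $\C\setminus K$ bijectively to $\D^c$ so $F\neq 0$ outside $K$, and on the remaining part of $K\setminus{\cal B}$ one uses that $1/F$ equals $\operatorname{deck}\circ F$ composed appropriately and $f$ has no pole there — in any case $F$ omits the value $0$ since $0$ is the image of $\infty$ under $1/F$ and $F$ never attains $\infty$ off ${\cal B}$. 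I would phrase this last point using $\operatorname{deck}(F(z))=\frac{1}{2\rho}[z+|\beta|-\sqrt{(z-\beta)(z-\c\beta)}]$ and the identity $F(z)\,\operatorname{deck}(F(z))$ being a rational function with no zero on $\C\setminus{\cal B}$.

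For property (3), I would read off from Lemma \ref{lem-preA} (and the normalization $F(z)=z/\rho+{\cal O}(1)$ stated there and again near \eqref{Fz}) that $\lim_{z\to\infty}F(z)/z=1/\rho>0$, so $D_\infty=\sqrt{\rho/\alpha}\cdot(1/\rho)=1/\sqrt{\rho\alpha}\in\R_+$, using $\rho,\alpha>0$ from Lemma \ref{lem-preA}.

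The main step is property (2), the multiplicative jump $D_+(z)D_-(z)=z$ on ${\cal B}$. The key fact is that the two boundary values of $F$ on ${\cal B}$ are exchanged by the deck transformation: approaching ${\cal B}$ from the two sides gives $F_\pm(z)$, and by the construction in Lemma \ref{lem-F} (the two branches in \eqref{Fz} differ exactly by $\operatorname{deck}$) we have $F_-(z)=\operatorname{deck}(F_+(z))$ for $z\in{\cal B}$. Then $F_+(z)F_-(z)=F_+(z)\operatorname{deck}(F_+(z))$, and I would compute this product from \eqref{Fz}:
\begin{equation*}
F_+(z)\,\operatorname{deck}(F_+(z))=\frac{1}{4\rho^2}\big[(z+|\beta|)^2-(z-\beta)(z-\c\beta)\big]=\frac{1}{4\rho^2}\big[2|\beta|z+|\beta|^2-\beta\c\beta+2\Re(\beta)z\big].
\end{equation*}
Since $\beta\c\beta=|\beta|^2$, the constant terms cancel and this equals $\frac{(|\beta|+\Re\beta)}{2\rho^2}z$. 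Now using $|\beta|=\alpha\rho+\kappa/\alpha$ and $\Re\beta=\alpha\rho-\kappa/\alpha$ from \eqref{betapre} (equivalently the formula just below Lemma \ref{lem-preA}), $|\beta|+\Re\beta=2\alpha\rho$, so $F_+F_-=\frac{\alpha\rho}{\rho^2}z=\frac{\alpha}{\rho}z$, whence $D_+(z)D_-(z)=\frac{\rho}{\alpha}\cdot\frac{\alpha}{\rho}z=z$.

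The place where I expect to have to be careful — the main obstacle — is justifying cleanly that $F_-=\operatorname{deck}\circ F_+$ on ${\cal B}$ and that the branch of the square root is consistent, i.e. that the ``$+$'' branch in \eqref{Fz} really is the one continued from $\infty$ and that crossing ${\cal B}$ flips the sign of $\sqrt{(z-\beta)(z-\c\beta)}$; this is exactly the content of Lemma \ref{lem-F} and its proof, so I would invoke it rather than re-prove it, but I would double-check the sign bookkeeping against the chosen branch $[(z-\beta)(z-\c\beta)]^{1/2}\approx z$ at $\infty$. Everything else is a short direct computation.
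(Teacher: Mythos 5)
Your proposal is correct and takes essentially the same route as the paper: property (2) reduces to the product identity $F(z)\cdot\operatorname{deck}(F(z))=(\alpha/\rho)\,z$, which the paper isolates as the relation $f(v)=(\rho/\alpha)\,v\,\operatorname{deck}(v)$ while you obtain the same thing by multiplying the two branches in \eqref{Fz} directly; both ways hinge on the same algebraic input $|\beta|=\alpha\rho+\kappa/\alpha$ and on $F_\pm$ being exchanged by $\operatorname{deck}$ across ${\cal B}$, which is Lemma \ref{lem-F}. Properties (1) and (3) are handled the same way, and your sign/branch bookkeeping matches the normalization $\sqrt{(z-\beta)(z-\overline\beta)}\sim z$ at infinity.
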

\begin{proof} The proof relies on the identity
\begin{equation}\label{prod}
f(v)=(\rho/\alpha) \,v \,{\rm deck}(v).
\end{equation}
This can be shown by
\begin{equation}\begin{split}
0&=(v-\alpha)({\rm deck}(v)-\alpha)+\kappa/\rho=v\,{\rm deck}(v)-\alpha(v+{\rm deck}(v))+\alpha^2+\kappa/\rho
\\&=v\,{\rm deck}(v)-(\alpha/\rho)\,f(v),
\end{split}
\end{equation}
where the last equality comes from $v+{\rm deck}(v)=(f(v)+|\beta|)/\rho$ at \eqref{sum} and $|\beta|=(\rho/\alpha)(\alpha^2+\kappa/\rho)$ that is shown right below \eqref{sum}. 
Using \eqref{prod} and that $F(z)|_+={\rm deck}(F(z))|_-$ on $z\in{\cal B}$, we have
\begin{equation}
D_+(z)D_-(z)=\frac{\rho}{\alpha}\,{\rm deck}(F(z))\cdot F(z)=f(F(z))=z,\quad z\in{\cal B}.
\end{equation}
The last condition is checked by $D_\infty=\lim_{z\to\infty} \sqrt{\rho/\alpha}\,F(z)/z=1/\sqrt{\rho\,\alpha}>0$.
\end{proof}
\begin{prop}
\label{solPsiRHP}
The solution to \eqref{PsiRHP} is given by
\begin{equation}\label{Phipre}
\Phi(z)= \sqrt{\rho F'(z)}\left[\begin{array}{cc} 1 &\displaystyle \frac{(\rho\alpha)^r(\kappa\rho)^{1/2}}{\rho F(z)-\rho\alpha} \\ \displaystyle\frac{-(\rho\alpha)^{-r}(\kappa\rho)^{1/2}}{\rho F(z)-\rho\alpha} &\displaystyle 1\end{array}\right]
\left( \rho F(z)\right)^{r\sigma_3}.
\end{equation}
We choose the principal branch of the square root, i.e. $\sqrt{\rho F'(z)}= 1+{\cal O}(1/z)$ as $z\to\infty$.  
\end{prop}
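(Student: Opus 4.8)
Since \eqref{Phipre} is an explicit formula, the proof is a verification, and the cleanest bookkeeping is obtained by pulling everything back through the uniformization $f$: in the $v$-plane the off-diagonal entries become \emph{constant}. Writing $v=F(z)$, one has $\rho F(z)-\rho\alpha=\rho(v-\alpha)$, and differentiating $F\circ f={\rm id}$ together with $f'(v)=\rho+\kappa/(v-\alpha)^{2}=\rho(v-v_\beta)(v-\overline{v_\beta})/(v-\alpha)^{2}$ gives $\sqrt{\rho F'(z)}=(v-\alpha)/\sqrt{(v-v_\beta)(v-\overline{v_\beta})}$; hence
\[
\Phi(f(v))=\widehat\Phi(v):=\frac{1}{\sqrt{(v-v_\beta)(v-\overline{v_\beta})}}\,\pmtwo{v-\alpha}{(\rho\alpha)^{r}\sqrt{\kappa/\rho}}{-(\rho\alpha)^{-r}\sqrt{\kappa/\rho}}{v-\alpha}(\rho v)^{r\sigma_3}.
\]
From this form the analyticity and normalization are read off directly: $\det\widehat\Phi(v)$ equals the ratio of $(v-\alpha)^{2}+\kappa/\rho$ to $(v-v_\beta)(v-\overline{v_\beta})$, which is $\equiv1$, so $\det\Phi\equiv1$; $\sqrt{\rho F'}$ is analytic and nonzero on the simply connected $\C\setminus{\cal B}$ apart from a $(z\mp\beta)^{-1/4}$ growth at the branch points $\beta,\overline\beta$; and $\rho F(z)-\rho\alpha$ has no zero there, since $F(z)=\alpha$ would force $z=f(\alpha)=\infty$ while $F(\infty)=\infty$, so $\Phi$ is analytic and invertible on $\widehat\C\setminus({\cal B}\cup\{\beta,\overline\beta\})$, in particular at $0\in{\rm Int}({\cal B})$. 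As $z\to\infty$, $v\to\infty$, $\sqrt{\rho F'}\to1$, and $\rho F(z)=z+\kappa/\alpha+{\cal O}(1/z)$ by \eqref{Fz}, which gives $\Phi(z)=(I+{\cal O}(1/z))z^{r\sigma_3}$.

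The substance is the jump on ${\cal B}$. There Lemma \ref{lem-F} gives $F_+(z)=\deck(F_-(z))$, so setting $v:=F_+(z)$ we have $\Phi_+(z)=\pm\widehat\Phi(v)$ and $\Phi_-(z)=\pm\widehat\Phi(\deck(v))$ (with signs locally constant on ${\cal B}$), while $z=f(v)=(\rho/\alpha)\,v\,\deck(v)$ by \eqref{prod}. Thus the jump condition in \eqref{PsiRHP} is equivalent to the algebraic identity
\[
\widehat\Phi(\deck(v))^{-1}\widehat\Phi(v)=\pm\pmtwo{0}{f(v)^{-r}}{-f(v)^{r}}{0}.
\]
To prove it, put $w=v-\alpha$, $\gamma=\sqrt{\kappa/\rho}$, so that $\deck(v)-\alpha=-\gamma^{2}/w$ and $\deck'(v)=\gamma^{2}/w^{2}$; the heart of the matter is the collapse
\[
\pmtwo{-\gamma^{2}/w}{\gamma}{-\gamma}{-\gamma^{2}/w}^{-1}\pmtwo{w}{\gamma}{-\gamma}{w}=\frac{w}{\gamma}\,\pmtwo{0}{-1}{1}{0},
\]
the square-root prefactors then contribute exactly $\sqrt{\deck'(v)}=\pm\gamma/w$, which cancels the $w/\gamma$, and the constant $(\rho\alpha)^{\pm r}$ factors together with the dressings $(\rho v)^{r\sigma_3}$, $(\rho\,\deck(v))^{-r\sigma_3}$ combine, via $(\rho v)(\rho\,\deck(v))=\rho\alpha f(v)$ (again \eqref{prod}), into the entries $f(v)^{\pm r}$. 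The overall sign is then pinned at the unique point ${\cal B}\cap\R_{-}$ using the sign of $y$ on $\R$ recorded in Figure \ref{fig-yR} (as in Lemma \ref{+measure}), which yields exactly the jump of \eqref{PsiRHP}.

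Finally, uniqueness: if $\widetilde\Phi$ is another solution, then $\Phi\widetilde\Phi^{-1}$ has no jump across ${\cal B}$, has removable singularities at $\beta,\overline\beta$ (each factor grows at most like $(z\mp\beta)^{-1/4}$ and both determinants equal $1$), and $\to I$ at $\infty$, hence $\Phi\widetilde\Phi^{-1}\equiv I$. The main obstacle I anticipate is not the algebra — routine once $f'(v)$ is put in the factored form above and \eqref{prod} is in hand — but the branch/sign bookkeeping: simultaneously tracking the branch of $\sqrt{\rho F'}$ across the cut ${\cal B}$ and between the two sheets $\Sigma$ and $\widetilde\Sigma=\deck(\Sigma)$, and fixing the single global sign so that the identity above produces $\pmtwo{0}{z^{-r}}{-z^{r}}{0}$ rather than its negative.
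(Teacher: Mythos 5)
Your proof is correct in substance, but it takes a genuinely different route from the paper: you \emph{verify} the explicit formula, whereas the paper gives a self-contained \emph{derivation} via the spinor construction (conjugating by the Szeg\H{o}-type scalar $D(z)$ of Proposition~\ref{Szegofun}, pulling back to an $\mathbf H(v)$ on the full $v$-sphere, and reducing to a scalar RHP across $\gamma_-$). In fact the paper remarks in its own proof that a direct check is possible and opts for the derivation instead; you are taking the road not taken. Both approaches turn on the uniformization and on the product identity $f(v)=(\rho/\alpha)\,v\,\deck(v)$ from \eqref{prod}, and your algebra — the factorization $f'(v)=\rho(v-v_\beta)(v-\overline{v_\beta})/(v-\alpha)^2$, the computation of $\det\widehat\Phi\equiv 1$, the $2\times 2$ collapse to $\tfrac{w}{\gamma}\pmtwo{0}{-1}{1}{0}$, and the cancellation against $\sqrt{\deck'(v)}$ together with $(\rho v)(\rho\,\deck(v))=\rho\alpha\, z$ — is all correct. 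The paper's derivation buys one thing your verification leaves as a loose end: the overall $\pm$ sign comes out automatically there, whereas in your approach it is a genuine piece of branch bookkeeping (you need $\sqrt{\rho F'_+}/\sqrt{\rho F'_-}=-\gamma/(F_--\alpha)$, not $+\gamma/(F_--\alpha)$). You correctly observe that the sign is locally constant along ${\cal B}\setminus\{\beta,\overline\beta\}$ and hence can be pinned at a single point, but your appeal to ``the sign of $y$ on $\R$'' is not obviously the right lever: the sign of $y$ records the branch of $\sqrt{(z-\beta)(z-\overline\beta)}$, which relates to $\sqrt{\rho F'}$ only through a second square root, so that step would need to be spelled out. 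Two smaller remarks. First, ``$0\in{\rm Int}({\cal B})$'' is a slip: in the pre-critical case ${\cal B}$ is an open arc from $\beta$ to $\overline\beta$, so $\C\setminus{\cal B}$ is connected and has no ``interior''; the point you need — that $\Phi$ is analytic at $0$ — is of course still true. Second, your uniqueness argument tacitly assumes any solution of \eqref{PsiRHP} has at most $(z\mp\beta)^{-1/4}$ growth at the branch points; the RHP as stated in \eqref{PsiRHP} imposes no endpoint condition, so uniqueness in this class is an extra hypothesis. That does not affect the proposition, which only asserts that the given formula \emph{is} a solution.
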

\begin{proof} Though the solution may be checked by a direct calculation, here we present a self-contained derivation (which is essentially the spinor construction in \cite{BertoMo}).

For the solution $\Phi$, there exists an analytic matrix valued function ${\bf H}:\Sigma\to {\rm SL}(2,\widehat\C)$ such that
\begin{equation}
{\bf H}(F(z))=(D_\infty)^{r\sigma_3}\Phi(z)\,D(z)^{-r\sigma_3}.
\end{equation}
This satisfies the RHP \eqref{PsiRHP} of $\Phi$ for $r=0$, i.e.
\begin{equation}\label{RHPH}
{\bf H}(F(z))_+={\bf H}(F(z))_-\left[\begin{array}{cc}0 & 1 \\-1 & 0\end{array}\right];\quad {\bf H}(F(z))=I+{\cal O}(1/z), ~~z\to\infty.
\end{equation}
We may freely extend ${\bf H}$ on the whole complex plane by putting
\begin{equation}\label{Hv}
{\bf H}(v):={\bf H}({\rm deck}(v)) \left[\begin{array}{cc}0 & 1 \\-1 & 0\end{array}\right],\quad v\in\widetilde\Sigma.
\end{equation}
This makes ${\bf H}$ to be continuous on $\gamma_+$, by the jump condition in \eqref{RHPH}.   On $v\in\gamma_-$ we evaluate the jump as we approach from $\Sigma$ and $\widetilde\Sigma$ by
\begin{equation}
{\bf H}(v)\big|_\Sigma={\bf H}({\rm deck}(v)) \,\left[\begin{array}{cc}0 & -1 \\1 & 0\end{array}\right]=-{\bf H}(v)\big|_{\widetilde\Sigma},\quad v\in\gamma_-.
\end{equation}
The last equality uses \eqref{Hv} with the continuity of ${\bf H}({\rm deck}(v))$ across $\gamma_-$.
Then RHP for ${\bf H}$ is given by
\begin{equation}
\begin{cases}
&{\bf H}(v)\big|_\Sigma=-{\bf H}(v)\big|_{\widetilde\Sigma}  \quad v\in\gamma_- \ ;
\vspace{0.1cm}
\\ &\displaystyle {\bf H}(v)=I+{\cal O}\left(\frac1v\right);
\vspace{0.1cm}
\\ &\displaystyle {\bf H}(v)=\left[\begin{array}{cc}0 & 1 \\-1 & 0\end{array}\right]+{\cal O}\left(v-\alpha\right)\quad v\to\alpha\ .
\end{cases}
\end{equation}
This gives a scalar RHP for each entry. It is sufficient to find the diagonal entries of ${\bf H}$ as the others are then given by \eqref{Hv}.   We obtain
\begin{equation}
{\bf H}(v)=\frac{\rho^{1/2}}{\sqrt{f'(v)}}\left[\begin{array}{cc} 1 &\displaystyle \frac1{(\kappa\rho)^{1/2}}\frac{\kappa}{v-\alpha} \\ \displaystyle\frac{1}{(\kappa\rho)^{1/2}}\frac{-\kappa}{v-\alpha} &\displaystyle 1\end{array}\right],
\end{equation}
where we take the branch cut on $\gamma_-$ such that 
\begin{equation}
\sqrt{f'(v)}=\frac{\rho^{1/2}}{v-\alpha}\sqrt{(v-v_\beta)(v-\c{v_\beta})} \approx \begin{cases}  \rho^{1/2}, \quad  &v\to\infty; \\ \kappa^{1/2}/(v-\alpha),\quad &v\to\alpha.
\end{cases}
\end{equation}
Using $f'(F(z))\,F'(z)=1$ and the formula for $D(z)$ at \eqref{SzegoD} we obtain 
\begin{equation}
\Phi(z)=(\rho\alpha)^{\frac{r}{2}\sigma_3} \sqrt{\rho F'(z)}\left[\begin{array}{cc} 1 &\displaystyle \frac1{(\kappa\rho)^{1/2}}\frac{\kappa}{F(z)-\alpha} \\ \displaystyle\frac{1}{(\kappa\rho)^{1/2}}\frac{-\kappa}{F(z)-\alpha} &\displaystyle 1\end{array}\right]
\left( \sqrt{\frac{\rho}{\alpha}}F(z)\right)^{r\sigma_3}.
\end{equation}
This proves the proposition.
\end{proof}
\paragraph{Local Parametrices:}
The procedure for constructing the local parametrices near the points $\beta, \c \beta$ is standard and the reader may refer to \cite{DKMVZ,deift_book}.   Here we only consider a neighborhood around $\c\beta$ and skip the similar construction around $\beta$.

Let $\D_{\c\beta}$ be a small disk centered at $\c\beta$ with a finite radius.
We define the local coordinate $\zeta$ by
\begin{equation}\label{Airy-coor}
\frac{4}{3}\zeta(z)^{3/2} := N\left(\phi (z)-\phi(\c\beta)\right)\ ,\quad z\in {\mathbb D}_{\bar\beta},
\end{equation}
such that $\zeta$ maps $\Gamma_{b\c\beta}\cap\D_{\c\beta}$ into $\R_+$.  Then $\zeta$ maps $\Gamma_\pm$ respectively into the rays $[0,e^{\pm 2\i\pi/3}\infty)$. 
This defines a conformal map $\zeta(z)$ from the fixed disk to a domain in the $\zeta$-plane that expands as $\mathcal O(N^{2/3})$.

We want to find the function ${\cal P}_{\c\beta}(z)$ in $\D_{\c\beta}$ (and, similarly, ${\cal P}_{\beta}(z)$ in $\D_{\beta}$) such that 
\begin{equation}
\Phi(z)z^{-(r/2)\sigma_3}{\cal P}_{\c\beta}(z) z^{(r/2)\sigma_3}\begin{cases} \text{satisfies the jump of $A$ at \eqref{RHPA};}
\\\text{converges to $\Phi(z)$ on the boundary of $\D_{\c\beta}$ as $N\to\infty$}.
\end{cases}
\end{equation}
This is satisfied by the following Riemann-Hilbert problem:
\begin{align}
&\big[\mathcal P_{\c\beta}(z)\big]_+= \pmtwo{0}{-1}{1}{0} \big[\mathcal P_{\c\beta}(z)\big]_- \pmtwo{0}{1}{-1}{0} &\mbox{on } \cal B\cap\D_{\c\beta}\ ,\\
&\big[\mathcal P_{\c\beta}(z)\big]_+= \big[\mathcal P_{\c\beta}(z)\big]_- \pmtwo{1}{0}{e^{\frac43\zeta(z)^{3/2}}}{1}  &\mbox{on } \Gamma_{\pm}\cap\D_{\c\beta}\ ,\\
&\big[\mathcal P_{\c\beta}(z)\big]_+= \big[\mathcal P_{\c\beta}(z)\big]_- \pmtwo{1}{e^{-\frac43\zeta(z)^{3/2}}}{0}{1} &\mbox{on } \Gamma_{b\c\beta}\cap\D_{\c\beta}\ ,\\
&\mathcal P_{\c\beta}(z) \sim  I+\mathcal O (\zeta(z)^{-3/2}) &\mbox { as }z \to \D_{\c\beta}\ ,\label{Airy-boundary}\\
&\mathcal P_{\c\beta}(z) \sim \mathcal O (\zeta(z)^{-1/4}) &\mbox { as } z \to \c\beta\ . 
\end{align}
The last condition is necessary later for ``the error matrix" to be analytic at $\c\beta$.

The solution to the above RHP is given in terms of the standard Airy parametrix $\mathcal A(\zeta)$ by
\begin{equation}\label{calPpre}
\mathcal P_{\c\beta}(z):=
{{\rm e}^{\frac {i\pi \sigma_3}4 }
\frac 1{\sqrt {2}} \left[
\begin{array}{cc}
1 &-1\\
1&1
\end{array}
\right] \zeta(z)^{\frac {\sigma_3} 4} 
}
\mathcal A(\zeta(z))\ ,
\end{equation}
where
\begin{equation}
\mathcal A(\zeta):= \sqrt{2\pi} {\rm e}^{-\frac {i \pi}4}
 \left\{
\begin{array}{lc}\pmtwo{y_0(\zeta)}{-y_2(\zeta)}{
y_0'(\zeta)}{- y_2'(\zeta)}{\rm e}^{\frac 23\zeta^{\frac 32}\sigma_3} & \arg \zeta\in (0,2\pi/3);\\[15pt]
\pmtwo{-y_1(\zeta)}{-y_2(\zeta)}{-y_1'(\zeta)}{- y_2'(\zeta)}{\rm e}^{\frac 23\zeta^{\frac 32}\sigma_3} & \arg \zeta\in (2\pi/3,\pi);\\[15pt]
\pmtwo{-y_2(\zeta)}{y_1(\zeta)}{-y_2'(\zeta)}{  y_1'(\zeta)}{\rm e}^{\frac 23\zeta^{\frac 32}\sigma_3} & \arg \zeta\in (\pi,5\pi/3);\\[15pt]
\pmtwo{y_0(\zeta) }{ y_1(\zeta)}{y_0'(\zeta)}{ y_1'(\zeta)}{\rm e}^{\frac 23\zeta^{\frac 32}\sigma_3} & \arg \zeta\in (5\pi/3,2\pi).
\end{array}
\right.
\end{equation}
We have used the standard Airy function to define:
\begin{equation}
 y_j(\zeta):= \omega^j {\rm Ai}(\omega^j \zeta),\ \ \ j=0,1,2,\ \ \omega = {\rm e}^{2i\pi/3}.\end{equation}

Using $\Phi$ \eqref{Phipre} and ${\cal P}_{\c\beta}$ \eqref{calPpre} we define $A^\infty$ which we show (in the following  {\bf Error analysis}) to be the leading approximation of the matrix $A(z)$ defined at \eqref{Adef}.
\begin{equation}\label{Ainf}
A^\infty(z):= \left\{ 
\begin{array}{lr}
\Phi(z),  &\!\!\!\!\!\!\!\!\!\!\!\!\!\!\!\!\!\!\!\!\!\!\!\!\!\!\!\!\!\!\mbox{for $z$ outside of the disks $\mathbb D_{\beta} \cup \mathbb D_{\c \beta}$};\\
\Phi(z) z^{-(r/2)\sigma_3}{\cal P}_{\beta}(z) z^{(r/2)\sigma_3}, &\mbox{ inside $\mathbb D_{\beta}$};\\
\Phi(z) z^{-(r/2)\sigma_3}{\cal P}_{\c\beta}(z) z^{(r/2)\sigma_3}, &\mbox{ inside $\mathbb D_{\c \beta}$}.
\end{array}
\right.
\end{equation}

\paragraph{Error analysis:} 
We define the error matrix ${\cal E}$ by
\begin{equation}\label{calEpre}
{\cal E}(z):=A^\infty(z) A^{-1}(z).
\end{equation}
As a consequence of the jumps of $A$ and $A^\infty$, $\mathcal E$ has jumps only on the boundaries of the disks, $\mathbb D_{\beta}$ and $\D_{\c\beta}$ and on the parts of $\Gamma_\pm$ and $\Gamma\setminus{\cal B}$ that lie outside of those disks.   

The jumps of ${\cal E}$ on the boundary of the lens, $\Gamma_\pm\setminus ({\mathbb D}_\beta\cup {\mathbb D}_{\overline\beta})$, is given by
\begin{align}\nonumber
[{\cal E}(z)]_+[({\cal E})^{-1}(z)]_-&=[A^\infty(z)]_+ [A^{-1}(z)]_+ [A(z)]_- [(A^\infty)^{-1}(z)]_-=\Phi(z) \left(I+{\cal O}( e^{-c N})\right)\Phi^{-1}(z)
\\&=I+{\cal O}( e^{-c N})\ , \qquad \text{for some $c>0$. } \end{align}
The jumps of ${\cal E}$ on the boundary of the disc, $\partial {\mathbb D_{\c\beta}} $, is given by (plus side is {\em inside} the disc)
\begin{equation}
[{\cal E}(z)]_+[({\cal E})^{-1}(z)]_-= \Phi(z) z^{-(r/2)\sigma_3}{\cal P}_{\c\beta}(z) z^{(r/2)\sigma_3} \Phi^{-1}(z)=I+{\cal O}(N^{-1})\ ,
\end{equation}
using the boundary behavior of \eqref{Airy-boundary} and \eqref{Airy-coor}. 
So ${\cal E}$ satisfies a small-norm RHP and we conclude that 
\begin{equation}\label{Strong}
A(z)=\left(I+{\cal O}\left(\frac 1N\right)\right)A^\infty(z)\ ,
\end{equation}
uniformly in (a compact set of) the (extended) complex plane.

\paragraph{Strong asymptotics:}
The asymptotic for the orthogonal polynomial can now be read off from our approximation \eqref{Ainf} by tracing back the transformations from $Y(z)$ (\ref{Ymatrix}) to $A(z)$ (\ref{Adef}).
Though the computation is straightforward we show a few steps for the interested readers.
\paragraph{Away from ${\cal B}$:}
The asymptotic for the orthogonal polynomial can now be read off from our approximation using that $
P_{n,N}(z)=[Y(z)]_{11}= A_{11} e^{tN g(z)}$ outside of  $\c{\Omega_\pm}$.
Using the asymptotic behavior \eqref{Strong}, we get the strong asymptotics in $\C\setminus(\D_\beta\cup\D_{\c\beta}\cup\c{\Omega_+}\cup\c{\Omega_-})$ as follows.
\begin{align}
P_{n,N}(z) &=\left[\left(I+{\cal O}(1/N)\right)A^\infty \right]_{11}e^{tN g(z)}
\\&=\left(1+{\cal O}(1/N)\right)\sqrt{\rho F'(z)}(\rho F(z))^r e^{tN g(z)}.
\end{align}
\paragraph{Near ${\cal B}$ but away from $\beta$ and $\c\beta$:}
Tracing back the transformations from $Y(z)$ (\ref{Ymatrix}) to $A(z)$ (\ref{Adef}) we find, respectively for $z\in\overline{\Omega_\pm}\setminus(\D_{\beta}\cup\D_{\c\beta})$, 
\begin{align}
P_n(z)  &= \left[ 
{\rm e}^{tN \frac {\ell}{2} \sigma_3} A(z)\,{\rm e}^{tN(g(z) - \frac \ell 2)\sigma_3}  \pmtwo{1}{0}{\pm 1/w_{n,N} }{1} 
\right]_{11} 
\\&=\left[ 
{\rm e}^{tN \frac {\ell}{2} \sigma_3} A(z)\,{\rm e}^{-\frac{N}{2}\phi(z)\sigma_3}  \pmtwo{1}{0}{\pm z^r }{1} {\rm e}^{\frac{N}{2}V(z)\sigma_3}
\right]_{11}.
\end{align}
Recalling that $A(z) = (I + \mathcal O(N^{-1}) ) A^\infty(z)$ and $A^\infty=\Phi$ in the region of our interest, we find respectively for $z\in\Omega_\pm\setminus(\D_{\beta}\cup\D_{\c\beta})$, 
\begin{align}\nonumber
P_n(z) &= {\rm e}^{\frac N2 (V(z) +t \ell) } 
\left( {\rm e}^{  -\frac {N} 2 \phi(z)  }  [\Phi(z)]_{11} (1 + \mathcal O(N^{-1}) \pm   {\rm e}^{  \frac {N} 2 \phi(z)  } z^{r}  [\Phi(z)]_{12}(1 + \mathcal O(N^{-1})  
\right)
\\
&= {\rm e}^{tN g(z) } 
\left(  [\Phi(z)]_{11}  \pm   {\rm e}^{  N \phi(z)  } z^{r}  [\Phi(z)]_{12}+ \mathcal O(N^{-1})\right)
\\
&= {\rm e}^{tN g(z) } \sqrt{\rho F'(z)}
\left(  (\rho F(z))^r  \pm   {\rm e}^{  N \phi(z)  } \left(\frac{\alpha\,z}{F(z)}\right)^{r} \frac{\sqrt{\kappa\rho}}{\rho F(z)-\rho\alpha}+ \mathcal O(N^{-1})\right). 
\end{align}
The leading term is analytic on ${\cal B}$; To check this, it is most convenient to use the expression in the first line using $[{\rm e}^{N\phi/2}]_+=[{\rm e}^{-N\phi/2}]_-$ on ${\cal B}$ and the jump relation, $\left([\Phi(z)]_{11},[\Phi(z)]_{12}\right)_+=\left(-z^r[\Phi(z)]_{12},z^{-r}[\Phi(z)]_{11}\right)_-$.

\paragraph{Near $\c\beta$ (and $\beta$):}
Inside $\D_{\c\beta}$ one can similarly obtain the strong asymptotics \eqref{PAiry} of $P_{n,N}$ from $A^\infty$ \eqref{Ainf} as in the previous calculations.  We leave the (straightforward) calculation to the interested reader, mentioning only (1) that the relation $\c v_\beta^2=\alpha\c\beta/\rho$ (which is obtained from \eqref{betapre}) can be useful and
(2) that we defined $C_{\c\beta}$ at \eqref{Cbeta} such that 
\begin{equation}
y(z)=2 C_{\c\beta} (z-\c\beta)^{1/2}\left(1+{\cal O}(z-\c\beta)\right).
\end{equation}
In \eqref{PAiry} of Theorem \ref{thm1} we express the asymptotics in the scaled coordinate $\zeta(z)\sim (C_{\c\beta}N)^{2/3}(z-\c\beta)$ taking only the leading linear term in the Taylor expansion of $\zeta(z)$ around $\c\beta$.

\section{The post-critical situation: \texorpdfstring{$t>t_c$}{ttc1}}\label{sec-post}

As $t$ increases, $\beta$ approaches $a$ as $\beta\sim a+a c/t+{\cal O}(1/t^2)$ and $b$ escapes to $\infty$. 
We define $A$ by the same equations \eqref{Adef} with the regions $\Omega_\pm$ as illustrated at the right side in Figure \ref{default}.
The RHP for $A$ is exactly the same as specified in eqs. \ref{RHPA} with the proviso that now $\Gamma \equiv {\cal B}$ and hence the fourth condition in \eqref{RHPA} is irrelevant.

In this case, the solution $\Phi$ to the RHP at \eqref{PsiRHP} is very simple.
\begin{equation}\label{Phisol}
\Phi(z)=\left\{\begin{array}{cc} z^{r\sigma_3} \qquad &z\in {\rm Ext}({\cal B})\ ,
\\ \pmtwo{0}{1}{-1}{0} &z\in {\rm Int}({\cal B})\ .
\end{array}\right.
\end{equation}

\paragraph{Local parametrix at the double point:}
We define the local coordinate $\zeta(z)$ near $\beta$ such that 
\begin{equation}
\zeta^2(z)=2N\begin{cases} \phi(z), & z\in {\mathbb D}_\beta\cap{\rm Int}({\cal B}),
\\-\phi(z), &z\in {\mathbb D}_\beta\cap{\rm Ext}({\cal B}),
\end{cases}
\end{equation}
where ${\mathbb D}_\beta$ is a sufficiently small but fixed disc around $z=\beta$ such that $\zeta(z)$ is one-to-one.
Under the mapping $\zeta$ the contour ${\cal B}$ maps to the straight rays $[0,e^{i\pi/4}\infty)\cup[0,e^{-i\pi/4}\infty)$,
and $\Gamma_+$ maps to the imaginary axis.   See Figure \ref{WeberFig} for the images of the contours under the conformal mapping $\zeta$.

We get the following expansion near $z=\beta$.


\begin{equation}\label{zbetazeta}
\frac{\zeta(z)}{N^{1/2}}=\frac{1}{\gamma_1}(z-\beta)\left(1+{\cal O}(z-\beta)\right),\quad \gamma_1:=\sqrt{\frac{\beta(\beta-a)}{a(b-\beta)}}.
\end{equation}

Inside ${\mathbb D}_\beta$ we want to find ${\cal P}$ such that $\Phi(z)\, z^{-\frac{r}{2}\sigma_3}{\cal P}(z)z^{\frac{r}{2}\sigma_3}$ satisfies the jump conditions of the RHP \eqref{RHPA} for $A$; This leads to
\begin{align}
& \left[{\cal P}(z)\right]_+=\left[{\cal P}(z)\right]_-\left[\begin{array}{cc}1&0\\{\rm e}^{\zeta(z)^2/2}&1\end{array}\right], &z\in\Gamma_+\cap\D_\beta ,
\\
& \left[{\cal P}(z)\right]_+=\left[\begin{array}{cc}0&-1\\1&0\end{array}\right]\left[{\cal P}(z)\right]_-\left[\begin{array}{cc}0&1\\-1&0\end{array}\right], &z\in{\cal B}\cap\D_\beta ,
\\
& {\cal P}(z) = I + \mathcal O\left(1/N\right) , &z\in\partial\D_\beta \ .
\end{align}

\begin{figure}[ht]
\begin{center}
\includegraphics[width=0.2\textwidth]{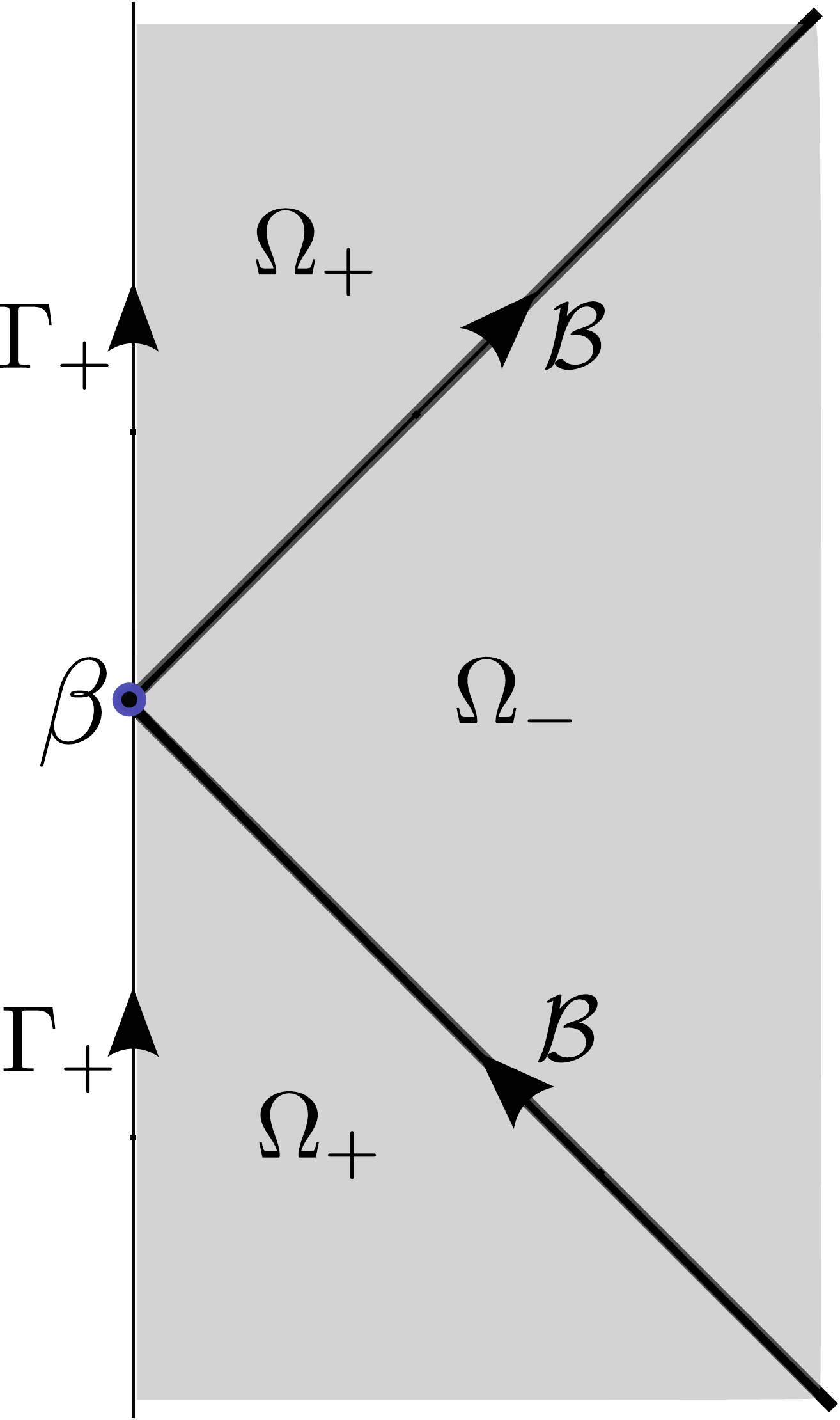}\qquad
\includegraphics[width=0.33\textwidth]{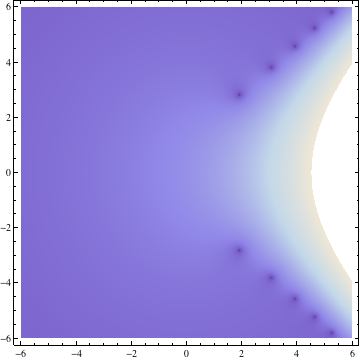}
\caption{(Left) Contour under $\zeta$-mapping; (Right) Plot of $\log|{\cal F}(\zeta)|$ s.t. the brightness increases with the value.  The dark dots are where the zeros of ${\cal F}$ are.  (${\cal F}$ is defined at \eqref{calF}.)}
\label{WeberFig}
\end{center}
 \end{figure}

A solution can be written by
\begin{equation}\label{calPpost}
{\cal P}(z):=\left\{\begin{array}{lr} H_0(z) {\bf F}(\zeta(z))\ ,\ \ \ &z\in {\rm Ext}({\cal B})\cap\D_\beta,\\
\left[\begin{array}{cc}0&-1\\1&0\end{array}\right]H_0(z) {\bf F}(\zeta(z)) \left[\begin{array}{cc}0&1\\-1&0\end{array}\right]  , \ \ \ &z\in {\rm Int}({\cal B})\cap\D_\beta,
\end{array}\right.
\end{equation}
where 
\begin{equation}\label{F}
{\bf F}(\zeta):= \left[
\begin{array}{cc}
1 & \displaystyle\frac{-1}{2i\pi} \int_{-i\infty}^{i\infty} \frac{{\rm e}^{s^2/2}}{(s-\zeta)}\, d s\\
0 &1
\end{array}
\right]
=\left(I+\frac{1}{\sqrt{2\pi}}  \frac{1}{\zeta}\left[\begin{array}{cc}0&1\\0&0\end{array}\right]+{\cal O}\left(\zeta^{-3}\right)\right) \ .
\end{equation}
For any choice of $H_{0}$ that is holomorphic and unimodular in $\D_\beta$, ${\cal P}$ defined by \eqref{calPpost} satisfies the jump conditions in $\D_\beta$.
We define the matrix $H_0$ by
\begin{equation}\label{H0}
H_0(z):=\Xi(z)\left(I-\frac{1}{\sqrt{2\pi}}  \frac{1}{\zeta(z)}\left[\begin{array}{cc}0&1\\0&0\end{array}\right]\right),
\quad 
\Xi(z):=I+\frac{1}{\sqrt{2\pi N}}  \frac{\gamma_1 \beta^{r} z^{-r}}{z-\beta}\left[\begin{array}{cc}0&1\\0&0\end{array}\right],
\end{equation}
such that it satisfies, for $z\in\partial\D_\beta$ as $N\to\infty$,
\begin{equation}\label{whyH0}
H_0(z){\bf F}(\zeta(z))\,\Xi(z)^{-1}=I+{\cal O}\left(\zeta(z)^{-3}\right).
\end{equation}
The logic of how we choose $\Xi(z)$ is as follows:
 we require that i) $\Xi (z)$ is meromorphic in the extended complex plane with poles exactly at $\beta$ and $\infty$; ii) the pole singularity of $\Xi(z)$ at $\beta$ matches the pole singularity of 
 \begin{equation}
I+\frac{1}{\sqrt{2\pi}}  \frac{1}{\zeta(z)}\left[\begin{array}{cc}0&1\\0&0\end{array}\right];
\end{equation}
and iii) $\Phi(z) z^{-(r/2)\sigma_3}\cdot\Xi (z)\cdot z^{(r/2)\sigma_3}$ has the same growth behavior as that of $A(z)$ as $z\to\infty$. 

This construction is based on the algorithm called ``partial Schlesinger transform" introduced in \cite{Colonization} to compute the higher order corrections to the Riemann-Hilbert asymptotic analysis.  
In the critical case, we repeat a similar construction. 

Now we define the strong asymptotics of $A(z)$ by
\begin{align}\nonumber
A^\infty(z)&:=\begin{cases} 
\displaystyle\Phi(z) \, z^{-\frac{r}{2}\sigma_3}\left(I+\frac{1}{\sqrt{2\pi N}}  \frac{\gamma_1\beta^r z^{-r}}{z-\beta}\left[\begin{array}{cc}0&1\\0&0\end{array}\right]\right)z^{\frac{r}{2}\sigma_3}
\vspace{0.1cm}
\\
\displaystyle\Phi(z) \left[\begin{array}{cc}0 & -z^{-r} \\z^r & 0\end{array}\right] z^{-\frac{r}{2}\sigma_3}\left(I+\frac{1}{\sqrt{2\pi N}}  \frac{\gamma_1\beta^r z^{-r}}{z-\beta}\left[\begin{array}{cc}0&1\\0&0\end{array}\right]\right)z^{\frac{r}{2}\sigma_3}
\left[\begin{array}{cc}0 & z^{-r} \\-z^r & 0\end{array}\right]
\vspace{0.1cm}
\\\displaystyle \Phi(z) \,z^{-\frac{r}{2}\sigma_3}{\cal P}(z)\,z^{\frac{r}{2}\sigma_3}
\end{cases}
\\
&=\begin{cases} 
\displaystyle\left(I +\frac{1}{\sqrt{2\pi N}}  \frac{\gamma_1\beta^r }{z-\beta}\left[\begin{array}{cc}0&1\\0&0\end{array}\right]\right) \left[\begin{array}{cc}z^r & 0 \\0 & z^{-r}\end{array}\right], &z\in {\rm Ext}({\cal B})\setminus\D_\beta,
\vspace{0.1cm}
\\
\displaystyle \left( I+\frac{1}{\sqrt{2\pi N}}  \frac{\gamma_1 \beta^{r}}{z-\beta}\left[\begin{array}{cc}0&1\\0&0\end{array}\right]\right)\left[\begin{array}{cc}0 & 1 \\-1 & 0\end{array}\right]
, &z\in {\rm Int}({\cal B})\setminus\D_\beta,
\vspace{0.1cm}
\\\displaystyle \Phi(z) \,z^{-\frac{r}{2}\sigma_3}{\cal P}(z)\,z^{\frac{r}{2}\sigma_3}, &z\in\D_\beta.
\end{cases}\label{postA}
\end{align}
\begin{remark}\label{whynot}
Note that, unlike the pre-critical case, the outer parametrix is modified from $\Phi$ by a left factor.  As we will see soon, if we used the unmodified outer parametrix $\Phi$ in the definition of $A^\infty$ we would have gotten the same strong asymptotics for $P_{n,N}$ in ${\rm Ext}({\cal B})$ with a larger error term, namely $\mathcal O(1/\sqrt{N})$.   This error term turns out to be too large for the analysis in ${\rm Int}({\cal B})$ because the leading  behavior of $P_{n,N}$ there is only of order $\mathcal O(1/\sqrt{N})$.
\end{remark}

\paragraph{Error analysis:}
We define the error matrix by
\begin{equation}\label{calEpost}
{\cal E}(z):=A^\infty(z) A^{-1}(z)\ .
\end{equation}
One can show that the jump of ${\cal E}$ is bounded such that the solution satisfies ${\cal E}(z)=I+{\cal O}(N^{-3/2})$.
Let us perform the computation on the contour, $\partial\D_\beta\cap{\rm Ext}({\cal B})$.  
Below, $+$ side is to the side of $\D_\beta$. 
\begin{align}
{\cal E}(z)_+({\cal E}(z)_-)^{-1}&=A^\infty(z)_+(A^\infty(z)_-)^{-1}
\nonumber
\\&=\Phi(z)\,z^{-\frac r2\sigma_3}{\cal P}(z)\,z^{\frac r2\sigma_3}
z^{-\frac r2\sigma_3} 
\left(I-\frac{1}{\sqrt{2\pi N}}  \frac{\gamma_1\beta^r z^{-r}}{z-\beta}\left[\begin{array}{cc}\!\!0\!\!&1\!\!\\\!\!0\!\!&0\!\!\end{array}\right]\right)
z^{\frac r2\sigma_3} \Phi(z)^{-1}
\\(\text{using \eqref{calPpost}})~~&=
z^{\frac r2\sigma_3}H_0(z) {\bf F}(\zeta(z))
\left(I-\frac{1}{\sqrt{2\pi N}}  \frac{\gamma_1\beta^r z^{-r}}{z-\beta}\left[\begin{array}{cc}\!\!0\!\!&1\!\!\\\!\!0\!\!&0\!\!\end{array}\right]\right)
z^{-\frac r2\sigma_3} 
\\(\text{using \eqref{whyH0}})~~&=
z^{\frac r2\sigma_3}
\left(I+{\cal O}\left(N^{-3/2}\right)\right)
z^{-\frac r2\sigma_3} 
\\&=
I+{\cal O}\left(N^{-3/2}\right),\quad \text{ for $z\in\partial\D_\beta\cap{\rm Ext}({\cal B})$}.
\end{align}
A similar calculation gives the same error bound on ${\rm Int}({\cal B})\cap\partial\D_c$.  The jump matrices of ${\cal E}$ on the other contours are exponentially close to the identity in large $N$ and, therefore, ${\cal E}$ satisfies a small-norm RHP and we conclude that 
\begin{equation}\label{AAinfty}
A(z)=\left(I+{\cal O}\left(N^{-3/2}\right)\right)A^\infty(z)\ ,
\end{equation}
uniformly in the whole complex plane.

As in the pre-critical case, the analysis of the asymptotic for the orthogonal polynomial can now be read off from our approximation in \eqref{postA}.     

\paragraph{Away from ${\cal B}$:} The analysis of the asymptotic for the orthogonal polynomial can now be read off our approximation using that $
P_n(z)=[Y(z)]_{11}= A_{11} e^{tN g(z)}$ outside $\Omega_\pm$.
Using the asymptotic behavior \eqref{AAinfty}, we get the strong asymptotics {\em outside the disk ${\mathbb D}_\beta$ and $\Omega_\pm$} as follows (see also (\ref{gdough}) and recall that $tN = n-r$).
\begin{align}
\label{strongdough}
P_n(z) &=\left[\left(I+{\cal O}\left(N^{-3/2}\right)\right)A^\infty(z) \right]_{11}e^{tN g(z)}
\\
&=
\begin{cases}
\displaystyle  \left(1+{\cal O}\left(N^{-3/2}\right)\right) z^re^{tN g(z)} ,
&z\in{\rm Ext}({\cal B})\ ,
\\
\displaystyle
\left( \frac{\gamma_1\beta^r}{\sqrt{2\pi N}}\frac{1}{\beta-z} +{\cal O}\left(N^{-\frac 3 2}\right)\right)\e^{tNg(z)} 
&z\in{\rm Int}({\cal B})\ ,
\end{cases}
\end{align}
where $\ell$ is given in (\ref{Robindough}).  This holds uniformly over compact subsets of $\C\setminus{\cal B}$ because $\Gamma_\pm$ can be moved as close to ${\cal B}$ as one wishes.

\paragraph{Near ${\cal B}$ but away from $\beta$:}
In order to analyze the asymptotic {\em on} ${\cal B}$ we need to trace back the transformation that $Y$ was subject to within the region $\Omega_\pm$.  We thus obtain 

\begin{figure}[ht]
\begin{center}
\includegraphics[width=0.5\textwidth]{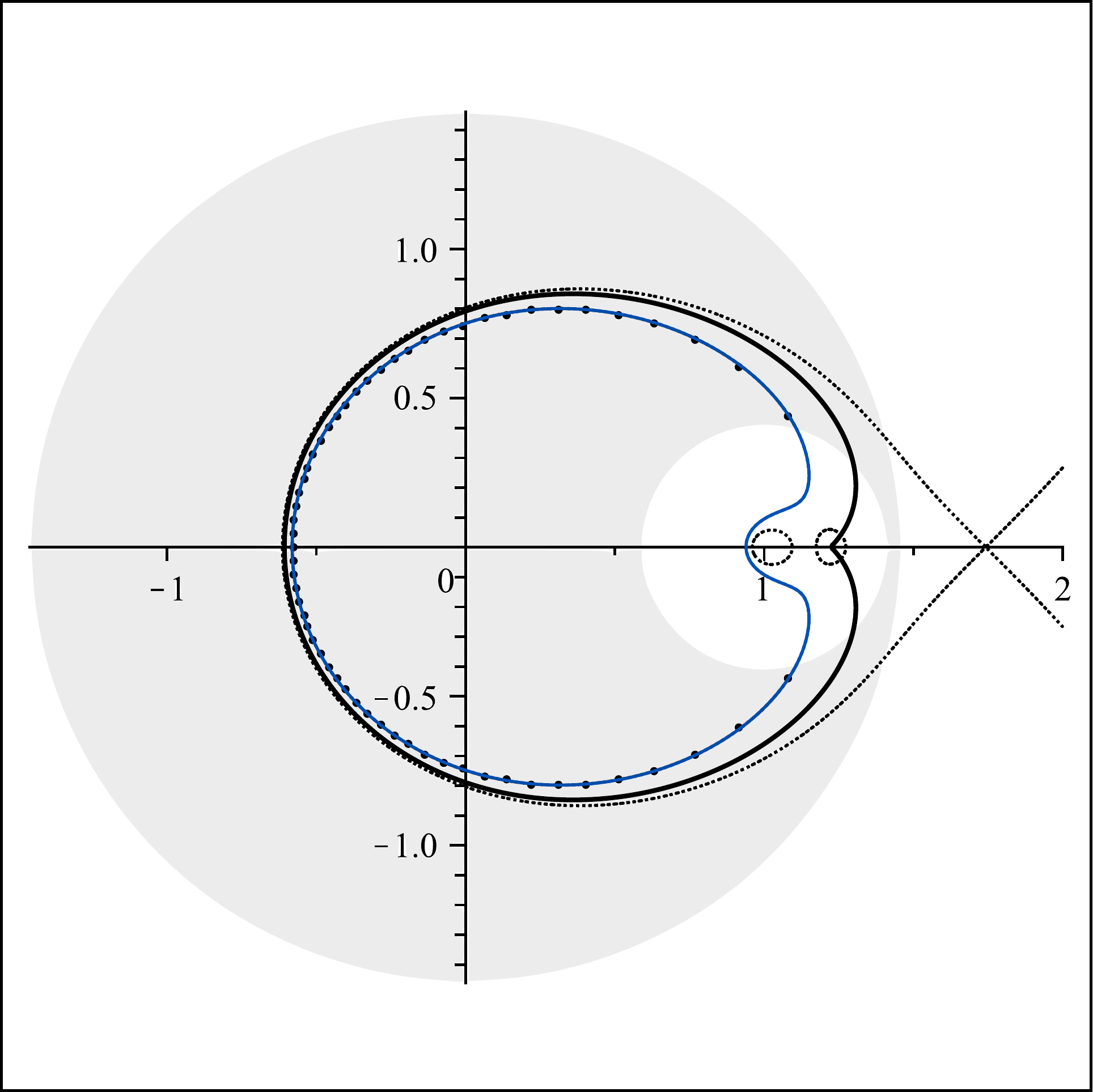}
\end{center}
\caption{The zeros of $P_{59}$ with $N=30$, $a=1$, $c=\frac 1 6$, $r=0$. Here $t = \frac n N = 1.9\c 6$ and $t_c \approx 1.8164$. The thick line is ${\cal B}$, where the zeros converge. The solid (blue) line and the dotted lines are where \eqref{zeroas} holds.  As can be surmised from the figure, the dotted lines do not approach ${\cal B}\setminus\{\beta\}$ in the limit $N\to\infty$; only the solid line approaches ${\cal B}\setminus\{\beta\}$ and the roots seem to line up along that solid line. The shaded region indicates $K$. }
\label{fig10}
\end{figure}
\begin{figure}[ht]
\begin{center}
\begin{tabular}{ccc}
\includegraphics[width=0.31\textwidth]{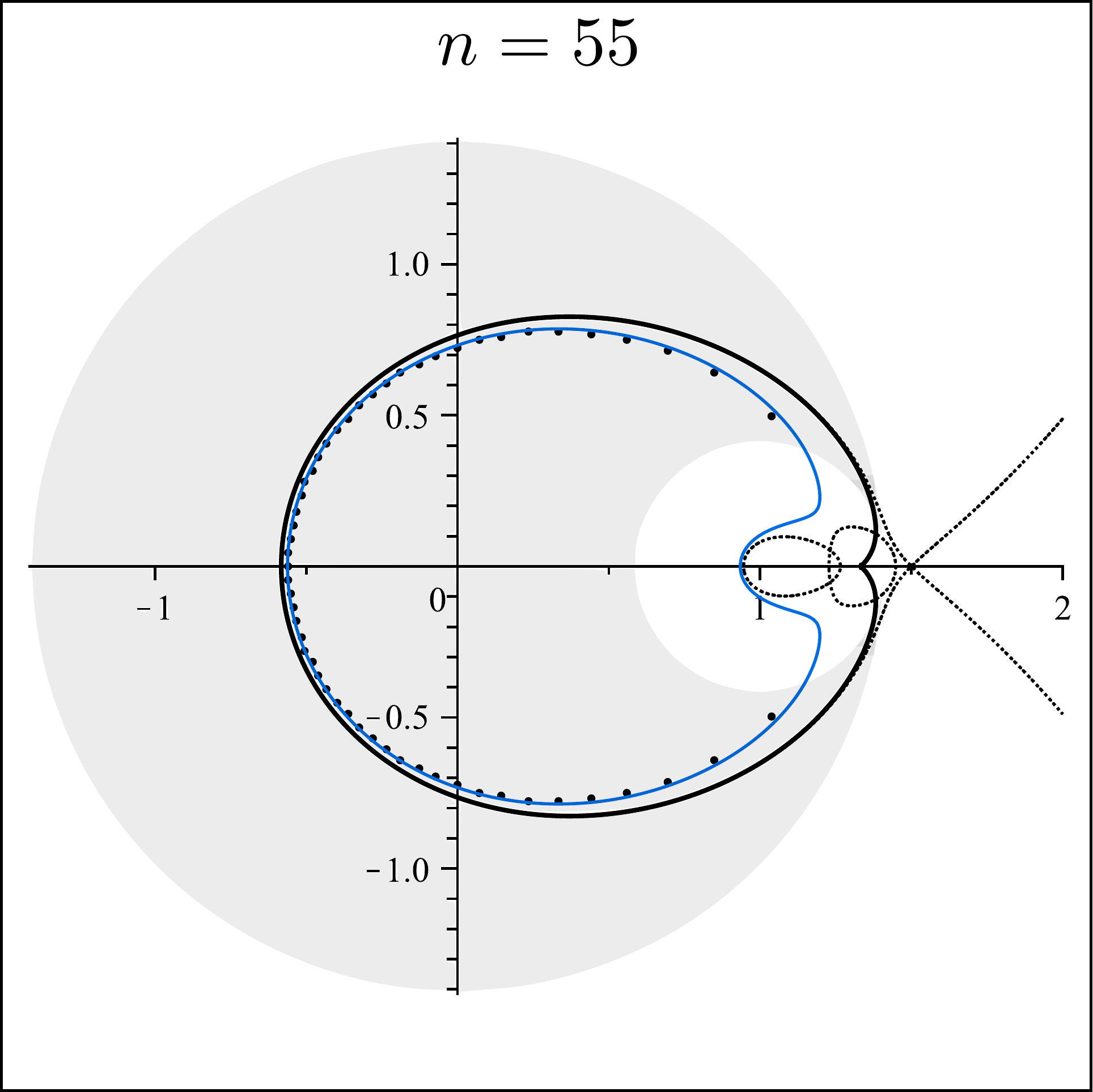} &
\includegraphics[width=0.31\textwidth]{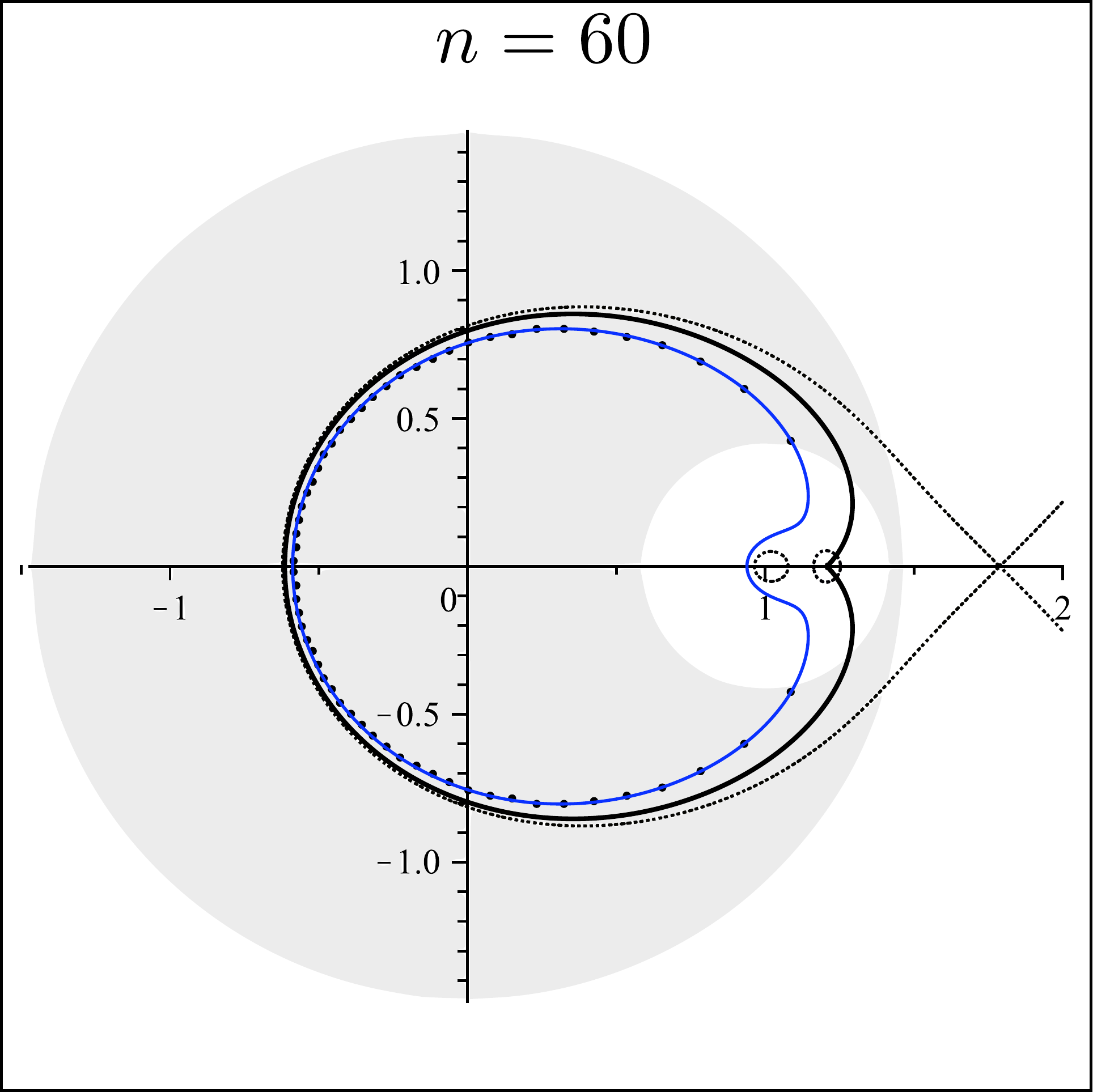} &
\includegraphics[width=0.31\textwidth]{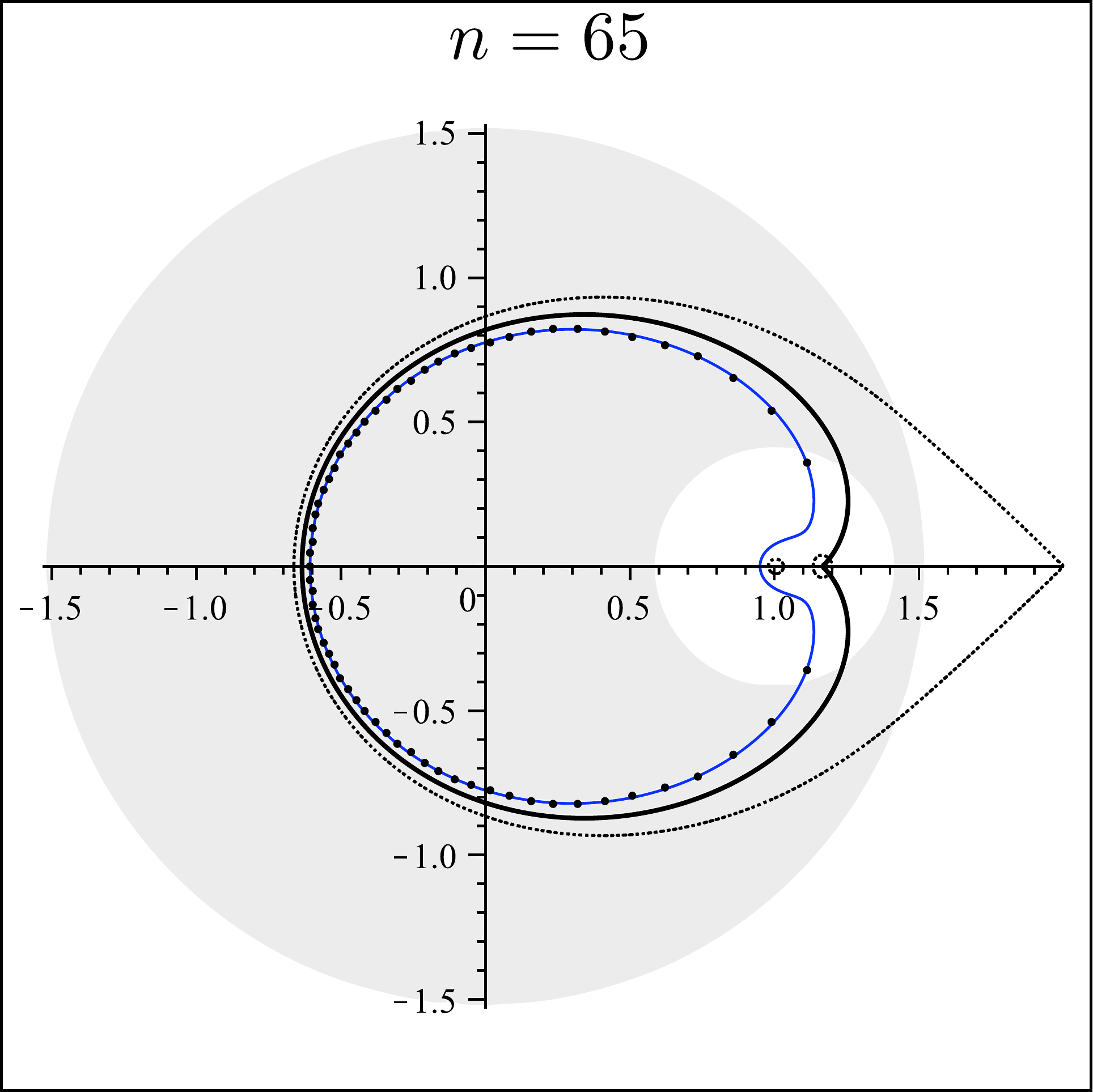} \\
\includegraphics[width=0.31\textwidth]{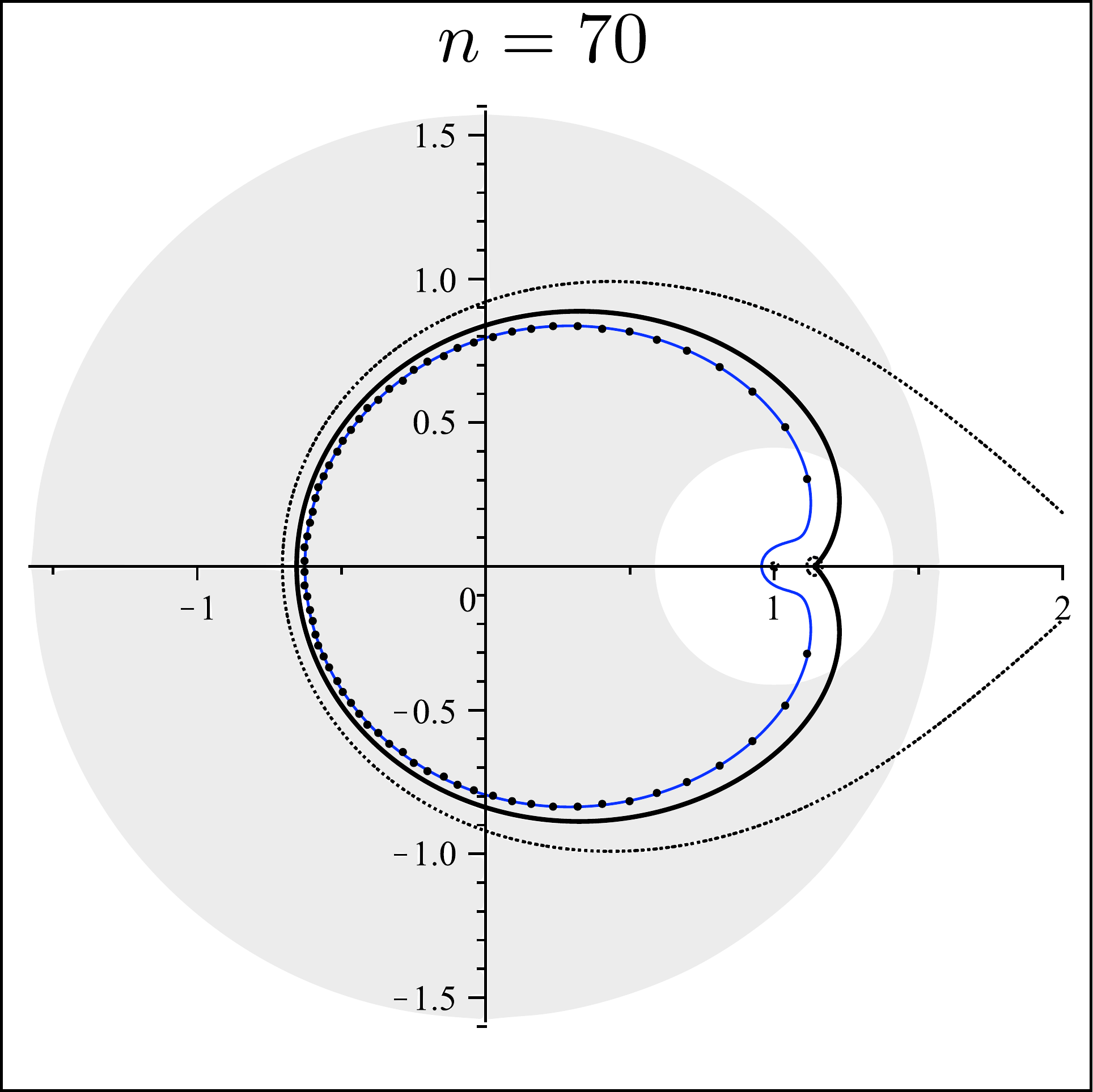} &
\includegraphics[width=0.31\textwidth]{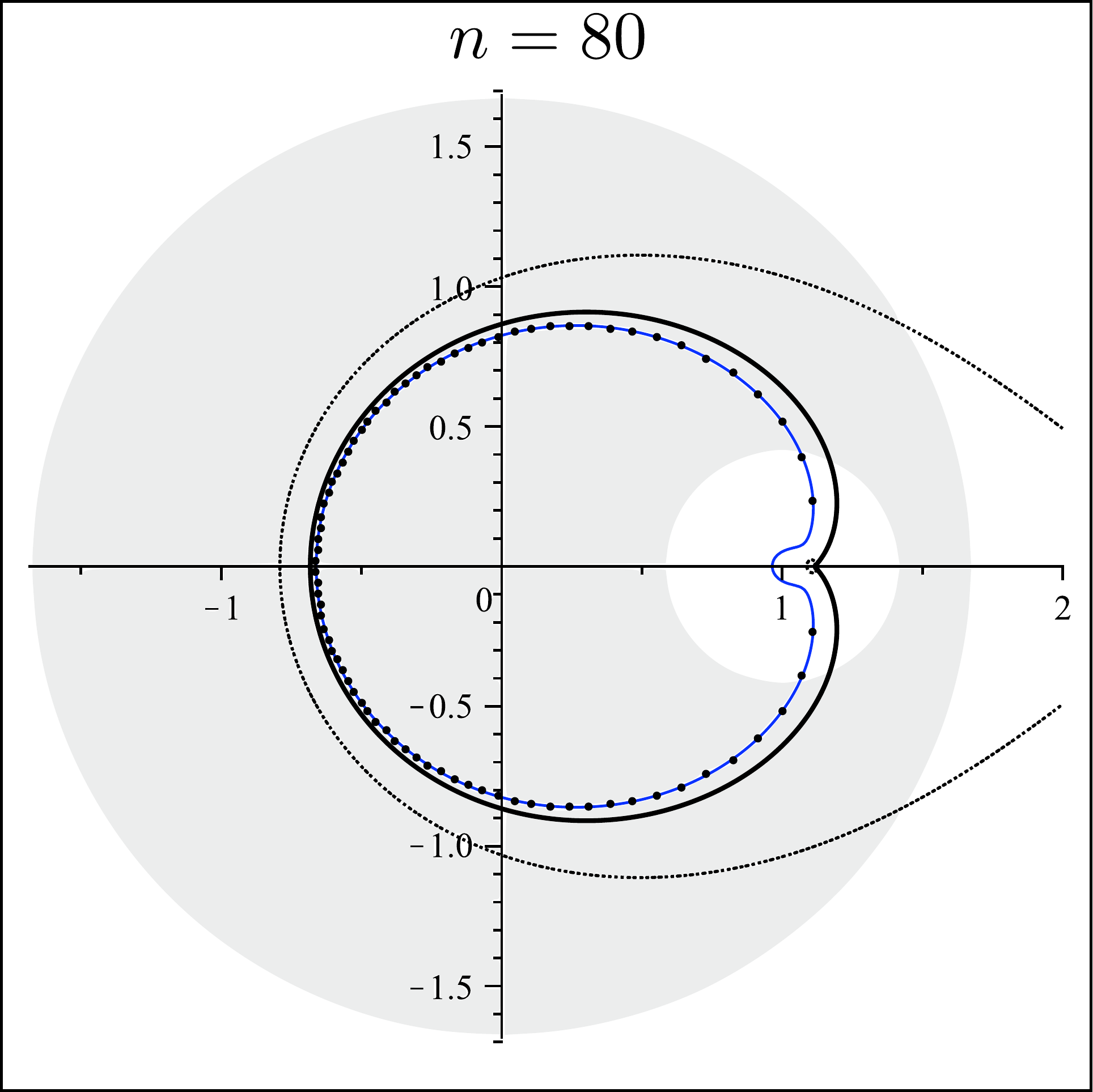} &
\includegraphics[width=0.31\textwidth]{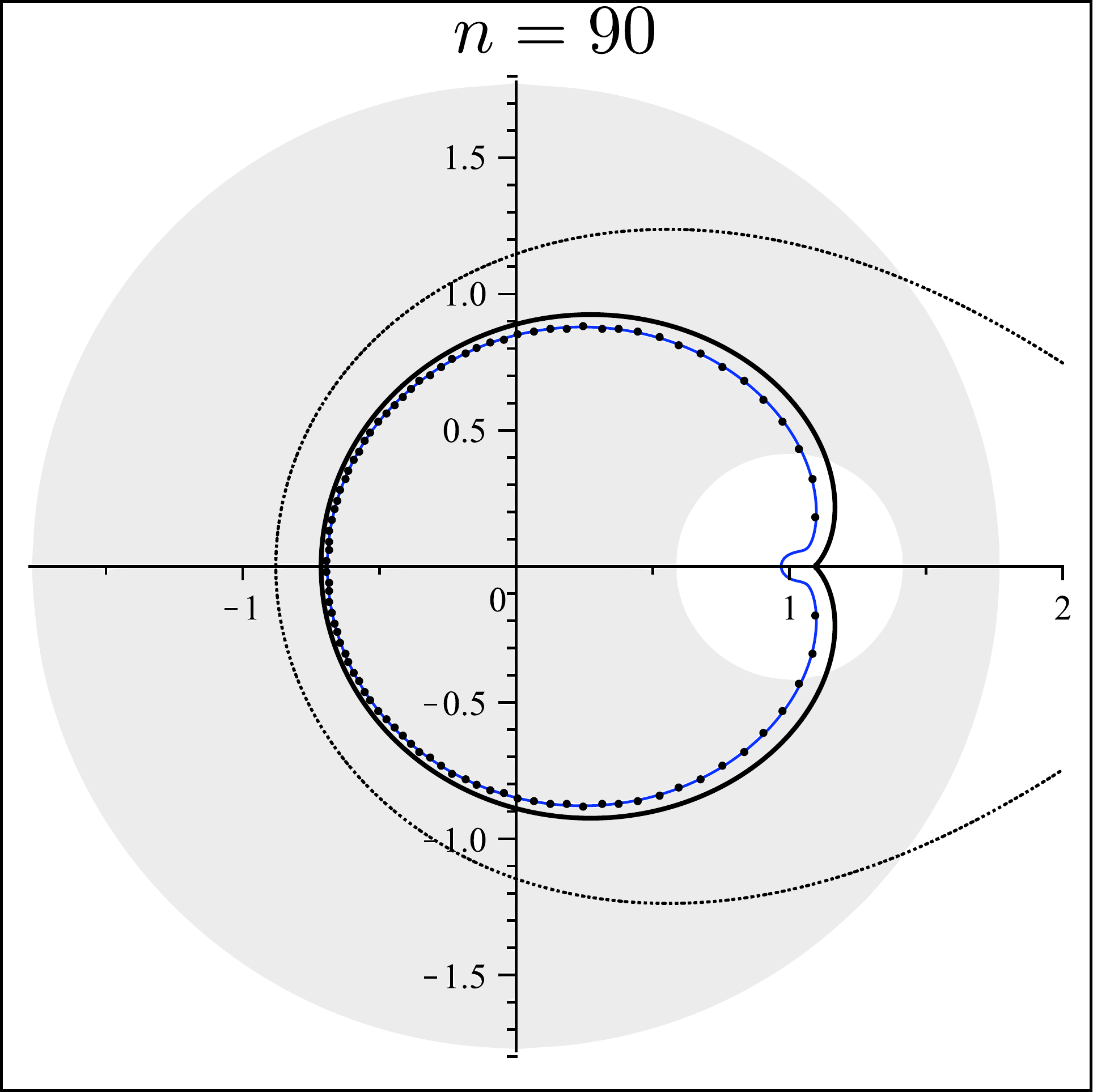} 
\end{tabular}
\caption{The zeros of $P_{n}$ with $N=30$, $a=1$, $c=\frac 1 6$, $r=0$, $n=55,60,65,\cdots$.   See the caption in Figure \ref{fig10} for more details.}
\label{fig11}
\end{center}
\end{figure}
\begin{equation}
P_n(z)  = \left[ 
{\rm e}^{\frac {tN \ell}{2} \sigma_3} A(z){\rm e}^{tN(g(z) - \frac \ell 2)\sigma_3}  \pmtwo{1}{0}{\pm \frac 1 {w_{n,N} } }{1} 
\right]_{11} ,\quad z\in\Omega_\pm,
\end{equation}
where $A = (I + \mathcal O(N^{-3/2})) A^\infty$. A straightforward computation yields 
\begin{align}
P_n(z)&=\begin{cases}\displaystyle
z^r {\rm e}^{t N g(z)}\left(1-\frac{\gamma_1\beta^r}{\sqrt{2\pi N}}\frac{{\rm e}^{N\phi(z)}}{z^r (z-\beta)}+{\cal O}\left(N^{-3/2}\right)\right)  ,\quad &z\in \Omega_-\setminus\D_\beta,
\\ \displaystyle
z^r {\rm e}^{t N g(z)}\left({\rm e}^{N\phi(z)}-\frac{\gamma_1\beta^r}{\sqrt{2\pi N}}\frac{1}{z^r (z-\beta)}+{\cal O}\left(N^{-3/2}\right)\right), & z\in \Omega_+\setminus\D_\beta.
\end{cases}\label{525}
\end{align}
The leading term is analytic on ${\cal B}$ because  $[t\,g(z)]_\pm=[t\, g(z)+\phi(z)]_\mp$. 

\paragraph{Location of zeros:} Here we explain the idea of the proof of Proposition \ref{prop-zero} concentrating on the post-critical case.
In order to locate the zeros of $P_n$ we observe that they may only lie (asymptotically)  where the following terms appearing above cancel each other:
\begin{equation}
{\rm e}^{\pm N\phi(z)}-\frac{\gamma_1\beta^r}{\sqrt{2\pi N}}\frac{1}{z^r (z-\beta)},\quad z\in\Omega_\pm.
\end{equation}
Recall now that $\gamma = \mathcal O(\sqrt{N}) $ and that $\Re \phi(z)<0$ on both sides of ${\cal B}$; this means that the zeros of $P_n$ may possibly lie only within $\Omega_+$ (i.e. slightly on the inside of ${\cal B}$) on a region that lies within $\Re \phi = \mathcal O( \ln N/N)$.
Therefore the location of the zeros of $P_n(z)$ converges to the curve given implicitly by
\begin{equation}\label{zeroas}
{\mathrm {Re}\,} \phi(z) = \frac 1 N \ln \left| \frac {\gamma_1\beta^r}{\sqrt{2\pi N}z^r(z-\beta)}\right|=-\frac {\ln N}{2N} +  \frac 1 N \ln \left| \frac {\sqrt{\beta(\beta-a)}\beta^r}{\sqrt{2\pi a(b-\beta)}z^r(z-\beta)}\right|.
\end{equation}
Their (integrated) density is asymptotically determined by the difference of the imaginary parts of the two sides in (\ref{zeroas}) along the said curve. We remark that the above carries information on the location and density of zeros only on parts of $\cal B$ that are away from the point $\beta$.

To prove the statement in Proposition \ref{prop-zero} one starts by noticing 
that there is asymptotically no zero outside a strip of width ${\cal O}(1/N)$ around the connected component of the set \eqref{zeroas} that tends to $\mathcal B$ (at the rate $\mathcal O(\ln N/N)$). This is seen by noticing that, where \eqref{zeroas} is satisfied, the two leading terms in \eqref{525} for $z\in\Omega_+$  have the same order $\mathcal O(N^{-1/2})$ and the error is of order $\mathcal O(N^{-3/2})$.  
Next one uses the asymptotics of the polynomials together with some form of the argument principle,
to show that the zeros actually lie in a strip of size ${\cal O}(1/N)$ around the curve where \eqref{zeroas} is satisfied.
The Figures \ref{fig10} and \ref{fig11} clearly demonstrate such convergence.

To conclude, we mention the convergence of the counting measure of the zeros, to the measure whose logarithmic potential is given by ${\rm Re}\,g$. We claim that, if $\mu_0$ is the probability measure supported on ${\cal B}$ such that
\begin{equation}
 	g(z)=\int_{\cal B}\log(z-w)\,\d\mu_0(w),
 \end{equation} 
 we have the weak convergence: $\mu_{n,N}\to\mu_0$ where $\mu_{n,N}$ is the normalized counting measure of the zeros of $P_{n,N}(z)$.  This can be proven by observing, from the strong asymptotics of $P_{n,N}$, that
\begin{equation}
	\lim_{n\to\infty}\frac{1}{n}\log P_{n,N}(z)= g(z),
\end{equation}
uniformly over compact subsets of $\C\setminus{\cal B}$, see \cite{saff_totik_book} (Chapter III) and \cite{Saff91} (Theorem 2.3).

\section{The critical case: \texorpdfstring{$t\sim t_c$}{ttc2}}\label{sec-crit}

For the critical case, $\Re g$ is the logarithmic potential for a {\em signed} measure and, therefore, $\Re\phi$ may not satisfy the inequality $\Re\phi<0$ near ${\cal B}$.  This does not cause any problem for us, because the region where the inequality is violated is contained inside the domain of the local parametrix where we do not require the inequality. 

Following Definition \ref{def-Omega} we define the (lens-opened) region $\Omega\pm$ by the region enclosed by ${\cal B}$ and the steepest descent lines of $\Re\phi$ from $\beta$.  We define $A$ by the same formula as in \eqref{Adef} such that it satisfies the Riemann-Hilbert problem \eqref{RHPA} except that there is no $\Gamma\setminus{\cal B}$.

Bounded away from $b_c=a+\sqrt c$ the jump matrices of $A$ are all uniformly and exponentially close to the identity jump, save for the jump on ${\cal B}$ (\ref{RHPA}). We define $\Phi$ by \eqref{Phisol} to solve \eqref{PsiRHP} as in the post-critical case.

\paragraph{Local conformal coordinate:} 

We remind the reader that $\phi$ has been defined in Definition \ref{def-phi} as a carefully defined anti-derivative of $y(z)$ (which is defined in \eqref{ypost}), and possesses a jump discontinuity across $\cal{B}$.  As described in Section \ref{sec-ypost}, the two critical points of $\phi$ ($\beta$ and $b$) approach $b_c=a+\sqrt c$ as $N\to\infty$ for the critical case.  Below we intend to find a conformal map $\cal{W}$ which captures the important local behavior of $\phi$ in a vicinity of $b$ and $\beta$.  This is accomplished by defining a cubic polynomial 
\begin{equation}
P_{3}({\cal W}) =\frac 13 ({\cal W} - {\cal W}_{\beta})^{2} ( {\cal W} + 2 {\cal W}_{\beta}) = \frac13{\cal W}(z)^3-{\cal W}_{\beta}^2 {\cal W}(z)+\frac{2}{3}{\cal W}_{\beta}^3
\end{equation}
where
\begin{equation}
{\cal W}_{\beta}:=\frac{3^{1/3}}{2^{5/3}} |\phi(b)|^{1/3}\times \begin{cases} 1 , & t\leq t_c,
\\ \i , & t>t_c.
\end{cases}
\end{equation}
This cubic polynomial satisfies the following:
\begin{equation}
P_{3}({\cal W}_{\beta}) = P_{3}'({\cal W}_{\beta})= P_{3}'(-{\cal W}_{\beta}) = 0
~~\text{ and }~~ - 8 i P_{3}(-{\cal W}_{\beta}) = [\phi(b)]_{\text{Ext}}
\end{equation}
\begin{lemma} For $t$ near $t_c$ there exists a fixed disk $\D_c$ centered at $b_c$ and the conformal map ${\cal W}:\D_c\to {\cal W}(\D_c)$ (for each $t$) that satisfies
\begin{equation}\label{Nphi0}
 - 8\ii P_3({\cal W}(z))
=\begin{cases} \phi(z) ,\ & z\in \D_c\cap{\rm Ext}({\cal B}); \\
-\phi(z),\ & z\in \D_c\cap{\rm Int}({\cal B}).\end{cases}
\end{equation}
\end{lemma}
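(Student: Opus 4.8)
The plan is to obtain $\mathcal W$ as a branch of $G^{-1}\circ\Psi$, where $G(\mathcal W):=-8\ii P_3(\mathcal W)$ is the model cubic and $\Psi$ denotes the common analytic continuation near $b_c$, across $\mathcal B$, of the two pieces on the right-hand side of \eqref{Nphi0}. First I would check that $\Psi$ is a single analytic function on a fixed disk about $b_c$. Writing $y_{\text{Ext}}(z):=a(z-b)(z-\beta)/(z(z-a))$ for the branch of $y$ in \eqref{ypost} valid on ${\rm Ext}(\mathcal B)$ and $\phi_{\text{Ext}}(z):=\int_\beta^z y_{\text{Ext}}(s)\,\d s$, one has $y_{\text{Int}}=-y_{\text{Ext}}$, hence (checking base points) $\phi_{\text{Int}}=-\phi_{\text{Ext}}$, so the prescription in \eqref{Nphi0} is nothing but $\Psi\equiv\phi_{\text{Ext}}$. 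Since $y_{\text{Ext}}$ is rational with poles only at $0$ and $a$, on any disk $\D_c$ centered at $b_c=a+\sqrt c$ that encircles neither $0$ nor $a$ (e.g.\ of radius $<\sqrt c$) the primitive $\phi_{\text{Ext}}$ is single-valued analytic, with $\Psi'=y_{\text{Ext}}$. Hence $\Psi$ has in $\D_c$ exactly the two critical points $z=b,\beta$, both simple when $t\neq t_c$ and both within ${\cal O}(|t-t_c|^{1/2})$ of $b_c$ by \eqref{betabasymp} (so inside $\D_c$ once $t$ is close to $t_c$, which is the regime $t-t_c={\cal O}(N^{-2/3})$), with critical values $\Psi(\beta)=0$ and $\Psi(b)=[\phi(b)]_{\text{Ext}}$. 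On the model side, $G'=-8\ii P_3'$ vanishes simply at precisely $\mathcal W=\pm\mathcal W_\beta$, and the three identities stated just above the lemma give $G(\mathcal W_\beta)=0=\Psi(\beta)$ and $G(-\mathcal W_\beta)=[\phi(b)]_{\text{Ext}}=\Psi(b)$, while $\mathcal W_\beta={\cal O}(|\phi(b)|^{1/3})={\cal O}(|t-t_c|^{1/2})$ decays at the same rate as $b-b_c$ (since $[\phi(b)]_{\text{Ext}}=\int_\beta^b y_{\text{Ext}}={\cal O}((b-\beta)^3)$). Thus $\Psi$ near $b_c$ and $G$ near the segment $[-\mathcal W_\beta,\mathcal W_\beta]$ carry the same local branched structure: two simple critical points with the same ordered pair of critical values.

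Next I would define $\mathcal W(z)$ implicitly by $G(\mathcal W(z))=\Psi(z)$, normalized by $\mathcal W(\beta)=\mathcal W_\beta$, and prove that it is a conformal bijection of a fixed disk (renamed $\D_c$) onto its image; then \eqref{Nphi0} follows immediately, and $\mathcal W(b)=-\mathcal W_\beta$ is automatic, since $\mathcal W(b)$, being the image of the critical point $b$ of $\Psi$ under a conformal map, must be a critical point of $G$ distinct from $\mathcal W(\beta)=\mathcal W_\beta$. The cleanest anchor is the case $t=t_c$: there $b=\beta=b_c$, so $\Psi(z)=(z-b_c)^3 h(z)$ with $h$ analytic and nonvanishing on $\D_c$, and $G(\mathcal W)=-\frac{8\ii}{3}\mathcal W^3$, whence $\mathcal W(z)=(z-b_c)\big(\tfrac{3\ii}{8}h(z)\big)^{1/3}$ is manifestly analytic and conformal on a fixed disk. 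For $t$ near $t_c$, $\Psi$ and $G$ depend real-analytically on $t$ and their critical points and values stay in $\D_c$ and converge to those of the $t=t_c$ model; away from $\{b,\beta\}$ one has $\mathcal W'=\Psi'/(G'\circ\mathcal W)$, analytic and nonvanishing wherever $\mathcal W\neq\pm\mathcal W_\beta$, while near $\beta$ (and likewise near $b$) the identity $G(\mathcal W)=\Psi(z)$ together with the matching of critical values forces $\mathcal W-\mathcal W_\beta=\pm\big(\Psi''(\beta)/G''(\mathcal W_\beta)\big)^{1/2}(z-\beta)\big(1+{\cal O}(z-\beta)\big)$, so the normalized branch is single-valued and conformal near each critical point. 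A Rouch\'e/degree argument comparing with the explicit $t=t_c$ map then shows that $\mathcal W'$ remains nonvanishing and $\mathcal W$ remains injective on a slightly smaller but $t$-independent disk.

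The main obstacle is precisely this last, genuinely global, step: that the branch of $G^{-1}$ emanating from $\mathcal W_\beta$ over the value $0$ continues without monodromy over all of $\D_c$ and joins the branch $-\mathcal W_\beta$ over $[\phi(b)]_{\text{Ext}}$ — i.e.\ that the two simple turning points of $\Psi$ are linked in the same way as those of $G$. Because the relevant Taylor coefficients (such as $\Psi''(\beta)=a(\beta-b)/(\beta(\beta-a))$) degenerate as $t\to t_c$, this cannot be obtained uniformly from a two–separated–turning–point argument and must be deduced by deformation from the triple–zero case $t=t_c$, as above. An alternative that avoids the deformation is to posit the ansatz $\mathcal W(z)=\mathcal W_\beta\,\dfrac{2z-b-\beta}{\beta-b}+(z-b_c)^2 u(z)$ and solve $G(\mathcal W)=\Psi$ for the analytic correction $u$ by the implicit function theorem in a fixed disk, using uniform-in-$t$ bounds on $\Psi$, $G$ and $(G')^{-1}$ near $\mathcal W=0$.
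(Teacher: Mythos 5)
Your proposal takes a genuinely different route from the paper: the paper does not prove this lemma at all but simply cites Theorem 1 of Chester--Friedman--Ursell \cite{Ursell} (with a pointer to Section 4.3 of \cite{MillerBuckingham} for an alternative presentation), adding only the one-line observation that the right-hand side of \eqref{Nphi0} defines a single analytic function on $\mathbb D_c$ because $\phi_+ = -\phi_-$ across $\mathcal B$. You instead attempt to re-derive the CFU result from scratch in this special setting. Your setup is sound and in fact mirrors the ingredients of the cited theorem: you correctly identify $\Psi=\phi_{\text{Ext}}$ as a single-valued analytic function on a fixed disk (the same observation the paper makes), you verify the matching of critical points ($\beta,b \leftrightarrow \pm\mathcal W_\beta$) and critical values (using the identities given just above the lemma), you correctly note that both $b-b_c$, $\beta-b_c$, and $\mathcal W_\beta$ are all $\mathcal O(|t-t_c|^{1/2})$ so the two saddles and the two model critical points coalesce at the same rate, and you propose the triple-zero case $t=t_c$ as the anchor for a deformation/Rouch\'e argument. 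What the citation buys the paper is precisely the step you flag as the ``main obstacle'': a uniform-in-$t$ proof that the branch of $G^{-1}\circ\Psi$ extends conformally over a $t$-independent disk despite the degeneracy of $\Psi''(\beta)$ and $G''(\mathcal W_\beta)$ as $t\to t_c$. That uniformity is exactly the content and the technical crux of CFU's Theorem 1, and your honest acknowledgement that your sketch does not close it — together with your two proposed repairs (deformation from the cube-root anchor, or the implicit-function-theorem ansatz) — is essentially a description of what one would need to do to reproduce CFU. So: correct in outline, a more elementary and self-contained route than the paper's, but with the central uniformity estimate left as a (correctly identified) gap that the paper sidesteps by citation.
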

The lemma is stated in Theorem 1 in \cite{Ursell} (with different notation: here we only note that the right hand side of \eqref{Nphi0} is an analytic function in $\mathbb D_c$ because $\phi(z)$ has a jump discontinuity across $\mathcal B$ whereby $\phi_+ = -\phi_-$).   We also refer to Section 4.3 in \cite{MillerBuckingham} for another presentation.

We define the scaled local coordinate $\xi(z)$, $\zeta_\beta$ and the scaled time coordinate $s$ by
\begin{equation}
\xi(z):=N^{1/3} {\cal W}(z),\quad \zeta_\beta:=N^{1/3}{\cal W}_\beta, \quad s:=-4 N^{2/3} {\cal W}_\beta^2=\frac{c^{1/6}N^{2/3}}{a^{1/3}b_c^{2/3}} (t-t_c)(1+{\cal O}(t-t_c)). 
\end{equation}
such that we have
\begin{equation}\label{Nphi}
 - 2\ii\left(\frac43{\xi(z)}^3+s\,\xi(z)+\frac{8}{3}\zeta_\beta^3\right)=\begin{cases} N\,\phi(z) ,&z\in{\mathbb D}_c\cap {\rm Ext}({\cal B}),
\\ -N\,\phi(z) ,& z\in{\mathbb D}_c\cap {\rm Int}({\cal B}).
\end{cases}
\end{equation}
We will consider the vicinity $t-t_c\sim {\cal O}(N^{-2/3})$ such that $s$ remains finite in the limit of large $N$.

We define $b^*_c$ to be the root of ${\cal W}(z)=0$ and a calculation (using $\phi(b)+\phi(\beta)=2\phi(b^*_c)$) shows
\begin{equation}\label{bstar}
b^*_c=b_c+(t-t_c)/(4 b_c)+{\cal O}((t-t_c)^2).
\end{equation}
Taking the derivative of \eqref{Nphi} we have $\xi'(z)|_{z=b^*_c}=(-2is)^{-1} N [y(b^*_c)]_\text{Ext({\cal B})}$, which gives
\begin{equation}\label{gammastar}
\gamma^*_c:=\frac{N^{1/3}}{i \xi'(b^*_c)}=
\gamma_c+{\cal O}\left(\sqrt{t-t_c}\right),
\quad
\gamma_{c} : = 2\frac{(b_c)^{1/3}c^{1/6}}{a^{1/3}} .
\end{equation}
Later we need that, for $z\in\overline{\D_c}$ and for a fixed $N$ and $s$, we have
\begin{equation}
\frac{\xi(z)}{N^{1/3}} = \frac{-i}{\gamma^*_{c}} 
 (z-b^*_c)\left(1+{\cal O}\left(z-b^*_c)\right)\right).
 \label{constconf}
\end{equation}
See Figure \ref{xiplane} for how the various contours map under the mapping $\xi$.

\begin{figure}[ht]
\begin{center}
\includegraphics[width=0.4\textwidth]{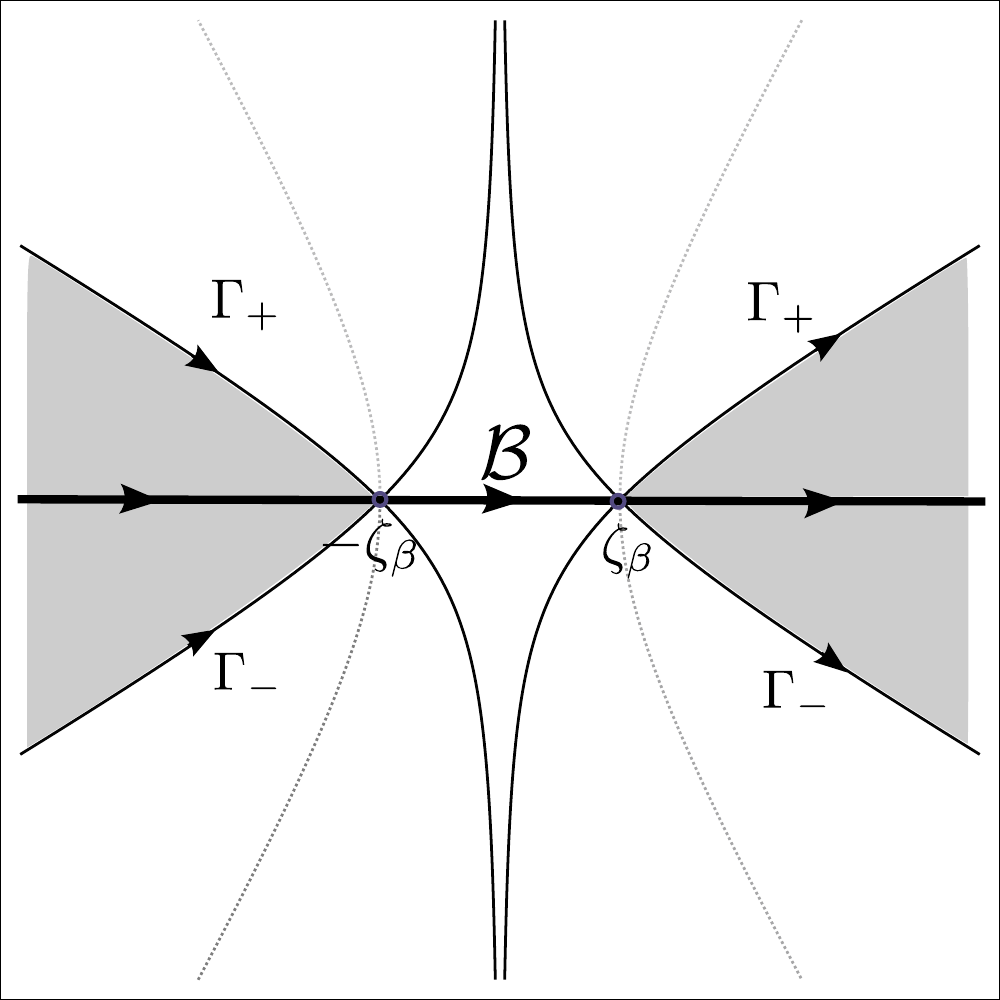}
\includegraphics[width=0.4\textwidth]{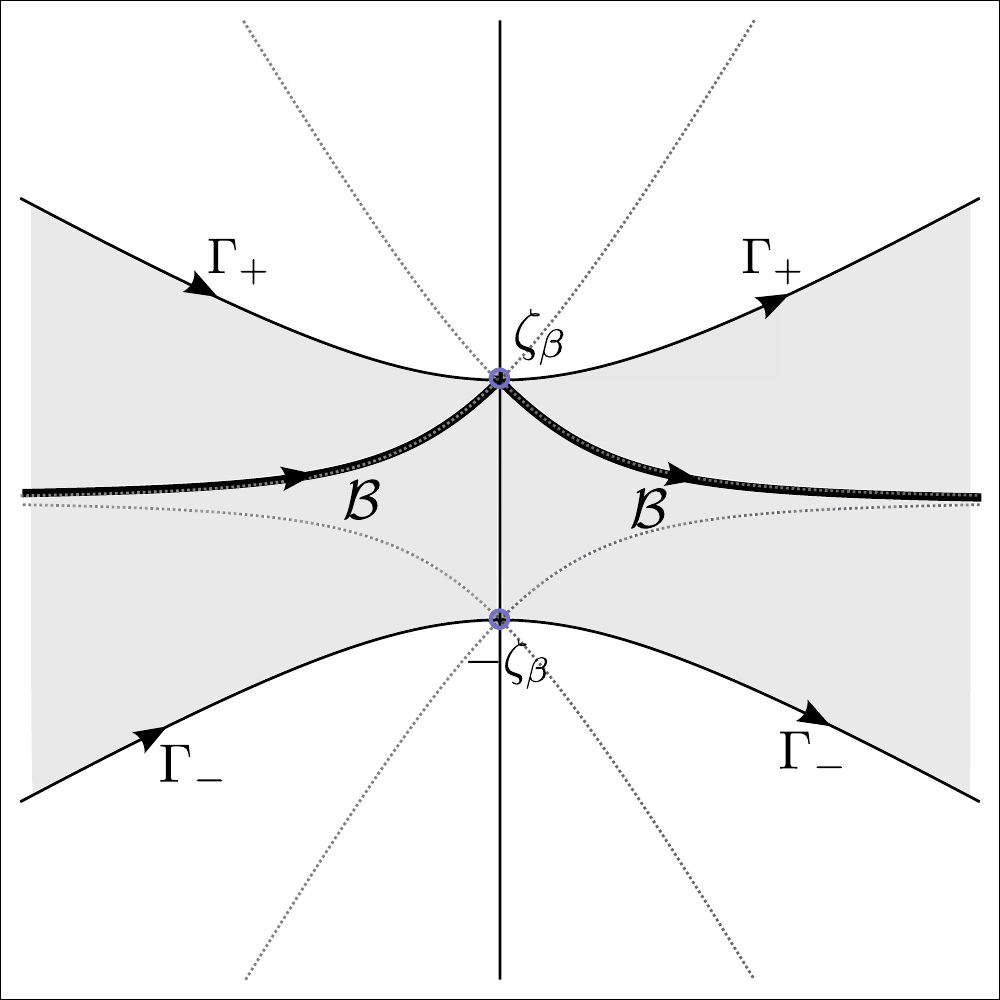}
\caption{The contours in the local scaling $\zeta$-coordinate for $t<t_c$ (left) and for $t>t_c$ (right) under the mapping $\xi$.  The branch cut (thick lines); the level lines of $\Re\left(\frac43\zeta^3+s\,\zeta+\frac{8}{3}\zeta_\beta^3\right)$ from $\zeta_\beta$ and $\c\zeta_\beta$ (dotted lines); the steepest ascent/descent lines (thin lines).   }
\label{xiplane}
\end{center}
\end{figure}

\paragraph{Local RHP:}
We require that the local parametrix solve the RHP for $A$ \eqref{RHPA} exactly on $\D_c$; i.e. we seek a matrix ${\cal P}(z)$ such that 
\begin{itemize}
\item $\Phi(z) z^{-\frac r 2\sigma_3} {\cal P}(z) z^{\frac r 2 \sigma_3}$ solves the jump conditions of $A$ within the disk $\mathbb D_c$;
\item $\Phi(z)z^{-\frac r 2\sigma_3} {\cal P}(z) z^{\frac r 2 \sigma_3}$ is bounded in $\mathbb D_c$;
\end {itemize}
These conditions are satisfied if ${\cal P}$ satisfies
\begin{equation}\label{errorwp}
\begin{cases}
{\cal P}(z)_+ = \pmtwo{0}{-1}{1}{0} {\cal P}(z)_- \pmtwo{0}{1}{-1}{0},  &z \in {\cal B};\vspace{0.1cm}\\
{\cal P}(z)_+ = {\cal P}(z)_- \pmtwo{1}{0}{{\rm e}^{2\ii\left(\frac43{\xi(z)}^3+s\,\xi(z)+\frac{8}{3}\zeta_\beta^3\right)}}{1},  &z \in \Gamma_+;\vspace{0.1cm}\\
{\cal P}(z)_+ = {\cal P}(z)_- \pmtwo{1}{0}{{\rm e}^{-2\ii\left(\frac43{\xi(z)}^3+s\,\xi(z)+\frac{8}{3}\zeta_\beta^3\right)}}{1}, &z \in \Gamma_-;
\end{cases}
\end{equation}
The solution is {\em not determined} without specifying the boundary condition at $z\in\partial\D_c$ and, therefore, the solution can be written up to an undetermined holomorphic matrix $H_c(z)$ by
\begin{equation}\label{calPcrit}
{\cal P}(z) := \begin{cases}
{\rm e}^{-i\frac {8}3 \zeta_\beta^3 \sigma_3} H_{c}(z) \Pi(\xi(z)) \,{\rm e}^{i\frac {8}3 \zeta_\beta^3 \sigma_3}, & z\in{\rm Int}({\cal B})\cap\D_c,\\
 \pmtwo{0}{1}{-1}{0}{\rm e}^{-i\frac {8}3 \zeta_\beta^3 \sigma_3} H_{c}(z)\Pi(\xi(z))\, {\rm e}^{i\frac {8}3 \zeta_\beta^3 \sigma_3}  \pmtwo{0}{-1}{1}{0}, & z\in{\rm Ext}({\cal B})\cap\D_c.
\end{cases}
\end{equation}
Note that the jump conditions \eqref{errorwp} for ${\cal P}$ are satisfied {\em for any holomorphic invertible matrix $H_c(z)$},
if the matrix $\Pi$ solves the RHP with the following (in fact, only the first three) conditions. 
\begin{align}\label{RHPP2}\begin{cases}\displaystyle
\Pi_+ (\zeta) = \Pi_-(\zeta),   &\zeta\in\cal B;
\\\displaystyle
\Pi_+ (\zeta) = \Pi_-(\zeta) \pmtwo{1}{0}{{\rm e}^{  2i\left(\frac 43 \zeta^3 + s \zeta \right)} }{1}, &\zeta\in \Gamma_+;\vspace{0.1cm}\\\displaystyle
\Pi_+ (\zeta) = \Pi_-(\zeta) \pmtwo{1}{-{\rm e}^{ - 2i\left(\frac 43 \zeta^3 + s \zeta \right)} } {0}{1}, &\zeta\in \Gamma_-;
\\ \displaystyle \Pi(\zeta)= I + \mathcal O(\zeta ^{-1}), &\zeta\to\infty;
\end{cases}
\end{align}
See Figure \ref{xiplane} for the geometry of contours in $\zeta$-coordinate.  The name of the contours in the local coordinate are given to match the corresponding contours under the mapping $\xi$; So $\zeta\in{\cal B}$ implies that $\zeta\in\xi({\cal B})$ etc.

\paragraph{Remark.}
Without the extra term $H_c(z)$ in \eqref{calPcrit} (i.e. $H_c(z)=I$), we would have eventually found that 
\begin{equation}
{\cal P}(z) =
{\rm e}^{-i\frac {8}3 \zeta_\beta^3 \sigma_3} \Pi(\xi(z)) \,{\rm e}^{i\frac {8}3 \zeta_\beta^3 \sigma_3}=I + \mathcal O(N^{-1/3}) \ ,\qquad z\in \partial \mathbb D_c.
\end{equation}
This eventually determines the error of the strong asymptotics that turns out to be too large to see the leading effect in ${\rm Int}({\cal B})$.   This is why we introduce $H_c(z)$.

The solution to the RHP \eqref{RHPP2} can be found in the literature about ordinary orthogonal polynomials \cite{BI} and is related to a particular solution called the Hastings-McLeod solution of the Painlev\'e\ II equation.
Consequently one obtains
\begin{equation}\label{Psiseries}
\Pi(\zeta)= I+\frac{1}{2\ii\zeta}\left[\begin{array}{cc}u(s)&q(s)\\-q(s)&-u(s)\end{array}\right]+{\cal O}\left(\frac{1}{\zeta^2}\right),\quad
u(s)=q'(s)^2-sq(s)^2-q(s)^4,
\end{equation}
where $q(s)$ is the so called Hastings-McLeod solution of the Painlev\'e II equation: $q''(s)=sq(s)+2q(s)^3$. We recall the asymptotic behavior of $q(s)$:
\begin{equation}\label{HMsol}
q(s)=\begin{cases}\displaystyle
 {\rm Ai}(s)+{\cal O}\left(\frac{\e^{-\frac43 s^{3/2}}}{s^{1/4}}\right)= \frac{{\rm e}^{-\frac23 s^{3/2}}}{2\sqrt\pi s^{1/4}}\left(1+{\cal O}\left(s^{-3/2}\right)\right), &s\to+\infty;
 \\\displaystyle
 \sqrt{-\frac{1}{2}s}\left(1+{\cal O}\left(\frac{1}{s^3}\right)\right), & s\to-\infty.
 \end{cases}
\end{equation}

\paragraph{Strong asymptotics and error analysis:}
We follow the same construction of $H_c$ as that of $H_0$ in the post-critical case.
We define
\begin{equation}\begin{split}
H_c(z)=I&-\frac{1}{2\ii\xi(z)}\left[\begin{array}{cc}u(s)&q(s)\\-q(s)&-u(s)\end{array}\right]
\\ &+\frac{\gamma^*_{c}}{2N^{1/3}(z-b^*_c)}\left(\frac{z}{b^*_c}\right)^{\frac r2 \sigma_3}\left[\begin{array}{cc}u(s)&q(s)\\-q(s)&-u(s)\end{array}\right]\left(\frac{z}{b^*_c}\right)^{-\frac r2 \sigma_3},
\end{split}
\end{equation}
to exactly cancel the pole singularity of the first subleading term in \eqref{Psiseries}.    This is invertible for $\D_c$ small enough.  The conjugation by $({z}/{b^*_c})^{\frac r2 \sigma_3}$ is introduced to later match the growth behavior of the outer parametric $A^\infty$ at $z=\infty$ that we define below.  Then we have
\begin{equation}\label{hmmm}
{\rm e}^{-i\frac {8}3 \zeta_\beta^3 \sigma_3}H_c(z) \Pi(\xi(z)) {\rm e}^{i\frac {8}3 \zeta_\beta^3 \sigma_3}
=z^{\frac r2 \sigma_3} S^\infty(z)\,z^{-\frac r2 \sigma_3}\left(I+{\cal O}\left(N^{-2/3}\right)\right) ,\quad z\in\partial \D_c,
\end{equation}
where we have defined the following shorthand notation for convenience.
\begin{equation}
S^\infty(z):= I+\frac{\gamma^*_{c}}{2N^{1/3}(z-b^*_c)}\left[\begin{array}{cc}u(s)&q(s) (b^*_c)^{-r} {\rm e}^{-i\frac {16}3 \zeta_\beta^3} \\-q(s) (b^*_c)^{r}{\rm e}^{i\frac {16}3 \zeta_\beta^3}&-u(s)\end{array}\right].
\end{equation}

Now we define our strong asymptotics for $A(z)$ by
\begin{align}
A^\infty(z)&:=\begin{cases}\displaystyle
\Phi(z) \,S^\infty(z) ,  &z\in{\rm Int}({\cal B})\setminus\D_c,
\\\displaystyle
\Phi(z)\, z^{-\frac r2\sigma_3}  \left[\begin{array}{cc}0 & 1 \\-1 & 0\end{array}\right]z^{\frac r2 \sigma_3}S^\infty(z)\,z^{-\frac r2 \sigma_3} \,
  \left[\begin{array}{cc}0 & -1 \\1 & 0\end{array}\right]z^{\frac r2\sigma_3}, & z\in{\rm Ext}({\cal B})\setminus\D_c,
\\
\Phi(z)\,z^{-\frac r 2\sigma_3} {\cal P}(z) \,z^{\frac r 2 \sigma_3}, &z\in\D_c,
\end{cases}
\\\label{Acrit} &=\begin{cases}\displaystyle
\left(I+\frac{\gamma^*_{c}}{2N^{1/3}(z-b^*_c)}\left[\begin{array}{cc}-u(s)& \!\!\!\! q(s) (b^*_c)^{r}{\rm e}^{i\frac {16}3 \zeta_\beta^3}\\ -q(s) (b^*_c)^{-r} {\rm e}^{-i\frac {16}3 \zeta_\beta^3} &u(s)\end{array}\right]\right)\Phi(z), &z\notin\D_c,
\\
\Phi(z)\,z^{-\frac r 2\sigma_3} {\cal P}(z) \,z^{\frac r 2 \sigma_3}, &z\in\D_c.
\end{cases}
\end{align}

As in the pre-critical case, we define the error matrix by
\begin{equation}
{\cal E}(z):=A^\infty(z)\,A^{-1}(z) .
\end{equation}
Let us evaluate the jump of the error matrix on ${\rm Int}({\cal B})\cap\partial\D_c$.  Below, $+$ side is to the side of $\D_c$. 
\begin{align}
{\cal E}(z)_+({\cal E}(z)_-)^{-1}&=A^\infty(z)_+(A^\infty(z)_-)^{-1}
\nonumber
\\&=\Phi(z)\,z^{-\frac r2\sigma_3}{\cal P}(z)\,z^{\frac r2\sigma_3}(S^\infty(z))^{-1}\Phi(z)^{-1}
\nonumber
\\(\text{using \eqref{calPcrit}})~~ &=\left[\begin{array}{cc}0\! & 1\! \\\!\!\!-1 \! & 0\!\end{array}\right] z^{-\frac r2\sigma_3}
{\rm e}^{-i\frac {8}3 \zeta_\beta^3 \sigma_3} H_{c}(z) \Pi(\xi(z)) \,{\rm e}^{i\frac {8}3 \zeta_\beta^3 \sigma_3}
z^{\frac r2\sigma_3}(S^\infty(z))^{-1}\left[\begin{array}{cc}\!0\! & \!\!\!-1 \!\! \\\!1 \!& 0\!\!\end{array}\right] 
\nonumber
\\(\text{using \eqref{hmmm}})~~ &=\left[\begin{array}{cc}0\! & 1\! \\\!\!\!-1 \! & 0\!\end{array}\right] z^{-\frac r2\sigma_3}
z^{\frac r2 \sigma_3} S^\infty(z)\,z^{-\frac r2 \sigma_3}\left(I+{\cal O}\left(N^{-2/3}\right)\right)
z^{\frac r2\sigma_3}(S^\infty(z))^{-1}\left[\begin{array}{cc}\!0\! & \!\!\!-1 \!\! \\\!1 \!& 0\!\!\end{array}\right]
\nonumber
\\&=I+{\cal O}\left(N^{-2/3}\right),\quad \text{ for $z\in{\rm Int}({\cal B})\cap\partial\D_c$.}
\end{align}
A similar calculation gives the same error bound on ${\rm Ext}({\cal B})\cap\partial\D_c$.  The jump matrices of ${\cal E}$ on the other contours are exponentially close to the identity for large $N$ and, therefore, ${\cal E}$ satisfies a small-norm RHP and we conclude that 
\begin{equation}\label{postAfinal}
A(z)=\left(I+{\cal O}\left(N^{-2/3}\right)\right)A^\infty(z)\ ,
\end{equation}
uniformly in the whole complex plane.

To find the strong asymptotics of the orthogonal polynomial we need to undo the transformations that lead us from $Y(z)$ to $A(z)$ and then rewrite it in terms of the approximation $A^\infty(z)$.   We skip the calculations as they are similar to the ones for the pre-critical and the post-critical cases.  We note that the final form of the asymptotics contains $b_c^*$ and $\gamma_c^*$ which can subsequently be replaced by $b_c$ and $\gamma_c$ using the approximations \eqref{bstar} and \eqref{gammastar}. This gives Theorem \ref{thm-crit}.

To show the Proposition \ref{prop-O1} near $\beta$ we also need to evaluate $P_{n,N}$ inside $\D_c$, that involves the Painlev\'e II parametric $\Pi$.    We only need to use that $\Pi(\zeta)$ in \eqref{RHPP2} is bounded uniformly over $\zeta$ in any compact subsets of $\C$.

%
%
%
%
%
%
%
%
%

 \section{Proof of Theorem \ref{propnorm}}\label{sec-norm}

First we relate the norming constant of $P_{n,N}$ with respect to the weight $w_{n,N}(z)\,\d z$ with the norming constant, $h_n$, with respect to $\e^{-N Q(z)}\,\d A(z)$.
Let us define the former by (recall the definition of $\nu_j$ in \eqref{nu})
\begin{equation}
\widetilde h_n:= \oint_\Gamma P_{n,N}^2(z) w_{n,N} \d z = \frac {\det[\nu_{i+j}]_{0\leq i,j\leq n}}{\det[\nu_{i+j}]_{0\leq i,j\leq n-1}}.
\end{equation}
\begin{prop}
\label{norming}
We have (note $n=Nt+r$)
\begin{align}
 h_n &= -\frac {\Gamma(Nc+n+1)}{2i\,N^{Nc+n+1}} \frac {\widetilde h_n}{ P_{n+1,N}(0)}
 \\ &=
 \left(1+ \mathcal O\left(\frac1N\right)\right)i(c+t)^r\sqrt{\frac {\pi(c+t)}{2N}}{\rm e}^{N\big((t+c)\ln (t+c)-(t+c)\big)}\frac {\widetilde h_n}{ P_{n+1,N}(0)}\ .
\end{align}
\end{prop}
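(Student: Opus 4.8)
\textbf{Proof proposal for Proposition \ref{norming}.}

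The plan is to establish the two displayed equalities in turn. For the first one, I would start from Lemma \ref{lemma:moments} applied with $j=n$ and $k=n$ (so that $w_{k+1,N}=w_{n+1,N}$), which gives
\begin{equation*}
\iint_\C z^n \c{(z-a)}^n |z-a|^{2Nc}\e^{-N|z|^2}\,\d A(z)=\frac{\Gamma(Nc+n+1)}{2i\,N^{Nc+n+1}}\oint_\Gamma z^n w_{n+1,N}(z)\,\d z.
\end{equation*}
On the left-hand side I would replace $\c{(z-a)}^n$ by $\c{P_{n,N}(z)}$ up to lower-order monomials in $\c z$ and use the orthogonality relation \eqref{2d_ortho}: only the top-degree term survives, producing $h_{n,N}$. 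On the right-hand side, the key observation is that $z^n w_{n+1,N}(z)=z^{-1}w_{n,N}(z)$ (since $w_{n,N}(z)=(z-a)^{Nc}\e^{-Naz}z^{-Nc-n}$), so the contour integral is $\oint_\Gamma z^{-1} w_{n,N}(z)\,\d z$. I would then evaluate this by expanding $P_{n+1,N}(z)/P_{n+1,N}(0)$ as a polynomial that equals $1$ at $z=0$ minus $z$ times a polynomial of degree $\le n-1$: using \eqref{contour_ortho} the degree-$\le n-1$ part integrates against $w_{n,N}$ to zero after the relabeling, while the residue at $z=0$ (which is what $\oint z^{-1}w_{n,N}$ computes, after checking that $\Gamma$ can be shrunk appropriately — note $w_{n,N}$ has only the pole structure at $0$ and the branch cut on $[0,a]$) gives $-\widetilde h_n / P_{n+1,N}(0)$ up to the relevant constant. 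Matching the two sides yields $h_n=-\frac{\Gamma(Nc+n+1)}{2i\,N^{Nc+n+1}}\,\frac{\widetilde h_n}{P_{n+1,N}(0)}$. The determinantal identity for $\widetilde h_n$ is standard from Cramer's rule applied to \eqref{Qn} and \eqref{nu}.

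For the second equality, the only task is the asymptotics of the prefactor $\frac{\Gamma(Nc+n+1)}{2i\,N^{Nc+n+1}}$ as $N\to\infty$ with $n=Nt+r$. Writing $Nc+n+1=N(t+c)+r+1$ and applying Stirling's formula $\Gamma(x+1)=\sqrt{2\pi x}\,(x/\e)^x(1+\order{1/x})$ with $x=N(t+c)+r+1$, I would extract $\e^{N((t+c)\ln(t+c)-(t+c))}$ from the exponential part, a factor $(t+c)^r$ from the shift by $r$, the Gaussian-type factor $\sqrt{\pi(t+c)/(2N)}$ from the square root combined with the leftover powers of $N$, and absorb the rest into $(1+\order{1/N})$. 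The factor $-1/(2i)=i/2$ accounts for the $i$ in front; careful bookkeeping of the powers $N^{-(Nc+n+1)}=N^{-N(t+c)-r-1}$ against the $(N(t+c))^{N(t+c)+r+1/2}$ coming from Stirling is where the constant $(t+c)^r\sqrt{\pi(t+c)/(2N)}$ and the $(1+\order{1/N})$ error emerge.

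The main obstacle I anticipate is the first equality rather than the asymptotics: one must be careful that the contour $\Gamma$ in Lemma \ref{lemma:moments} (a loop around both $0$ and $a$) can legitimately be used to read off a residue at $z=0$ for the integrand $z^{-1}w_{n,N}(z)$, given that $w_{n,N}$ is not meromorphic but has a branch cut on $[0,a]$ — so the "residue" computation is really a deformation-of-contour argument that must be reconciled with the orthogonality relations \eqref{contour_ortho}. The cleanest route is probably to avoid residues altogether: expand $1 = P_{n+1,N}(z)/P_{n+1,N}(0) + z\cdot(\text{poly of degree}\le n-1)/P_{n+1,N}(0)$, multiply by $z^{-1}w_{n,N}(z)$ and integrate over $\Gamma$; the first term gives $\widetilde h_{n+1}$-type quantity but actually, since $P_{n+1,N}$ has degree $n+1$, one wants instead to write $z\cdot(\text{stuff})$ so that $z^{-1}\cdot z = 1$ pairs against $w_{n,N}$ to give $\oint_\Gamma w_{n,N}$, and use \eqref{contour_ortho} to kill everything of degree between $1$ and $n$. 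Sorting out exactly which polynomial identity makes all unwanted terms vanish, and tracking the single surviving term to identify it as $\widetilde h_n/P_{n+1,N}(0)$, is the delicate combinatorial bookkeeping step; everything else is routine.
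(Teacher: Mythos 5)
Your plan for the first equality contains two concrete errors that make it fail as stated, and they are not just bookkeeping issues.

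First, taking $j=n$, $k=n$ in Lemma \ref{lemma:moments} does \emph{not} produce $h_n$ on the left. Expanding $z^n=\sum_{j\le n}a_jP_{j,N}(z)$ and $\overline{(z-a)}^n=\sum_{k\le n}b_k\overline{P_{k,N}(z)}$ (with $a_n=b_n=1$) and using orthogonality gives $\iint z^n\overline{(z-a)}^n\,|z-a|^{2Nc}e^{-N|z|^2}\,\d A=\sum_{j\le n}a_jb_jh_j$, which is $h_n$ plus non-vanishing lower terms. Only the choice ``$k=n$, $j$ summed against the coefficients of $P_{n,N}$'' gives $\iint P_{n,N}(z)\overline{(z-a)}^n\,(\cdots)\,\d A=h_n$ cleanly, because $P_{n,N}$ is orthogonal to all $\overline z^k$ with $k<n$. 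Second, the ``key observation'' $z^nw_{n+1,N}(z)=z^{-1}w_{n,N}(z)$ is simply false: from $w_{m,N}(z)=(z-a)^{Nc}e^{-Naz}z^{-Nc-m}$ one has $z^nw_{n+1,N}(z)=(z-a)^{Nc}e^{-Naz}z^{-Nc-1}=w_{0,N}(z)/z$, whereas $z^{-1}w_{n,N}(z)=(z-a)^{Nc}e^{-Naz}z^{-Nc-n-1}$. The two differ by a factor $z^n$. So the contour integral you arrive at is not the one you think it is. Once these are fixed (replace step one with the sum over $j$ against the coefficients of $P_{n,N}$, after which the right side is legitimately $\oint P_{n,N}(z)\,w_{n+1,N}(z)\,\d z$, with $w_{n+1,N}=w_{n,N}/z$), you still need to prove $\oint_\Gamma P_{n,N}(z)\,w_{n+1,N}(z)\,\d z=-\widetilde h_n/P_{n+1,N}(0)$, and the polynomial-identity tricks you sketch for this (expanding $1$ or $P_{n+1,N}(z)/P_{n+1,N}(0)$) do not close, because neither $\oint\widetilde P_n\,w_{n,N}\,\d z$ nor the ``residue at $0$'' of $z^{-1}w_{n,N}$ simplifies as you hope (as you yourself note, $w_{n,N}$ has a branch cut on $[0,a]$, not a pole). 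The identity is true — one can establish it by writing all four quantities $P_{n,N}$, $P_{n+1,N}(0)$, $\widetilde h_n$, and the moments $\nu_k=\oint z^kw_{n,N}\,\d z$ as Hankel-type determinants and a Laplace expansion — but that is precisely the route the paper takes from the start: it writes $h_n=\det[\mu_{ij}]_{0\le i,j\le n}/\det[\mu_{ij}]_{0\le i,j\le n-1}$ and $\widetilde h_n=\det[\nu_{i+j}]_{0\le i,j\le n}/\det[\nu_{i+j}]_{0\le i,j\le n-1}$, converts the $\mu$-determinants to $\nu$-determinants column by column via Lemma \ref{lemma:moments} and a unimodular column change of basis, and pulls out $P_{n+1,N}(0)$ by expanding along one row. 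That determinantal route avoids the branch-cut/residue issue entirely and does the bookkeeping in one stroke. Your Stirling computation for the second equality is correct and matches the paper's.
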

The proof can be found in Appendix \ref{app-hn}.

The importance of Proposition \ref{norming} relies in the fact that we can find the asymptotics for $h_n$ in terms of the RHP already set up. Indeed it follows from eq. (\ref{Ymatrix}) that 
\begin{equation}
\widetilde h_n =- 2\pi i  \lim_{z\to \infty} z^{n+1}\, [Y(z)]_{12} \label{tildenorms}
\end{equation}
Considering the post-critical case first, using \eqref{postA}, we have
\begin{equation}
\widetilde h_n = -2\pi \ii\lim_{z\to\infty}  z^{n+1} \left[Y(z) \right]_{12} =\left(1+{\cal O}\left(N^{-1}\right)\right) \frac{1}{i}\sqrt{\frac{2\pi}{N}}\gamma_1\beta^r e^{N t\ell},
\end{equation}
and using \eqref{Pnmpost} we have
\begin{equation}
P_{n+1,N}(0)=\left(1+{\cal O}\left(N^{-1}\right)\right) \sqrt{\frac{1}{2\pi N}}\gamma_1\beta^{r} e^{N t g(0)}=\left(1+{\cal O}\left(N^{-1}\right)\right) \sqrt{\frac{1}{2\pi N}}\gamma_1\beta^{r} e^{N t \ell}.
\end{equation}
Note that the leading order behavior of both $\widetilde h_n$ and $P_{n+1,N}$ come from the subleading order of $A^\infty$ \eqref{postA} which is of order ${\cal O}(N^{-1/2})$.
This gives $\widetilde h_n/P_{n+1,N}(0)\approx -2\pi \ii $ and, from Proposition \ref{norming} we obtain
\begin{equation}
h_n =  \left(1+ \mathcal O\left(N^{-1}\right)\right)2\pi\sqrt{\frac {\pi}{2N}} (c+t)^{r+1/2}{\rm e}^{N\big[(t+c)\ln (t+c)-(t+c)\big]}.
\label{normdough}
\end{equation}
Understanding that $\rho=\sqrt{c+t}$ and $\ell_\text{2D}=(t+c)\ln (t+c)-(t+c)$ for the post-critical (and critical) case, we obtain Theorem \ref{propnorm} for the post-critical case.  

The calculation for the critical case gives exactly the same result except that the error bound gets worse to ${\cal O}(N^{-1/3})$.  The leading terms of both $\widetilde h_n$ and $P_{n+1,N}$ are written in terms of the Hasting-McLeod solution of Painlev\'e II, however, they cancel when one computes $h_n$.  Note that the leading order behavior of both $\widetilde h_n$ and $P_{n+1,N}$ come from the subleading order of $A^\infty$ \eqref{Acrit} which is of order ${\cal O}(N^{-1/3})$.  Combined with the behavior of the error matrix in \eqref{postAfinal}, one can obtain the error bound of $h_n$ for the critical case.

Let us evaluate $h_n$ for the pre-critical case.  Using $A^\infty$ in \eqref{Ainf}, we obtain
\begin{align}
\widetilde h_{n}&=- 2i  \pi\lim_{z\to\infty}z^{n+1} \left[e^{tN \frac \ell 2\sigma_3}\left(I+{\cal O}\left(N^{-1}/(1+|z|)\right)\right)A^\infty(z) \,e^{tN (g(z)-\frac \ell 2)\sigma_3} \right]_{12}
\\&=-2\pi\ii \,(\alpha\rho)^r\sqrt{\kappa\rho} \, e^{Nt\ell} \left(1+{\cal O}\left(N^{-1}\right)\right).
\end{align}
The $z$-dependence in the error ${\cal O}\left(N^{-1}/(1+|z|)\right)$ comes because the non-trivial jumps of the error matrix ${\cal E}$ \eqref{calEpre} are on a {\em bounded} set of contours.
 
Using the identities (that follow from \eqref{Fz} and the last equation in \eqref{valS})
\begin{equation}\begin{split}
&\rho F(0)=|\beta|=\alpha\rho+\kappa/\rho=(c+t)\alpha/\rho,
\\ &\rho F'(0)=(|\beta|-\Re\beta)/(2|\beta|)=\kappa/(\alpha^2\rho+\kappa)=\frac{\rho\kappa}{\alpha^2}\frac{1}{c+t},
\end{split}
\end{equation}
we have
\begin{equation}\begin{split}
P_{n+1}(0)&=(1+{\cal O}(1/N))\sqrt{\rho F'(0)} (\rho F(0))^{r+1} e^{Nt g(0)}
\\ &=(1+{\cal O}(1/N))\sqrt{\frac{\rho\kappa}{\alpha^2}\frac{1}{c+t}} \frac{(c+t)^{r+1}\alpha^{r+1}}{\rho^{r+1}} e^{Nt g(0)}.
\end{split}
\end{equation}
Collecting the above results with Proposition \ref{norming} we obtain the following asymptotic result.
 \begin{align}
 h_n &=  \left(1+ \mathcal O\left(\frac1N\right)\right)i(c+t)^r\sqrt{\frac {\pi(c+t)}{2N}}{\rm e}^{N\big[(t+c)\ln (t+c)-(t+c)\big]}\frac {\widetilde h_n}{ P_{n+1,N}(0)}
 \\ & = \left(1+ \mathcal O\left(\frac1N\right)\right) 2\pi \sqrt{\frac{\pi}{2N}}\rho^{2r+1} e^{N(t \ell-t g(0)-(t+c)+(t+c)\log (t+c))}.
 \label{hn--}
 \end{align}

We further need the following result that also proves \eqref{eq:ell2D}.
\begin{lemma}\label{lem-ell2d}  Defining $\ell_\text{2D}:=2t \Re g(z)-Q(z)$ for $z\in\partial K$.  Then we have
\begin{equation}\label{714}
\ell_\text{2D}=t(\ell-\Re g(0))-(c+t)+(c+t)\log(c+t).
\end{equation}
Explicitly the constant is computed as 
\be
\ell_\text{2D} = \left\{ 
\begin{array}{cl} 
\ds 2(c+t) \log \rho - 2c \log \a + \left(\a^2 -1 \right)\left(a^2 - \frac {2 a \rho} \a\right) - \frac {\rho^2} {\alpha^2} &  \text{for } t<t_c;\\
(c+t) \log(c+t) - (c+t) & \text {for  } t\geq t_c
\end{array}
\right.
\label{140}
\ee
\end{lemma}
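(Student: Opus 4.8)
The plan is to obtain $\ell_\text{2D}$ by evaluating $2t\Re g(z)-Q(z)$ at one convenient point of $\partial{\bf P}(K)$. That this difference is constant on $\partial K$ (so that the definition in the statement is meaningful and agrees with the $\ell_\text{2D}$ of \eqref{equil}) follows from condition (A) of \eqref{equil}, established in Propositions \ref{thm-postK} and \ref{thm-pre}, together with the identity $2t\Re g(z)=\frac{2}{\pi}\int_K\log|z-w|\,\d A(w)$ off ${\bf P}(K)$ (Lemma \ref{lem-g}); in the post-critical case a short computation with \eqref{gdough} shows the inner boundary circle carries the same value. The starting point is the purely algebraic identity obtained from Definition \ref{def-g} by expanding $Q(z)=|z|^2-2c\log|z-a|$, $\Re V(z)=a\Re z-c\log|z-a|+(c+t)\log|z|$ and using $\Re\phi={\cal U}_\text{OP}$:
\[
\ell_\text{2D}=a\Re z_0+c\log|z_0-a|+(c+t)\log|z_0|-|z_0|^2-\Re\phi(z_0)+t\ell ,\qquad z_0\in\partial{\bf P}(K),
\]
so everything reduces to evaluating $\Re\phi(z_0)$ on $\partial{\bf P}(K)$.

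For the critical and post-critical cases this is immediate. Solving $2tg=V-\phi+t\ell$ with \eqref{gdough} gives $\phi(z)=az+c\log(z-a)-(c+t)\log z+t\ell$ on ${\rm Ext}({\cal B})\supset\partial{\bf P}(K)$, and substituting any $z_0$ with $|z_0|=\sqrt{c+t}$ into the displayed identity collapses it to $\ell_\text{2D}=2(c+t)\log|z_0|-|z_0|^2=(c+t)\log(c+t)-(c+t)$. Moreover $0,a\in{\rm Int}({\cal B})$ and $g\equiv\frac at z+\ell$ there, so $g(0)=\ell$ and \eqref{714} is trivially true in this regime. Continuity of $\ell_\text{2D}$ at $t=t_c$ then follows from the pre-critical formula below (cf. Lemma \ref{globalUcrit}), and this establishes the $t\geq t_c$ line of \eqref{140}.

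For the pre-critical case the content is to evaluate the remaining constant explicitly. I would transfer the computation to the uniformizing $v$-plane via $z=f(v)$, on which $\partial{\bf P}(K)=\partial K$ is the image of $|v|=1$: since $g'=\frac1{2t}(V'-y)$ and, by \eqref{diff}, $\sqrt{(z-\beta)(z-\c\beta)}$ pulls back to $\pm\rho\,(v-{\rm deck}(v))$, the $1$-form $g'(f(v))f'(v)\,\d v$ is rational in $v$. Hence $\Re\phi(z_0)$, the Robin constant $\ell$ (pinned down by $g(f(v))-\log v-\log\rho\to0$ as $v\to\infty$), and $\Re g(0)=\frac1{2\pi t}\int_{\cal B}\log|s|\,|y(s)|\,|\d s|$ all become residue evaluations at $v=\infty,\alpha,0$. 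Inserting \eqref{rhokappa} and simplifying yields the first line of \eqref{140}; the identity \eqref{714} is then an algebraic consequence of these closed forms. An alternative to the $v$-plane bookkeeping for $\Re\phi(z_0)$ is to use the involution identity $2{\cal U}_\text{OP}(z)={\cal U}(z)+{\cal U}(S_\text{back}(z))-|S_\text{back}(z)-\c z|^2$ of Lemma \ref{lem-UOP2d} at $z_0\in\partial K$, where ${\cal U}(z_0)=0$.

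The main obstacle is precisely this pre-critical evaluation: correctly tracking the branches of the square root and of the logarithms across ${\cal B}$ while carrying out the residue calculus, and then verifying by hand that the resulting expression in $\rho,\alpha,\kappa,a,c,t$ both matches \eqref{140} and satisfies \eqref{714} once the closed forms of $\ell$ and $g(0)$ are substituted. Each individual step is routine given the displayed reduction and the pullback formulas, but the algebra is lengthy.
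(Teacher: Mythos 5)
Your plan is correct in outline, and your post-critical verification of both \eqref{714} and the second line of \eqref{140} is essentially the paper's own computation. For the pre-critical explicit formula \eqref{140} your proposed route through the uniformizing $v$-plane (choosing $z_0=f(1)$, using $S(f(v))=f(1/v)$, and reducing to residue or contour evaluations with the relations \eqref{rhokappa} and \eqref{valS}) also matches what the paper actually does. What differs is your treatment of identity \eqref{714} in the pre-critical regime: you propose to obtain closed forms for $\ell$, $\Re g(0)$ and $\Re\phi(z_0)$ separately and then verify \eqref{714} by algebra. The paper instead derives \eqref{714} directly, and cleanly, by taking the limit $z\to 0$ in Lemma \ref{lem-UOP2d}: one writes $2\,\mathcal U_{\mathrm{OP}}(z)=\Re V(z)-2t\,\Re g(z)+t\ell$ and on the other side uses $S_{\mathrm{back}}(z)=(c+t)/z+\mathcal O(1)$ as $z\to0$; comparing the two singular expansions at $z=0$ collapses to \eqref{714} without ever computing $\ell$ or $g(0)$ individually. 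Your alternative would work but is considerably heavier, and you do not actually carry it out, whereas the paper's $z\to0$ device avoids exactly the branch-and-residue bookkeeping you flag as the main obstacle. You do mention Lemma \ref{lem-UOP2d} as a fallback for computing $\Re\phi(z_0)$ at a boundary point, but the decisive application is its asymptotics at the interior point $z=0$, which you missed.
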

\begin{proof} We take the limit $z\to 0$ in  Lemma \ref{lem-UOP2d}. As $z\to 0$ we have $S_\text{back}(z)= (c+t)/z+{\cal O}(1)\to \infty$ and $S_\text{back}(z)\, z= c+t+{\cal O}(z)$.  From ${\cal U}_\text{OP}(z)=\Re V(z)-2 t \Re g(z)+t \ell$ and using Lemma \ref{lem-UOP2d} to rewrite it differently, we have
\begin{align}
&2{\cal U}_\text{OP}(z)=-2c\log a+2(c+t)\log|z|-4 t \Re g(0)+2t\ell +{\cal O}(z),
\\& {\cal U}(z)+{\cal U}(S_\text{back}(z))-|S_\text{back}(z)-\overline z|^2=-2c\log a-2t \Re g(0)+2\ell_\text{2D} 
\\&\qquad\qquad\qquad\qquad\qquad\qquad      -2(c+t)\log((c+t)/|z|)+2(c+t)+{\cal O}(z),
\end{align}
as $z\to 0$.  Equating the above two, we obtain \eqref{714} in the limit $z\to 0$.

{ We still need to provide the explicit expressions in \eqref{140}. 
We use the definition of $\ell_{2D} = 2t \Re g(z_0) - Q(z_0)$ and $z_0\in \partial \mathbf P(K)$.\\
{\bf Post-critical case $t>t_c$}. Choosing $z_0 = \sqrt{t+c}$ and plugging straighforwardly into the expression for $g(z)$ (1.21) and $\ell$ (1.22) yields the statement after elementary simplifications.\\
{\bf Pre-critical case $t\leq t_c$}. 
 Using the definition of $\mathcal U_{2D}$ \eqref{Phi2D} and the alternative expression \eqref{calU}
(recall that $\mathcal U$ coincides with $\mathcal U_\text{2D}$ outside of $K$)
\be
\mathcal U_{2D}= Q(z) - \frac 2 \pi \int_{K} \log |z-w| \d A(w)  + \ell_{2D} = |z|^2 - |z_0|^2 - 2 \Re \int_{z_0}^z S(\zeta) \d\zeta, 
\label{qwq}
\ee
one obtains that 
\be
\ell_{2D} = \frac 2 \pi \int_{K} \log |z-w| \d A(w) - 2 \Re \int_{z_0}^z S(\zeta) \d\zeta - 2c \log \frac 1{|z-a|} - |z_0|^2.
\ee
Since this is a constant, we take the limit $z\to \infty$ and, recalling that the area of $K$ is $\pi  t$,  we have 
\be
\ell_{2D} = \lim_{z\to \infty} \left\{
2(t+c) \log |z| - |z_0|^2  - 2 \Re \int_{z_0}^z S(\zeta) \d\zeta
\right\}. \label{qwq2}
\ee
In order to compute \eqref{qwq2} we use the uniformizing map $f(v)$ and choose $z_0 = f(1)$. Recalling that $S(f(v)) = f(1/v)$ (Lemma \ref {lem-preA}), we must then evaluate
\begin{align}\begin{split}
& \lim_{v\to +\infty} \left\{
2(t+c) \log f(v) - |f(1)|^2  - 2 \Re \int_{1}^v f\left(\frac 1 w \right) f'(w) \d w \right\} \\&= \lim_{v\to +\infty} \left\{
2(t+c) \log \rho  - |f(1)|^2  - 2  \Re \int_{1}^v \left(f\left(\frac 1 w \right) f'(w) - \frac {t+c} w\right) \d w \right\}. \label{qwq3}
\end{split}
\end{align}
In evaluating \eqref{qwq3}, it is necessary to use $c+t =  {\rho(\rho+\kappa \a^{-2})}$ (see \eqref{valS}); 
the integral is straightforward and yields a convergent expression (the integrand has behavior $\mathcal O(w^{-2})$)  and \eqref{qwq3} yields
\be
2(t+c) \log \rho  - \bigg(
\rho - \frac {\kappa}{1-\a} - \frac \kappa \a
\bigg)^2   -{\frac {2\kappa
  \left( \rho (1 - \a^2)^2+\kappa{\alpha}^{2} \right) }{{\alpha}
^{2} \left(1-\a^2 \right) ^{2} }}\log\alpha
+{
\frac { 2( -\rho+\rho\,{\alpha}^{2}+\kappa ) \kappa}{\alpha
\, \left( \alpha-1 \right) ^{2} \left( \alpha+1 \right) }}.
\ee
Using the relations \eqref{valS}, elementary manipulations imply the expression \eqref{140}.}
\end{proof}

\appendix
\renewcommand{\theequation}{\Alph{section}.\arabic{equation}}
\section{Proof of Lemma \ref{lem-preA}}\label{sec-proof-lem-preA}

Using the deck transformation \eqref{deck}, the injectivity of $f$ on $|v|>1$ is equivalent to the condition: ${\rm deck}(\c{\mathbb D}^c)\cap \c{\mathbb D}^c=\emptyset$.  Since the deck transformation is a M\"obius transformation and maps the unit circle to the circle that passes through ${\rm deck}(\pm1)$, the above injectivity is equivalent to the condition: $-1\leq{\rm deck}(\pm1)\leq 1$, which gives the equation \eqref{conformal-eq}.

Suppose that $f:\c{\mathbb D}^c\to K^c$ is a conformal mapping.   (In the last step of the proof, we will check \eqref{conformal-eq} to confirm this.)
Since $f(v)\in\partial K$ when $v\in\partial{\mathbb D}$ it follows that $S(z)=S(f(v))=f(1/v)=\overline{f(v)}=\overline z$ over $z\in \partial K$.  So the item i) in Definition \ref{def-S} is satisfied.  To satisfiy ii) and iii) we note that 
\begin{equation}\text{$f(1/v)\to \infty$ only when $v\to 1/\alpha$ or $v\to 0$.}
\end{equation}  It means that the full (multi-valued) analytic continuation of $S(z)$ over the Riemann surface has two poles at $f(1/\alpha)$ and $f(0)=0$.    To satisfy ii) in Definition \ref{def-S}, since $a\neq 0$, we have $f(1/\alpha)=a$.  Therefore, the conditions ii) and iii) in Definition \ref{def-S} are equivalent to the following three equations for the the three unknowns $\rho, \kappa, \alpha$:
\begin{equation}\label{valS}
\begin{cases}
&f(1/\alpha)=\frac{\rho\,\alpha^2 - \rho +\kappa}{\alpha (\alpha^2-1)} =a,
\\
&\res_{z=a}S(z)\,\d z=\res_{v=1/\alpha}f(1/v)\, f'(v) \d v=\frac{\kappa}{\alpha^2}\left(\rho+\frac{\kappa\,\alpha^2}{(1-\alpha^2)^2}\right)=c,
\\
&-\!\res_{z=\infty}S(z)\,\d z=-\!\res_{v=\infty} f(1/v)\,f'(v)\d v=\frac{\kappa\rho}{\alpha^2} +\rho^2=c+t.
\end{cases}
\end{equation}
We show that these equations have a unique solution within the proposed parameter space.
By eliminating $\rho$ and $\kappa$ from equations \eqref{valS} we obtain the equations \eqref{rhokappa},
and  $\alpha$ must satisfy
\begin{equation}
0=P(X):=X^3-\left(\frac{a^2+4c+2t}{2a^2}\right)X^2+\frac{t^2}{2a^4},\quad X:=\alpha^2.
\end{equation}
This is a third order equation in $X$ and its discriminant in $X$ is given by $t^2\Delta/(4a^{10})$ where 
\begin{equation}\label{disc}
\Delta:=(a^2+4c+2t)^3-27t^2a^2.
\end{equation}
This is strictly positive for any $c>0$ because $\partial_c\Delta=6(a^2+4c+2t)^2\geq 0$ and $\Delta|_{c=0}= (a^2-t)^2(a^2+8t)\geq 0$.  Therefore the equation \eqref{000} has three real roots among which exactly two are positive, because the sum of the three roots is positive and the product of the three roots is negative.   

For $a^2\leq t< t_c = a(a+2\sqrt c)$ we have $c> (a^2-t)^2/(4a^2)$, and 
\begin{equation}
P(0)=t^2/(2a^4)> 0;\quad P(1)=  -2\left[c-(a^2-t)^2/(4a^2)\right]/a^2< 0.
\end{equation}
So, there is exactly one real root of the polynomial $P(X)$ in $0< X< 1$ for $a^2\leq t<t_c$.  

When $t< a^2$, there may be two roots of $P(X)$ in $0<X<1$, however there is exactly one root of $P(X)$ in $0<X<t/a^2<1$ because $P(t/a^2)=-2c\,t^2/a^6<0$. We must choose this root to have $\kappa> 0$ because, if we choose the other root such that $X\geq t/a^2$ then $\kappa$ becomes non-positive by the factor $(t-a^2\alpha^2)$ in the numerator, see \eqref{rhokappa}.     

In summary, taking the above choice of the root of $P(X)$, $\alpha=\sqrt X\in (0,1)$ is the unique solution that satisfies the system of equations \eqref{valS} as well as the conditions $\rho>0$, $\kappa>0$ and $0<\alpha<1$.

Lastly, we need to check whether our solution satisfies \eqref{conformal-eq}.   From \eqref{rhokappa} it follows that
\begin{equation}\label{inside}
\frac{\kappa}{\rho}=(1-\alpha^2)\frac{t-a^2\alpha^2}{t+a^2\alpha^2}<1-\alpha^2.
\end{equation}
To conclude, the region $K$ defined through the conformal map $f$ has the Schwarz function that satisfies Definition \ref{def-S}.

\section{Proof of Lemma \ref{lem-sy}}\label{app-lem26}
\begin{proof}
Using the definition \eqref{Sback}, $S(z)+S_\text{back}(z)$ defines a rational function because it is continuous on ${\cal B}$; Crossing ${\cal B}$ the boundary values satisfy $[F(z)]_\pm=[{\rm deck}\circ F(z)]_\mp$ by Lemma \ref{lem-F}, and therefore $[S(z)]_\pm=[S_\text{back}(z)]_\mp$.
To determine this rational function we only need to identify its pole singularities.    
\begin{itemize}
\item[--] $S(z)=f(1/F(z))$ has a pole at $p$ only if {\bf i)} $F(p)=1/\alpha$ or {\bf ii)} $F(p)=0$.  
\item[--] $S_\text{back}(z)=f(1/{\rm deck}\circ F(z))$ has a pole at $p$ only if {\bf iii)} ${\rm deck}\circ F(p)=1/\alpha$ or {\bf iv)} ${\rm deck}\circ F(p)=0$. 
\item[--] {\bf i)} and {\bf iii)} are complementary, and $p=f(1/\alpha)$ for any of them.
\item[--] {\bf ii)} and {\bf iv)} are complementary, $p=f(0)=0$ for any of them.
\item[--] Since we know that $S(z)$ has a pole at $a$ with residue $c$, {\bf i)} must hold with $p=a$.
\item[--] {\bf ii)} is impossible because, $F(0)=|\beta|/\rho\neq 0$ by using the fact that $0$ is to the right of the branch cut ${\cal B}$.
\item[--] So {\bf iv)} holds, i.e. ${\rm deck}\circ F(0)=0$.
The residue of the pole of $S_\text{back}$ at $z=0$ is given by
\begin{equation}
\res_{z=0}S_\text{back}(z)\,\d z=\res_{v=0} f(1/v)\,\d f(v)=-\res_{u=\infty} f(1/u)\,\d f(u)=-\res_{z=\infty} S(z)\,\d z=t+c.
\end{equation}
\item[--] Lastly, $S_\text{back}(\infty)= f(1/{\rm deck}\circ F(\infty))=f(1/{\rm deck}(\infty))=f(1/\alpha)=a$.
\end{itemize}
The items above uniquely determine
\begin{equation}
S(z)+S_\text{back}(z)=a+c\frac{1}{z-a}+\frac{c+t}{z}.
\end{equation}
The difference satisfies
\begin{align}\nonumber
S(z)-S_\text{back}(z)&=(F(z)-{\rm deck}\circ F(z))\cdot (\text{rational map symmetric under $F(z)\leftrightarrow{\rm deck}\circ F(z)$})
\\ &=\sqrt{(z-\beta)(z-\c\beta)}\cdot (\text{rational function}).
\end{align}
Using the pole structure of $S$ and $S_\text{back}$ at $a,0,\infty$,  the rational function is determined such that $S(z)-S_\text{back}(z)=-y(z)$ has the pole structure as proposed.
\end{proof}

\section{Proof of Lemma \ref{lem-UOP2d} and Lemma \ref{lem-steepest}}\label{app-lem-UOP2d}
The proof is based on the anti-holomorphic involution symmetry of $v\leftrightarrow 1/\c v$.
Below, we use the definition $S_\text{back}(f(v))=f(1/{\rm deck}(v))$ \eqref{Sback}, and set $v=F(z)$.
\begin{align}\nonumber
\int_\beta^z y(w)\,\d w&=\int_\beta^z \left(S_\text{back}(\zeta)-S(\zeta)\right)\d \zeta
\\\nonumber &=\int_{v_\beta}^{v} f(1/{\rm deck}(u))\,\d f({\rm deck}(u)) +\int_{v}^{v_\beta} f(1/u)\,\d f(u)
\\\nonumber \left( \atopfrac{{\rm deck}(v)\to w;}{\text{ Note }{\rm deck}(v_\beta)=v_\beta}\right)\quad &=\int_{v_\beta}^{{\rm deck}(v)} f(1/w)\,\d f(w) +\int_{v}^{v_\beta} f(1/u)\,\d f(u)
\\ &=\int_{v}^{{\rm deck}(v)} f(1/u)\,\d f(u) .
\end{align}
Choosing any point $v_0\in\partial\D$ the last equation can be written by integration by parts:
\begin{align}\label{betazy}
\int_\beta^z y(w)\,\d w&=\left(\int_{v}^{v_0}+\int_{v_0}^{{\rm deck}(v)}\right) f(1/u)\,\d f(u)
\\&=\int_{v}^{v_0}f(1/u)\,\d f(u)+ \int_{1/{\rm deck}(v)}^{1/v_0}f(1/u)\,\d f(u) +\bigg[f(1/v)\,f(v)\bigg]^{{\rm deck}(v)}_{v_0}.
\end{align}
Using $1/v_0=\overline v_0$ and $f(1/{\rm deck}(v))=S_\text{back}(z)$, we have (defining $z_0:=f(v_0)$)
\begin{align}\nonumber
2{\cal U}_\text{OP}(z)&=2\Re\int_\beta^z y(s)\,\d s=2\Re\left(\int_z^{z_0}+\int^{z_0}_{\overline{S_\text{back}(z)}}\right)S(z)\,\d  z+z\,S_\text{back}(z)+\overline z\,\overline{S_\text{back}(z)}-2|z_0|^2
\\&={\cal U}(z)+{\cal U}(\overline{S_\text{back}(z)})-(z-\overline{S_\text{back}(z)})(\overline z-S_\text{back}(z)).
\end{align}
This proves Lemma \ref{lem-UOP2d}.  

Below, we present the proof of Lemma \ref{lem-steepest}.  We remark that, if $v=F(z)$ satisfies ${\rm deck}(v)=1/\overline v$ then the left hand side of \eqref{betazy} is purely real, which can be confirmed by
\begin{align}
\int_\beta^z y(w)\,\d w&=\int_{v}^{v_0}f(1/u)\,\d f(u)+ \int_{1/{\rm deck}(v)}^{1/v_0}f(1/u)\,\d f(u) +\bigg[f(1/v)\,f(v)\bigg]^{{\rm deck}(v)}_{v_0}
\\&=\int_{v}^{v_0}f(1/u)\,\d f(u)+ \int_{\overline v}^{\overline{v_0}}f(1/u)\,\d f(u) +f(\overline v)\,f({\rm deck}(v))-f(\overline{v_0})f(v_0)
\\&=-2\Re \int_{z_0}^z S(z)\,\d z+|z|^2-|z_0|^2={\cal U}(z).
\end{align}
Since $b$ satisfies ${\rm deck}(F(b))=1/\c{F(b)}$, we have $\phi(b)={\cal U}(b)$.  Since a point on the steepest (descent or ascent) lines from $\beta$ has purely real $\phi$ value, $b$ must be on a steepest line from $\beta$.

\section{Proof of Proposition \ref{nonzero}}\label{app-nonzero}
Up to a sign due to a permutation of columns we have
\begin{equation}
\det [\nu_{i+j}]_{0\leq i,j\leq n-1} =(-1)^{n(n-1)/2}  \det \left[\oint_{\Gamma} z^{i-j} \frac {(z-a)^{N c} {\rm e}^{-N a z} \d z}{z^{N c+1}}\right]_{0\leq i,j\leq n-1}.
\end{equation}
All the determinants in this section are for $n\times n$ matrices, i.e., $0\leq i,j\leq n-1$.
Due to  Lemma \ref{lemma:moments} the second determinant is given by 
\begin{equation}
\det \left[\oint_{\Gamma} z^{i-j} \frac {(z-a)^{N c} {\rm e}^{-N a z} \d z}{z^{N c+1}}\right] = \det \left[\iint_\C z^i \c{(z-a)}^j |z-a|^{2Nc}{\rm e}^{-N |z|^2} \d A(z)\right] \prod_{k=0}^{n-1} \frac {2i N^{Nc +k +1}}{\Gamma(Nc+k+1)}.\label{deteq}
\end{equation}
Finally, the determinant on the right-hand side is strictly  positive because 
\begin{equation}
 \det \left[\iint_\C z^i \c{(z-a)}^j |z-a|^{2Nc}{\rm e}^{-N |z|^2} \d A(z)\right]  =  \det \left[\iint_\C z^i \c{z}^j |z-a|^{2Nc}{\rm e}^{-N |z|^2} \d A(z) \right] >0 
\end{equation}
where the equality is due to the fact that the columns of the two matrices are related by a unimodular triangular matrix, whereas the inequality follows from the positivity of the measure. Finally, since $\Gamma(z)$ has no zeros (and no poles since $Nc+k+1\geq 1$), the non-vanishing follows from (\ref{deteq}). 
\section{Proof of Proposition \ref{norming}}\label{app-hn}
From the determinantal expressions for $P_{n,N}$, defining $\mu_{ij}:= \iint z^i \c z^j  |z-a|^{2Nc}{\rm e}^{-N|z|^2}\d^2 z$, we have
\begin{equation}
h_n = \frac {\det[\mu_{ij}]_{0\leq i,j\leq n}}{\det[\mu_{ij}]_{0\leq i,j\leq n-1}}.
\end{equation}
The denominator already appeared in  the proof of Proposition \ref{nonzero} and is given by
\begin{equation}
\det[\mu_{ij}]_{0\leq i,j\leq n-1} = (-1)^{n(n-1)/2}\det[\nu_{i+j}]_{0\leq i,j\leq n-1} \prod_{k=0}^{n-1} \frac{\Gamma(Nc+k+1)}{2iN^{Nc+k+1}}\ .
\label{uno}
\end{equation}
For the numerator in $\widetilde h_n$ we have 
\begin{align}\nonumber
\det[\nu_{i+j} ]_{0\leq i,j\leq n} &= \det\left[\oint_{\Gamma} z^{i+1 + (j-n)} \frac {(z-a)^{Nc}{\rm e}^{-Naz}}{z^{Nc+1}}\d z \right]_{0\leq i,j\leq n}  \\\nonumber
&= (-1)^{n(n+1)/2}\prod_{\ell=0}^{n} \frac{2iN^{Nc+\ell+1}}{\Gamma(Nc+\ell+1)}  \det\left[\int z^{i+1} (\c{z-a})^{k}  |z-a|^{2Nc}{\rm e}^{-N|z|^2} \d A(z) \right]_{  0\leq i,k \leq n} \\\nonumber
&=(-1)^{n(n+1)/2}\prod_{\ell=0}^{n} \frac{2iN^{Nc+\ell+1}}{\Gamma(Nc+\ell+1)} \det[\mu_{ik}]_{\atopfrac{1\leq i\leq n+1}{0\leq k\leq n}} 
\\&=(-1)^{n(n+1)/2}
\prod_{\ell=0}^{n} \frac{2iN^{Nc+\ell+1}}{\Gamma(Nc+\ell+1)} (-1)^{n+1} P_{n+1,N}(0) \det[\mu_{ik}]_{0\leq i,k\leq n}
\end{align}
Dividing both sides by $\det[\nu_{i+j}]_{0\leq i,j\leq n-1}$ and using eq. (\ref{uno}) we prove the first equality in Proposition \ref{norming}.
\begin{equation}
\widetilde h_n  =\frac{\det[\nu_{i+j} ]_{0\leq i,j\leq n} }{\det[\nu_{i+j} ]_{0\leq i,j\leq n-1} }  =- \frac {2i N^{Nc+n+1}}{\Gamma(Nc+n+1)}  P_{n+1,N}(0)\, h_n
\end{equation}
The last equality
comes from the Stirling approximation formula
\begin{equation}
\Gamma(z) = \sqrt{\frac {2\pi}{z}} z^z {\rm e}^{-z} \left(1 + \mathcal O\left(z^{-1}\right)\right).
\end{equation}
We find (recall that $n=tN+r$)
\begin{align}\nonumber
\frac { \Gamma(N(c+t)+r + 1)}{  N^{N(c+t) +r+1 }    }    
 &=\left(1+ \mathcal O\left(\frac1N\right)\right)(c+t)^{r+1}\frac {\Gamma(N(c+t))}{  N^{N(c+t)}    } 
 \\&= \left(1+ \mathcal O\left(\frac1N\right)\right)(c+t)^{r}\sqrt{\frac {2\pi(c+t)}{N}}{\rm e}^{N\big[(t+c)\ln (t+c)-(t+c)\big]}.
\end{align}

\bibliographystyle{abbrv}
\bibliography{bratwurst_ref}


\begin{filecontents}{bratwurst_ref.bib}



@article{Saff91,
 author = {Mhaskar, H. N. and Saff, E. B.},
 title = {The distribution of zeros of asymptotically extremal polynomials},
 journal = {J. Approx. Theory},
 issue_date = {June 1991},
 volume = {65},
 number = {3},
 month = jun,
 year = {1991},
 issn = {0021-9045},
 pages = {279--300},
 numpages = {22},
 url = {http://dx.doi.org/10.1016/0021-9045(91)90093-P},
 doi = {10.1016/0021-9045(91)90093-P},
 acmid = {114605},
 publisher = {Academic Press, Inc.},
 address = {Orlando, FL, USA},
} 

@article{PRL02,
  title = {Viscous Fingering and the Shape of an Electronic Droplet in the Quantum Hall Regime},
  author = {Agam, Oded and Bettelheim, Eldad and Wiegmann, P. and Zabrodin, A.},
  journal = {Phys. Rev. Lett.},
  volume = {88},
  issue = {23},
  pages = {236801},
  numpages = {4},
  year = {2002},
  month = {May},
  doi = {10.1103/PhysRevLett.88.236801},
  url = {http://link.aps.org/doi/10.1103/PhysRevLett.88.236801},
  publisher = {American Physical Society}
}

@article{BK2012,
title = "Orthogonal polynomials in the normal matrix model with a cubic potential",
journal = "Advances in Mathematics",
volume = "230",
number = "3",
pages = "1272-1321",
year = "2012",
note = "",
issn = "0001-8708",
doi = "10.1016/j.aim.2012.03.021",
url = "http://www.sciencedirect.com/science/article/pii/S000187081200117X",
author = "Pavel M. Bleher and Arno B.J. Kuijlaars",
keywords = "Normal matrix model",
keywords = "Orthogonal polynomials",
keywords = "Vector equilibrium problem",
keywords = "Riemann‰ÛÒHilbert problem",
keywords = "Steepest descent analysis"
}

@article{Colonization,
title = "First Colonization of a Spectral Outpost in Random Matrix Theory",
journal = "Constructive Approximation",
volume = "30",
issue = "2",
pages = "225-263",
year = "2009",
note = "",
doi = "DOI: 10.1007/s00365-008-9026-y",
url = "http://www.springerlink.com/content/0u067p5130k54842",
author = "M. Bertola and S. Y. Lee"
}

@article{marcellan,
title = "Complex Path Integral Representation for Semiclassical Linear Functionals",
journal = "Journal of Approximation Theory",
volume = "94",
number = "1",
pages = "107-127",
year = "1998",
note = "",
issn = "0021-9045",
doi = "DOI: 10.1006/jath.1998.3190",
url = "http://www.sciencedirect.com/science/article/B6WH7-45K18JY-G/2/6a2365e007a7648393c7673640ae936f",
author = "F. Marcell\'an and I.A. Rocha"
}

@article{bertobisemiclass,
 author = {Bertola, M.},
 title = {Biorthogonal polynomials for two-matrix models with semiclassical potentials},
 journal = {J. Approx. Theory},
 volume = {144},
 number = {2},
 year = {2007},
 issn = {0021-9045},
 pages = {162-212},
 doi = {http://dx.doi.org/10.1016/j.jat.2006.05.006},
 publisher = {Academic Press, Inc.},
 address = {Orlando, FL, USA},
 }

@book{ablowitz_fokas_book,
	Address = {Cambridge},
	Author = {Ablowitz, Mark J. and Fokas, Athanassios S.},
	Edition = {Second},
	Isbn = {0-521-53429-1},
	Mrclass = {30-01},
	Mrnumber = {MR1989049 (2004f:30001)},
	Pages = {xii+647},
	Publisher = {Cambridge University Press},
	Series = {Cambridge Texts in Applied Mathematics},
	Title = {Complex variables: introduction and applications},
	Year = {2003}}

@book{conway_complex_book,
	Address = {New York},
	Author = {Conway, John B.},
	Isbn = {0-387-94460-5},
	Mrclass = {30-01},
	Mrnumber = {MR1344449 (96i:30001)},
	Mrreviewer = {P. Lappan},
	Pages = {xvi+394},
	Publisher = {Springer-Verlag},
	Series = {Graduate Texts in Mathematics},
	Title = {Functions of one complex variable. {II}},
	Volume = {159},
	Year = {1995}}

@book{davis_book,
	Author = {Davis, Philip J.},
	Mrclass = {30A30},
	Mrnumber = {MR0407252 (53 \#11031)},
	Mrreviewer = {I. N. Baker},
	Note = {The Carus Mathematical Monographs, No. 17},
	Pages = {xi+228},
	Publisher = {The Mathematical Association of America, Buffalo, N. Y.},
	Title = {The {S}chwarz function and its applications},
	Year = {1974}}

@article{davis_quadrature,
	Author = {Davis, Philip J.},
	Fjournal = {Journal of Approximation Theory},
	Issn = {0021-9045},
	Journal = {J. Approximation Theory},
	Mrclass = {30A86 (30A24)},
	Mrnumber = {MR0377069 (51 \#13243)},
	Mrreviewer = {J. H. Curtiss},
	Note = {Collection of articles dedicated to J. L. Walsh on his 75th               birthday, III (Proc. Internat. Conf. Approximation Theory,               Related Topics and their Applications, Univ. Maryland, College               Park, Md., 1970)},
	Pages = {276--307},
	Title = {Double integrals expressed as single integrals or               interpolatory functionals},
	Volume = {5},
	Year = {1972}}

@book{deift_book,
	Address = {New York},
	Author = {Deift, P. A.},
	Date-Modified = {2008-05-13 19:39:59 -0600},
	Filename = {deift_oprm.djvu},
	Isbn = {0-9658703-2-4; 0-8218-2695-6},
	Mrclass = {47B80 (15A52 30E25 33D45 37K10 42C05 47B36 60F99)},
	Mrnumber = {MR1677884 (2000g:47048)},
	Mrreviewer = {Alexander Vladimirovich Kitaev},
	Pages = {viii+273},
	Publisher = {New York University Courant Institute of Mathematical Sciences},
	Series = {Courant Lecture Notes in Mathematics},
	Title = {Orthogonal polynomials and random matrices: a {R}iemann-{H}ilbert approach},
	Volume = {3},
	Year = {1999}}
	
@article {DKMVZ,
    AUTHOR = {Deift, P. and Kriecherbauer, T. and McLaughlin, K. T.-R. and
              Venakides, S. and Zhou, X.},
     TITLE = {Uniform asymptotics for polynomials orthogonal with respect to
              varying exponential weights and applications to universality
              questions in random matrix theory},
   JOURNAL = {Comm. Pure Appl. Math.},
  FJOURNAL = {Communications on Pure and Applied Mathematics},
    VOLUME = {52},
      YEAR = {1999},
    NUMBER = {11},
     PAGES = {1335--1425},
      ISSN = {0010-3640},
     CODEN = {CPAMA},
   MRCLASS = {42C05 (15A52 41A60 82B41)},
  MRNUMBER = {MR1702716 (2001g:42050)},
MRREVIEWER = {D. S. Lubinsky},
}	

@article{difrancesco_laughlin,
	Author = {Di Francesco, P. and Gaudin, M. and Itzykson, C. and Lesage, F.},
	Fjournal = {International Journal of Modern Physics A. Particles and Fields. Gravitation. Cosmology. Nuclear Physics},
	Issn = {0217-751X},
	Journal = {Internat. J. Modern Phys. A},
	Mrclass = {81V70 (05E10 22E70 52B11 82D10)},
	Mrnumber = {MR1289574 (95h:81115)},
	Mrreviewer = {Peter N. Zhevandrov},
	Number = {24},
	Pages = {4257--4351},
	Title = {Laughlin's wave functions, {C}oulomb gases and expansions of the discriminant},
	Volume = {9},
	Year = {1994}}

@book{dyson_book,
	Address = {Providence, RI},
	Author = {Dyson, Freeman},
	Isbn = {0-8218-0561-4},
	Mrclass = {01A75 (81-06 82-06)},
	Mrnumber = {MR1390992 (97f:01031)},
	Mrreviewer = {Aernout C. D. van Enter},
	Note = {With a foreword by Elliott H. Lieb},
	Pages = {xii+601},
	Publisher = {American Mathematical Society},
	Series = {Collected Works},
	Title = {Selected papers of {F}reeman {D}yson with commentary},
	Volume = {5},
	Year = {1996}}

@article{elbau_felder,
	Author = {Elbau, Peter and Felder, Giovanni},
	Coden = {CMPHAY},
	Fjournal = {Communications in Mathematical Physics},
	Issn = {0010-3616},
	Journal = {Comm. Math. Phys.},
	Mrclass = {82B44 (15A52 30E05 62E20)},
	Mrnumber = {MR2172690 (2006i:82034)},
	Number = {2},
	Pages = {433--450},
	Title = {Density of eigenvalues of random normal matrices},
	Volume = {259},
	Year = {2005}}

@article{fokas_its_kitaev,
	Author = {Its, A. R. and Kitaev, A. V. and Fokas, A. S.},
	Date-Modified = {2008-11-26 13:56:06 -0500},
	Fjournal = {Akademiya Nauk SSSR i Moskovskoe Matematicheskoe Obshchestvo.               Uspekhi Matematicheskikh Nauk},
	Issn = {0042-1316},
	Journal = {Uspekhi Mat. Nauk},
	Mrclass = {81T40 (58F07)},
	Mrnumber = {MR1101341 (92c:81166)},
	Mrreviewer = {Alexander R. Kavalov},
	Number = {6(276)},
	Pages = {135--136},
	Title = {An isomonodromy approach to the theory of two-dimensional quantum gravity},
	Volume = {45},
	Year = {1990}}

@article{harnad_loutsenko_yermolayeva,
	Author = {Harnad, J. and Loutsenko, I. and Yermolayeva, O.},
	Coden = {JMAPAQ},
	Fjournal = {Journal of Mathematical Physics},
	Issn = {0022-2488},
	Journal = {J. Math. Phys.},
	Mrclass = {37K10 (35Q53 37K60 82C24)},
	Mrnumber = {MR2186768 (2006j:37076)},
	Mrreviewer = {Anton Dzhamay},
	Number = {11},
	Pages = {112701, 19},
	Title = {Constrained reductions of two-dimensional dispersionless  {T}oda hierarchy, {H}amiltonian structure, and interface  dynamics},
	Volume = {46},
	Year = {2005}}

@book{herglotz_1914,
	Address = {G\"ottingen},
	Author = {Herglotz, Gustav},
	Date-Modified = {2008-05-13 15:55:23 -0600},
	Filename = {herglotz_1914.pdf},
	Isbn = {3-525-40720-3},
	Mrclass = {01A75},
	Mrnumber = {MR526569 (80h:01040)},
	Mrreviewer = {J. J. Burckhardt},
	Note = {With introductory articles by Peter Bergmann, S. S. Chern, Ronald B. Guenther, Claus M\"uller, Theodor Schneider and H. Wittich, Edited and with a foreword by Hans Schwerdtfeger},
	Pages = {xl+652},
	Publisher = {Vandenhoeck \& Ruprecht},
	Rating = {5},
	Title = {Gesammelte {S}chriften},
	Year = {1979}}

@article{jimbo_miwa_ueno_01,
	Author = {Jimbo, Michio and Miwa, Tetsuji and Ueno, Kimio},
	Coden = {FDNPDT},
	Fjournal = {Physica D. Nonlinear Phenomena},
	Issn = {0167-2789},
	Journal = {Phys. D},
	Mrclass = {34A20 (14K25 58A15 58F07 81C05)},
	Mrnumber = {MR630674 (83k:34010a)},
	Mrreviewer = {V. A. Golubeva},
	Number = {2},
	Pages = {306--352},
	Title = {Monodromy preserving deformation of linear ordinary differential equations with rational coefficients. {I}. {G}eneral theory and {$\tau $}-function},
	Volume = {2},
	Year = {1981}}

@article{johansson_fluctuations,
	Author = {Johansson, Kurt},
	Coden = {DUMJAO},
	Fjournal = {Duke Mathematical Journal},
	Issn = {0012-7094},
	Journal = {Duke Math. J.},
	Mrclass = {82B44 (82B05)},
	Mrnumber = {MR1487983 (2000m:82026)},
	Mrreviewer = {Estelle L. Basor},
	Number = {1},
	Pages = {151--204},
	Title = {On fluctuations of eigenvalues of random {H}ermitian matrices},
	Volume = {91},
	Year = {1998}}

@book{saff_totik_book,
	Address = {Berlin},
	Author = {Saff, Edward B. and Totik, Vilmos},
	Isbn = {3-540-57078-0},
	Mrclass = {31-02 (30C10 41A10 42C05)},
	Mrnumber = {MR1485778 (99h:31001)},
	Mrreviewer = {D. S. Lubinsky},
	Note = {Appendix B by Thomas Bloom},
	Pages = {xvi+505},
	Publisher = {Springer-Verlag},
	Series = {Grundlehren der Mathematischen Wissenschaften [Fundamental Principles of Mathematical Sciences]},
	Title = {Logarithmic potentials with external fields},
	Volume = {316},
	Year = {1997}}

@book{szego_book,
	Address = {Providence, R.I.},
	Author = {Szeg{\H{o}}, G{\'a}bor},
	Edition = {Fourth},
	Mrclass = {42A52 (33A65)},
	Mrnumber = {MR0372517 (51 \#8724)},
	Note = {American Mathematical Society, Colloquium Publications, Vol. XXIII},
	Pages = {xiii+432},
	Publisher = {American Mathematical Society},
	Title = {Orthogonal polynomials},
	Year = {1975}}

@article{teodorescu_critical,
	Author = {Teodorescu, Razvan},
	Coden = {JPHAC5},
	Fjournal = {Journal of Physics. A. Mathematical and Theoretical},
	Issn = {1751-8113},
	Journal = {J. Phys. A},
	Mrclass = {82B31 (15A18 15A52 60K40 82B44)},
	Mrnumber = {MR2240465 (2007d:82033)},
	Mrreviewer = {Dimitri M. Gangardt},
	Number = {28},
	Pages = {8921--8932},
	Title = {Generic critical points of normal matrix ensembles},
	Volume = {39},
	Year = {2006}}

@book{varchenko_etingof_book,
	Address = {Providence, RI},
	Author = {Varchenko, A. N. and {\`E}tingof, P. I.},
	Date-Modified = {2008-05-14 14:25:58 -0600},
	Filename = {varchenko_etingof.pdf},
	Isbn = {0-8218-7002-5},
	Mrclass = {58E12 (35J05 53A10)},
	Mrnumber = {MR1190012 (93k:58056)},
	Mrreviewer = {Wen Xiong Chen},
	Pages = {viii+72},
	Publisher = {American Mathematical Society},
	Series = {University Lecture Series},
	Title = {Why the boundary of a round drop becomes a curve of order four},
	Volume = {3},
	Year = {1992}}

@article{wz_normal_matrix_growth,
	Author = {Teodorescu, R. and Bettelheim, E. and Agam, O. and Zabrodin, A. and Wiegmann, P.},
	Coden = {NUPBBO},
	Fjournal = {Nuclear Physics. B},
	Issn = {0550-3213},
	Journal = {Nuclear Phys. B},
	Mrclass = {82B41 (15A52 33C45 81T40)},
	Mrnumber = {MR2116267 (2006f:82039)},
	Mrreviewer = {Alexei Khorunzhy},
	Number = {3},
	Pages = {407--444},
	Title = {Normal random matrix ensemble as a growth problem},
	Volume = {704},
	Year = {2005}}

@article{wz_2d_dyson_gas,
	Author = {Zabrodin, A. and Wiegmann, P.},
	Coden = {JPHAC5},
	Fjournal = {Journal of Physics. A. Mathematical and Theoretical},
	Issn = {1751-8113},
	Journal = {J. Phys. A},
	Mrclass = {82B05 (81T45 82B10 82B30)},
	Mrnumber = {MR2240466 (2007d:82003)},
	Mrreviewer = {Marcelo Loewe},
	Number = {28},
	Pages = {8933--8963},
	Title = {Large-{$N$} expansion for the 2{D} {D}yson gas},
	Volume = {39},
	Year = {2006}}

@incollection{wz_tau_analytic_curve,
	Address = {Cambridge},
	Author = {Kostov, I. K. and Krichever, I. and Mineev-Weinstein, M. and Wiegmann, P. B. and Zabrodin, A.},
	Booktitle = {Random matrix models and their applications},
	Date-Modified = {2008-05-14 14:37:37 -0600},
	Filename = {tau_analytic_curve.pdf},
	Mrclass = {37K20 (30C70 31A25 37K10 37K60)},
	Mrnumber = {MR1842792 (2002h:37145)},
	Mrreviewer = {Anton Dzhamay},
	Pages = {285--299},
	Publisher = {Cambridge Univ. Press},
	Series = {Math. Sci. Res. Inst. Publ.},
	Title = {The {$\tau$}-function for analytic curves},
	Volume = {40},
	Year = {2001}}

@article{wz_conformal_integrable,
	Author = {Wiegmann, P. B. and Zabrodin, A.},
	Coden = {CMPHAY},
	Fjournal = {Communications in Mathematical Physics},
	Issn = {0010-3616},
	Journal = {Comm. Math. Phys.},
	Mrclass = {37K20 (30C20 37K10)},
	Mrnumber = {MR1785428 (2002g:37105)},
	Mrreviewer = {Henrik Aratyn},
	Number = {3},
	Pages = {523--538},
	Title = {Conformal maps and integrable hierarchies},
	Volume = {213},
	Year = {2000}}

@book{abramowitz_stegun,
	Author = {Abramowitz, Milton and Stegun, Irene A.},
	Date-Modified = {2008-07-17 14:50:07 -0600},
	Mrclass = {33.00 (65.05)},
	Mrnumber = {MR0167642 (29 \#4914)},
	Mrreviewer = {D. H. Lehmer},
	Pages = {xiv+1046},
	Publisher = {For sale by the Superintendent of Documents, U.S. Government               Printing Office, Washington, D.C.},
	Series = {National Bureau of Standards Applied Mathematics Series},
	Title = {Handbook of mathematical functions with formulas, graphs, and               mathematical tables},
	Volume = {55},
	Year = {1964}}

@book{ransford_book,
	Address = {Cambridge},
	Author = {Ransford, Thomas},
	Date-Modified = {2008-07-18 12:18:32 -0600},
	Isbn = {0-521-46120-0; 0-521-46654-7},
	Mrclass = {31-02},
	Mrnumber = {MR1334766 (96e:31001)},
	Mrreviewer = {D. H. Armitage},
	Pages = {x+232},
	Publisher = {Cambridge University Press},
	Series = {London Mathematical Society Student Texts},
	Title = {Potential theory in the complex plane},
	Volume = {28},
	Year = {1995}}

@incollection{MR2232116,
	Address = {Dordrecht},
	Author = {Zabrodin, A.},
	Booktitle = {Applications of random matrices in physics},
	Mrclass = {82C41 (15A52 60E05 62H20 76D27 81V70 82C44)},
	Mrnumber = {MR2232116 (2007d:82081)},
	Mrreviewer = {Dimitri M. Gangardt},
	Pages = {261--318},
	Publisher = {Springer},
	Series = {NATO Sci. Ser. II Math. Phys. Chem.},
	Title = {Matrix models and growth processes: from viscous flows to the               quantum {H}all effect},
	Volume = {221},
	Year = {2006}}

@book{mehta_book,
	Author = {Mehta, Madan Lal},
	Date-Modified = {2008-07-28 14:58:52 -0600},
	Edition = {Third},
	Isbn = {0-12-088409-7},
	Mrclass = {82-02 (15-02 15A52 60B99 60K35 82B41)},
	Mrnumber = {MR2129906 (2006b:82001)},
	Pages = {xviii+688},
	Publisher = {Elsevier/Academic Press, Amsterdam},
	Series = {Pure and Applied Mathematics (Amsterdam)},
	Title = {Random matrices},
	Volume = {142},
	Year = {2004}}
	
@article{BertoMo,
title = "Commuting difference operators, spinor bundles and the asymptotics of orthogonal polynomials with respect to varying complex weights",
journal = "Advances in Mathematics",
volume = "220",
number = "1",
pages = "154-218",
year = "2009",
note = "",
issn = "0001-8708",
doi = "DOI: 10.1016/j.aim.2008.09.001",
url = "http://www.sciencedirect.com/science/article/B6W9F-4TM0CVX-1/2/17b914b7b88bb01afb8cbf78e7492493",
author = "M. Bertola and M.Y. Mo",
keywords = "Orthogonal polynomials",
keywords = "Commuting difference operators",
keywords = "Quadratic differentials",
keywords = "Conformal glueing",
keywords = "Theta functions",
keywords = "Riemann-Hilbert problem"
}

@Article{AgamBettelheimWiegmannZabrodin,
  title = {Viscous Fingering and the Shape of an Electronic Droplet in the Quantum Hall Regime},
  author = {Agam, Oded  and Bettelheim, Eldad  and Wiegmann, P.  and Zabrodin, A. },
  journal = {Phys. Rev. Lett.},
  volume = {88},
  number = {23},
  pages = {236801},
  numpages = {4},
  year = {2002},
  month = {May},
  doi = {10.1103/PhysRevLett.88.236801},
  publisher = {American Physical Society}
}

@article {CK,
    AUTHOR = {Claeys, Tom and Kuijlaars, Arno B. J.},
     TITLE = {Universality of the double scaling limit in random matrix
              models},
   JOURNAL = {Comm. Pure Appl. Math.},
  FJOURNAL = {Communications on Pure and Applied Mathematics},
    VOLUME = {59},
      YEAR = {2006},
    NUMBER = {11},
     PAGES = {1573--1603},
      ISSN = {0010-3640},
     CODEN = {CPAMA},
   MRCLASS = {82B41 (33C90 33E17 60F99)},
  MRNUMBER = {MR2254445 (2007g:82019)},
MRREVIEWER = {Florent Benaych-Georges},
}

@article {CKV,
    AUTHOR = {Claeys, T. and Kuijlaars, A. B. J. and Vanlessen, M.},
     TITLE = {Multi-critical unitary random matrix ensembles and the general
              {P}ainlev\'e {II} equation},
   JOURNAL = {Ann. of Math. (2)},
  FJOURNAL = {Annals of Mathematics. Second Series},
    VOLUME = {168},
      YEAR = {2008},
    NUMBER = {2},
     PAGES = {601--641},
      ISSN = {0003-486X},
     CODEN = {ANMAAH},
   MRCLASS = {82B44 (15A52 30E25 33E17 60F99)},
  MRNUMBER = {MR2434886},
}

@article {BI,
author = {Bleher, Pavel M. and Its, Alexander R.},
title = {Double scaling limit in the random matrix model: The Riemann-Hilbert approach},
journal = {Communications on Pure and Applied Mathematics},
volume = {56},
number = {4},
publisher = {Wiley Subscription Services, Inc., A Wiley Company},
issn = {1097-0312},
url = {http://dx.doi.org/10.1002/cpa.10065},
doi = {10.1002/cpa.10065},
pages = {433--516},
year = {2003},
}

@article{AHM,
author = {Yacin Ameur and Haakan Hedenmalm and Nikolai Makarov},
title = {{Fluctuations of eigenvalues of random normal matrices}},
journal = {Duke Math. J.},
volume = {159},
year = {2011},
issue = {1},
pages={31-81},
doi = {10.1215/00127094-1384782},
masid = {12440812},
}

@article{Ursell,
author = {Chester,C. and Friedman,B. and Ursell,F.},
title = {An extension of the method of steepest descents},
journal = {Mathematical Proceedings of the Cambridge Philosophical Society},
volume = {53},
number = {03},
pages = {599-611},
year = {1957},
doi = {10.1017/S0305004100032655},
URL = {http://dx.doi.org/10.1017/S0305004100032655},
eprint = {http://journals.cambridge.org/article_S0305004100032655},
}

@article{BIPZ,
  added-at = {2011-03-12T05:52:28.000+0100},
  author = {Brezin, E. and Itzykson, C. and Parisi, G. and Zuber, J. B.},
  biburl = {http://www.bibsonomy.org/bibtex/2fb2278fedd9673a522fc10d37ddac64b/corneliu},
  interhash = {26b6f6d2474aa3b75d540d776716feaa},
  intrahash = {fb2278fedd9673a522fc10d37ddac64b},
  journal = {Commun. Math. Phys.},
  keywords = {imported},
  pages = 35,
  slaccitation = {
  timestamp = {2011-03-12T05:52:28.000+0100},
  title = {Planar diagrams},
  volume = 59,
  year = 1978
}

@ARTICLE{MillerBuckingham,
   author = {{Buckingham}, R.~J. and {Miller}, P.~D.},
    title = "{The sine-Gordon equation in the semiclassical limit: dynamics of fluxon condensates}",
  journal = {ArXiv e-prints},
archivePrefix = "arXiv",
   eprint = {1103.0061},
 primaryClass = "math-ph",
 keywords = {Mathematical Physics, Mathematics - Analysis of PDEs},
     year = 2011,
    month = feb,
   adsurl = {http://adsabs.harvard.edu/abs/2011arXiv1103.0061B},
  adsnote = {Provided by the SAO/NASA Astrophysics Data System}
}

@article {Balogh,
   author = {Balogh, F. and Harnad, J.},
   affiliation = {Universit\'e de Montr\'eal Centre de Recherches Math\'ematiques C. P. 6128 succ. centre ville Montr\'eal Qu\'ebec H3C 3J7 Canada},
   title = {Superharmonic Perturbations of a Gaussian Measure, Equilibrium Measures and Orthogonal Polynomials},
   journal = {Complex Analysis and Operator Theory},
   publisher = {Birkh\"auser Basel},
   issn = {1661-8254},
   keyword = {Mathematics and Statistics},
   pages = {333-360},
   volume = {3},
   issue = {2},
   url = {http://dx.doi.org/10.1007/s11785-008-0101-y},
   note = {10.1007/s11785-008-0101-y},
   year = {2009}
}

@article {Kuijlaars-McLaughlin-04,
   author = {Kuijlaars, A. B. J. and McLaughlin, K. T-R},
   affiliation = {Department of Mathematics, Katholieke Universiteit Leuven, Celestijnenlaan 200 B, 3001 Leuven Belgium},
   title = {Asymptotic Zero Behavior of Laguerre Polynomials with Negative Parameter},
   journal = {Constructive Approximation},
   publisher = {Springer New York},
   issn = {0176-4276},
   keyword = {Mathematics and Statistics},
   pages = {497-523},
   volume = {20},
   issue = {4},
   url = {http://dx.doi.org/10.1007/s00365-003-0536-3},
   note = {10.1007/s00365-003-0536-3},
   year = {2004}
}

@article{bubble,
title = "Bubble break-off in Hele-Shaw flows--singularities and integrable structures",
journal = "Physica D: Nonlinear Phenomena",
volume = "219",
number = "1",
pages = "22 - 34",
year = "2006",
note = "",
issn = "0167-2789",
doi = "10.1016/j.physd.2006.05.010",
url = "http://www.sciencedirect.com/science/article/pii/S0167278906001734",
author = "Seung-Yeop Lee and Eldad Bettelheim and Paul Wiegmann",
keywords = "Break-off",
keywords = "Hele-Shaw",
keywords = "AKNS",
keywords = "Viscous fingering"
}

@book{book-harmonic-measure,
  title={Harmonic Measure},
  author={Garnett, J.B. and Marshall, D.E.},
  isbn={9780521720601},
  lccn={2004045893},
  series={New Math Monographs},
  url={http://books.google.com/books?id=FQFHRchWqj4C},
  year={2008},
  publisher={Cambridge University Press}
}

@book{forrester2010log,
  title={Log-Gases and Random Matrices (LMS-34)},
  author={Forrester, P.J.},
  isbn={9780691128290},
  lccn={2009053314},
  series={London Mathematical Society Monographs},
  url={http://books.google.com/books?id=C7z3NgOlb1gC},
  year={2010},
  publisher={Princeton University Press}
}
\end{filecontents}

\end{document}